\title{Algebras of Quantum Variables for Loop Quantum Gravity\\[5pt]
\textbf{II. A new formulation of the Weyl $C^*$-algebra}}
\author{Diana Kaminski\\[3pt]
kaminski@math.uni-paderborn.de\\ 
\small{Europe - Germany}}
\date{August 19, 2011}
\newcommand{\A}{\begin{large}\mathcal{A}\end{large}}
\newcommand{\Ab}{\begin{large}\bar{\mathcal{A}}\end{large}}
\newcommand{\Alg}{\begin{large}\mathfrak{A}\end{large}}
\newcommand{\Aut}{\begin{large}\mathfrak{Aut}\end{large}}
\newcommand{\CD}{\mathcal{C}}
\newcommand{\Goid}{\mathcal{G}}
\newcommand{\Gop}{\mathbbm{G}}
\newcommand{\GG}{G}
\newcommand{\HS}{\mathcal{H}}
\newcommand{\KD}{\mathcal{K}}
\newcommand{\la}{\langle}
\newcommand{\LD}{\mathcal{L}}
\newcommand{\MD}{\mathcal{M}}
\newcommand{\N}{\mathbb{N}}
\newcommand{\op}{\mathfrak{o}}
\newcommand{\PD}{\mathcal{P}}
\newcommand{\Ss}{\mathcal{S}}
\newcommand{\ra}{\rangle}
\newcommand{\QD}{\mathcal{Q}}
\newcommand{\R}{\mathbb{R}}
\newcommand{\SimGroup}{\mathfrak{G}}
\newcommand{\surf}{\mathbb{S}}
\newcommand{\WF}{\mathfrak{W}}
\newcommand{\WD}{\mathcal{W}}
\newcommand{\ZD}{\mathcal{Z}}
\newcommand{\Zop}{\mathcal{Z}(\Gop_{\breve S,\gamma})}
\newcommand{\ho}{\mathfrak{h}}
\newcommand{\go}{\mathfrak{g}}
\newcommand{\Go}{\mathfrak{G}}
\DeclareMathOperator{\Act}{Act}
\DeclareMathOperator{\adm}{A}
\DeclareMathOperator{\dif}{d}
\DeclareMathOperator{\diff}{surf}
\DeclareMathOperator{\Diff}{Diff}
\DeclareMathOperator{\Hol}{Hol}
\DeclareMathOperator{\Hom}{Hom}
\DeclareMathOperator{\id}{id}
\DeclareMathOperator{\loc}{loc}
\DeclareMathOperator{\Map}{Map}
\DeclareMathOperator{\Mor}{Mor}
\DeclareMathOperator{\ori}{or}
\DeclareMathOperator{\Rep}{Rep}
\newcommand{\gp}{{\gamma^\prime}}
\newcommand{\gpi}{{\gamma^\prime_i}}
\newcommand{\gpj}{{\gamma^\prime_j}}
\newcommand{\gpk}{{\gamma^\prime_K}}
\newcommand{\gppi}{{\gamma^{\prime\prime}_i}}
\newcommand{\gppj}{{\gamma^{\prime\prime}_j}}
\newcommand{\gppje}{{\gamma^{\prime\prime}_{j+1}}}
\newcommand{\gppl}{{\gamma^{\prime\prime}_l}}
\newcommand{\gpe}{{\gamma^\prime_1}}
\newcommand{\gpz}{{\gamma^\prime_2}}
\newcommand{\gpm}{{\gamma^\prime_M}}
\newcommand{\gppe}{{\gamma^{\prime\prime}_1}}
\newcommand{\gppm}{{\gamma^{\prime\prime}_M}}
\newcommand{\gpppe}{{\gamma^{\prime\prime\prime}_1}}
\newcommand{\gpppm}{{\gamma^{\prime\prime\prime}_M}}
\newcommand{\gpppn}{{\gamma^{\prime\prime\prime}_N}}
\newcommand{\tg}{{\tilde\gamma}}
\newcommand{\gpp}{\gamma^{\prime\prime}}
\newcommand{\Gp}{{\Gamma^\prime}}
\newcommand{\Gpp}{\Gamma^{\prime\prime}}
\newcommand{\Gppp}{\Gamma^{\prime\prime\prime}}
\newcommand{\idf}{\mathbbm{1}}
\newcommand{\bra}{[}
\newcommand{\ket}{]}
\newcommand{\beq}{\begin{equation}\begin{aligned}}
\newcommand{\beqs}{\begin{equation*}\begin{aligned}}
\newcommand{\be}{\begin{flalign}}
\newcommand{\bes}{\begin{equation*}}
\newcommand{\eq}{\end{aligned}\end{equation}}
\newcommand{\eqs}{\end{aligned}\end{equation*}}
\newcommand{\ee}{\end{flalign}}
\newcommand{\ees}{\end{equation}}
\newcommand{\limi}{\underset{i\rightarrow\infty}{\underrightarrow{\lim}}}
\newtheorem{theo}{Theorem }
\newtheorem{lem}[theo]{Lemma}
\newtheorem{rem}[theo]{Remark}
\newtheorem{prop}[theo]{Proposition}
\newtheorem{cor}[theo]{Corollary}
\newtheorem{defi}[theo]{Definition}
\newenvironment{proofs}[1][Proof ]{\noindent\textbf{#1}: }{\ \begin{flushright}
                                                                         \rule{0.5em}{0.5em}
                                                                        \end{flushright}}
\newenvironment{proofo}[1][Proof]{\noindent\textbf{#1} }{\ \begin{flushright}
                                                                         \rule{0.5em}{0.5em}
                                                                        \end{flushright}}
\newcounter{exa}[section]
 \newenvironment{exa}{\refstepcounter{exa}
  \textbf{Example} \thesection.\arabic{exa}: }{ {\begin{flushright}
                                                                         \rule{0.2em}{0.2em}
                                                                        \end{flushright}}}
\newcounter{problem}[subsection]
 \newenvironment{problem}{\refstepcounter{problem}
  \textbf{Problem} \thesection.\arabic{problem}: }{{\begin{flushright}
                                                                         \rule{0.2em}{0.2em}
                                                                        \end{flushright}}}
\newcommand{\GGi}{\xymatrix{
  \Goid_1  \ar@<-2pt>[r] \ar@<2pt>[r] &  \Goid^0_1    \\
}}
\newcommand{\GGii}{\xymatrix{
  \Goid_2  \ar@<-2pt>[r] \ar@<2pt>[r] &  \Goid^0_2    \\
}}
\newcommand{\GGm}{\xymatrix{
  \Goid  \ar@<-1pt>[r]^{s} \ar@<1pt>[r]_{t} &  \Goid^0    \\
}}
\newcommand{\GGim}{\xymatrix{
  \Goid_1  \ar@<-1pt>[r]^{s_1} \ar@<1pt>[r]_{t_1} &  \Goid^0_1    \\
}}
\newcommand{\GGiim}{\xymatrix{
  \Goid_2  \ar@<-1pt>[r]^{s_2} \ar@<1pt>[r]_{t_2} &  \Goid^0_2    \\
}}
\newcommand{\PGm}{\xymatrix{
  \PD  \ar@<-1pt>[r]^{s} \ar@<1pt>[r]_{t} &  \Sigma    \\
}}
\newcommand{\PG}{\PD\rightrightarrows\Sigma}
\newcommand{\PGs}{\PD\Sigma\rightrightarrows\Sigma}
\newcommand{\PGoS}{\PD\rightrightarrows\Sigma}
\newcommand{\fPGm}{\xymatrix{
  \PD_\Gamma  \ar@<-1pt>[r]^{s} \ar@<1pt>[r]_{t} &  V_\Gamma    \\
}}
\newcommand{\PGsm}{\xymatrix{
  \PD\Sigma \ar@<-1pt>[r]^{s_{\PD\Sigma}} \ar@<1pt>[r]_{t_{\PD\Sigma}} &  \Sigma   \\
}}
\newcommand{\fPGms}{\xymatrix{
  \PD^s_\Gamma  \ar@<-1pt>[r]^{s} \ar@<1pt>[r]_{t} &  V_\Gamma    \\
}}
\newcommand{\fPG}{\PD_\Gamma\Sigma \rightrightarrows V_\Gamma
}
\newcommand{\fPSGm}{\xymatrix{
  \PD_\Gamma\Sigma  \ar@<-1pt>[r]^{s} \ar@<1pt>[r]_{t} &  V_\Gamma    \\
}}
\newcommand{\fPSG}{\xymatrix{
  \PD_\Gamma\Sigma  \ar@<-2pt>[r] \ar@<2pt>[r] &  V_\Gamma    \\
}}
\newcommand{\fgHGm}{\xymatrix{
  H(\Gamma)  \ar@<-1pt>[r]^/0.3em/{\hat s_H} \ar@<1pt>[r]_/0.3em/{\hat t_H} &  V_\Gamma    \\
}}
\newcommand{\fHGm}{\xymatrix{
  H_\Gamma  \ar@<-1pt>[r]^/0.3em/{\hat s_H} \ar@<1pt>[r]_/0.3em/{\hat t_H} &  V_\Gamma    \\
}}
\newcommand{\fGGm}{\xymatrix{
  \G^G_\Gamma  \ar@<-1pt>[r]^/0.3em/{s_P} \ar@<1pt>[r]_/0.3em/{t_P} &  V_\Gamma    \\
}}
\newcommand{\fGHm}{\xymatrix{
  \G^H_\Gamma  \ar@<-1pt>[r]^/0.3em/{s_P} \ar@<1pt>[r]_/0.3em/{t_P} &  V_\Gamma    \\
}}
\newcounter{count}
\newcommand{\citetableB}{\cite[table 11.2]{KaminskiPHD}}
\newcommand{\refapprepgroupalg}{Appendix}
\begin{document}
\maketitle
\begin{abstract}\noindent In this article a new formulation of the Weyl $C^*$-algebra, which has been invented by Fleischhack \cite{Fleischhack06}, in terms of $C^*$-dynamical systems is presented. The quantum configuration variables are given by the holonomies along paths in a graph. Functions depending on these quantum variables form the analytic holonomy $C^*$-algebra. Each classical flux variable is quantised as an element of a flux group associated to a certain surface set and a graph. The quantised spatial diffeomorphisms are elements of the group of bisections of a finite graph system. Then different actions of the flux group associated to surfaces and the group of bisections on the analytic holonomy $C^*$-algebra are studied. The Weyl $C^*$-algebra for surfaces is generated by unitary operators, which implements the group-valued quantum flux operators, and certain functions depending on holonomies along paths that satisfy canonical commutation relations. Furthermore there is a unique pure state on the commutative Weyl $C^*$-algebra for surfaces, which is a path- or graph-diffeomorphism invariant.
\end{abstract}

\thispagestyle{plain}
\pdfbookmark[0]{\contentsname}{toc}
\tableofcontents

\section{Introduction}
\subsection*{The LQG-viewpoint }

In LQG the algebra of holonomy variables has been introduced by Ashtekar and Lewandowski \cite{AshLew93}. The analytic holonomy algebra is given by a commutative $C^*$-algebra $C(\Ab)$, which is an inductive limit of a family $\{C(\Ab_\Gamma)\}$ of commutative $C^*$-algebras associated to graphs. The inductive limit of $C^*$-algebras corresponds to a projective limit of a family $\{\Ab_\Gamma\}$ of configuration spaces. Due to the Tychnov-theorem the inductive limit space $\Ab$, which is constructed from the compact Hausdorff spaces, is a compact Hausdorff space, too. Each configuration space $\Ab_\Gamma$ is identified with $G^{\vert\Gamma\vert}$ where $G$ is the structure group of a principal fibre bundle. Usually this group is chosen to be a compact Lie group. On the configuration space $\Ab_\Gamma$ associated to each graph $\Gamma$ there exists a Haar measure $\mu_\Gamma$. The consistent family $\{\mu_\Gamma\}$ of measures defines a measure $\mu_{AL}$ on $\Ab$. Homeomorphisms on the compact Hausdorff space $\Ab$ leaving the measure $\mu_{AL}$ invariant corresponds one-to-one to unitary operators $U(g)$ for elements $g$, which are contained in the structure group $G$. These unitary operators implement the fluxes associated to a surface $S$. In particular, measure preserving transformations associated to a graph $\Gamma$ define $G^{\vert\Gamma\vert}$-invariant states on the $C^*$-algebra $C(\Ab_\Gamma)$. For a detailed investigation of this construction in the context of compact Hausdorff spaces and measures refer to Marolf and Mour\~{a}o \cite{MarolfMourao95} for graphs containing only analytic loops and Fleischhack \cite{DissFleisch} for general index sets. A study of the interplay of the projective structure of the configuration space and the inductive structure of the $C^*$-algebra is given in the article of Ashtekar and Lewandowski \cite{AshLew94}, Fleischhack \cite{Fleischhack07}, Velhinho \cite{Velhinho02,Velhinho04}. 

Although several other diffeomorphism invariant states on $C(\Ab)$ are available due to the work of Baez \cite{Baez93,Baez94}, or, Ashtekar and Lewandowski \cite{AshLew94,AshLewDG95}, only states which are $G^{\vert\Gamma\vert}$-invariant will allow to extend the quantum algebra by the flux operators for a surface $S$. This question has been analysed for example by Sahlmann in \cite{Sahlmann02}. In the context of Weyl algebras constructed from holonomies and quantum flux operators, which are exponentiated Lie algebra-valued operators, the first attempts are due to Sahlmann and Thiemann in \cite{SahlThiemSuper03}. The Weyl algebra of holonomies and (exponentiated) quantum fluxes, which are introduced by particular pull-backs of homeomorphisms on the configuration space, has been constructed by Fleischhack in \cite{Fleischhack06}. The developement of the Weyl algebra can be related to transformation groups associated to a flux group and the configuration space. First attempts in this direction has been presented by Velhinho in \cite{Velhinho08}. The irreducibility of the Weyl $C^*$-algebra has been studied first by Sahlmann and Thiemann in \cite{SahlThiem03}. Fleischhack has proved in \cite{Fleischhack06} irreducibility and under some technical assumptions that there is only one irreducible and diffeomorphism-invariant representation of his Weyl $C^*$-algebra on the Ashtekar-Lewandowski Hilbert space $\HS_{AL}$. For a short overview refer to Fleischhack \cite{Fleischhack071}. In comparison with the Weyl algebra presented in the project \textit{AQV}, Fleischhack has considered more general stratified objects, instead of $D-1$-dimensional surfaces in a $D$-dimensional manifold only, for the construction of his Weyl $C^*$-algebra. 

\subsection*{The operator-algebraic viewpoint}
\subsubsection*{The quantum configuration variables: holonomies along paths}

The fundamental geometric objects for a theory of Loop Quantum Gravity are (semi-) analytic paths and loops that form graphs. In section \ref{quantum variables} the basic quantum variables derived from these geometric objects will be introduced. A short overview will be presented in this section.

A \textit{graph} contains a finite set of independent edges. A set of edges is called \textit{independent} if the edges only intersect each other in the source or target vertices. A \textit{finite groupoid} is a finite set of paths equipped with a groupoid structure. The \textit{finite graph system} associated to a graph $\Gamma$ is given by all subgraphs of $\Gamma$. A \textit{finite path groupoid} associated to the graph $\Gamma$ is generated by all compositions of elements or their inverse elements of the set of edges that defines the graph $\Gamma$. Note that an element of a finite path groupoid is not necessarily an independent path. Clearly, for all these objects there exists an ordering such that 
\begin{enumerate}
 \item\label{indfam1} an \textit{inductive family of graphs}
 \item\label{indfam2} an \textit{inductive family of finite path groupoids} and
 \item\label{indfam3} an \textit{inductive family of finite graph systems} can be studied.
\end{enumerate} 

Furthermore, a \textit{holonomy map} is a groupoid morphism from the path groupoid to the compact structure group $G$. If a graph is considered, then the holonomy map maps each edge of the graph to an element of the structure group $G$. For generality it is assumed that $G$ is a compact  group. In \ref{subsec holmapsfinpath} two ways of an identification of the holonomy map evaluated for a subgraph of $\Gamma$ with elements in $G^{\vert\Gamma\vert}$ will be presented. One distinguishes between the \textit{natural} or \textit{the non-standard identification of the configuration space} $\Ab_\Gamma$ with $G^{\vert\Gamma\vert}$. Recall that a subgraph of $\Gamma$ is a set of independent paths, which are generated by the edges of the graph $\Gamma$. In the natural identification these paths are decomposed into the edges, which define the graph $\Gamma$. In the non-standard identification only graphs that contain only non-composable paths are considered. In both cases the holonomy maps evaluated on a subgraph $\Gp$ of $\Gamma$ are elements of $G^{M}$, where $M$ is the number of paths in $\Gp$. One obtains a product group $G^M$ for $M\leq \vert\Gamma\vert$, and which is embedded into $G^{\vert\Gamma\vert}$ by $G^M\times\{e_G\}\times ...\times \{e_G\}$. Hence, in both cases the holonomy evaluated on a subgraph of a graph $\Gamma$ is an element of $G^{\vert\Gamma\vert}$. In LQG \cite{AshLew93,AshLew94,Thiembook07} a holonomy map evaluated at the graph $\Gamma$ is an element of $G^{\vert\Gamma\vert}$, too.

The \textit{analytic holonomy $C^*$-algebra restricted to a finite graph system associated to a graph} is given by the commutative unital $C^*$-algebra $C(\Ab_\Gamma)$ of continuous functions on the configuration space $\Ab_\Gamma$ vanishing at infinity and supremum norm.

In the project \textit{AQV} the inductive limit $C^*$-algebra is constructed from an inductive family of $C^*$-algebras, which depend on finite graph systems. The reason is the following: Consider \textit{graph-diffeomorphisms} of the finite graph system associated to a graph $\Gamma$. These objects are pairs of maps and will be presented in more detail in section \ref{subsubsec bisections}. For short such a pair consists of a bijective map from vertices to vertices, which are situated in the manifold $\Sigma$, and a map that maps subgraphs to subgraphs of $\Gamma$. Then there are actions of these graph-diffeomorphisms on the analytic holonomy $C^*$-algebra restricted to a finite graph system associated to the graph $\Gamma$. There is no well-defined action of these graph-diffeomorphisms on the analytic holonomy $C^*$-algebra restricted to a fixed graph in general. This can be verified as follows. Assume that $\Gamma:=\{\gamma_1,\gamma_2,\gamma_3\}$ is a graph and $\Gp:=\{\gamma_1\}$, $\Gpp:=\{\gamma_1\circ\gamma_3\}$ are subgraphs of $\Gamma$ . Then consider a graph-diffeomorphism $(\varphi,\Phi)$ such that $\Phi(\Gp)=\Gpp$. Now the action $\zeta_{(\varphi,\Phi)}$ on the analytic holonomy $C^*$-algebra restricted to the graph $\Gamma$, which is defined by 
\beqs (\zeta_{(\varphi,\Phi)}f_\Gamma)(\ho_\Gamma(\Gamma))=f_{\Phi(\Gamma)}(\ho_{\Phi(\Gamma)}(\Phi(\Gamma)))= f_{\Gppp}(\ho_{\Gppp}(\Gppp))
\eqs whenever $\Phi(\Gamma)=\Gppp=\{\gamma_1\circ\gamma_3,\gamma_2,\gamma_3\}$ is not well-defined. The reason is: $\Gppp$ is not a graph. If $\Phi(\Gamma)$ is a subgraph of $\Gamma$, then in particluar $f_{\Phi(\Gamma)}$ is an element of the analytic holonomy $C^*$-algebra restricted to the subgraph $\Phi(\Gamma)$. The analytic holonomy $C^*$-algebra restricted to every subgraph of $\Gamma$ is a $C^*$-subalgebra of the analytic holonomy $C^*$-algebra restricted to the graph $\Gamma$. Hence, the last $C^*$-algebra is in particular a $C^*$-algebra, which is characterised by the finite graph system associated  to $\Gamma$. An action of graph-diffeomorphisms is an automorphism of the analytic holonomy $C^*$-algebra restricted to finite graph system associated  to $\Gamma$. Summarising, the concepts of the limit of $C^*$-algebras restricted to finite graph systems, and actions of graph-diffeomorphisms on the holonomy $C^*$-algebra restricted to finite graph systems engage with each other.

Finally note that the inductive limit $C^*$-algebra of the inductive family of $C^*$-algebras $\{C(\Ab_\Gamma),\beta_{\Gamma,\Gp}\}$ defines the \textit{projective limit configuration space} $\Ab$. The inductive limit $C^*$-algebra $C(\Ab)$ is called the \textit{analytic holonomy $C^*$-algebra} in the project \textit{AQV}.

The idea of using families of graph systems is influenced by the work of Giesel and Thiemann \cite{ThiemGiesel} in the LQG framework. They use particular cubic graphs instead of sets of paths in a groupoid and their inductive limit is constructed from families of cubic graph systems. 
In the project \textit{AQV} the \textit{inductive limit Hilbert space} $\HS_\infty$ will be derived from the natural or non-standard identified configuration spaces, the Haar measure on the structure group $G$ and an inductive limit of finite graph systems. It will be assumed that the inductive limit graph system only contains a countable set of subgraphs of an inductive limit graph $\Gamma_\infty$. This is contrary to the Hilbert space used in LQG literature \cite{Thiembook07}, which is the Ashtekar-Lewandowski Hilbert space $\HS_{\text{AL}}$. The Hilbert space $\HS_{\text{AL}}$ is manifestly non-separable, since the limit is taken over all sets of paths in $\Sigma$ and, hence over an infinite and uncountable set of all graphs. Clearly, the Hilbert space $\HS_\infty$ is constructed by using certain identification of the configuration space and the countable set of subgraph. In this simplified formulation some important aspects of the theory can be studied. It is possible to generalise partly the results for the Ashtekar-Lewandowski Hilbert space. 

The \textit{classical configuration space} in the context of LQG and Ashtekar variables is the space of smooth connections $\breve\A_s$ on an arbitrary principal fibre bundle $P(\Sigma,G)$. In this project the quantum operator $\QD(A)$ of the infinitesimal connection $A$ is given by the holonomy $\ho$ along a path $\gamma$. The operator $\QD(A)$ is represented as a multiplication operator on the inductive limit Hilbert space $\HS_\infty$.

\subsubsection*{The quantum momentum variables: group-valued flux operators}\label{subsec fluxop}

In the project \textit{AQV} the quantum operator $\QD( E^i)$ of the classical fluxes $E^i$ is either a group- or Lie algebra-valued operator, which depend on a surface $S$ and a path $\gamma$ or a graph $\Gamma$.  The idea of this definition is the following: Consider a surface $S$ and a path $\gamma$ that intersets each other in the source vertex of $\gamma$ and the path lies below the orientated surface $S$. Let $G$ be compact  group. The \textit{group-valued quantum flux operator} $\rho_S(\gamma)$ is given by the value of a map $\rho_S: P\Sigma\rightarrow G$ evaluated for a path $\gamma$ in the set $P\Sigma$ of paths in $\Sigma$. This definition does not coincide with the usual definition presented in LQG literature completely.

In general the idea is to obtain algebras, which are generated by 
\begin{enumerate}
 \item\label{item LG1} the group-valued quantum flux operators and the holonomies along paths in a graph, or
\item\label{item LG2} the group-valued quantum flux operators and certain functions depending on holonomies along paths in a graph, or
\end{enumerate}
In this work an algebra derived from the operators given in \ref{item LG2} satisfying some canonical commutator relations will be presented in section \ref{sec analholalg}.

Until now, a suitable set of surfaces in $\Sigma$ and a path $\gamma$ in the finite path groupoid $\PD_\Gamma\Sigma$ are fixed. For a general situation the following maps are studied in section \ref{subsec fluxdef} and \ref{subsec admfluxdef}:
\begin{enumerate}
 \item\label{maps3} a certain map $\rho_S:\PD_\Gamma\Sigma\rightarrow G$ 
\item\label{maps4} a cetrain map $\rho_S:\PD_\Gamma\Sigma\rightarrow \ZD$, where $\ZD$ denotes the center of the group $G$, and
\item\label{maps5} a certain map $\varrho:\PD_\Gamma\Sigma\rightarrow G$ and this map $\varrho$ is called \textit{admissible} in analogy to Fleischhack \cite{Fleischhack06}.
\end{enumerate}
Then the maps $\rho_S$ given by \ref{maps3} (or \ref{maps4}) define a group, which depend on the fixed path $\gamma$ and a suitable fixed surface set $\breve S$. Note that the surface set always contains at least one surface in $\Sigma$. This group is called \textit{flux group $\bar G_{\breve S,\gamma}$ associated to a surface set $\breve S$ and a path $\gamma$}. Clearly, for each suitable surface set there exist a flux group associated to this surface set. The maps  of the form $\varrho$ given by \ref{maps5} are used to define a more complicated structure. Furthermore, this concept generalises to holonomies of a graph $\Gamma$, which are maps from graphs to products of the structure group $G$. Then for example the \textit{flux group $\bar G_{\breve S,\Gamma}$ associated to a surface set and a graph} exists. 

Now, for the group-valued quantum flux operators different actions on the configuration space will be explicitly considered in section \ref{subsec dynsysfluxgroup}. In particular the left, right and inner actions are studied independently from each other and are denoted by $L$,$R$ or $I$. Furthermore, only the maps \ref{maps4} and \ref{maps5} define groupoid morphisms by composition of the action $L$ (or $R$, or $I$) and the holonomy map. For an overview about which maps define groupoid morphisms consider \citetableB. Note that using admissible maps (maps of the form \ref{maps5}) particular morphisms are defined. These morphisms are called \textit{equivalent groupoid morphisms} in analogy to Mackenzie \cite{Mack05} and are related to gauge transformations on the configuration space. The flux groups constructed from the maps \ref{maps3} and \ref{maps4}, the analytic holonomy $C^*$-algebra $C(\Ab_\Gamma)$ and the actions $L$, $R$ or $I$ define $C^*$-dynamical systems. If admissible maps are taken into account, the $C^*$-dynamical systems are very complicated. 

In general the parameter group of automorphism, which is defined from arbitrary group-valued quantum flux operators $\rho_S(\gamma)$ for every surface $S$ and a fixed path $\gamma$ to the group of automorphisms, i.e. $\rho_S(\gamma)\mapsto\alpha(\rho_S(\gamma))\in\Aut(C(\Ab_\gamma))$, does not define a group homomorphism to the group of automorphisms in $C(\Ab_\gamma)$. This is only true for certain group-valued quantum flux operators, which form a flux group associated to a certain surface set. Therefore, $C^*$-dynamical systems are defined by the analytic holonomy $C^*$-algebra $C(\Ab_\Gamma)$ restricted to the finite graph system and actions of the flux group associated to a certain surface set and the graph $\Gamma$ on this $C^*$-algebra. 

Furthermore, the analytic holonomy $C^*$-algebra can be restricted to certain subgraphs of a graph $\Gamma$. Therefore, the following object is important.
A \textit{finite orientation preserved graph system} is a set of certain subgraphs of a graph $\Gamma$ such that all paths in a subgraph are generated by compositions of the edges that generate the graph $\Gamma$. Note that in this definition the composition of edges and inverses of this edges are excluded. Then clearly there is an action of the flux group associated to the graph $\Gamma$ and a surface set on the analytic holonomy $C^*$-algebra restricted to the finite orientation preserved graph system $\PD_\Gamma^{\op}$. Furthermore, there is an action of the flux group associated to every subgraph of the finite orientation preserved graph system $\PD_\Gamma^{\op}$ and a surface set on the analytic holonomy $C^*$-algebra restricted to a finite orientation preserved graph system. There is a set of exceptional $C^*$-dynamical systems, which is defined by these automorphisms of the flux groups associated to suitable surface sets and graphs on the analytic holonomy algebras restricted to finite orientation preserved graph systems. The restriction to orientation preserved subgraphs is necessary to obtain either a purely left or right action of the flux group associated to a fixed surface set and subgraphs of a particular graph system on the holonomy $C^*$-algebra restricted to suitable graph systems. 

The Gelfand-Na\u{\i}mark theorem implies that there is an isomorphism between commutative $C^*$-algebras and continuous function algebras on configuration spaces. If other in particular non-abelian $C^*$-algebras are studied, then automorphisms of the algebras do not correspond to certain homeomorphisms on the configuration spaces. More generally, covariant representations of the $C^*$-dynamical systems replace the construction of Fleischhack. A covariant representation is a pair of maps, which is given by a representation of the $C^*$-algebra on the Hilbert space and a unitary representation of the flux group, and these maps satisfy a certain canonical commutator relation. In this project the \textit{Weyl $C^*$-algebra for surfaces} is constructed from all $C^*$-dynamical systems, which contains all actions of the flux groups associated to all different surface sets on the analytic holonomy $C^*$-algebra. In particular an element of the \textit{Weyl algebra of a surface set $\breve S$ restricted to a finite graph system $\PD_\Gamma$} is for example of the form
\beqs &\sum_{l=1}^L\idf_\Gamma U_{S_1}(\rho^l_{S,\Gamma}(\Gamma)) + \sum_{k=1}^K\sum_{i=1}^Mf^k_{\Gamma}U_{S_2}(\rho^i_{S,\Gamma}(\Gamma))  + \sum_{k=1}^K\sum_{i=1}^MU_{S_3}(\rho^i_{S,\Gamma}(\Gamma)) f^l_{\Gamma}U_{S_3}(\rho^i_{S,\Gamma}(\Gamma))^*
+\sum_{p=1}^Pf^p_{\Gamma}
\eqs whenever $f^k_{\Gamma},f^l_{\Gamma},f^p_{\Gamma}\in C(\Ab_\Gamma)$, $U_{S_i}\in \Rep(\bar G_{\breve S,\Gamma},\KD(\HS_\Gamma))$. The notion $U_{S_i}\in \Rep(\bar G_{\breve S,\Gamma},\KD(\HS_\Gamma))$ means that the unitary operators are represented on the $C^*$-algebra $\KD(\HS_\Gamma)$ of compact operators on the Hilbert space $\HS_\Gamma$. Furthermore, the unitaries and products of these unitaries, which satisfy the canonical commutator relation, are called \textit{Weyl elements} in this project. 
\subsubsection*{The quantum spatial diffeomorphisms}\label{subsec quandiffeo}

In the project of \textit{Algebras of Quantum Variables in LQG} the classical spatial diffeomorphisms are replaced by new quantum diffeomorphism operators. The classical diffeomorphisms are certain diffeomorphisms in the spatial hypersurface $\Sigma$.  In Mackenzie \cite{Mack05} a concept of translations in a general Lie groupoid has been presented. The ideas are used in section \ref{subsubsec bisections} for a redefinition of the classical diffeomorphisms. The new operators are called bisections. The idea of the definition of a bisection is presented in the next paragraph.

In the theory of groupoids the following object is often used: the groupoid isomorphism in a path groupoid, which consists of the classical diffeomorphism in $\Sigma$ and an additional bijective map from paths to paths in the path groupoid over $\Sigma$. This pair of maps is called the \textit{path-diffeomorphisms of a path groupoid}. The path-diffeomorphisms extend the notion of  graphomorphisms, which have been introduced by Fleischhack \cite{Fleischhack06}. There is only a slight difference betweeen these objects: A graphomorphism is a map from $\Sigma$ to $\Sigma$ that preserves additionally the path groupoid structure, whereas a path-diffeomorphism is a pair of maps. In particular \textit{finite path-diffeomorphisms} are given by a pair of maps, which contains a map that maps paths to paths in a finite path groupoid $\PD_\Gamma\Sigma$ and a bijective map that maps vertices to vertices of the vertex set of the graph $\Gamma$. Moreover, since graph systems are used in this project, a pair of maps that contains a map, which maps subgraphs to subgraphs, plays a fundamental role and is called \textit{finite graph-diffeomorphism}. Graphomorphisms define in particular groupoid isomorphisms and, hence, they transform non-trivial paths to non-trivial paths. To define maps that transform a trivial path at a vertex in $\Sigma$ to a non-trivial path other objects have to be considered. Translations in a finite path groupoid are naturally given by adding or deleting edges, which generate the graph $\Gamma$. One distinguishes between three translations in a path groupoid. One is given by adding a path $\gamma$ to a path $\theta$ at the source vertex $s(\theta)$ of the path $\theta$. The other case is given by composition of a path $\gamma$ to a path $\theta$ at the target vertex $t(\theta)$ of the path $\theta$. Finally, two paths can be composed with a path at the source and target vertices simultaneously. Hence, there is a natural map from the set of vertices to the set of paths, which is called a \textit{bisection of a finite path-groupoid}. For such a bisection $\sigma$ the map $t\circ\sigma$ is assumed to be bijective, where $t$ denotes the target map of the finite path groupoid. In the definition of a \textit{bisection of a path groupoid} the map $t\circ\sigma$ is required to be a diffeomorphism from $\Sigma$ to $\Sigma$ and the map $\sigma$ maps vertices to paths in a path groupoid.  Furthermore a \textit{right-translation $R_\sigma$ of a bisection $\sigma$} is a map that composes a path $\gamma$ with the path $\sigma(t(\gamma))$, i.e. $R_\sigma(\gamma)=\gamma\circ\sigma(t(\gamma))$. Furthermore a \textit{left-translation $L_\sigma$} and an \textit{inner-translation $I_\sigma$ of a bisection $\sigma$} can be defined similarly. The pair consisiting of the composition $t\circ\sigma$ of the bisection and the target map and the right translation $R_\sigma$ define in general no groupoid isomorphism. Nevertheless there are particular translations of suitable bisections that define path-diffeomorphisms. There is no doubt that the notion of a bisection can be generalised to a \textit{bisection of a path groupoid} or a \textit{bisection of a finite graph system}.
Moreover, the bisections of a path groupoid form a group and there is a group homomorphism between this group and the group of diffeomorphisms in $\Sigma$. Moreover, the bisections of a finite path groupoid or a finite graph system equipped with a sophisticated group multiplication form groups, too. Finally, a quantum diffeomorphism is assumed to be an element of the group of bisections of a path groupoid, a finite path groupoid or a finite graph system. 

Now, actions of the group of bisections on the analytic holonomy $C^*$-algebra restricted to a finite graph system will be used in section \ref{subsec dynsysfluxgroup2} to construct $C^*$-dynamical systems. If the group $\mathfrak{B}(\PD_\Gamma)$ of bisections of a finite graph system $\PD_\Gamma$ is considered, then the right-, left- or inner-translation of the bisections define three different $C^*$-dynamical systems. For example, there is a $C^*$-dynamical system $(C(\Ab_\Gamma),\mathfrak{B}(\PD_\Gamma),\zeta)$, where the action $\zeta$ is defined by the right-translation of the bisections. For each $C^*$-dynamical system there exists a covariant representation on the Hilbert space $\HS_\Gamma$.   Hence, the right- , left- or inner-translation of the bisections define unitary operators on the Hilbert space $\HS_\Gamma$ associated to a graph. The main advantage of translations of bisections is that they define graph changing operators. In particular these maps transfrom subgraphs into subgraphs of a graph $\Gamma$ such that the number of edges of the subgraphs can change.

Both actions, which are the action of the group of bisections of a finite graph system and the action of the flux group on the configuration space, lead to automorphisms on the analytic holonomy $C^*$-algebra. A comparison of the actions can be found in \citetableB. Similarly to actions of the flux group, the actions of the group of bisections composed with holonomy maps do not define groupoid morphisms in general. This causes no problems, since the configuration space restricted to a finite graph system $\PD_\Gamma$ is identified (naturally or in non-standard way) with $G^{\vert\Gamma\vert}$ and the right-, left- or inner-translation in the finite path groupoid transfer to \textit{right-translation $R_\sigma$, left-translation $L_\sigma$} or \textit{inner-translation $I_\sigma$ in the groupoid $G$ over $\{e_G\}$}. Finally, notice that only actions of certain bisections preserve the flux operators associated to a surface $S$. For example consider the bisection $\sigma$ of a path groupoid and recall the diffeomorphism $t\circ\sigma$. Then for example the diffeomorphism $t\circ\sigma$ is required to preserve the surface $S$. This particular bisection is called the \textit{surface-preserving bisection of a path groupoid}. There exists a similar description for a \textit{surface-preserving bisection for a finite path groupoid or a finite graph system}. Then the concept can be extended to bisections of a finite graph system that map surfaces to surfaces in a certain surface set and preserve the orientation of the surfaces with respect to the transformed subgraph. In this situation the bisections are called \textit{surface-orientation-preserving bisections for a finite graph system} and they form a subgroup of the group of bisection of a finite graph system.
Finally both actions on the analytic holonomy $C^*$-algebra restricted to a finite graph system:
\begin{enumerate}
\item the action of the group of surface-orientation-preserving bisections for a finite graph system and
\item the action of the center of the flux group associated to a surface set 
\end{enumerate} commute. In analogy to the surface-orientation-preserving bisections of a finite graph system the \textit{surface-orientation-preserving graph-diffeomorphisms} can be constructed.

Finally there is an action of bisections of the path groupoid $\PD$ over $\Sigma$ or the inductive limit graph system $\PD_{\Gamma_\infty}$ on the analytic holonomy $C^*$-algebra $C(\Ab)$. This automorphism is not point-norm continuous. Consequently, the infinitesimal diffeomorphism constraint is not implemented as a Hilbert space operator.

\subsection*{Representations for the Weyl $C^*$-algebras for surfaces on a Hilbert space}\label{subsec Weyl}

The main objects, which are introduced in this project \textit{AQV}, are given by
\begin{itemize}
 \item the flux groups  or the Lie flux algebras of Lie flux groups associated to surface sets,
 \item the analytic holonomy $C^ *$-algebra, which is given by the inductive limit $C^ *$-algebra of an inductive family of analytic holonomy $C^ *$-algebras restricted to finite graph systems.
\end{itemize} 
In the previous subsection the construction of the Weyl algebra has been introduced briefly. The Weyl algebra of Quantum Geometry \cite{Fleischhack06} has been constructed from the analytic holonomy $C^*$-algebra and unitary operators, which are defined by weakly continuous one-parameter unitary groups of $\R$ on the Hilbert space $\HS_{\text{AL}}$. The unitaries have been called Weyl operators by Fleischhack. 
The Weyl $C^*$-algebra for a surface set and restricted to a finite graph system is generated by the analytic holonomy $C^*$-algebra restricted to a finite graph system and Weyl elements. Assume for a moment that $G$ is a compact connected Lie group. Then consider a strongly continuous one-parameter unitary group of $\R$, which is given by $\R\ni t\mapsto U(\exp(t E_S(\gamma))$, on the Hilbert space $\HS_\infty$. Then each unitary $U(\exp(t E_S(\gamma))$ defines a \textit{Weyl element}, too. 

To obtain a uniqueness result of a representation of a $C^*$-algebra the following general facts will be used.
Since irreducible representations of a $C^*$-algebra on a Hilbert space correspond one-to-one to pure states on the $C^*$-algebra, the uniqueness of a particular representation of the $C^*$-algebra on a Hilbert space corresponds to a unique state. The inductive limit of an inductive family of $C^*$-algebras corresponds one-to-one to a projective limit on the projective family of state spaces of the $C^*$-algebras. The GNS-representation associated to a state of a $C^ *$-algebra consists of a cyclic vector $\Omega$ on a Hilbert space and a representation of the $C^*$-algebra on the Hilbert space. 

The uniqueness of a finite surface-orientation-preserving graph-diffeomorphism invariant pure state of the commutative Weyl $C^*$-algebra for surfaces is obtained in Theorem \ref{prop invstateweylalg} by several steps. The \textit{commutative Weyl $C^*$-algebra for surfaces} is constructed similarly to the Weyl $C^*$-algebra for surfaces with the difference that the group $G$ is replaced by the center of the group $G$. Then graph-diffeomorphism invariant states of the commutative Weyl algebra for surfaces restricted to a graph system $\PD_\Gamma$ are analysed. It turns out that a difference occur, if either the natural or if the non-standard identification of the configuration space $\Ab_\Gamma$ is taken into account. In particular, for the natural identification, the state is a sum over states, which are indexed by bisections. For the commutative Weyl algebra for surfaces the difference disappears. There exists a pure and unique state, which is invariant under finite graph-diffeomorphisms. This result is similar to the uniqueness of the representation of the Weyl algebra of Quantum Geometry and it is obtained in a complete new operator algebraic formulation. 

There is a problem of finding other representations of the Weyl $C^*$-algebra for surfaces. For example for the Weyl $C^*$-algebra for surfaces the important fact is that the flux group associated to a surface set is represented on the Hilbert space $\HS_\Gamma$ by a unitary representation in $\Rep(\bar G_{\breve S,\Gamma}, \KD(\HS_\Gamma))$. The only naturally or satisfactory representations of the group-valued quantum flux operators are given by the Weyl elements, which are given by unitary representation of the flux group on the Hilbert space $\HS_\Gamma$. 
\section{The basic quantum operators}\label{quantum variables}
\subsection{Finite path groupoids and graph systems}\label{subsec fingraphpathgroup}
Let $c:[0,1]\rightarrow\Sigma$ be continuous curve in the domain $\bra 0,1\ket$, which is (piecewise) $C^k$-differentiable ($1\leq k\leq \infty$), analytic ($k=\omega$) or semi-analytic ($k=s\omega$) in $\bra 0,1\ket$ and oriented such that the source vertex is $c(0)=s(c)$ and the target vertex is $c(1)=t(c)$. Moreover assume that, the range of each subinterval of the curve $c$ is a submanifold, which can be embedded in $\Sigma$. An \textbf{edge} is given by a \hyperlink{rep-equiv}{reparametrisation invariant} curve of class (piecewise) $C^k$. The maps $s_{\Sigma},t_{\Sigma}:P\Sigma\rightarrow\Sigma$ where $P\Sigma$ is the path space are surjective maps and are called the source or target map.    

A set of edges $\{e_i\}_{i=1,...,N}$ is called \textbf{independent}, if the only intersections points of the edges are source $s_{\Sigma}(e_i)$ or $t_{\Sigma}(e_i)$ target points. Composed edges are called \textbf{paths}. An \textbf{initial segment} of a path $\gamma$ is a path $\gamma_1$ such that there exists another path $\gamma_2$ and $\gamma=\gamma_1\circ\gamma_2$. The second element $\gamma_2$ is also called a \textbf{final segment} of the path $\gamma$.

\begin{defi}
A \textbf{graph} $\Gamma$ is a union of finitely many independent edges $\{e_i\}_{i=1,...,N}$ for $N\in\N$. The set $\{e_1,...,e_N\}$ is called the \textbf{generating set for $\Gamma$}. The number of edges of a graph is denoted by $\vert \Gamma\vert$. The elements of the set $V_\Gamma:=\{s_{\Sigma}(e_k),t_{\Sigma}(e_k)\}_{k=1,...,N}$ of source and target points are called \textbf{vertices}.
\end{defi}

A graph generates a finite path groupoid in the sense that, the set $\PD_\Gamma\Sigma$ contains all independent edges, their inverses and all possible compositions of edges. All the elements of $\PD_\Gamma\Sigma$ are called paths associated to a graph. Furthermore the surjective source and target maps $s_{\Sigma}$ and $t_{\Sigma}$ are restricted to the maps $s,t:\PD_\Gamma\Sigma\rightarrow V_\Gamma$, which are required to be surjective.

\begin{defi}\label{path groupoid} Let $\Gamma$ be a graph. Then a \textbf{finite path groupoid} $\PD_\Gamma\Sigma$ over $V_\Gamma$ is a pair $(\PD_\Gamma\Sigma, V_\Gamma)$ of finite sets equipped with the following structures: 
\begin{enumerate}
 \item two surjective maps \(s,t:\PD_\Gamma\Sigma\rightarrow V_\Gamma\), which are called the source and target map,
\item the set \(\PD_\Gamma\Sigma^2:=\{ (\gamma_i,\gamma_j)\in\PD_\Gamma\Sigma\times\PD_\Gamma\Sigma: t(\gamma_i)=s(\gamma_j)\}\) of finitely many composable pairs of paths,
\item the  composition \(\circ :\PD_\Gamma^2\Sigma\rightarrow \PD_\Gamma\Sigma,\text{ where }(\gamma_i,\gamma_j)\mapsto \gamma_i\circ \gamma_j\), 
\item the inversion map \(\gamma_i\mapsto \gamma_i^{-1}\) of a path,
\item the object inclusion map \(\iota:V_\Gamma\rightarrow\PD_\Gamma\Sigma\) and
\item $\PD_\Gamma\Sigma$ is defined by the set $\PD_\Gamma\Sigma$ modulo the algebraic equivalence relations generated by
\beq\label{groupoid0} \gamma_i^{-1}\circ \gamma_i\simeq \idf_{s(\gamma_i)}\text{ and }\gamma_i\circ \gamma_i^{-1}\simeq \idf_{t(\gamma_i)}
\eq 
\end{enumerate}
Shortly write $\fPSGm$. 
\end{defi} 
Clearly, a graph $\Gamma$ generates freely the paths in $\PD_\Gamma\Sigma$. Moreover the map $s \times t: \PD_\Gamma\Sigma\rightarrow V_\Gamma\times V_\Gamma$ defined by $(s\times t)(\gamma)=(s(\gamma),t(\gamma))$ for all $\gamma\in\PD_\Gamma\Sigma$ is assumed to be surjective ($\PD_\Gamma\Sigma$ over $V_\Gamma$ is a transitive groupoid), too. 

A general groupoid $\GG$ over $\GG^{0}$ defines a small category where the set of morphisms is denoted in general by $\GG$ and the set of objects is denoted by $\GG^{0}$. Hence in particular the path groupoid can be viewed as a category, since,
\begin{itemize}
\item the set of morphisms is identified with $\PD_\Gamma\Sigma$,
\item the set of objects is given by $V_\Gamma$ (the units) 
\end{itemize}

From the condition (\ref{groupoid0}) it follows that, the path groupoid satisfies additionally 
\begin{enumerate}
 \item $ s(\gamma_i\circ \gamma_j)=s(\gamma_i)\text{ and } t(\gamma_i\circ \gamma_j)=t(\gamma_j)\text{ for every } (\gamma_i,\gamma_j)\in\PD_\Gamma^2\Sigma$
\item $s(v)= v= t(v)\text{ for every } v\in V_\Gamma$
\item\label{groupoid1} $ \gamma \circ\idf_{s(\gamma)} = \gamma = \idf_{t(\gamma)}\circ \gamma\text{ for every } \gamma\in \PD_\Gamma\Sigma\text{ and }$
\item $\gamma \circ (\gamma_i\circ \gamma_j)=(\gamma \circ \gamma_i) \circ \gamma_j$
\item $\gamma \circ (\gamma^{-1}\circ \gamma_1)=\gamma_1= (\gamma_1 \circ \gamma) \circ \gamma^{-1}$
\end{enumerate}

The condition \ref{groupoid1} implies that the vertices are units of the groupoid. 

\begin{defi}
Denote the set of all finitely generated paths by
\beqs \PD_\Gamma\Sigma^{(n)}:=\{(\gamma_1,...,\gamma_n)\in \PD_\Gamma\times ...\PD_\Gamma: (\gamma_i,\gamma_{i+1})\in\PD^{(2)}, 1\leq i\leq n-1 \}\eqs
The set of paths with source point $v\in V_\Gamma$ is given by
\beqs \PD_\Gamma\Sigma^{v}:=s^{-1}(\{v\})\eqs
The set of paths with target  point $v\in V_\Gamma$ is defined by
\beqs \PD_\Gamma\Sigma_{v}:=t^{-1}(\{v\})\eqs
The set of paths with source point $v\in V_\Gamma$ and target point $u\in V_\Gamma$ is 
\beqs \PD_\Gamma\Sigma^{v}_u:=\PD_\Gamma\Sigma^{v}\cap \PD_\Gamma\Sigma_{u}\eqs
\end{defi}

A graph $\Gamma$ is said to be \hypertarget{disconnected}{\textbf{disconnected}} if it contains only mutually pairs $(\gamma_i,\gamma_j)$ of non-composable independent paths $\gamma_i$ and $\gamma_j$ for $i\neq j$ and $i,j=1,...,N$. In other words for all $1\leq i,l\leq N$ it is true that $s(\gamma_i)\neq t(\gamma_l)$ and $t(\gamma_i)\neq s(\gamma_l)$ where $i\neq l$ and $\gamma_i,\gamma_l\in\Gamma$.

\begin{defi}
Let $\Gamma$ be a graph. A \textbf{subgraph $\Gp$ of $\Gamma$} is a given by a finite set of independent paths in $\PD_\Gamma\Sigma$. 
\end{defi}
For example let $\Gamma:=\{\gamma_1,...,\gamma_N\}$ then $\Gp:=\{\gamma_1\circ\gamma_2,\gamma_3^{-1},\gamma_4\}$ where $\gamma_1\circ\gamma_2,\gamma_3^{-1},\gamma_4\in\PD_\Gamma\Sigma$ is a subgraph of $\Gamma$, whereas the set $\{\gamma_1,\gamma_1\circ\gamma_2\}$ is not a subgraph of $\Gamma$. Notice if additionally $(\gamma_2,\gamma_4)\in\PD_\Gamma^{(2)}$ holds, then $\{\gamma_1,\gamma_3^{-1},\gamma_2\circ\gamma_4\}$ is a subgraph of $\Gamma$, too. Moreover for $\Gamma:=\{\gamma\}$ the graph $\Gamma^{-1}:=\{\gamma^{-1}\}$ is a subgraph of $\Gamma$. As well the graph $\Gamma$ is a subgraph of $\Gamma^{-1}$. A subgraph of $\Gamma$ that is generated by compositions of some paths, which are not reversed in their orientation, of the set $\{\gamma_1,...,\gamma_N\}$ is called an \textbf{orientation preserved subgraph of a graph}. For example for $\Gamma:=\{\gamma_1,...,\gamma_N\}$ orientation preserved subgraphs are given by $\{\gamma_1\circ\gamma_2\}$, $\{\gamma_1,\gamma_2,\gamma_N\}$ or $\{\gamma_{N-2}\circ\gamma_{N-1}\}$ if $(\gamma_1,\gamma_2)\in\PD_\Gamma\Sigma^{(2)}$ and $(\gamma_{N-2},\gamma_{N-1})\in\PD_\Gamma\Sigma^{(2)}$.   

\begin{defi} A \textbf{finite graph system $\PD_\Gamma$ for $\Gamma$} is a finite set of subgraphs of a graph $\Gamma$. A finite graph system $\PD_{\Gp}$ for $\Gp$ is a \hypertarget{finite graph subsystem}{\textbf{finite graph subsystem}} of $\PD_\Gamma$ for $\Gamma$ if the set $\PD_{\Gp}$ is a subset of $\PD_{\Gamma}$ and $\Gp$ is a subgraph of $\Gamma$. Shortly write $\PD_{\Gp}\leq\PD_{\Gamma}$.

A \hypertarget{finite orientation preserved graph system}{\textbf{finite orientation preserved graph system}} $\PD^{\op}_\Gamma$ for $\Gamma$ is a finite set of orientation preserved subgraphs of a graph $\Gamma$. 
\end{defi}

Recall that, a finite path groupoid is constructed from a graph $\Gamma$, but a set of elements of the path groupoid need not be a graph again. For example let $\Gamma:=\{\gamma_1\circ\gamma_2\}$ and $\Gp=\{\gamma_1\circ\gamma_3\}$, then $\Gpp=\Gamma\cup\Gp$ is not a graph, since this set is not independent. Hence only appropriate unions of paths, which are elements of a fixed finite path groupoid, define graphs. The idea is to define a suitable action on elements of the path groupoid, which corresponds to an action of diffeomorphisms on the manifold $\Sigma$. The action has to be transfered to graph systems. But the action of bisection, which is defined by the use of the groupoid multiplication, cannot easily generalised for graph systems. 

\begin{problem}\label{problem group structure on graphs systems}
Let $\breve\Gamma:=\{\Gamma_i\}_{i=1,..,N}$ be a finite set such that each $\Gamma_i$ is a set of not necessarily independent paths such that 
\begin{enumerate}
\item the set contains no loops and
\item each pair of paths satisfies one of the following conditions
\begin{itemize}
\item the paths intersect each other only in one vertex,
\item the paths do not intersect each other or
\item one path of the pair is a segment of the other path.
\end{itemize}
\end{enumerate}

Then there is a map $\circ:\breve\Gamma\times \breve\Gamma\rightarrow\breve\Gamma$ of two elements $\Gamma_1$ and $\Gamma_2$ defined by
\beqs \{\gamma_1,...,\gamma_M\}\circ\{\tg_1,...,\tg_M\}:= &\Big\{ \gamma_i\circ\tg_j:t(\gamma_i)=s(\tg_j)\Big\}_{1\leq i,j\leq M}\\
\eqs for $\Gamma_1:=\{\gamma_1,...,\gamma_M\},\Gamma_2:=\{\tg_1,...,\tg_M\}$. 
Moreover define a map $^{-1}:\breve\Gamma\rightarrow\breve\Gamma$ by
\beqs  \{\gamma_1,...,\gamma_M\}^{-1}:= \{\gamma^{-1}_1,...,\gamma^{-1}_M\}\eqs 

Then the following is derived
\beqs \{\gamma_1,...,\gamma_M\}\circ\{\gamma^{-1}_1,...,\gamma^{-1}_M\}&=\Big\{ \gamma_i\circ\gamma^{-1}_j: t(\gamma_i)=t(\gamma_j)\Big\}_{1\leq i,j\leq M}\\
&=\Big\{ \gamma_i\circ\gamma^{-1}_j:t(\gamma_i)=t(\gamma_j)\text{ and }i\neq j\Big\}_{1\leq i,j\leq M}\\
&\quad\cup\{\idf_{s_{\gamma_j}}\}_{1\leq j\leq M}\\
\neq &\quad\cup\{\idf_{s_{\gamma_j}}\}_{1\leq j\leq M}
\eqs The equality is true, if the set $\breve\Gamma$ contains only graphs such that all paths are mutually non-composable. Consequently this does not define a well-defined multiplication map. Notice that, the same is discovered if a similar map and inversion operation are defined for a finite graph system $\PD_\Gamma$. 
\end{problem}

Consequently the property of paths being independent need not be dropped for the definition of a suitable multiplication and inversion operation. In fact the independence property is a necessary condition for the construction of the holonomy algebra for analytic paths. Only under this circumstance each analytic path is decomposed into a finite product of independent piecewise analytic paths again. 

\begin{defi}
A finite path groupoid $\PD_{\Gp}\Sigma$ over $V_{\Gp}$ is a \textbf{finite path subgroupoid} of $\PD_{\Gamma}\Sigma$ over $V_\Gamma$ if the set $V_{\Gp}$ is contained in $V_\Gamma$ and the set $\PD_{\Gp}\Sigma$ is a subset of $\PD_{\Gamma}\Sigma$. Shortly write $\PD_{\Gp}\Sigma\leq\PD_{\Gamma}\Sigma$.
\end{defi}

Clearly for a subgraph $\Gamma_1$ of a graph $\Gamma_2$, the associated path groupoid $\PD_{\Gamma_1}\Sigma$ over $V_{\Gamma_1}$ is a subgroupoid of $\PD_{\Gamma_2}\Sigma$ over $V_{\Gamma_2}$.  This is a consequence of the fact that, each path in $\PD_{\Gamma_1}\Sigma$ is a composition of paths or their inverses in $\PD_{\Gamma_2}\Sigma$. 

\begin{defi}
A \textbf{family of finite path groupoids} $\{\PD_{\Gamma_i}\Sigma\}_{i=1,...,\infty}$, which is a set of finite path groupoids $\PD_{\Gamma_i}\Sigma$ over $V_{\Gamma_i}$, is said to be \textbf{inductive} if for any $\PD_{\Gamma_1}\Sigma,\PD_{\Gamma_2}\Sigma$ exists a $\PD_{\Gamma_3}\Sigma$ such that $\PD_{\Gamma_1}\Sigma,\PD_{\Gamma_2}\Sigma\leq\PD_{\Gamma_3}\Sigma$.

A \textbf{family of graph systems} $\{\PD_{\Gamma_i}\}_{i=1,...,\infty}$, which is a set of finite path systems $\PD_{\Gamma_i}$ for $\Gamma_i$, is said to be \textbf{inductive} if for any $\PD_{\Gamma_1},\PD_{\Gamma_2}$ exists a $\PD_{\Gamma_3}$ such that $\PD_{\Gamma_1},\PD_{\Gamma_2}\leq \PD_{\Gamma_3}$.
\end{defi}

\begin{defi}
Let $\{\PD_{\Gamma_i}\Sigma\}_{i=1,...,\infty}$ be an inductive family of path groupoids and $\{\PD_{\Gamma_i}\}_{i=1,...,\infty}$ be an inductive family of graph systems.

The \textbf{inductive limit path groupoid $\PD$ over $\Sigma$} of an inductive family of finite path groupoids such that $\PD:=\limi\PD_{\Gamma_i}\Sigma$ is called the \textbf{(algebraic) path groupoid} $\PGoS$.

Moreover there exists an \textbf{inductive limit graph $\Gamma_\infty$} of an inductive family of graphs such that $\Gamma_\infty:=\limi \Gamma_i$.

The \textbf{inductive limit graph system} $\PD_{\Gamma_\infty}$ of an inductive family of graph systems such that $\PD_{\Gamma_\infty}:=\limi \PD_{\Gamma_i}$
\end{defi}

Assume that, the inductive limit $\Gamma_\infty$ of a inductive family of graphs is a graph, which consists of an infinite countable number of independent paths. The inductive limit $\PD_{\Gamma_\infty}$ of a inductive family $\{\PD_{\Gamma_i}\}$ of finite graph systems contains an infinite countable number of subgraphs of $\Gamma_\infty$ and each subgraph is a finite set of arbitrary independent paths in $\Sigma$. 

\subsection{Holonomy maps for finite path groupoids, graph systems and transformations}\label{subsec holmapsfinpath}
In section \ref{subsec fingraphpathgroup} the concept of finite path groupoids for analytic paths has been given. Now the holonomy maps are introduced for finite path groupoids and finite graph systems. The ideas are familar with those presented by Thiemann \cite{Thiembook07}. But for example the finite graph systems have not been studied before. Ashtekar and Lewandowski \cite{AshLew93} have defined the analytic holonomy $C^*$-algebra, which they have based on a finite set of independent hoops. The hoops are generalised for path groupoids and the independence requirement is implemented by the concept of finite graph systems. 

\subsubsection{Holonomy maps for finite path groupoids}\label{subsubsec holmap}

\paragraph*{Groupoid morphisms for finite path groupoids}\hspace{10pt} 

Let $\GGim, \GGiim$ be two arbitrary groupoids.

\begin{defi}
A \hypertarget{groupoid-morphism}{\textbf{groupoid morphism}} between two groupoids $\GG_1$ and $\GG_2$ consists of two maps  $\ho:\GG_1\rightarrow\GG_2$  and $h:\GG_1^0\rightarrow\GG_2^0$ such that
\beqs (\hypertarget{G1}{G1})\qquad \ho(\gamma\circ\gp)&= \ho(\gamma)\ho(\gp)\text{ for all }(\gamma,\gp)\in \GG_1^{(2)}\eqs
\beqs (\hypertarget{G2}{G2})\qquad s_{2}(\ho(\gamma))&=h(s_{1}(\gamma)),\quad t_2(\ho(\gamma))=h(t_{1}(\gamma))\eqs 
 
A \textbf{strong groupoid morphism} between two groupoids $\GG_1$ and $\GG_2$ additionally satisfies
\beqs (\hypertarget{SG2}{SG})\qquad \text{ for every pair }(\ho(\gamma),\ho(\gp))\in\GG_2^{(2)}\text{ it follows that }(\gamma,\gp)\in \GG_1^{(2)}\eqs
\end{defi}

Let $G$ be a Lie group. Then $G$ over $e_G$ is a groupoid, where the group multiplication $\cdot: G^2\rightarrow G$ is defined for all elements  $g_1,g_2,g\in G$ such that $g_1\cdot g_2 = g$. A groupoid morphism between a finite path groupoid $\PD_\Gamma\Sigma$ to $G$ is given by the maps
\[\ho_\Gamma: \PD_\Gamma\Sigma\rightarrow G,\quad h_\Gamma:V_\Gamma\rightarrow e_G \] Clearly
\beq \ho_\Gamma(\gamma\circ\gp)&= \ho_\Gamma(\gamma)\ho_\Gamma(\gp)\text{ for all }(\gamma,\gp)\in \PD_\Gamma\Sigma^{(2)}\\
s_G(\ho_\Gamma(\gamma))&=h_\Gamma(s_{\PD_\Gamma\Sigma}(\gamma)),\quad t_G(\ho_\Gamma(\gamma))=h_\Gamma(t_{\PD_\Gamma\Sigma}(\gamma))
\eq But for an arbitrary pair $(\ho_\Gamma(\gamma_1),\ho_\Gamma(\gamma_2))=:(g_1,g_2)\in G^{(2)}$ it does not follows that, $(\gamma_1,\gamma_2)\in \PD_\Gamma\Sigma^{(2)}$ is true. Hence $\ho_\Gamma$ is not a strong groupoid morphism.

\begin{defi}\label{def sameholanal}Let $\fPG$ be a finite path groupoid.

Two paths $\gamma$ and $\gp$ in $\PD_\Gamma\Sigma$ have the \textbf{same-holonomy for all connections} iff 
\beqs \ho_\Gamma(\gamma)=\ho_\Gamma(\gp)\text{ for all }&(\ho_\Gamma,h_\Gamma)\text{ groupoid morphisms }\\ & \ho_\Gamma:\PD_\Gamma\Sigma\rightarrow G, h:V_\Gamma\rightarrow\{e_G\}
\eqs Denote the relation by $\sim_{\text{s.hol.}}$.
\end{defi}
\begin{lem}
The same-holonomy for all connections relation is an equivalence relation. 
\end{lem}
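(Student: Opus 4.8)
The plan is to verify the three defining properties of an equivalence relation — reflexivity, symmetry and transitivity — directly from the definition, exploiting the observation that $\sim_{\text{s.hol.}}$ is built as a universal quantification, over the fixed class of \emph{all} groupoid morphisms $(\ho_\Gamma,h_\Gamma)$ with $\ho_\Gamma:\PD_\Gamma\Sigma\rightarrow G$ and $h_\Gamma:V_\Gamma\rightarrow\{e_G\}$, of the elementary relation ``$\ho_\Gamma(\gamma)=\ho_\Gamma(\gp)$ in $G$''. Since ordinary equality in the group $G$ is itself an equivalence relation, each of the three properties will transfer through the quantifier clause by clause.

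First I would record reflexivity: for every such morphism one trivially has $\ho_\Gamma(\gamma)=\ho_\Gamma(\gamma)$, so $\gamma\sim_{\text{s.hol.}}\gamma$. For symmetry I would assume $\gamma\sim_{\text{s.hol.}}\gp$, so that $\ho_\Gamma(\gamma)=\ho_\Gamma(\gp)$ holds for every morphism; applying symmetry of equality in $G$ to each clause yields $\ho_\Gamma(\gp)=\ho_\Gamma(\gamma)$ for every morphism, i.e.\ $\gp\sim_{\text{s.hol.}}\gamma$. For transitivity I would assume $\gamma\sim_{\text{s.hol.}}\gp$ and $\gp\sim_{\text{s.hol.}}\gpp$, fix an arbitrary morphism $(\ho_\Gamma,h_\Gamma)$, note $\ho_\Gamma(\gamma)=\ho_\Gamma(\gp)$ and $\ho_\Gamma(\gp)=\ho_\Gamma(\gpp)$, and conclude $\ho_\Gamma(\gamma)=\ho_\Gamma(\gpp)$ by transitivity of equality in $G$; since the morphism was arbitrary, $\gamma\sim_{\text{s.hol.}}\gpp$.

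The nearest thing to an obstacle — and it is only a matter of bookkeeping — is to observe that the class of morphisms over which one quantifies is the same in all three arguments and is independent of the paths being compared. Once this is fixed, the universal quantifier commutes harmlessly with each equivalence-relation axiom, and no property of the holonomy maps beyond their being well-defined $G$-valued functions is used. In particular, nothing here invokes the groupoid-morphism conditions (G1) or (G2): those would enter only if one wished to describe the quotient $\PD_\Gamma\Sigma/\!\sim_{\text{s.hol.}}$ and its induced groupoid structure, not to establish that $\sim_{\text{s.hol.}}$ is an equivalence relation.
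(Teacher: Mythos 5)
Your proof is correct: the relation is the universal quantification, over a fixed class of morphisms, of equality in $G$, and each equivalence-relation axiom passes through that quantifier exactly as you argue. The paper itself states this lemma without proof, treating it as immediate, so your verification is precisely the routine argument the author suppressed, and your closing remark that the groupoid-morphism conditions (G1)--(G2) play no role here is also accurate.
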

Notice that, the quotient of the finite path groupoid and the same-holonomy relation for all connections replace the hoop group, which has been used in \cite{AshLew93}.
\begin{defi}\label{genrestgroupoidforgraph}
Let $\fPG$ be a finite path groupoid modulo same-holonomy for all connections equivalence.

A \hypertarget{holonomy map for a finite path groupoid}{\textbf{holonomy map for a finite path groupoid}} $\PD_\Gamma\Sigma$ over $V_\Gamma$ is a groupoid morphism consisting of the maps $(\ho_\Gamma,h_\Gamma)$, where
\(\ho_\Gamma:\PD_\Gamma\Sigma\rightarrow G,h_\Gamma:V_\Gamma\rightarrow \{e_G\}\). 
The set of all holonomy maps is abbreviated by $\Hom(\PD_\Gamma\Sigma,G)$.
\end{defi}

For a short notation observe the following.
In further sections it is always assumed that, the finite path groupoid $\fPG$ is considered modulo same-holonomy for all connections equivalence although it is not stated explicitly.

\paragraph*{Admissable maps and equivalent groupoid morphisms}\hspace{10pt}

Now consider a finite path groupoid morphism $(\ho_\Gamma,h_\Gamma)$ from a finite path groupoid $\PD_\Gamma\Sigma$ over $V_\Gamma$ to the groupoid $G$ over $\{e_G\}$, which is contained in $\Hom(\PD_\Gamma\Sigma,G)$.

Consider an arbitrary map $\go_\Gamma: \PD_\Gamma\Sigma\rightarrow G$. Then there is a groupoid morphism defined by
\beq\label{eq similarity_2b} \Go_\Gamma(\gamma)&:= \go_\Gamma(\gamma)\ho_\Gamma(\gamma)\go_\Gamma(\gamma^{-1})^{-1}\text{ for all }\gamma\in\PD_\Gamma\Sigma\\
\eq if and only if 
\beqs \Go_\Gamma(\gamma_1\circ\gamma_2)&=\Go_\Gamma(\gamma_1)\Go_\Gamma(\gamma_2)\text{ for all }(\gamma_1,\gamma_2)\in\PD_\Gamma\Sigma^{(2)}\eqs holds. Then $\Go_\Gamma\in \Hom(\PD_\Gamma\Sigma,G)$.

Hence for all $(\gamma_1,\gamma_2)\in\PD_\Gamma\Sigma^{(2)}$ it is necessary that
\beqs  \Go_\Gamma(\gamma_1\circ\gamma_2)
&= \go_\Gamma(\gamma_1\circ\gamma_2)\ho_\Gamma(\gamma_1\circ\gamma_2)\go_\Gamma(\gamma_2^{-1}\circ\gamma_1^{-1})^{-1}\\
&=\go_\Gamma(\gamma_1\circ\gamma_2)\ho_\Gamma(\gamma_1)\ho_\Gamma(\gamma_2)\go_\Gamma(\gamma_2^{-1}\circ\gamma_1^{-1})^{-1}\\
&\overset{!}{=}\go_\Gamma(\gamma_1)\ho_\Gamma(\gamma_1)\go_\Gamma(\gamma_1^{-1})^{-1}\go_\Gamma(\gamma_2)\ho_\Gamma(\gamma_2)\go_\Gamma(\gamma_2^{-1})^{-1}
\eqs is satisfied. Therefore the map is required to fulfill
\beq\label{equ2 g_gamma} &\go_\Gamma(\gamma_1)=\go_\Gamma(\gamma_1\circ\gamma_2)\text{, }\go_\Gamma(\gamma_2^{-1})=\go_\Gamma((\gamma_1\circ\gamma_2)^{-1})\text{ and }\\
&\go_\Gamma(\gamma_1^{-1})^{-1}\go_\Gamma(\gamma_2)=e_G\text{ for all }(\gamma_1,\gamma_2)\in\PD_\Gamma\Sigma^{(2)}\text{ in particular, }\\
&\go_\Gamma(\gamma^{-1})^{-1}\go_\Gamma(\gamma)=e_G\text{ for all }(\gamma^{-1},\gamma)\in\PD_\Gamma\Sigma^{(2)}
\eq
for every refinement $\gamma_1\circ\gamma_2$ of each $\gamma$ in $\PD_\Gamma\Sigma$ and $\gamma_1$ being an initial segment of $\gamma_1\circ \gamma_2$ and $\gamma_2^{-1}$ an final segment of $(\gamma_1\circ\gamma_2)^{-1}$.
In comparison with Fleischhack's definition in \cite[Def. 3.7]{Fleischhack06} such maps are called admissible.
\begin{defi}\label{def admiss}
The set of maps $\go_\Gamma:\PD_\Gamma\Sigma\rightarrow G$ satisfying \eqref{equ2 g_gamma} for all pairs of decomposable paths in $\PD_\Gamma^{(2)}\Sigma$ is called the \textbf{set of admissible maps} and is denoted by $\Map^{\adm}(\PD_\Gamma\Sigma,G)$. 
\end{defi} 

Consider a map $g_\Gamma:V_\Gamma\rightarrow G$ such that 
\beqs(g_\Gamma,\ho_\Gamma)\in \Map(V_\Gamma,G)\times \Hom(\PD_\Gamma\Sigma,G)
\eqs which is also called a local gauge map.
Then the map $ \tilde\Go_\Gamma$ defined by
\beq\label{eq similarity1} \tilde\Go_\Gamma(\gamma)&:= g_\Gamma(s(\gamma))\ho_\Gamma(\gamma)g_\Gamma(s(\gamma^{-1}))^{-1}\text{ for all }\gamma\in\PD_\Gamma\Sigma\eq is a groupoid morphism. This is a result of the computation: 
\beqs \tilde\Go_\Gamma(\gamma_1\gamma_2)&= g_\Gamma(s(\gamma_1))\ho_\Gamma(\gamma_1\gamma_2)g_\Gamma(t(\gamma_2))^{-1}\\&= g_\Gamma(s(\gamma_1))\ho_\Gamma(\gamma_1)g_\Gamma(t(\gamma_1))^{-1}
g_\Gamma(s(\gamma_2))\ho_\Gamma(\gamma_2)g_\Gamma(t(\gamma))^{-1}\eqs
since $t(\gamma_1)=s(\gamma_2)$.  

\begin{defi}\label{def similargroupoidhom}
Two groupoid morphisms $(\ho_\Gamma,h_\Gamma)$ and $(\Go_\Gamma,h_\Gamma)$, or respectively $(\tilde\Go_\Gamma,h_\Gamma)$, between the groupoids $\PD_\Gamma$ over $V_\Gamma$ and the groupoid $G$ over $\{e_G\}$, which are defined for $(\go_\Gamma,\ho_\Gamma)\in\Map(\PD_\Gamma\Sigma,G)\times\Hom(\PD_\Gamma\Sigma,G)$ by \eqref{eq similarity_2b}, or respectively for $(g_\Gamma,\ho_\Gamma)\in \Map(V_\Gamma,G)\times \Hom(\PD_\Gamma\Sigma,G)$ by \eqref{eq similarity1}, are said to be \textbf{similar or equivalent groupoid morphisms}.
\end{defi}

\subsubsection{Holonomy maps for finite graph systems}\label{subsec graphhol}

Ashtekar and Lewandowski \cite{AshLew93} have presented the loop decomposition into a finite set of independent hoops (in the analytic category). This structure is replaced by a graph, since a graph is a set of independent edges. Notice that, the set of hoops that is generated by a finite set of independent hoops, is generalised to the set of finite graph systems. A finite path groupoid is generated by the set of edges, which defines a graph $\Gamma$, but a set of elements of the path groupoid need not be a graph again. The appropriate notion for graphs constructed from sets of paths is the finite graph system, which is defined in section \ref{subsec fingraphpathgroup}. Now the concept of holonomy maps is generalised for finite graph systems. Since the set, which is generated by a finite number of independent edges, contains paths that are composable, there are two possibilities to identify the image of the holonomy map for a finite graph system on a fixed graph with a subgroup of $G^{\vert\Gamma\vert}$. One way is to use the generating set of independend edges of a graph, which has been also used in \cite{AshLew93}. On the other hand, it is also possible to identify each graph with a disconnected subgraph of a fixed graph, which is generated by a set of independent edges. Notice that, the author implements two situations. One case is given by a set of paths that can be composed further and the other case is related to paths that are not composable. This is necessary for the definition of an action of the flux operators. Precisely the identification of the image of the holonomy maps along these paths is necessary to define a well-defined action of a flux element on the configuration space. This issue is studied in remark \ref{rem fluxlikeoperators} in section \ref{subsec dynsysfluxgroup}.

First of all consider a graph $\Gamma$ that is generated by the set $\{\gamma_1,...,\gamma_N\}$ of edges. Then each subgraph of a graph $\Gamma$ contain paths that are composition of edges in $\{\gamma_1,...,\gamma_N\}$ or inverse edges. For example the following set $\Gp:=\{\gamma_1\circ\gamma_2\circ\gamma_3,\gamma_4\}$ defines a subgraph of $\Gamma:=\{\gamma_1,\gamma_2,\gamma_3,\gamma_4\}$. Hence there is a natural identification available.

\begin{defi}
A subgraph $\Gp$ of a graph $\Gamma$ is always generated by a subset $\{\gamma_1,...,\gamma_M\}$ of the generating set $\{\gamma_1,...,\gamma_N\}$ of independent edges that generates the graph $\Gamma$. Hence each subgraph is identified with a subset of $\{\gamma_1^{\pm 1},...,\gamma_N^{\pm 1}\}$. This is called the \hypertarget{natural identification}{\textbf{natural identification of subgraphs}}.
\end{defi}

\begin{exa}\label{exa natidentif}
For example consider a subgraph $\Gp:=\{\gamma_1\circ\gamma_2,\gamma_3\circ\gamma_4,...,\gamma_{M-1}\circ\gamma_M\}$, which is identified naturally with a set $\{\gamma_1,...,\gamma_M\}$. The set $\{\gamma_1,...,\gamma_M\}$ is a subset of $\{\gamma_1,...,\gamma_N\}$ where $N=\vert \Gamma\vert$ and $M\leq N$. 

Another example is given by the graph $\Gpp:=\{\gamma_1,\gamma_2\}$ such that $\gamma_2=\gpe\circ\gpz$, then $\Gpp$ is identified naturally with $\{\gamma_1,\gpe,\gpz\}$. This set is a subset of $\{\gamma_1,\gpe,\gpz,\gamma_3,...,\gamma_{N-1}\}$. 
\end{exa}

\begin{defi}
Let $\Gamma$ be a graph, $\PD_\Gamma$ be the finite graph system. Let $\Gp:=\{\gamma_1,...,\gamma_M\}$be a subgraph of $\Gamma$.

A \hypertarget{holonomy map for a finite graph system}{\textbf{holonomy map for a finite graph system}} $\PD_\Gamma$ is a given by a pair of maps $(\ho_\Gamma,h_\Gamma)$ such that there exists a holonomy map\footnote{In the work the holonomy map for a finite graph system and the holonomy map for a finite path groupoid is denoted by the same pair $(\ho_\Gamma,h_\Gamma)$.} $(\ho_\Gamma,h_\Gamma)$ for the finite path groupoid $\fPG$ and
\beqs &\ho_\Gamma:\PD_\Gamma\rightarrow G^{\vert \Gamma\vert},\quad \ho_\Gamma(\{\gamma_1,...,\gamma_M\})=(\ho_\Gamma(\gamma_1),...,\ho_\Gamma(\gamma_M), e_G,...,e_G)\\
&h_\Gamma:V_\Gamma\rightarrow \{e_G\}
\eqs 
The set of all holonomy maps for the finite graph system is denoted by $\Hom(\PD_\Gamma,G^{\vert \Gamma\vert})$.

The image of a map $\ho_\Gamma$ on each subgraph $\Gp$ of the graph $\Gamma$ is given by
\beqs (\ho_\Gamma(\gamma_1),...,\ho_\Gamma(\gamma_M),e_G,...,e_G)
\eqs is an element of $G^{\vert \Gamma\vert}$. The set of all images of maps on subgraphs of $\Gamma$ is denoted by $\Ab_\Gamma$.
\end{defi}
The idea is now to study two different restrictions of the set $\PD_\Gamma$ of subgraphs. For a short notation of a ''set of  holonomy maps for a certain restricted set of subgraphs of a graph'' in this article the following notions are introduced.
\begin{defi}
If the subset of all disconnected subgraphs of the finite graph system $\PD_\Gamma$ is considered, then the restriction of $\Ab_\Gamma$, which is identified with $G^{\vert \Gamma\vert}$ appropriately, is called the \hypertarget{non-standard identification}{\textbf{non-standard identification of the configuration space}}. If the subset of all natural identified subgraphs of the finite graph system $\PD_\Gamma$ is considered, then the restriction of $\Ab_\Gamma$, which is identified with $G^{\vert \Gamma\vert}$ appropriately, is called the \hypertarget{natural identification}{\textbf{natural identification of the configuration space}}.
\end{defi}

A comment on the non-standard identification of $\Ab_\Gamma$ is the following. If $\Gp:=\{\gamma_1\circ\gamma_2\}$ and $\Gpp:=\{\gamma_2\}$ are two subgraphs of $\Gamma:=\{\gamma_1,\gamma_2,\gamma_3\}$. The graph $\Gp$ is a subgraph of $\Gamma$. Then evaluation of a map $\ho_\Gamma$ on a subgraph $\Gp$ is given by
\beqs \ho_\Gamma(\Gp)=(\ho_\Gamma(\gamma_1\circ\gamma_2),\ho_\Gamma(s(\gamma_2)),\ho_\Gamma(s(\gamma_3)))=(\ho_\Gamma(\gamma_1)\ho_\Gamma(\gamma_2),e_G,e_G)\in G^3
\eqs and the holonomy map of the subgraph $\Gpp$ of $\Gp$ is evaluated by
\beqs \ho_\Gamma(\Gpp)=(\ho_\Gamma(s(\gamma_1)),\ho_\Gamma(s(\gamma_2))\ho_\Gamma(\gamma_2),\ho_\Gamma(s(\gamma_3)))=(\ho_\Gamma(\gamma_2),e_G,e_G)\in G^3
\eqs

\begin{exa}
Recall example \thesubsection.\ref{exa natidentif}.
For example for a subgraph $\Gp:=\{\gamma_1\circ\gamma_2,\gamma_3\circ\gamma_4,...,\gamma_{M-1}\circ\gamma_M\}$, which is naturally identified with a set $\{\gamma_1,...,\gamma_M\}$. Then the holonomy map is evaluated at $\Gp$ such that \[\ho_\Gamma(\Gp)=(\ho_\Gamma(\gamma_1),\ho_\Gamma(\gamma_2),....,\ho_\Gamma(\gamma_M),e_G,...,e_G)\in G^N\] where $N=\vert \Gamma\vert$. For example, let $\Gp:=\{\gamma_1,\gamma_2\}$ such that $\gamma_2=\gpe\circ\gpz$ and which is naturally identified with $\{\gamma_1,\gpe,\gpz\}$. Hence \[\ho_\Gamma(\Gp)=(\ho_\Gamma(\gamma_1),\ho_\Gamma(\gpe),\ho_\Gamma(\gpz),e_G,...,e_G)\in G^N\] is true.

Another example is given by the disconnected graph $\Gp:=\{\gamma_1\circ\gamma_2\circ\gamma_3,\gamma_4\}$, which is a subgraph of $\Gamma:=\{\gamma_1,\gamma_2,\gamma_3,\gamma_4\}$. Then the non-standard identification is given by
\[\ho_\Gamma(\Gp)=(\ho_\Gamma(\gamma_1\circ\gamma_2\circ\gamma_3),\ho_\Gamma(\gamma_4),e_G,e_G)\in G^4\]

If the natural identification is used, then $\ho_\Gamma(\Gp)$ is idenified with 
\[(\ho_\Gamma(\gamma_1),\ho_\Gamma(\gamma_2),\ho_\Gamma(\gamma_3),\ho_\Gamma(\gamma_4))\in G^4\]

Consider the following example. Let $\Gppp:=\{\gamma_1,\alpha,\gamma_2,\gamma_3\}$ be a graph such that 
 \begin{center}
\includegraphics[width=0.2\textwidth]{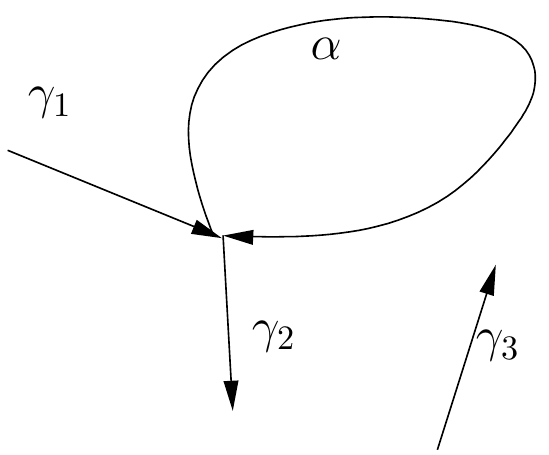}
\end{center}
Then notice the sets $\Gamma_1:=\{\gamma_1\circ\alpha,\gamma_3\}$ and $\Gamma_2:=\{\gamma_1\circ\alpha^{-1},\gamma_3\}$. In the non-standard identification of the configuration space $\Ab_{\Gppp}$ it is true that,
\beqs \ho_{\Gppp}(\Gamma_1)=(\ho_{\Gppp}(\gamma_1\circ\alpha),\ho_{\Gppp}(\gamma_3),e_G,e_G)\in G^4,\\
\ho_{\Gppp}(\Gamma_2)=(\ho_{\Gppp}(\gamma_1\circ\alpha^{-1}),\ho_{\Gppp}(\gamma_3),e_G,e_G)\in G^4
\eqs holds. Whereas in the natural identification of $\Ab_{\Gppp}$
 \beqs \ho_{\Gppp}(\Gamma_1)=(\ho_{\Gppp}(\gamma_1),\ho_{\Gppp}(\alpha),\ho_{\Gppp}(\gamma_3),e_G)\in G^4,\\
\ho_{\Gppp}(\Gamma_2)=(\ho_{\Gppp}(\gamma_1),\ho_{\Gppp}(\alpha^{-1}),\ho_{\Gppp}(\gamma_3),e_G)\in G^4
\eqs yields.
\end{exa}

The equivalence class of similar or equivalent groupoid morphisms defined in definition \ref{def similargroupoidhom} allows to define the following object.
The set of images of all holonomy maps of a finite graph system modulo the similar or equivalent groupoid morphisms equivalence relation is denoted by $\Ab_\Gamma/\bar\SimGroup_\Gamma$. 

\subsubsection{Transformations in finite path groupoids and finite graph systems}\label{subsubsec bisections}

The aim of this section is to clearify the graph changing operators in LQG framework and the role of finite diffeomorphisms in $\Sigma$. 
Therefore operations, which add, delete or transform paths, are introduced.  In particular translations in a finite path graph groupoid and in the groupoid $G$ over $\{e_G\}$ are studied. 

\paragraph*{Transformations in finite path groupoid\\[5pt]}

\begin{defi}
Let $\varphi$ be a $C^k$-diffeomorphism on $\Sigma$, which maps surfaces into surfaces. 

Then let $(\Phi_\Gamma,\varphi_\Gamma)$ be a pair of bijective maps, where $\varphi\vert_{V_\Gamma}=\varphi_\Gamma$ and 
\beq\Phi_\Gamma:\PD_\Gamma\Sigma\rightarrow\PD_\Gamma\Sigma\text{ and }\varphi_\Gamma:V_\Gamma\rightarrow V_\Gamma\eq 
such that 
\beq (s\circ\Phi_\Gamma)(\gamma)=(\varphi_\Gamma\circ s)(\gamma),\quad (t\circ \Phi_\Gamma)(\gamma)=(\varphi_\Gamma\circ t)(\gamma)\text{ for all }\gamma\in\PD_\Gamma\Sigma\eq holds such that $(\Phi_\Gamma,\varphi_\Gamma)$ defines a groupoid morphism.

Call the pair $(\Phi_\Gamma,\varphi_\Gamma)$ a \textbf{path-diffeomorphism of a finite path groupoid} $\PD_\Gamma\Sigma$ over $V_\Gamma$. Denote the set of finite path-diffeomorphisms by $\Diff(\PD_\Gamma\Sigma)$.
\end{defi}

Notice that, for $(\gamma,\gp)\in\PD_\Gamma\Sigma^{(2)}$ it is true that
\beq\label{eq requcombi0} \Phi_\Gamma(\gamma\circ\gp)=\Phi_\Gamma(\gamma)\circ\Phi_\Gamma(\gp)
\eq requires that
\beq\label{eq requcombi} (t\circ\Phi_\Gamma)(\gamma)=(s\circ\Phi_\Gamma)(\gp)
\eq Hence from \eqref{eq requcombi0} and \eqref{eq requcombi} it follows that, $\Phi_\Gamma(\idf_v)=\idf_{\varphi_\Gamma(v)}$ is true.

A path-diffeomorphism $(\Phi_\Gamma,\varphi_\Gamma)$ is lifted to $\Hom(\PD_\Gamma\Sigma,G)$. \\
The pair $(\ho_\Gamma\circ\Phi_\Gamma,h_\Gamma\circ\varphi_\Gamma)$ defined by
\beqs \ho_\Gamma\circ\Phi_\Gamma&: \PD_\Gamma\Sigma\rightarrow G,\quad \gamma\mapsto (\ho_\Gamma\circ\Phi_\Gamma)(\gamma)\\
h_\Gamma\circ\varphi_\Gamma&: V_\Gamma\rightarrow \{e_G\},\quad (h_\Gamma\circ\varphi_\Gamma)(v)=e_G
\eqs such that
\beqs &s_{\Hol}((\ho_\Gamma\circ\Phi_\Gamma)(\gamma))=(h_\Gamma\circ\varphi_\Gamma)(s(\gamma))=e_G,\\
&t_{\Hol}(\ho_\Gamma\circ\Phi_\Gamma(\gamma))=(h_\Gamma\circ\varphi_\Gamma)(t(\gamma))=e_G\text{ for all }\gamma\in\PD_\Gamma\Sigma
\eqs whenever $(\ho_\Gamma,h_\Gamma)\in\Hom(\PD_\Gamma\Sigma,G)$ and $(\Phi_\Gamma,\varphi_\Gamma)$ is a path-diffeomorphism, is a \hyperlink{holonomy map for a finite path groupoid}{holonomy map for a finite path groupoid} $\PD_\Gamma\Sigma$ over $V_\Gamma$.

\begin{defi}
A \textbf{left-translation in the finite path groupoid} $\PD_\Gamma\Sigma$ over $V_\Gamma$ at a vertex $v$ is a map defined by
\beqs L_\theta:\PD_\Gamma\Sigma^v\rightarrow \PD_\Gamma\Sigma^{w},\quad \gamma\mapsto L_{\theta}(\gamma):=\theta\circ\gamma
\eqs
for some $\theta\in\PD_\Gamma\Sigma_{v}^{w}$ and all $\gamma\in\PD_\Gamma\Sigma^v$.
\end{defi} 
In analogy a right-translation $R_\theta$ and an inner-translation $I_{\theta,\theta^\prime}$ in the finite path groupoid $\PD_\Gamma\Sigma$ over $V_\Gamma$ at a vertex $v$ can be defined.
\begin{rem}
Let $(\Phi_\Gamma,\varphi_\Gamma)$ be a path-diffeomorphism on a finite path groupoid $\PD_\Gamma\Sigma$ over $V_\Gamma$. Then a left-translation in the finite path groupoid $\PD_\Gamma\Sigma$ over $V_\Gamma$ at a vertex $v$ is defined by a path-diffeomorphism $(\Phi_\Gamma,\varphi_\Gamma)$ and the following object
\beq L_{\Phi_\Gamma}:\PD_\Gamma\Sigma^v\rightarrow \PD_\Gamma\Sigma^{\varphi_\Gamma(v)},\quad \gamma\mapsto L_{\Phi_\Gamma}(\gamma):=\Phi_\Gamma(\gamma)\text{ for }\gamma\in\PD_\Gamma\Sigma^v
\eq
Furthermore a right-translation in the finite path groupoid $\PD_\Gamma\Sigma$ over $V_\Gamma$ at a vertex $v$ is defined by a path-diffeomorphism $(\Phi_\Gamma,\varphi_\Gamma)$ and the following object
\beq R_{\Phi_\Gamma}:\PD_\Gamma\Sigma_v\rightarrow \PD_\Gamma\Sigma_{\varphi_\Gamma(v)},\quad \gamma\mapsto R_{\Phi_\Gamma}(\gamma):=\Phi_\Gamma(\gamma)\text{ for }\gamma\in\PD_\Gamma\Sigma_v
\eq

Finally an inner-translation in the finite path groupoid $\PD_\Gamma\Sigma$ over $V_\Gamma$ at the vertices $v$ and $w$ is defined by
\beqs  I_{\Phi_\Gamma}:\PD_\Gamma\Sigma^v_w\rightarrow \PD_\Gamma\Sigma^{\varphi_\Gamma(v)}_{\varphi_\Gamma(w)},\quad \gamma\mapsto I_{\Phi_\Gamma}(\gamma)=\Phi_\Gamma(\gamma)\text{ for }\gamma\in\PD_\Gamma\Sigma^v_w
\eqs where $(s\circ\Phi_\Gamma)(\gamma)=\varphi_\Gamma(v)$ and $(t\circ\Phi_\Gamma)(\gamma)=\varphi_\Gamma(w)$.
\end{rem}

In the following considerations the right-translation in a finite path groupoid is focused, but there is a generalisation to left-translations and inner-translations.
\begin{defi}
A \hypertarget{bisection of a finite path groupoid}{\textbf{bisection of a finite path groupoid}} $\PD_\Gamma\Sigma$ over $V_\Gamma$ is a map $\sigma:V_\Gamma\rightarrow\PD_\Gamma\Sigma$, which is right-inverse to the map $s:\PD_\Gamma\Sigma\rightarrow V_\Gamma$ (i.o.w. $s\circ\sigma=\id_{V_\Gamma}$) and such that $t\circ\sigma:V_\Gamma\rightarrow V_{\Gamma}$ is a bijective map\footnote{Note that in the infinite case of path groupoids an additional condition for the map $t\circ\sigma:\Sigma\rightarrow\Sigma$ has to be required. The map has to be a diffeomorphism. Observe that, the map $t\circ\sigma$ defines the finite diffeomorphism $\varphi_\Gamma:V_\Gamma\rightarrow V_\Gamma$.}. The set of bisections on $\PD_\Gamma\Sigma$ over $V_\Gamma$ is denoted $\mathfrak{B}(\PD_\Gamma\Sigma)$.
\end{defi}
\begin{rem}\label{rem defiofright}
Discover that, a bisection $\sigma\in\mathfrak{B}(\PD_\Gamma\Sigma)$ defines a path-diffeomorphism $(\varphi_\Gamma,\Phi_\Gamma)\in \Diff(\PD_\Gamma\Sigma)$, where $\varphi_\Gamma=t\circ\sigma$ and $\Phi_\Gamma$ is given by the right-translation $R_{\sigma(v)}:\PD_\Gamma\Sigma_v\rightarrow\PD_\Gamma\Sigma_{\varphi_\Gamma(v)}$ in $\fPG$, where $R_{\sigma(v)}(\gamma)=\Phi_\Gamma(\gamma)$ for all $\gamma\in\PD_\Gamma\Sigma_v$ and for a fixed $v\in V_\Gamma$. The right-translation is defined by 
\beq\label{eq Rendv}R_{\sigma(v)}(\gamma):=
\left\{\begin{array}{ll}
 \gamma\circ\sigma(v) & v=t(\gamma)\\
\gamma\circ\idf_{t(\gamma)} & v\neq t(\gamma)\\
\end{array}\right.\\
\eq whenever $t(\gamma)$ is the target vertex of a non-trivial path $\gamma$ in $\Gamma$. For a trivial path $\idf_v$ the right-translation is defined by $R_{\sigma(v)}(\idf_v)=\idf_{(t\circ\sigma)(v)}$ and $R_{\sigma(v)}(\idf_w)=\idf_{w}$ whenever $v\neq w$. The right-translation $R_{\sigma(v)}$ is required to be bijective. Before this result is proven in lemma \ref{lem path-diffeom} notice the following considerations. 
\end{rem}

Note that, $(R_{\sigma(v)},t\circ \sigma)$ transfers to the holonomy map such that
\beq\label{eq righttransl} (\ho_\Gamma\circ R_{\sigma(t(\gp))}(\gamma\circ\gp)&=\ho_\Gamma(\gamma\circ\gp\circ\sigma(t(\gp)))\\
&=\ho_\Gamma(\gamma)\ho_\Gamma(\gp\circ\sigma(t(\gp)))
\eq is true.
There is a bijective map between a right-translation $R_{\sigma(v)}:\PD_{\Gamma}\Sigma_v\rightarrow\PD_\Gamma\Sigma_{(t\circ\sigma)(v)}$ and a path-diffeomorphism $(\varphi_\Gamma,\Phi_\Gamma)$. In particular observe that, $\sigma\in\mathfrak{B}(\PD_\Gamma\Sigma_v)$ and $(\varphi_\Gamma,\Phi_\Gamma)\in\Diff(\PD_\Gamma\Sigma_v)$. Simply speaking the path-diffeomorphism does not change the source and target vertex at the same time. The path-diffomorphism changes the target vertex by a (finite) diffeomorphism and, therefore, the path is transformed. 

Bisections $\sigma$ in a finite path groupoid can be transfered, likewise path-diffeomorphisms, to holonomy maps. The pair $(\ho_\Gamma\circ \Phi_\Gamma, h_\Gamma\circ\varphi_\Gamma)$ of the maps
defines a pair of maps $(\ho_\Gamma\circ \Phi_\Gamma,h_\Gamma\circ\varphi_\Gamma)$ by 
\beq \ho_\Gamma\circ \Phi_\Gamma:\PD_\Gamma\Sigma_v\rightarrow G\text{ and } h_\Gamma\circ\varphi_\Gamma: V_\Gamma\rightarrow\{e_G\}
\eq which is a \hyperlink{holonomy map for a finite path groupoid}{holonomy map for a finite path groupoid} $\PD_\Gamma\Sigma$ over $V_\Gamma$.

\begin{lem}\label{lem groupbisection}The set $\mathfrak{B}(\PD_\Gamma\Sigma)$ of bisections on the finite path groupoid $\PD_\Gamma\Sigma$ over $V_\Gamma$ forms a group.
\end{lem}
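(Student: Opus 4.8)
The plan is to endow $\mathfrak{B}(\PD_\Gamma\Sigma)$ with the natural pointwise multiplication of groupoid bisections and to read off the group axioms from the groupoid structure fixed in Definition \ref{path groupoid}. For $\sigma,\tau\in\mathfrak{B}(\PD_\Gamma\Sigma)$ I would set $\phi_\sigma:=t\circ\sigma$ (a bijection of $V_\Gamma$ by the defining property of a bisection) and define
\[(\sigma\ast\tau)(v):=\sigma(v)\circ\tau\big(\phi_\sigma(v)\big),\qquad v\in V_\Gamma.\]
The first point to settle is that the right-hand side actually lies in $\PD_\Gamma\Sigma$: the pair $\big(\sigma(v),\tau(\phi_\sigma(v))\big)$ is composable because $t(\sigma(v))=\phi_\sigma(v)=s\big(\tau(\phi_\sigma(v))\big)$, where the last equality uses $s\circ\tau=\id_{V_\Gamma}$, so it belongs to $\PD_\Gamma\Sigma^{(2)}$ and the composition is defined.

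Next I would check closure, i.e. that $\sigma\ast\tau$ is again a bisection. From $s(\gamma_i\circ\gamma_j)=s(\gamma_i)$ one gets $s\circ(\sigma\ast\tau)=s\circ\sigma=\id_{V_\Gamma}$, so $\sigma\ast\tau$ is a right-inverse of $s$; and from $t(\gamma_i\circ\gamma_j)=t(\gamma_j)$ one gets $t\circ(\sigma\ast\tau)=\phi_\tau\circ\phi_\sigma$, a composition of two bijections of $V_\Gamma$, hence a bijection. Thus $\sigma\ast\tau\in\mathfrak{B}(\PD_\Gamma\Sigma)$, and incidentally $\phi_{\sigma\ast\tau}=\phi_\tau\circ\phi_\sigma$, which is the bookkeeping identity used everywhere below. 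Associativity is then immediate from the associativity of $\circ$ in $\PD_\Gamma\Sigma$: expanding $((\sigma\ast\tau)\ast\rho)(v)$ and $(\sigma\ast(\tau\ast\rho))(v)$ both reduce to $\sigma(v)\circ\tau(\phi_\sigma(v))\circ\rho\big((\phi_\tau\circ\phi_\sigma)(v)\big)$ once one substitutes $\phi_{\sigma\ast\tau}=\phi_\tau\circ\phi_\sigma$.

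For the unit I would take the object inclusion map $\iota:V_\Gamma\to\PD_\Gamma\Sigma$, $v\mapsto\idf_v$, of Definition \ref{path groupoid}; since each $\idf_v$ is a unit one has $s\circ\iota=t\circ\iota=\id_{V_\Gamma}$, so $\iota\in\mathfrak{B}(\PD_\Gamma\Sigma)$, and the fact that $\idf_v$ is a two-sided identity for $\circ$ gives $\iota\ast\tau=\tau=\tau\ast\iota$ directly. For inverses I would use the invertibility of $\phi_\sigma=t\circ\sigma$ to define
\[\sigma^{-1}(w):=\Big(\sigma\big(\phi_\sigma^{-1}(w)\big)\Big)^{-1},\qquad w\in V_\Gamma;\]
a short source/target computation gives $s\circ\sigma^{-1}=\id_{V_\Gamma}$ and $t\circ\sigma^{-1}=\phi_\sigma^{-1}$, so $\sigma^{-1}$ is again a bisection, and the cancellation relations $\gamma\circ\gamma^{-1}\simeq\idf$, $\gamma^{-1}\circ\gamma\simeq\idf$ of \eqref{groupoid0} collapse both $\sigma\ast\sigma^{-1}$ and $\sigma^{-1}\ast\sigma$ to $\iota$.

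There is no deep difficulty here; the entire content is careful bookkeeping of source and target vertices ensuring that every composition invoked genuinely lies in $\PD_\Gamma\Sigma^{(2)}$, and that the target-shift $\phi_\sigma$ is tracked correctly through products. The only step I expect to require real attention is the construction and verification of the inverse: one must invoke the bijectivity of $t\circ\sigma$ (the extra condition distinguishing a bisection from an arbitrary section of $s$) to define $\sigma^{-1}$ at every vertex, and then be careful about which unit $\idf_v$ each cancellation in \eqref{groupoid0} produces when checking that both one-sided products equal $\iota$. It is worth noting that the assignment $\sigma\mapsto\phi_\sigma$ is, by the closure computation, a group homomorphism from $\mathfrak{B}(\PD_\Gamma\Sigma)$ into the permutations of $V_\Gamma$, i.e. into the finite diffeomorphisms $\varphi_\Gamma$ of Remark \ref{rem defiofright}, which is the structural reason the group axioms go through so smoothly.
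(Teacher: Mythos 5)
Your proof is correct and follows essentially the same route as the paper's (which is far terser): pointwise groupoid multiplication, the object inclusion $v\mapsto\idf_v$ as unit, and the inverse $\sigma^{-1}(v)=\sigma\big((t\circ\sigma)^{-1}(v)\big)^{-1}$, with your added value being the explicit composability, closure and cancellation bookkeeping the paper omits. One caveat: your product $(\sigma\ast\tau)(v)=\sigma(v)\circ\tau\big(\phi_\sigma(v)\big)$ is the paper's multiplication in the opposite order --- the paper sets $(\sigma\ast\sigma^\prime)(v)=\sigma^\prime(v)\circ\sigma\big(t(\sigma^\prime(v))\big)$ --- so under your convention $\phi_{\sigma\ast\tau}=\phi_\tau\circ\phi_\sigma$, and your closing claim that $\sigma\mapsto\phi_\sigma$ is a group homomorphism into the permutations of $V_\Gamma$ is off by an order reversal (it is an anti-homomorphism); the paper's ordering is chosen precisely so that $\sigma\mapsto t\circ\sigma$ becomes a genuine homomorphism, as asserted in the proposition following Lemma \ref{lem path-diffeom}.
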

\begin{proofs}The group multiplication is given by
\beqs (\sigma\ast\sigma^\prime)(v) =\sigma^\prime(v)\circ\sigma(t(\sigma^\prime(v)))\text{ for }v\in V_\Gamma
\eqs whenever $\sigma^\prime(v)\in\PD_\Gamma\Sigma^v_{\varphi_\Gamma^\prime(v)}$ and $\sigma(t(\sigma^\prime(v)))\in\PD_\Gamma\Sigma^{(t\circ\sigma^\prime)(v)}_{\varphi_\Gamma(v)}$.

Clearly the group multiplication is associative.
The unit $\id$ is equivalent to the object inclusion $v\mapsto\idf_v$ of the groupoid $\fPG$, where $\idf_v$ is the constant loop at $v$, and the inversion is given by
\beqs \sigma^{-1}(v)=\sigma((t\circ\sigma)^{-1}(v))^{-1}\text{ for }v\in V_\Gamma
\eqs 
\end{proofs}

The group property of bisections $\mathfrak{B}(\PD_\Gamma\Sigma)$ carries over to holonomy maps. Using the group multiplication $\cdot $ of $G$ conclude that
\beqs (\ho_\Gamma\circ R_{(\sigma\ast\sigma^\prime)(v)})(\idf_{v})=\ho_\Gamma\circ (R_{\sigma^\prime(v)}\circ R_{\sigma(t(\sigma^\prime(v)))})(\idf_{v})
= \ho_\Gamma(\sigma^\prime(v))\cdot\ho_\Gamma(\sigma(t(\sigma^\prime(v))))\text{ for }v\in V_\Gamma
\eqs is true. 
\begin{rem}
Moreover right-translations define path-diffeomorphisms, i.e. $R_{(\sigma)(v)}=\Phi_\Gamma$ and $\varphi_\Gamma=t\circ\sigma$ whenever $v\in V_\Gamma$.
But for two bisections $\sigma_\Gp,\breve\sigma_\Gp\in\mathfrak{B}(\PD_\Gamma\Sigma)$ the object $\sigma_\Gp(v)\circ\breve\sigma_\Gp(v)$ is not comparable with $(\sigma_\Gp\ast\breve\sigma_\Gp)(v)$. Then for the composition $\Phi_1(\gamma)\circ\Phi_2(\gamma)$, there exists no path-diffeomorphism $\Phi$ such that $\Phi_1(\gamma)\circ\Phi_2(\gamma)=\Phi(\gamma)$ yields in general. Moreover generally the object $\Phi_1(\gamma)\circ\Phi_2(\gp)=\Phi(\gamma\circ\gp)$ is not well-defined.

But the following is defined
\beq\label{def compositionofdiffeo} R_{(\sigma\ast\sigma^\prime)(v)}(\gamma)=\Phi_\Gamma^\prime(\gamma)\circ\Phi_\Gamma(\idf_{\varphi_\Gamma^\prime(v)})=:(\Phi_\Gamma^\prime\ast\Phi_\Gamma)(\gamma)
\eq whenever $\gamma\in\PD_\Gamma\Sigma_v$, $(\varphi_\Gamma,\Phi_\Gamma)\in\Diff(\PD_\Gamma\Sigma_v)$ and $(\varphi_\Gamma^\prime,\Phi_\Gamma^\prime)\in\Diff(\PD_\Gamma\Sigma_{\varphi_\Gamma^\prime(v)})$ are path-diffeomorphisms  such that $\varphi_\Gamma=t\circ\sigma$, $\Phi_\Gamma=R_{\sigma(\varphi_\Gamma^\prime(v))}$ and  $\varphi_\Gamma^\prime=t\circ\sigma^\prime$, $\Phi_\Gamma^\prime=R_{\sigma^\prime(v)}$.

Moreover for $(\gamma,\gp)\in\PD_\Gamma\Sigma^{(2)}$ and $\gp\in\PD_\Gamma\Sigma_v$ it is true that
\beqs (\Phi_\Gamma^\prime\ast\Phi_\Gamma)(\gamma\circ\gp)=\Phi_\Gamma^\prime(\gamma\circ\gp)\circ\Phi_\Gamma(\idf_{\varphi_\Gamma^\prime(v)}) = \Phi_\Gamma^\prime(\gamma)\circ\Phi_\Gamma^\prime(\gp)\circ\Phi_\Gamma(\idf_{\varphi_\Gamma^\prime(v)}) = \Phi_\Gamma^\prime(\gamma)\circ(\Phi_\Gamma^\prime\ast\Phi_\Gamma)(\gp)
\eqs holds.
\end{rem}

Then the following lemma easily follows.

\begin{lem}\label{lem path-diffeom}Let $\sigma$ be a bisection contained in $\mathfrak{B}(\PD_\Gamma\Sigma)$ and $v\in V_\Gamma$.

The pair $(R_{\sigma(v)},t\circ\sigma)$ of maps such that
\beqs &R_{\sigma(v)}:\PD_\Gamma\Sigma_v\rightarrow\PD_\Gamma\Sigma_{(t\circ \sigma)(v)},\quad &
s\circ R_{\sigma(v)} = (t\circ\sigma)\circ s\\
&t\circ\sigma:V_\Gamma\rightarrow V_\Gamma,\quad &t\circ R_{\sigma(v)}= (t\circ\sigma)\circ t
\eqs defined in remark \ref{rem defiofright} is a path-diffeomorphism in $\fPG$.
\end{lem}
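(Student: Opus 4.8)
The plan is to check, point by point, that $(R_{\sigma(v)},t\circ\sigma)$ satisfies every clause in the definition of a path-diffeomorphism: that both component maps are bijective, that they intertwine the source and target maps of $\fPG$ in the prescribed way, and that $R_{\sigma(v)}$ is a groupoid morphism. Almost all of the needed ingredients have already been assembled in remark \ref{rem defiofright} (the explicit form of $R_{\sigma(v)}$ and formula \eqref{eq righttransl}) and in Lemma \ref{lem groupbisection} (the group structure of $\mathfrak{B}(\PD_\Gamma\Sigma)$), so the work should mostly consist of collecting these facts and reading off the conclusion; this is why the statement is advertised as following easily.

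First I would record the intertwining relations and simultaneously confirm the codomain. For $\gamma\in\PD_\Gamma\Sigma_v$ one has $t(\gamma)=v=s(\sigma(v))$, so $\gamma\circ\sigma(v)$ is composable, and the composition rules $s(\gamma_i\circ\gamma_j)=s(\gamma_i)$ and $t(\gamma_i\circ\gamma_j)=t(\gamma_j)$ give immediately that $R_{\sigma(v)}$ leaves the source of $\gamma$ unchanged while transporting its target by $t\circ\sigma$, i.e. $t(R_{\sigma(v)}(\gamma))=t(\sigma(v))=(t\circ\sigma)(v)=(t\circ\sigma)(t(\gamma))$. This verifies both that $R_{\sigma(v)}$ indeed lands in $\PD_\Gamma\Sigma_{(t\circ\sigma)(v)}$ and that it is compatible along the target map with the vertex map $t\circ\sigma$ displayed in the statement.

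Next I would prove bijectivity of $R_{\sigma(v)}$ by exhibiting a two-sided inverse. Since $\sigma$ is a bisection, $\sigma(v)$ is invertible in the groupoid with $t(\sigma(v)^{-1})=s(\sigma(v))=v$, so the assignment $\delta\mapsto\delta\circ\sigma(v)^{-1}$ maps $\PD_\Gamma\Sigma_{(t\circ\sigma)(v)}$ back into $\PD_\Gamma\Sigma_v$; the groupoid cancellation identities $\gamma\circ(\gamma^{-1}\circ\gamma_1)=\gamma_1=(\gamma_1\circ\gamma)\circ\gamma^{-1}$ then show that this assignment is inverse to $R_{\sigma(v)}$ on both sides. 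Bijectivity of $t\circ\sigma$ is part of the definition of a bisection. Alternatively, both bijectivity statements can be extracted from Lemma \ref{lem groupbisection} using the inversion $\sigma^{-1}(v)=\sigma((t\circ\sigma)^{-1}(v))^{-1}$, whose right-translation inverts that of $\sigma$.

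Finally, and this is the step that I expect to require the most care, I would verify the groupoid-morphism property. For a composable pair $(\gamma,\gp)$ with $\gp\in\PD_\Gamma\Sigma_v$ (so that $\gamma\circ\gp\in\PD_\Gamma\Sigma_v$ as well), the right-translation attaches $\sigma(v)$ only to the final segment, so by associativity $R_{\sigma(v)}(\gamma\circ\gp)=(\gamma\circ\gp)\circ\sigma(v)=\gamma\circ(\gp\circ\sigma(v))=\gamma\circ R_{\sigma(v)}(\gp)$, which is precisely the identity already recorded in \eqref{eq righttransl} once pushed through $\ho_\Gamma$. The delicate point, flagged in the remark preceding the lemma, is exactly this asymmetry: $R_{\sigma(v)}$ acts as the identity on an initial segment whose target is not $v$ and appends $\sigma(v)$ only at the end where the target equals $v$, so one must keep careful track of which factor of a composite actually receives $\sigma(v)$ and check that this is consistent with the source- and target-intertwining relations and with associativity. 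I expect this bookkeeping, rather than any genuine difficulty, to be the main obstacle; once it is settled, the four verified properties assemble directly into the assertion that $(R_{\sigma(v)},t\circ\sigma)$ is a path-diffeomorphism in $\fPG$.
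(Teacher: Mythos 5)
Your proposal is correct and takes essentially the same route as the paper's own proof: the groupoid-morphism property via $R_{\sigma(v)}(\gamma\circ\gp)=\gamma\circ R_{\sigma(v)}(\gp)$ (the right-translation acting trivially on the initial segment), bijectivity by exhibiting the inverse right-translation that appends $\sigma(v)^{-1}$ (the paper writes this as $R^{-1}_{\sigma(v)}=R_{\sigma^{-1}(v)}$), and the source/target intertwining read off from the composition rules. The one discrepancy is that you record $s\circ R_{\sigma(v)}=s$ whereas the lemma displays $s\circ R_{\sigma(v)}=(t\circ\sigma)\circ s$; the paper itself merely asserts that clause without derivation, consistent with its convention in remark \ref{rem defiofright} that the right-translation moves only the target vertex, so your treatment is at least as careful on this point.
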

\begin{proofs}This follows easily from the derivation
\beqs R_{\sigma(t(\gp))}(\gamma\circ\gp)&=\gamma\circ\gp\circ\sigma(t(\gp))
= R_{\sigma(t(\gp))}(\gamma)\circ R_{\sigma(t(\gp))}(\gp)
\eqs 
\beqs R_{\sigma(t(\gamma))}(\idf_{s(\gamma)}\circ\gamma)&=R_{\sigma(t(\gamma))}(\idf_{s(\gamma)})\circ R_{\sigma(t(\gamma))}(\gamma)=\idf_{s(\gamma)}\circ\gamma\circ\sigma(t(\gamma))\\
R_{\sigma(t(\gamma))}(\gamma\circ\idf_{t(\gamma)})&=R_{\sigma(t(\gamma))}(\gamma)\circ R_{\sigma(t(\gamma))}(\idf_{t(\gamma)})=\gamma\circ\sigma(t(\gamma))\circ\idf_{(t\circ\sigma)(t(\gamma))}
\eqs The inverse map satisfies 
\beqs R^{-1}_{\sigma(v)}(\gamma\circ\sigma(v))=R_{\sigma^{-1}(v)}(\gamma\circ\sigma(v))
=\gamma\circ\sigma(v)\circ\sigma^{-1}(v)=\gamma
\eqs whenever $v=t(\gamma)$, 
\beqs R^{-1}_{\sigma(v)}(\gamma)=\gamma
\eqs whenever $v\neq t(\gamma)$ and
\beqs R^{-1}_{\sigma(v)}(\idf_{(t\circ\sigma)(v)})=\idf_{v}
\eqs

Moreover derive
\beqs ( s\circ R_{\sigma(v)})(\gp)= ((t\circ\sigma)\circ s)(\gp)
\eqs for all $\gp\in\PD_\Gamma\Sigma_v$ and a fixed bisection $\sigma\in \mathfrak{B}(\PD_\Gamma\Sigma)$.
 \end{proofs}

Notice that, $L_{\sigma(v)}$ and $I_{\sigma(v)}$ similarly to the pair $(R_{\sigma(v)},t\circ\sigma)$ can be defined. Summarising the pairs $(R_{\sigma(v)},t\circ\sigma)$, $(L_{\sigma(v)},t\circ\sigma)$ and $(I_{\sigma(v)},t\circ\sigma)$ for a bisection $\sigma \in\mathfrak{B}(\PD_\Gamma\Sigma)$ are path-diffeomorphisms of a finite path groupoid $\fPG$.

In general a right-translation $(R_\sigma,t\circ\sigma)$ in the finite path groupoid $\PD_\Gamma\Sigma$ over $\Sigma$ for a bisection $\sigma\in\mathfrak{B}(\PD_\Gamma\Sigma)$ is defined by the bijective maps $R_\sigma$ and $t\circ\sigma$, which are given by 
\beqs 
&R_{\sigma}:\PD_\Gamma\Sigma\rightarrow\PD_\Gamma\Sigma,\quad &
s\circ R_{\sigma} =  s\qquad\quad\text{ }\\
&t\circ\sigma:V_\Gamma\rightarrow V_\Gamma,\quad &t\circ R_{\sigma}= (t\circ\sigma)\circ t\\
&R_\sigma(\gamma):=\gamma\circ\sigma(t(\gamma))\quad\forall \gamma\in \PD_\Gamma\Sigma;\quad R^{-1}_\sigma:=R_{\sigma^{-1}}
\eqs For example for a fixed suitable bisection $\sigma$ the right-translation is $R_\sigma(\idf_v)=\gamma$, then $R^{-1}_\sigma(\gamma)=\gamma\circ\gamma^{-1}=\idf_v$ for $v=s(\gamma)$. Clearly the right-translation $(R_\sigma,t\circ\sigma)$ is not a groupoid morphism in general. 

\begin{defi}
Define for a given bisection $\sigma$ in $\mathfrak{B}(\PD_\Gamma\Sigma)$, the \textbf{right-translation in the groupoid $G$ over $\{e_G\}$} through
\beqs &\ho_\Gamma\circ R_{\sigma}:\PD_\Gamma\Sigma\rightarrow G, \quad \gamma\mapsto (\ho_\Gamma\circ R_\sigma)(\gamma):=\ho_\Gamma(\gamma\circ\sigma(t(\gamma)))= \ho_\Gamma(\gamma)\cdot\ho_\Gamma(\sigma(t(\gamma))) \\
&h_\Gamma\circ t\circ\sigma:V_\Gamma\rightarrow e_G 
\eqs 

Furthermore for a fixed $\sigma\in\mathfrak{B}(\PD_\Gamma\Sigma)$ define
the \textbf{left-translation in the groupoid $G$ over $\{e_G\}$} by
\beqs &\ho_\Gamma\circ L_\sigma:\PD_\Gamma\Sigma\rightarrow G,\quad \gamma\mapsto \ho_\Gamma(\sigma((t\circ\sigma)^{-1}(s(\gamma)))\circ\gamma)=\ho_\Gamma(\sigma((t\circ\sigma)^{-1}(s(\gamma))))\cdot\ho_\Gamma(\gamma)\\
&h_\Gamma\circ t\circ\sigma:V_\Gamma\rightarrow e_G 
\eqs
and the \textbf{inner-translation in the groupoid $G$ over $\{e_G\}$}
\beqs &\ho_\Gamma\circ I_\sigma:\PD_\Gamma\Sigma\rightarrow G,\quad \gamma\mapsto \ho_\Gamma(\sigma((t\circ\sigma)^{-1}(s(\gamma)))\circ\gamma\circ\sigma(t(\gamma)))=\ho_\Gamma(\sigma((t\circ\sigma)^{-1}(s(\gamma))))\cdot\ho_\Gamma(\gamma)\cdot\ho_\Gamma(\sigma(t(\gamma)))\\
&h_\Gamma\circ t\circ\sigma:V_\Gamma\rightarrow e_G 
\eqs such that $I_\sigma=L_{\sigma^{-1}}\circ R_\sigma$.
\end{defi}

The pairs $(R_\sigma,t\circ\sigma)$ and $(L_\sigma,t\circ\sigma)$ are not groupoid morphisms. Whereas the pair $(I_\sigma,t\circ\sigma)$ is a groupoid morphism, since for all pairs $(\gamma,\gp)\in\PD_\Gamma\Sigma^{(2)}$ such that $t(\gamma)=s(\gp)$ it is true that $\sigma(t(\gamma)) \circ\sigma((t\circ\sigma)^{-1}(t(\gamma)))^{-1}=\idf_{t(\gamma)}$ holds. Notice that, in this situation $\sigma(t(\gamma))=\sigma(t(\gamma\circ\gp))$ is satisfied.

\begin{prop}
The map $\sigma\mapsto R_\sigma$ is a group isomorphism, i.e. $R_{\sigma\ast\sigma^\prime}=R_\sigma\circ R_{\sigma^\prime}$ and where $\sigma\mapsto t\circ\sigma$ is a group isomorphism from $\mathfrak{B}(\PD_\Gamma\Sigma)$ to the group of finite diffeomorphisms $\Diff(V_\Gamma)$ in a finite subset $V_\Gamma$ of $\Sigma$. 

The maps $\sigma\mapsto L_\sigma$ and $\sigma\mapsto I_\sigma$ are group isomorphisms.
\end{prop}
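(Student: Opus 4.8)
The plan is to verify that each map is a bijective group homomorphism by reducing everything to the twisted multiplication $(\sigma\ast\sigma')(v)=\sigma'(v)\circ\sigma(t(\sigma'(v)))$ of Lemma \ref{lem groupbisection}, together with the two defining properties of a bisection: $s\circ\sigma=\id_{V_\Gamma}$ and bijectivity of $t\circ\sigma$. First I would treat $\sigma\mapsto R_\sigma$. The homomorphism identity is a one-line computation using associativity of $\circ$: for $\gamma\in\PD_\Gamma\Sigma$,
\[
(R_\sigma\circ R_{\sigma'})(\gamma)=R_\sigma\big(\gamma\circ\sigma'(t(\gamma))\big)=\gamma\circ\sigma'(t(\gamma))\circ\sigma\big(t(\sigma'(t(\gamma)))\big),
\]
and since $s\circ\sigma'=\id$ the middle factor has target $t(\sigma'(t(\gamma)))$, so the right-hand side is exactly $\gamma\circ(\sigma\ast\sigma')(t(\gamma))=R_{\sigma\ast\sigma'}(\gamma)$. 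Injectivity is then immediate, because $\sigma$ is recovered from $R_\sigma$ by evaluation on units, $R_\sigma(\idf_v)=\idf_v\circ\sigma(v)=\sigma(v)$; the inverse translation $R_\sigma^{-1}=R_{\sigma^{-1}}$ recorded in the discussion preceding Lemma \ref{lem path-diffeom} shows the image is a group, so $\sigma\mapsto R_\sigma$ is an isomorphism onto the group of right-translations.

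Next, the homomorphism property of $\sigma\mapsto t\circ\sigma$ drops out of the same computation: applying $t$ to $(\sigma\ast\sigma')(v)$ gives $t\big(\sigma(t(\sigma'(v)))\big)=(t\circ\sigma)\big((t\circ\sigma')(v)\big)$, whence $t\circ(\sigma\ast\sigma')=(t\circ\sigma)\circ(t\circ\sigma')$. Surjectivity onto $\Diff(V_\Gamma)$ uses transitivity of $\PD_\Gamma\Sigma$ over $V_\Gamma$: given a bijection $\varphi$ of $V_\Gamma$, for each $v$ one picks a path from $v$ to $\varphi(v)$ and declares it $\sigma(v)$, producing a bisection with $t\circ\sigma=\varphi$.

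For $L$ and $I$ I would run the identical bookkeeping, now tracking the source-vertex shift by $(t\circ\sigma)^{-1}$ built into their definitions. One finds that left-translation reverses the order, $L_{\sigma\ast\sigma'}=L_{\sigma'}\circ L_\sigma$, so $\sigma\mapsto L_\sigma$ is an (anti-)isomorphism onto the left-translation group (the opposite-group twist being harmless since inversion already identifies $\mathfrak{B}(\PD_\Gamma\Sigma)$ with its opposite). The relation $I_\sigma=L_{\sigma^{-1}}\circ R_\sigma$ together with the commutation of left and right translations — which is just associativity acting on opposite ends of a path — then restores a genuine homomorphism $I_{\sigma\ast\sigma'}=I_\sigma\circ I_{\sigma'}$. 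Bijectivity of $L_\sigma$ and $I_\sigma$ again follows by recovering $\sigma$ on units.

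The main obstacle is the injectivity of $\sigma\mapsto t\circ\sigma$. Unlike $R_\sigma$, the vertex map $t\circ\sigma$ forgets which path from $v$ to $(t\circ\sigma)(v)$ the bisection selects, so a careful argument (or a standing same-holonomy reduction strong enough to make that path unique) is needed before one may identify $\mathfrak{B}(\PD_\Gamma\Sigma)$ with $\Diff(V_\Gamma)$ on the nose; the genuinely faithful statements are the isomorphisms $\sigma\mapsto R_\sigma,L_\sigma,I_\sigma$ onto their respective translation groups, and it is through these that $\mathfrak{B}(\PD_\Gamma\Sigma)$ acquires its link with the finite diffeomorphisms. I expect the remaining checks (that $L_\sigma$ and $I_\sigma$ are well defined and bijective, and that $R^{-1}_\sigma=R_{\sigma^{-1}}$) to be routine once the composition-order conventions are fixed.
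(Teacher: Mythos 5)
You should know at the outset that the paper offers \emph{no} proof of this proposition at all: it is stated bare, between the definition of the translations and the discussion of graphomorphisms. So your argument is not competing with an official one; it has to stand alone, and for the most part it does. The right-translation part is correct and complete: $(R_\sigma\circ R_{\sigma^\prime})(\gamma)=\gamma\circ\sigma^\prime(t(\gamma))\circ\sigma\bigl(t(\sigma^\prime(t(\gamma)))\bigr)=\gamma\circ(\sigma\ast\sigma^\prime)(t(\gamma))$ is exactly the multiplication of Lemma \ref{lem groupbisection}, injectivity via $R_\sigma(\idf_v)=\sigma(v)$ is sound, and $R^{-1}_\sigma=R_{\sigma^{-1}}$ closes the image under composition. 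Your two caveats are also correct, and they are defects of the proposition rather than of your proof: $\sigma\mapsto t\circ\sigma$ is a surjective homomorphism (surjectivity by the assumed transitivity, as you argue) whose kernel is nontrivial whenever the groupoid contains distinct paths with equal source and target --- for instance any bisection assigning to each vertex a nontrivial loop based there --- which is presumably why the paper elsewhere only calls this map a group morphism; and with the paper's composition conventions one indeed gets $L_{\sigma\ast\sigma^\prime}=L_{\sigma^\prime}\circ L_\sigma$, so $\sigma\mapsto L_\sigma$ is an anti-isomorphism onto the group of left-translations, exactly as you say.

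The one genuine gap is your last claim, that bijectivity of $\sigma\mapsto I_\sigma$ ``again follows by recovering $\sigma$ on units.'' It does not: inner translations fix every unit, since $I_\sigma(\idf_v)=\sigma(v)^{-1}\circ\idf_v\circ\sigma(v)$ is again a unit, so nothing about $\sigma$ can be read off there. Worse, the map is genuinely non-injective. Take $\Gamma=\{\gamma_1,\gamma_2\}$ with two independent edges from $v$ to $w$, and let $\sigma(v)=\gamma_1\circ\gamma_2^{-1}$, $\sigma(w)=\gamma_2^{-1}\circ\gamma_1$; this is a bisection with $t\circ\sigma=\id_{V_\Gamma}$, and one computes $I_\sigma(\gamma_1)=(\gamma_2\circ\gamma_1^{-1})\circ\gamma_1\circ(\gamma_2^{-1}\circ\gamma_1)=\gamma_1$ and likewise $I_\sigma(\gamma_2)=\gamma_2$, hence $I_\sigma=\id$ on all of $\PD_\Gamma\Sigma$ although $\sigma\neq\id$ (the loop groups here are infinite cyclic, so conjugation is trivial). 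Thus the isomorphism claim for $I_\sigma$ is false for the same structural reason as the isomorphism claim for $t\circ\sigma$. What your commutation argument $L_\tau\circ R_\sigma=R_\sigma\circ L_\tau$ does correctly establish is the homomorphism identity $I_{\sigma\ast\sigma^\prime}=I_\sigma\circ I_{\sigma^\prime}$; the honest form of the proposition is therefore that $\sigma\mapsto R_\sigma$ and $\sigma\mapsto L_\sigma$ are an isomorphism, respectively an anti-isomorphism, onto the corresponding translation groups, while $\sigma\mapsto t\circ\sigma$ and $\sigma\mapsto I_\sigma$ are merely surjective homomorphisms onto their images, with in general nontrivial kernel.
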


There is a generalisation of path-diffeomorphisms in the finite path groupoid, which coincide with the graphomorphism presented by Fleischhack in \cite{Fleischhack06}. In this approach the diffeomorphism $\varphi:\Sigma\rightarrow\Sigma$ changes the source and target vertex of a path $\gamma$. Consequently the path-diffeomorphism $(\Phi,\varphi)$, which implements the inner-translation $I_{\Phi}$ in the path groupoid $\PGs$, is a graphomorphism in the context of Fleischhack. Some element of the set of graphomorphisms is directly related to a right-translation $R_\sigma$ in the path groupoid.  Precisely for every $v\in\Sigma$ and $\sigma\in\mathfrak{B}(\PD\Sigma)$ the pairs $(R_{\sigma(v)},t\circ\sigma)$, $(L_{\sigma(v)},t\circ\sigma)$ and $(I_{\sigma(v)},t\circ\sigma)$ define graphomorphism. Furthermore the right-translation $R_{\sigma(v)}$,  the left-translation $L_{\sigma(v)}$ and the inner-translation $I_{\sigma(v)}$ are required to be bijective maps, and hence the maps cannot map non-trivial paths to trivial paths. This property restricts the set of all graphomorphism, which is generated by these translations. In particular in this article graph changing operations, which change the number of edges of a graph, are studied. Hence the left- or right-translation in a finite path groupoid is used in the further development. Notice that in general, these objects do not define graphomorphism.
Finally notice that, in particular for the graphomorphism $(R_{\sigma(v)},t\circ\sigma)$ and a holonomy map for the path groupoid $\PGs$ a similar relation \eqref{eq righttransl} holds. The last equation is fundamental for the construction of $C^*$-dynamical systems, which contain the analytic holonomy $C^*$-algebra restricted to a finite path groupoid $\fPG$ and a point norm continuous action of the finite path-diffeomorphism group $\Diff(V_\Gamma)$ on this algebra. Clearly the right-, left- and inner-translations $R_\sigma$, $L_\sigma$ and $I_\sigma$ are constructed such that \eqref{eq righttransl} generalises. But note that, in the infinite case considered by Fleischhack the action of the bisections $\mathfrak{B}(\PD\Sigma)$ are not point-norm continuous implemented. The advantage of the usage of bisections is that, the map $\sigma\mapsto t\circ\sigma$ is a group morphism between the group $\mathfrak{B}(\PD\Sigma)$ of bisections in $\PGs$ and the group $\Diff(\Sigma)$ of diffeomorphisms in $\Sigma$. Consequently there is an action of the group of diffeomorphisms in $\Sigma$ on the finite path groupoid, which is used to define an action of the group of diffemorphisms in $\Sigma$ on the analytic holonomy $C^*$-algebra. 

\paragraph*{Transformations in finite graph systems\\[5pt]}
To proceed it is necessary to transfer the notion of bisections and right-translations to finite graph systems. A right-translation $R_{\sigma_\Gamma}$  is a mapping that maps graphs to graphs. Each graph is a finite union of independent edges. This causes problems. Since the definition of right-translation in a finite graph system $\PD_\Gamma$ is often not well-defined for all bisections in the finite graph system and all graphs. For example if the graph $\Gamma:=\{\gamma_1,\gamma_2\}$ is disconnected and the bisection $\tilde\sigma$ in the finite path groupoid $\PD_{\Gamma}\Sigma$ over $V_\Gamma$ is defined by $\tilde\sigma(s(\gamma_1))=\gamma_1$, $\tilde\sigma(s(\gamma_2))=\gamma_2$, $\tilde\sigma(t(\gamma_1))=\gamma_1^{-1}$ and $\tilde\sigma(t(\gamma_2))=\gamma_2^{-1}$ where $V_\Gamma:=\{s(\gamma_1),t(\gamma_1),s(\gamma_2),t(\gamma_2)\}$. Let $\idf_\Gamma$ be the set given by the elements $\idf_{s(\gamma_1)}$,$\idf_{s(\gamma_2)}$,$\idf_{t(\gamma_1)}$ and $\idf_{t(\gamma_2)}$. Then notice that, a bisection $\sigma_\Gamma$, which maps a set of vertices in $V_\Gamma$ to a set of paths in $\PD_\Gamma\Sigma$, is given for example by $\sigma_\Gamma(V_\Gamma)= \{\gamma_1,\gamma_2,\gamma_1^{-1},\gamma_2^{-1}\}$. In this case the right-translation $R_{\sigma_\Gamma(V_\Gamma)}(\idf_\Gamma)$ is equivalent to $\{\gamma_1,\gamma_2,\gamma_1^{-1},\gamma_2^{-1}\}$, which is not a set of independent edges and hence not a graph. Loosely speaking the graph-diffeomorphism acts on all vertices in the set $V_\Gamma$ and hence implements four new edges. But a bisection $\sigma_\Gamma$, which maps a subset $V:= \{s(\gamma_1),s(\gamma_2)\}$ of $V_\Gamma$ to a set of paths,  leads to a translation $R_{\sigma_\Gamma(V)}(\{\idf_{s(\gamma_1)},\idf_{s(\gamma_2)}\})=\{\gamma_1,\gamma_2\}$, which is indeed a graph. Set $\Gp:=\{\gamma_1\}$ and $V^\prime=\{s(\gamma_1)\}$. Then observe that, for a restricted bisection, which maps a set $V^\prime$ of vertices in $V_\Gamma$ to a set of paths in $\PD_\Gp\Sigma$, the right-translation become $R_{\sigma_\Gp(V^\prime)}(\{\idf_{s(\gamma_1)}\})=\{\gamma_1\}$, which defines a graph, too. Notice that $\idf_{s(\gamma_1)}$ is a subgraph of $\Gp$. Hence in the simpliest case new edges are emerging. The next definition of the right-tranlation shows that composed paths arise, too.

\begin{defi}\label{defi bisecongraphgroupioid}
Let $\Gamma$ be a graph, $\fPG$ be a finite path groupoid and let $\PD_\Gamma$ be a finite graph system. Moreover the set $V_\Gamma$ is given by $\{v_1,...,v_{2N}\}$.

A \hypertarget{bisection of a finite graph system}{\textbf{bisection of a finite graph system}} $\PD_\Gamma$ is a map $\sigma_\Gamma:V_\Gamma\rightarrow\PD_\Gamma$ such that there exists a bisection $\tilde \sigma\in\mathfrak{B}(\PD_\Gamma\Sigma)$ such that $\sigma_\Gamma(V)=\{\tilde\sigma(v_i):v_i\in V\}$ whenever $V$ is a subset of $V_\Gamma$.

Define a restriction $\sigma_\Gp:V_\Gp\rightarrow\PD_\Gp$ of a bisection $\sigma_\Gamma$ in $\PD_\Gamma$ by
\beqs \sigma_\Gp(V):=\{ \tilde\sigma(w_k) : & w_k\in V\}
\eqs for each subgraph $\Gp$ of $\Gamma$ and $V\subseteq V_\Gp$. 

A \textbf{right-translation in the finite graph system} $\PD_\Gamma$ is a map $R_{\sigma_\Gp}: \PD_\Gp\rightarrow \PD_\Gp$, which is given by a  bisection $\sigma_\Gp:V_\Gp\rightarrow \PD_\Gp$ such that
\beqs
&R_{\sigma_\Gp}(\Gpp)= R_{\sigma_\Gp}(\{\gppe,...,\gppm,\idf_{w_i}:w_i\in\{s(\gpe),...,s(\gpk)\in V^s_{\Gp}:s(\gpi)\neq s(\gpj) \forall i\neq j\}\setminus V_{\Gpp}\})\\[4pt]
&:=\left\{
\begin{array}{ll}
\gppe,...,\gppj,\gppje\circ\tilde\sigma(t(\gppje)),...,\gppm\circ\tilde\sigma(t(\gppm)),\idf_{w_i}\circ\tilde\sigma(w_i) : & \\[4pt]
w_i\in \{s(\gpe),...,s(\gpk)\in V^s_{\Gp}:s(\gpi)\neq s(\gpj) \forall i\neq j\}\setminus V_{\Gpp}, \quad t(\gppi)\neq t(\gppl)\quad\forall i\neq l; i,l\in\bra j+1,M\ket &\\
\end{array}\right\}\\
&=\Gamma^{\prime\prime}_\sigma\\
\eqs where $\tilde\sigma\in\mathfrak{B}(\PD_{\Gamma}\Sigma)$, $K:=\vert\Gp\vert$ and $M:=\vert\Gpp\vert$, $V^s_{\Gp}$ is the set of all source vertices of $\Gp$ and such that $\Gpp:=\{\gppe,...,\gppm\}$ is a subgraph of $\Gp:=\{\gpe,...,\gpk\}$ and $\Gamma^{\prime\prime}_\sigma$ is a subgraph of $\Gp$.
\end{defi} 

Derive that, for $\tilde\sigma(t(\gamma_i))=\gamma_i^{-1}$ it is true that
$(t\circ\tilde\sigma)(s(\gamma_i^{-1}))=s(\gamma_i)=(t\circ\tilde\sigma)(t(\gamma_i))$ holds.

\begin{exa}
Let $\Gamma$ be a disconnected graph.
Then for a bisection $\tilde\sigma\in\mathfrak{B}(\PD_\Gamma\Sigma)$ such that $\sigma(t(\gamma_i))=\gamma_i^{-1}$ for all $1\leq i\leq \vert\Gamma\vert$ it is true that 
\beqs R_{\sigma_\Gamma}(\Gamma)&=\Big\{\gamma_1\circ\tilde\sigma(t(\gamma_1)),...,\gamma_N\circ\tilde\sigma(t(\gamma_N)),\idf_{s(\gamma_1)}\circ\tilde\sigma(s(\gamma_1)),...,\idf_{s(\gamma_N)}\circ\tilde\sigma(s(\gamma_N))\Big\}\\
&=\{\idf_{s(\gamma_1)},...,\idf_{s(\gamma_N)}\}
\eqs yields. Set $\Gp:=\{\gpe,...,\gpm\}$, then derive
\beqs R_{\sigma_\Gamma}(\Gp)&=\Big\{\gpe\circ\tilde\sigma(t(\gpe)),...,\gpm\circ\tilde\sigma(t(\gpm)),\idf_{s(\gamma_1)}\circ\tilde\sigma(s(\gamma_1)),...,\idf_{s(\gamma_{N-M})}\circ\tilde\sigma(s(\gamma_{N-M}))\Big\}\\
&=\{\idf_{s(\gpe)},...,\idf_{s(\gpm)},\gamma_{1},...,\gamma_{N-M}\}
\eqs if $\Gamma=\Gp\cup\{\gamma_1,...,\gamma_{N-M}\}$.
\end{exa}
To understand the definition of the right-translation notice the following problem.

\begin{problem}\label{prob withoutcond} Consider a subgraph $\Gamma$ of $\tilde\Gamma:=\{\gamma_1,\gamma_2,\gamma_3,\gamma_4\}$, a map $\tilde\sigma:V_{\tilde\Gamma}\rightarrow \PD_{\tilde\Gamma}\Sigma$. Then the map 
\beqs R_{\sigma_{\tilde\Gamma}}(\Gamma)=\{\gamma_1\circ\gamma_1^{-1},\gamma_2\circ\idf_{t(\gamma_2)},\gamma_3\circ\idf_{t(\gamma_3)},\idf_{s(\gamma_1)}\circ\gamma_4 \}=:\Gamma_\sigma
\eqs 
 \begin{center}
\includegraphics[width=0.5\textwidth]{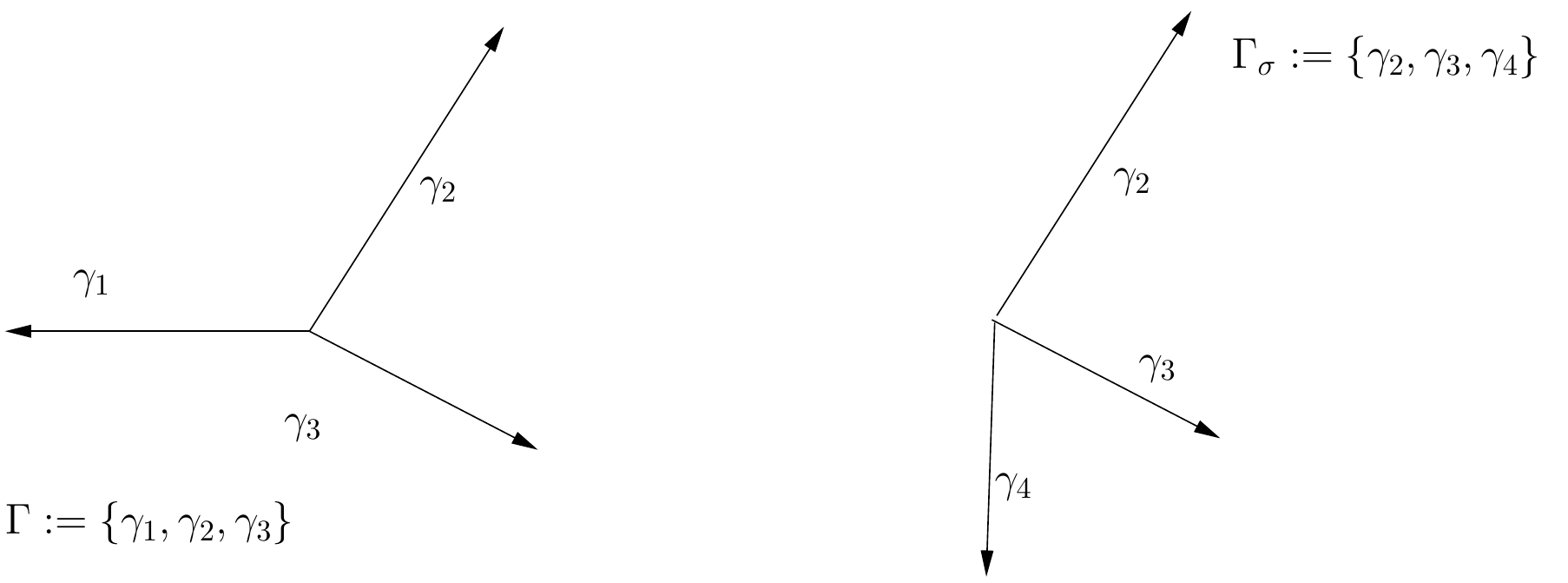}
\end{center} is not a right-translation. This follows from the following fact. Notice that,  the map $\sigma$ maps $t(\gamma_1)\mapsto s(\gamma_1)$, $t(\gamma_2)\mapsto t(\gamma_2)$, $t(\gamma_3)\mapsto t(\gamma_3)$ and $s(\gamma_1)\mapsto t(\gamma_4)$. Then the map $\tilde\sigma$ is not a bisection in the finite path groupoid $\PD_{\tilde\Gamma}\Sigma$ over $V_{\tilde\Gamma}$ and does not define a right-translation $R_{\sigma_{\tilde\Gamma}}$ in the finite graph system $\PD_{\tilde\Gamma}$.

This is a general problem. For every bisection $\tilde\sigma$ in a finite path groupoid such that a graph $\Gamma:=\{\gamma\}$ is translated to $\{\gamma\circ\tilde\sigma(t(\gamma),\tilde\sigma(s(\gamma))\}$. Hence either such translations in the graph system are excluded or the definition of the bisections has to be restricted to maps such that the map $t\circ\tilde\sigma$ is not bijective.  Clearly, the restriction of the right-translation such that $\Gamma$ is mapped to $\{\gamma\circ\tilde\sigma(t(\gamma),\idf_{s(\gamma)}\}$ implies that a simple path orientation transformation is not implemented by a right-translation.

Furthermore there is an ambiguity for graph containing to paths $\gamma_1$ and $\gamma_2$ such that $t(\gamma_1)=t(\gamma_2)$. Since in this case a bisection $\sigma$, which maps $t(\gamma_1)$ to $t(\gamma_3)$, the right-translation is $\{\gamma_1\circ\gamma_3,\gamma_2\circ\gamma_3\}$, is not a graph anymore. 
\end{problem}

\begin{exa}
Otherwise there is for example a subgraph $\Gp$ of $\tilde\Gamma:=\{\gamma_1,\gamma_2,\gamma_3,\gamma_4\}$ and a bisection $\tilde\sigma_{\tilde\Gamma}$ such that 
\beqs \Gamma^\prime_\sigma:=\{\gamma_1\circ\gamma_1^{-1},\gamma_2\circ\idf_{s(\gamma_2)},\gamma_3\circ\idf_{s(\gamma_3)}
\}
\eqs
 \begin{center}
\includegraphics[width=0.5\textwidth]{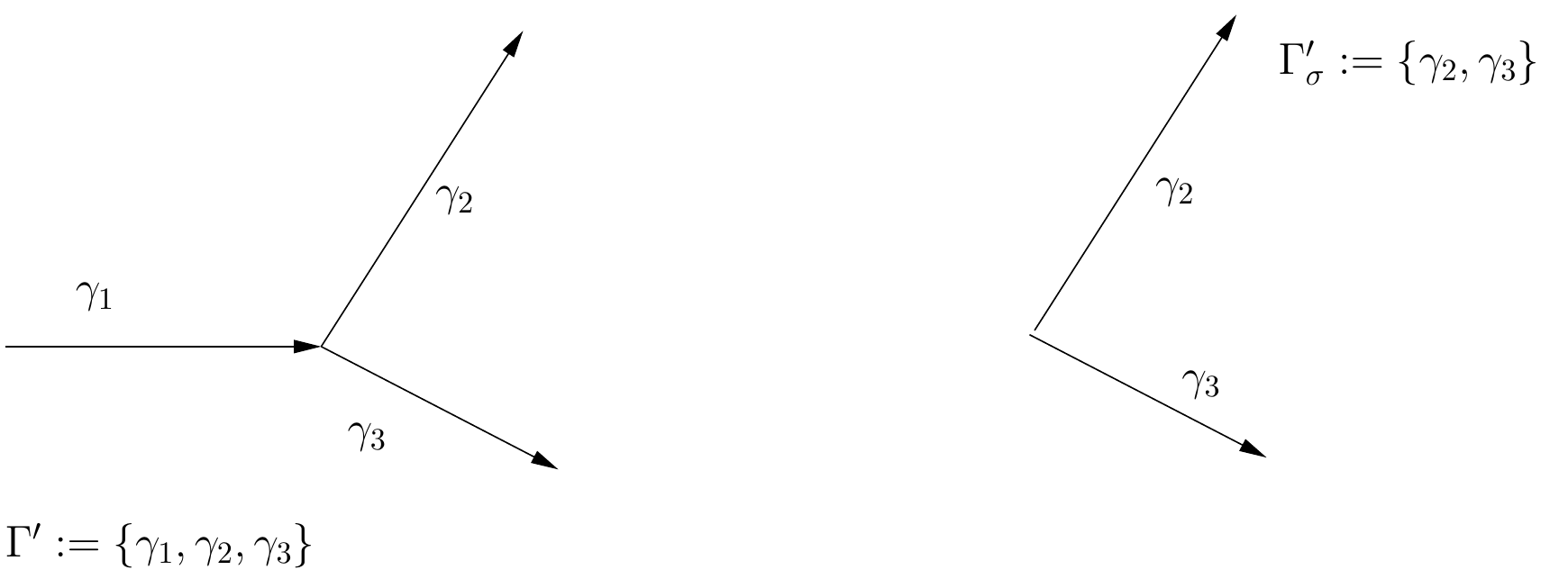}
\end{center} Notice that, $t(\gamma_1)\mapsto s(\gamma_1)$, $t(\gamma_2)\mapsto t(\gamma_2)$, $t(\gamma_3)\mapsto t(\gamma_3)$ and $t(\gamma_4)\mapsto t(\gamma_4)$. Hence the the map $\tilde\sigma_{\tilde\Gamma}:V_{\tilde\Gamma}\rightarrow\PD_{\tilde\Gamma}\Sigma$ is bijective map and consequently a bisection. The bisection $\sigma_{\tilde\Gamma}$ in the graph system $\PD_{\tilde\Gamma}$ defines a right-translation $R_{\sigma_{\tilde\Gamma}}$ in $\PD_{\tilde\Gamma}$.

Moreover for a subgraph $\Gpp:=\{\gamma_2,\gamma_3\}$ of the graph $\breve\Gamma:=\{\gamma_1,\gamma_2,\gamma_3\}$ there exists a map $\sigma_{\breve \Gamma}:V_{\breve\Gamma}\rightarrow\PD_{\breve\Gamma}$ such that
\beqs R_{\sigma_{\breve\Gamma}}(\Gpp)=\{\gamma_2,\gamma_3,\tilde\sigma(s(\gamma_1))\}=\{\gamma_2,\gamma_3,\gamma_1\}
\eqs
\begin{center}
\includegraphics[width=0.5\textwidth]{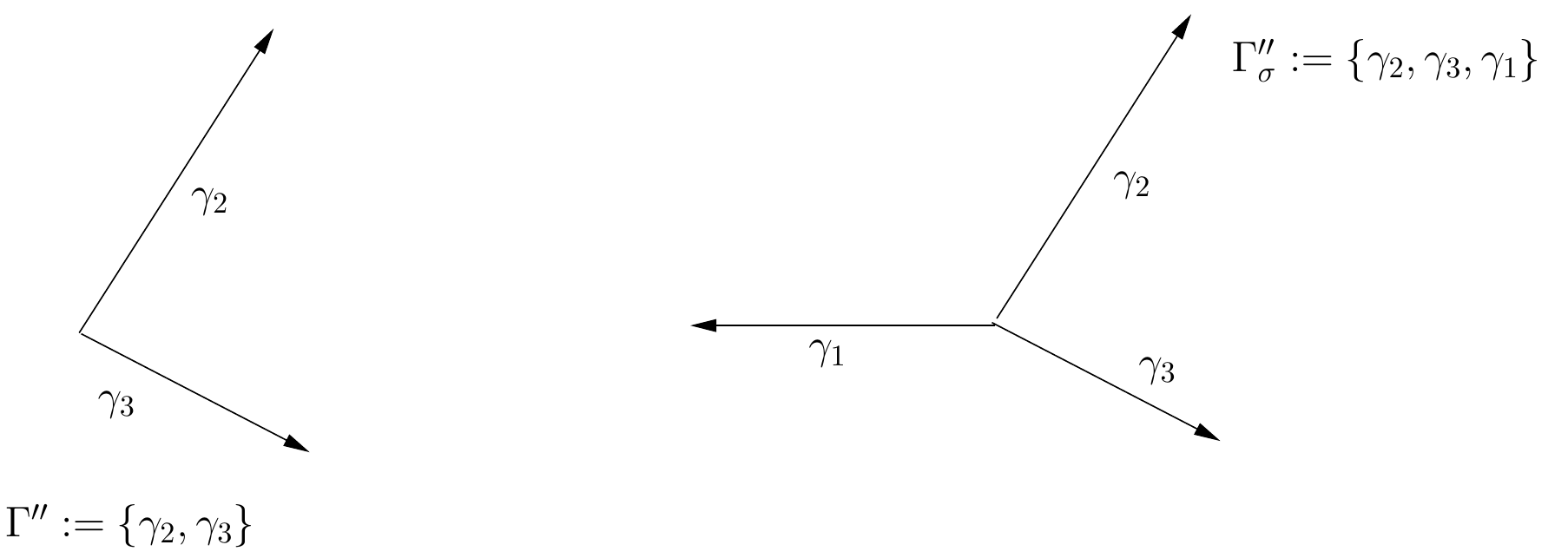}
\end{center}
where $t(\gamma_2)\mapsto t(\gamma_2)$, $t(\gamma_3)\mapsto t(\gamma_3)$ and $s(\gamma_1)\mapsto t(\gamma_1)$. Consequently in this example the map $\tilde\sigma_{\breve \Gamma}$ is a bisection, which defines a right-translation in $\PD_{\breve\Gamma}$.

Note that, for a graph $\Gamma$ such that $\tilde\Gamma$ and $\breve\Gamma$ are subgraphs the bisection $\sigma_{\tilde\Gamma}$ extends to a bisection $\sigma$ in $\PD_\Gamma$ and $\sigma_{\breve\Gamma}$ extends to a bisection $\breve\sigma$ in $\PD_\Gamma$.
\end{exa}

Moreover the bisections of a finite graph system are transfered, analogously, to bisections of a finite path groupoid $\fPG$ to the group $G^{\vert\Gamma\vert}$. Let $\sigma\in\mathfrak{B}(\PD_\Gamma)$ and $(\ho_\Gamma,h_\Gamma)\in\Hom(\PD_\Gamma,G^ {\vert\Gamma\vert})$. Thus there are two maps
\beq \ho_\Gamma\circ R_\sigma:\PD_\Gamma\rightarrow G^{\vert\Gamma\vert}\text{ and }h_\Gamma\circ(t\circ\sigma):V_\Gamma\rightarrow \{e_G\}
\eq which defines a \hyperlink{holonomy map for a finite graph system}{holonomy map for a finite graph system} if $\sigma$ is suitable.

Now, a similar right-translation in a finite graph system in comparison to the right-translation $R_{\sigma(v)}$ in a finite path groupoid is studied.
Let $\sigma_\Gp:V_\Gp\rightarrow\PD_\Gp$ be a restriction of $\sigma_\Gamma\in\mathfrak{B}(\PD_\Gamma)$.  Moreover let $V$ be a subset of $V_ \Gp$, let $\Gpp$ be a subgraph of $\Gp$ and $\Gppp$ be a subgraph of $\Gpp$. Then a right-translation is given by
\beqs &R_{\sigma_\Gp(V)}(\Gpp)\\
&:= 
\left\{\begin{array}{ll}
 R_{\sigma_\Gp}(\{\gppe,...,\gppm,\idf_{w_i}:w_i\in\{s(\gpe),...,s(\gpk)\in V_{\Gp}:s(\gpi)\neq s(\gpj) \forall i\neq j\}\setminus V_{\Gpp}\}& : V_{\Gpp}\subset V\\
R_{\sigma_\Gp}(\{\gppe,...,\gpppn,\idf_{w_i}:w_i\in\{s(\gpe),...,s(\gpk)\in V_{\Gpp}:s(\gpi)\neq s(\gpj) \forall i\neq j\}\setminus V_{\Gppp}\})& \\
\quad \cup\{\idf_{x_i}:x_i\in V\setminus  V_{\Gpp}\}\cup\{\Gpp\setminus\Gppp\}
&: V_{\Gpp}\not\subset V,V_{\Gppp}\subset V\\
       \end{array}\right.
\eqs Loosely speaking, the action of a path-diffeomorphism is somehow localised on a fixed vertex set $V$. 

For example note that for a subgraph $\Gp:=\{\gamma\circ\gp\}$ of $\Gamma:=\{\gamma,\gp\}$ and a subset $V:=\{t(\gp)\}$ of $V_\Gamma$, it is true that
\beqs (\ho_\Gamma\circ R_{\sigma_{\Gamma}(V)})(\gamma\circ\gp)= (\ho_\Gamma\circ R_{\sigma_{\Gamma}(V)})(\gamma)\cdot (\ho_\Gamma\circ R_{\sigma_{\Gamma}(V)})(\gp)= \ho_\Gamma(\gamma)\cdot (\ho_\Gamma\circ R_{\sigma_{\Gamma}(V)})(\gp)
=\ho_\Gamma(\gamma\circ\gp\circ\sigma(t( \gp)))
\eqs yields whenever $\sigma_{\Gamma}\in\mathfrak{B}(\PD_\Gamma\Sigma)$. For a special bisection $\breve\sigma_\Gamma$ it is true that,
\beqs (\ho_\Gamma\circ R_{\breve\sigma_{\Gamma}})(\gamma)= \ho_\Gamma(\gamma\circ\gp)= (\ho_\Gamma\circ R_{\breve\sigma_{\Gamma}})(\gamma)\cdot (\ho_\Gamma\circ R_{\breve\sigma_{\Gamma}})(\gp)
\eqs holds whenever $\breve\sigma_\Gamma\in\mathfrak{B}(\PD_\Gamma\Sigma)$, $\breve\sigma_\Gamma(t(\gp))=\idf_{t(\gp)}$ and $\breve\sigma_\Gamma(t(\gamma))=\gp$. Let $\tilde\sigma$ be the bisection in the finite path groupoid $\PD_\Gamma\Sigma$ that defines the bisection $\breve\sigma$ in $\PD_\Gamma$. Then the last statement is true, since $R_{\breve\sigma_{\Gamma}}(\gp)=\gp\circ\gp^{-1}$ requires $\tilde\sigma_\Gamma:t(\gp)\mapsto s(\gp)$ and $R_{\breve\sigma_{\Gamma}}(\gamma)=\gamma\circ\gp$ needs $\tilde\sigma_\Gamma: t(\gamma)\mapsto t(\gp)$, where $s(\gp)=t(\gamma)$. Then
$R_{\breve\sigma_\Gamma}(\gamma)$ and $R_{\sigma_\Gamma(t(\gp))}(\gamma)$ coincide if $\breve\sigma_\Gamma(t(\gamma))=\sigma_\Gamma(t(\gamma))$ and $\breve\sigma_\Gamma(t(\gp))=\idf_{t(\gp)}$ holds. 
 
\begin{problem}\label{prob righttranslgrouoid}Let $\Gp$ be a subgraph of the graph $\Gamma$, $\sigma_\Gamma$ be a bisection in $\PD_\Gamma$, $\sigma_\Gp:V_\Gp\rightarrow\PD_\Gp$ be a restriction of $\sigma_\Gamma\in\mathfrak{B}(\PD_\Gamma)$.  Moreover let $V$ be a subset of $V_ \Gp$, let $\Gpp:=\{\gamma\circ\gp\}$ be a subgraph of $\Gp$. Let $(\gamma,\gp)\in\PD_\Gp\Sigma^{(2)}$.

Then even for a suitable bisection $\sigma_\Gp$ in $\PD_\Gamma$ it follows that,
\beq\label{eq ineqsolv} R_{\sigma_{\Gp}(V)}(\gamma\circ\gp)\neq R_{\sigma_{\Gp}(V)}(\gamma)\circ R_{\sigma_{\Gp}(V)}(\gp) 
\eq yields. This is a general problem. In comparison with problem \ref{subsec fingraphpathgroup}.\ref{problem group structure on graphs systems} the multiplication map $\circ$ is not well-defined and hence
 \beqs R_{\sigma_{\Gp}(V)}(\gamma)\circ R_{\sigma_{\Gp}(V)}(\gp)
\eqs  is not well-defined. Recognize that, $R_{\sigma_{\Gp}(V)}:\PD_{\Gamma}\rightarrow\PD_\Gamma$.

Consequently in general it is not true that, 
 \beq\label{eq ineqsolv2}(\ho_\Gamma\circ R_{\sigma_{\Gp}(V)})(\gamma\circ\gp)=\ho_\Gamma(R_{\sigma_{\Gp}(V)}(\gamma)\circ R_{\sigma_{\Gp}(V)}(\gp))=(\ho_\Gamma\circ R_{\sigma_{\Gp}(V)})(\gamma)\cdot (\ho_\Gamma\circ R_{\sigma_{\Gp}(V)})(\gp)
\eq yields.
\end{problem}

With no doubt the left-tranlation $L_{\sigma_\Gp}$ and the inner autmorphisms $I_{\sigma_\Gp}$ in a finite graph system $\PD_\Gamma$ for every $\Gp\in\PD_\Gamma$ are defined similarly.

\begin{defi}Let $\sigma_\Gamma\in \mathfrak{B}(\PD_\Gamma)$ be a bisection in the finite graph system $\PD_\Gamma$. Let $R_{\sigma_\Gamma(V)}$ be a right-tranlation, where $V$ is a subset of $V_\Gamma$.

Then the pair $(\Phi_\Gamma,\varphi_\Gamma)$ defined by $\Phi_\Gamma=R_{\sigma_\Gamma(V)}$ (or, respectively, $\Phi_\Gamma=L_{\sigma_\Gamma(V)}$, or $\Phi_\Gamma=I_{\sigma_\Gamma(V)}$) for a subset $V\subseteq V_\Gamma$ and $\varphi_\Gamma=t\circ \sigma_\Gamma$ is called a \textbf{graph-diffeomorphism of a finite graph system}. 
Denote the set of finite graph-diffeomorphisms by $\Diff(\PD_\Gamma)$.
\end{defi}

Let $\Gp$ be a subgraph of $\Gamma$ and $\sigma_\Gp$ be a restriction of  bisection $\sigma_\Gamma$ in $\PD_\Gamma$. Then for example another graph-diffeomorphism $(\Phi_\Gp,\varphi_\Gp)$ in $\Diff(\PD_\Gamma)$ is defined by $\Phi_\Gp=R_{{\sigma_\Gp}(V)}$ for a subset $V\subseteq V_\Gp$ and $\varphi_\Gp=t\circ \sigma_\Gp$.

Remembering that the set of bisections of a finite path groupoid forms a group (refer \ref{lem groupbisection}) one may ask if the bisections of a finite graph system form a group, too. 

\begin{prop}\label{lemma bisecform}The set of bisections $\mathfrak{B}(\PD_\Gamma)$ in a finite graph system $\PD_\Gamma$ forms a group.
\end{prop}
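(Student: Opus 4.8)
The plan is to realise $\mathfrak{B}(\PD_\Gamma)$ as a group by transporting the already-established group structure of $\mathfrak{B}(\PD_\Gamma\Sigma)$ from Lemma \ref{lem groupbisection} along the correspondence that sends a bisection of the finite graph system to its underlying bisection of the finite path groupoid.

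First I would fix that correspondence. By Definition \ref{defi bisecongraphgroupioid}, every $\sigma_\Gamma \in \mathfrak{B}(\PD_\Gamma)$ is associated with a bisection $\tilde\sigma \in \mathfrak{B}(\PD_\Gamma\Sigma)$ satisfying $\sigma_\Gamma(V) = \{\tilde\sigma(v) : v \in V\}$ for all $V \subseteq V_\Gamma$. Evaluating on singletons gives $\sigma_\Gamma(\{v\}) = \{\tilde\sigma(v)\}$, so $\tilde\sigma$ is uniquely recovered from $\sigma_\Gamma$; conversely $\sigma_\Gamma$ is completely determined by $\tilde\sigma$ through the same formula. Hence $\iota : \sigma_\Gamma \mapsto \tilde\sigma$ is an injection of $\mathfrak{B}(\PD_\Gamma)$ into the group $\mathfrak{B}(\PD_\Gamma\Sigma)$, and it suffices to show that its image is a subgroup; the group structure then descends to $\mathfrak{B}(\PD_\Gamma)$ with $\iota$ an isomorphism.

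I would then transport the operations explicitly. Writing $\tilde\sigma, \tilde\sigma'$ for the underlying bisections of $\sigma_\Gamma, \sigma'_\Gamma$, define
\[
(\sigma_\Gamma * \sigma'_\Gamma)(V) := \{\, \tilde\sigma'(v) \circ \tilde\sigma(t(\tilde\sigma'(v))) : v \in V \,\},
\]
take the unit to be $V \mapsto \{\idf_v : v \in V\}$, and define the inverse through $\tilde\sigma^{-1}(v) = \tilde\sigma((t\circ\tilde\sigma)^{-1}(v))^{-1}$. Since $\iota$ intertwines these assignments with the multiplication, unit and inversion of $\mathfrak{B}(\PD_\Gamma\Sigma)$ by construction, associativity together with the unit and inverse laws follow immediately from Lemma \ref{lem groupbisection}; no separate verification of the axioms is required, and the unit lies in the image since the constant loops $\idf_v$ at distinct vertices are trivially independent.

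The step I expect to be the genuine obstacle is closure, that is, showing the image of $\iota$ is closed under products and inverses, equivalently that $\sigma_\Gamma * \sigma'_\Gamma$ and $(\sigma_\Gamma)^{-1}$ are again bisections of the finite graph system. Their witnessing path-groupoid bisections $\tilde\sigma * \tilde\sigma'$ and $\tilde\sigma^{-1}$ lie in $\mathfrak{B}(\PD_\Gamma\Sigma)$ by Lemma \ref{lem groupbisection}, so what remains is to confirm that each resulting set $\{(\tilde\sigma * \tilde\sigma')(v) : v \in V\}$ is a set of independent paths, hence an element of $\PD_\Gamma$. Here the conditions $s \circ \tilde\sigma = \id_{V_\Gamma}$ and $t \circ \tilde\sigma$ bijective force the paths in such an image to have pairwise distinct source vertices and pairwise distinct target vertices, exactly the situation illustrated in the examples following Definition \ref{defi bisecongraphgroupioid}; establishing the independence of these images is the point that must be checked carefully in order to conclude that $\mathfrak{B}(\PD_\Gamma)$ is a group.
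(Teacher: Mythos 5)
There is a genuine gap, and it sits exactly where you placed your bet. The multiplication you transport along $\iota$, namely $(\sigma_\Gamma \ast \sigma'_\Gamma)(V) := \{\tilde\sigma'(v)\circ\tilde\sigma(t(\tilde\sigma'(v))) : v\in V\}$, is precisely the operation the paper denotes $\ast_1$ in its proof of this proposition, and the paper explicitly discards it as ``not well-defined in general'': the image of $\iota$ is \emph{not} closed under the pointwise product. The paper's own example inside the proof exhibits the failure: for $\Gamma=\{\gamma_1,\gamma_2,\gamma_3\}$ with $t(\gamma_1)=t(\gamma_3)$, $\tilde\sigma'(s(\gamma_i))=\gamma_i$ and $\tilde\sigma(t(\gamma_3))=\gamma_2$, one computes $(\sigma\ast_1\sigma')(V)=\{\gamma_1\circ\gamma_2,\gamma_3\circ\gamma_2\}\notin\PD_\Gamma$, because the two composites overlap along all of $\gamma_2$ and so are not independent paths; the same collision phenomenon for paths sharing a target vertex is flagged in Problem \ref{prob withoutcond}. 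Your fallback argument is also not sound: pairwise distinct sources and pairwise distinct targets do not yield independence, since independence additionally forbids interior intersections, and overlapping final segments (endpoints all distinct or not) are exactly the failure mode above.

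Because closure genuinely fails, it cannot be repaired by ``checking carefully''; the paper instead changes the multiplication. Its proof introduces a second operation $\ast_2$: the composite $\tilde\sigma'(v)\circ\tilde\sigma(t(\tilde\sigma'(v)))$ is formed only on a good vertex set $V_{\sigma,\sigma'}$ where no target collisions occur, on the set $W_{\sigma,\sigma'}$ the two paths $\tilde\sigma(w)$ and $\tilde\sigma'(w)$ are kept as separate, uncomposed elements, and on the remaining vertices trivial paths $\idf_p$ are inserted; the inverse is adapted to $\ast_2$ accordingly. This undermines the rest of your reduction: once the product is $\ast_2$ rather than the pointwise transport, the correspondence $\iota$ of Definition \ref{defi bisecongraphgroupioid} no longer intertwines the multiplications, so associativity, the unit law and inverses cannot be inherited ``for free'' from Lemma \ref{lem groupbisection}; they must be verified for $\ast_2$ itself, which is what the paper's (admittedly sketchy) proof sets out to do. In short, your proposal fails at the closure step, and the repair that restores closure simultaneously invalidates your argument for the remaining group axioms.
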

\begin{proofs}Let $\Gamma$ be a graph and let $V_\Gamma$ be equivalent to the set $\{v_1,...,v_{2N}\}$.
  
First two different multiplication operations are studied. The studies are comparable with the results of the definition \ref{defi bisecongraphgroupioid} of a right-translation in a finite graph system. The easiest multiplication operation is given by $\ast_1$, which is defined by 
\beqs (\sigma\ast_1\sigma^\prime)(V_\Gamma)&:=\{
(\tilde\sigma\ast \tilde\sigma^\prime)(v_1),...,(\tilde\sigma\ast \tilde\sigma^\prime)(v_{2N}):v_i\in V_\Gamma\}
\eqs where $\ast$ denotes the multiplication of bisections on the finite path groupoid $\fPG$. Notice that, this operation is not well-defined in general. In comparison with the definition of the right-translation in a finite graph system one has to take care. First the set of vertices doesn't contain any vertices twice, the map $\sigma$ in the finite path system is bijective, the mapping $\sigma$ maps each set to a set of vertices containing no  
vertices twice and the situation in problem \thesection.\ref{prob withoutcond} has to be avoided.

Fix a bisection $\tilde\sigma$ in a finite path groupoid $\fPG$.  Let $V_{\sigma^\prime}$ be a subset of $V_\Gamma$ where $\Gamma:=\{\gamma_1,...,\gamma_N\}$ and for each $v_i$ in $V_{\sigma^\prime}$ it is true that $v_i\neq v_j$ and $v_i\neq (t\circ\tilde\sigma^\prime)(v_j)$ for all $i\neq j$. Define the set $V_{\sigma,\sigma^\prime}$ to be equal to a subset of the set of all vertices $\{v_k\in V_{\sigma^\prime}: 1\leq k\leq 2N\}$ such that each pair $(v_i,v_j)$ of vertices in $V_{\sigma,\sigma^\prime}$ satisfies $(t\circ(\tilde\sigma\ast\tilde\sigma^\prime))(v_i)\neq (t\circ\tilde\sigma^\prime)(v_j)$ and $(t\circ\tilde\sigma^\prime)(v_i)\neq (t\circ\tilde\sigma^\prime)(v_j)$ for all $i\neq j$.
Define
\beqs W_{\sigma,\sigma^\prime}:=\Big\{
w_i\in \{V_{\sigma}\cap V_{\sigma^\prime}\}\setminus V_{\sigma,\sigma^\prime}:
& (t\circ\tilde\sigma)(w_j)\neq (t\circ\tilde\sigma^\prime)(w_i)\quad\forall i\neq j,\quad 1\leq i,j\leq l\Big\}
\eqs
The set $V_{\sigma,\sigma^\prime,\breve \sigma}$ is a subset of all vertices $\{v_k\in V_{\sigma,\sigma^\prime}: 1\leq k\leq 2N\}$ such that each pair $(v_i,v_j)$ of vertices in $V_{\sigma,\sigma^\prime,\breve\sigma}$ satisfies $(t\circ(\breve\sigma\ast\tilde\sigma\ast\tilde\sigma^\prime))(v_i)\neq (t\circ(\tilde\sigma\ast\tilde\sigma^\prime))(v_j)$ and $(t\circ(\tilde\sigma\ast\tilde\sigma^\prime))(v_i)\neq (t\circ(\tilde\sigma\ast\tilde\sigma^\prime))(v_j)$ for all $i\neq j$.

Consequently define a second multiplication on $\mathfrak{B}(\PD_\Gamma)$ similarly to the operation $\ast_1$. This is done by the following definition. Set
\beqs (\sigma\ast_2\sigma^\prime)(V_{\sigma^\prime})
:=&\left\{
(\tilde\sigma\ast \sigma^\prime)(v_1),...,(\tilde\sigma\ast\sigma^\prime)(v_{k}):
v_1,...,v_k\in V_{\sigma,\sigma^\prime}, 1\leq k\leq 2N\right\}\\
&\cup\left\{ \tilde\sigma(w_1),\sigma^\prime(w_1)...,\tilde\sigma(w_{l}),\sigma^\prime(w_{l}): w_1,...,w_l\in W_{\sigma,\sigma^\prime}, 1\leq l\leq 2N
\right\}\\
&\cup\left\{\idf_{p_1},...,\idf_{p_n}:p_1,...,p_n\in V_{\sigma^\prime}\setminus \{V_{\sigma,\sigma^\prime}\cup W_{\sigma,\sigma^\prime}\}, 1\leq n\leq 2N\right\}
\eqs 

Hence the inverse is supposed to be $\sigma^{-1}(V_\Gamma)=\sigma((t\circ\sigma)^{-1}(V_\Gamma))^{-1}$ such that
\beqs (\sigma\ast_2\sigma^{-1})(V_{\sigma^{-1}})=&\{ (\tilde\sigma\ast \tilde\sigma^{-1})(v_1),...,(\tilde\sigma\ast \tilde\sigma^{-1})(v_{2N}):v_i\in V_{\sigma,\sigma^{-1}}\}\\
&\cup\left\{ \tilde\sigma(w_1),\sigma^\prime(w_1)^{-1}...,\tilde\sigma(w_{l}),\sigma^\prime(w_{l})^{-1}: w_1,...,w_l\in W_{\sigma,\sigma^{-1}}, 1\leq l\leq 2N
\right\}\\
&\cup\left\{\idf_{p_1},...,\idf_{p_n}:p_1,...,p_n\in V_{\sigma^\prime}\setminus \{V_{\sigma,\sigma^{-1}}\cup W_{\sigma,\sigma^{-1}}\}, 1\leq n\leq 2N\right\}
\eqs
\end{proofs} 
Notice that, the problem \thesection.\ref{prob righttranslgrouoid} is solved by a multiplication operation $\circ_2$, which is defined similarly to $\ast_2$. Hence the equality of \eqref{eq ineqsolv} is available and consequently \eqref{eq ineqsolv2} is true. Furthermore a similar remark to \ref{def compositionofdiffeo} can be done.

\begin{exa}
Now consider the following example. Set $\Gp:=\{\gamma_1,\gamma_3\}$, let $\Gamma:=\{\gamma_1,\gamma_2,\gamma_3\}$ and\\ $V_\Gamma:=\{s(\gamma_1),t(\gamma_1),s(\gamma_2),t(\gamma_2),s(\gamma_3),t(\gamma_3): s(\gamma_i)\neq s(\gamma_j),t(\gamma_i)\neq t(\gamma_j)\text{ }\forall i\neq j\}$. \\
Set $V$ be equal to $\{s(\gamma_1),s(\gamma_2),s(\gamma_3)\}$. Take two maps $\sigma$ and $\sigma^\prime$ such that $\sigma^\prime(V)=\{\gamma_1,\gamma_3\}$, $\sigma(V)=\{ \gamma_2\}$, where $(t\circ\tilde\sigma)(s(\gamma_3))=t(\gamma_3)$, $\tilde\sigma^\prime(s(\gamma_3))=\gamma_3$, $\tilde\sigma^\prime(s(\gamma_1))=\gamma_1$ and  $\tilde\sigma(t(\gamma_3))=\gamma_2$. Then $s(\gamma_3)\in V_{\sigma_{\Gp},\sigma^\prime_\Gp}$ and $s(\gamma_1)\in W_{\sigma_{\Gp},\sigma^\prime_\Gp}$.
Derive 
\beqs (\sigma\ast_1\sigma^\prime)(V)= \{\gamma_3\circ\gamma_2,\gamma_1\}
\eqs 
Then conclude that,
\beqs (\sigma\ast_2\sigma^\prime)(V_\Gamma)= \{\gamma_3\circ\gamma_2,\gamma_1\}
\eqs holds. Notice that
\beqs (\sigma\ast_2\sigma^\prime)(V)\neq (\sigma^\prime\ast_2\sigma)(V)= \{\gamma_2,\gamma_1,\gamma_3\}
\eqs is true.
Finally obtain
\beqs (\sigma\ast_2\sigma^{-1})(V_\Gamma)= \{\gamma_3\circ\gamma_3^{-1},\gamma_1\circ\gamma_1^{-1}\}=\{\idf_{s(\gamma_3)},\idf_{s(\gamma_1)}\}
\eqs 
Let $\sigma^\prime(V_\Gamma)=\{\gamma_1,\gamma_3\}$ and $\breve \sigma(V_\Gamma)=\{\gamma_2,\gamma_4\}$. Then notice that,
\beqs (\breve \sigma\ast_1\sigma^\prime)(V_\Gamma)=\{\gamma_3\circ\gamma_2,\gamma_1\}
\eqs and 
\beqs (\breve \sigma\ast_2\sigma^\prime)(V_\Gamma)=\{\gamma_3\circ\gamma_2,\gamma_1,\gamma_4\}
\eqs yields.

Furthermore assume supplementary that $t(\gamma_3)=t(\gamma_1)$ holds. Then calculate the product of the maps $\sigma$ and $\sigma^\prime$:
\beqs (\sigma\ast_1\sigma^\prime)(V)= \{\gamma_3\circ\gamma_2,\gamma_1\circ\gamma_2\}\notin\PD_\Gamma
\eqs  and
\beqs (\sigma\ast_2\sigma^\prime)(V_\Gamma)= \{\idf_{t(\gamma_1)}, \idf_{t(\gamma_3)}\}\in\PD_\Gamma
\eqs
\end{exa}

The group structure of $\mathfrak{B}(\PD_\Gamma)$ transferes to $G$. Let $\tilde\sigma$ be a bisection in the finite path groupoid $\fPSGm$, which defines a bisection $\sigma$ in $\PD_\Gamma$ and let $\tilde\sigma^\prime$ be a bisection in $\fPSGm$, which defines another bisection $\sigma^\prime$ in $\PD_\Gamma$. Let $V_{\sigma,\sigma^\prime}$ be equal to $V_\Gamma$, then derive
\beq &\ho_\Gamma\left((\sigma\ast_2\sigma^\prime)(V_\Gamma)\right) 
= \{\ho_\Gamma((\tilde\sigma\ast\sigma^\prime)(v_1)),...,\ho_\Gamma((\tilde\sigma\ast\sigma^\prime)(v_{2N})) \}\\ 
&=\ho_\Gamma(\sigma^\prime(V_\Gamma)\circ\sigma(t(\sigma^\prime(V_\Gamma))))
=\{\ho_\Gamma(\sigma^\prime(v)\circ\tilde\sigma(t(\sigma^\prime(v_1)))),...,\ho_\Gamma(\sigma^\prime(v_{N})\circ\tilde\sigma(t(\sigma^\prime(v_{N}))))\}\\
&=\{\ho_\Gamma(\sigma^\prime(v))\ho_\Gamma(\tilde\sigma(t(\sigma^\prime(v_1)))),...,\ho_\Gamma(\sigma^\prime(v_{N}))\ho_\Gamma(\tilde\sigma(t(\sigma^\prime(v_{N}))))\}\\
&= \ho_\Gamma(\sigma^\prime(V_\Gamma))\ho_\Gamma(\sigma(V_\Gamma))
\eq
Consequently the right-translation in the finite product $G^{\vert\Gamma\vert}$ is definable.

\begin{defi}Let $\sigma_\Gp$ be in $\mathfrak{B}(\PD_\Gamma)$, $\Gp$ a subgraph of $\Gamma$, $\Gpp$ a subgraph of $\Gp$ and $R_{\sigma_\Gp}$ a right-translation, $L_{\sigma_\Gp}$ a left-translation and $I_{\sigma_\Gp}$ an inner-translation in $\PD_\Gamma$.

Then the \textbf{right-translation in the finite product $G^{\vert\Gamma\vert}$} is given by
\beqs \ho_\Gamma\circ R_{\sigma_\Gp}:\PD_\Gamma\rightarrow G^{\vert\Gamma\vert}, \quad \Gpp\mapsto (\ho_\Gamma\circ R_{\sigma_\Gp})(\Gpp)
\eqs
Furthermore define the \textbf{left-translation in the finite product $G^{\vert\Gamma\vert}$} by
\beqs \ho_\Gamma\circ L_{\sigma_\Gp}:\PD_\Gamma\rightarrow G^{\vert\Gamma\vert},\quad \Gpp\mapsto (\ho_\Gamma\circ L_{\sigma_\Gp})(\Gpp)
\eqs
and the \textbf{inner-translation in the finite product $G^{\vert\Gamma\vert}$}
\beqs \ho_\Gamma\circ I_{\sigma_\Gp}:\PD_\Gamma\rightarrow G^{\vert\Gamma\vert},\quad \Gpp\mapsto (\ho_\Gamma\circ I_{\sigma_\Gp})(\Gpp)
\eqs such that $I_{\sigma_\Gp}=L_{\sigma_\Gp^{-1}}\circ R_{\sigma_\Gp}$.
\end{defi}
\begin{lem}It is true that $R_{\sigma_\Gp\ast_2\sigma_\Gp^\prime}=R_{\sigma_\Gp}\circ R_{\sigma_\Gp^\prime}$, $L_{\sigma_\Gp\ast_2\sigma_\Gp^\prime}=L_{\sigma_\Gp}\circ L_{\sigma_\Gp^\prime}$ and $I_{\sigma_\Gp\ast_2\sigma_\Gp^\prime}=I_{\sigma_\Gp}\circ I_{\sigma_\Gp^\prime}$ for all bisections $\sigma_\Gp$ and $\sigma^\prime_\Gp$ in $\mathfrak{B}(\PD_\Gamma)$.
\end{lem}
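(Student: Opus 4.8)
The plan is to reduce the three identities to the corresponding statements at the level of the finite path groupoid, where the analogous identities $R_{\tilde\sigma\ast\tilde\sigma^\prime}=R_{\tilde\sigma}\circ R_{\tilde\sigma^\prime}$, and its left- and inner-translation versions, have already been established for $\tilde\sigma,\tilde\sigma^\prime\in\mathfrak{B}(\PD_\Gamma\Sigma)$ in the proposition asserting that $\sigma\mapsto R_\sigma$ (and $\sigma\mapsto L_\sigma$, $\sigma\mapsto I_\sigma$) are group isomorphisms. Recall that by Definition \ref{defi bisecongraphgroupioid} a right-translation $R_{\sigma_\Gp}$ in the finite graph system acts on a subgraph $\Gpp=\{\gamma^{\prime\prime}_1,\dots,\gamma^{\prime\prime}_M\}$ pathwise: it composes each non-trivial path $\gamma^{\prime\prime}_i$ with $\tilde\sigma(t(\gamma^{\prime\prime}_i))$ and adjoins the images $\idf_{w_i}\circ\tilde\sigma(w_i)$ of the remaining source vertices, where $\tilde\sigma\in\mathfrak{B}(\PD_\Gamma\Sigma)$ is the underlying path-groupoid bisection. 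Hence on each individual path the graph-system right-translation is nothing but the path-groupoid right-translation $R_{\tilde\sigma(t(\cdot))}$, and the content of the first identity is that these pathwise actions compose correctly.

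First I would fix $\sigma_\Gp,\sigma_\Gp^\prime\in\mathfrak{B}(\PD_\Gamma)$ with underlying path-groupoid bisections $\tilde\sigma,\tilde\sigma^\prime$ and evaluate $R_{\sigma_\Gp}\circ R_{\sigma_\Gp^\prime}$ on an arbitrary subgraph. Applying $R_{\sigma_\Gp^\prime}$ first replaces a path $\gamma^{\prime\prime}$ by $\gamma^{\prime\prime}\circ\tilde\sigma^\prime(t(\gamma^{\prime\prime}))$; applying $R_{\sigma_\Gp}$ then composes the result with $\tilde\sigma$ evaluated at the new target vertex, namely $(t\circ\tilde\sigma^\prime)(t(\gamma^{\prime\prime}))$. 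Using the group multiplication of Lemma \ref{lem groupbisection}, namely $(\tilde\sigma\ast\tilde\sigma^\prime)(v)=\tilde\sigma^\prime(v)\circ\tilde\sigma(t(\tilde\sigma^\prime(v)))$, this composite path is exactly $\gamma^{\prime\prime}\circ(\tilde\sigma\ast\tilde\sigma^\prime)(t(\gamma^{\prime\prime}))$, i.e. the pathwise action of $R_{\sigma_\Gp\ast_2\sigma_\Gp^\prime}$. The same computation on the adjoined trivial paths $\idf_{w_i}$ reproduces the corresponding terms, so the two subgraphs agree term by term. The left-translation identity follows by the identical argument, and the inner-translation identity then follows from $I_{\sigma_\Gp}=L_{\sigma_\Gp^{-1}}\circ R_{\sigma_\Gp}$ together with the fact, recorded before the said proposition, that $(I_\sigma,t\circ\sigma)$ is a groupoid morphism.

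The step I expect to be the main obstacle is not the pathwise algebra but the bookkeeping of vertex sets encoded in the operation $\ast_2$: the definition of $\ast_2$ in Proposition \ref{lemma bisecform} carefully separates the vertices into $V_{\sigma,\sigma^\prime}$ and $W_{\sigma,\sigma^\prime}$ precisely to avoid the degeneracies of Problem \ref{prob withoutcond} and Problem \ref{prob righttranslgrouoid}, where a naive composition would leave the edge set either non-independent or carrying two paths with a common target. I would therefore verify that the domain restrictions occurring when $R_{\sigma_\Gp}$ is applied to the output of $R_{\sigma_\Gp^\prime}$ single out exactly the vertices collected in $V_{\sigma,\sigma^\prime}$, that the vertices on which only one of the two translations acts nontrivially are exactly those in $W_{\sigma,\sigma^\prime}$, and that the remaining vertices contribute only trivial paths $\idf_{p_i}$. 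Once this matching of admissibility conditions is in place, the term-by-term identity upgrades to the claimed equality $R_{\sigma_\Gp\ast_2\sigma_\Gp^\prime}=R_{\sigma_\Gp}\circ R_{\sigma_\Gp^\prime}$ of maps on the finite graph system, and likewise for $L$ and $I$.
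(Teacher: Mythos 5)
The paper states this lemma with no proof at all, so there is nothing of the author's to compare your argument against; the closest the paper comes is the transfer of $\ast_2$ to holonomy maps directly below the lemma, and that computation is carried out only under the simplifying assumption $V_{\sigma,\sigma^\prime}=V_\Gamma$. Your treatment of the right-translations is correct and is certainly the intended argument: for $v=t(\gpp)$, associativity together with the group law of Lemma \ref{lem groupbisection} gives $(\gpp\circ\tilde\sigma^\prime(v))\circ\tilde\sigma(t(\tilde\sigma^\prime(v)))=\gpp\circ(\tilde\sigma\ast\tilde\sigma^\prime)(v)$, which is the pathwise action of $R_{\sigma_\Gp\ast_2\sigma_\Gp^\prime}$ at the vertices collected in $V_{\sigma,\sigma^\prime}$.

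Two genuine gaps remain. First, the left- and inner-translation claims do not follow as you assert. The group law $(\sigma\ast\sigma^\prime)(v)=\sigma^\prime(v)\circ\sigma(t(\sigma^\prime(v)))$ is adapted to right-translations; repeating your computation for the paper's left-translation $L_\sigma(\gamma)=\sigma((t\circ\sigma)^{-1}(s(\gamma)))\circ\gamma$ produces $L_{\sigma\ast\sigma^\prime}=L_{\sigma^\prime}\circ L_\sigma$, an anti-homomorphism, and since $\mathfrak{B}(\PD_\Gamma)$ is non-abelian this is not the stated identity $L_{\sigma_\Gp\ast_2\sigma_\Gp^\prime}=L_{\sigma_\Gp}\circ L_{\sigma_\Gp^\prime}$: ``the identical argument'' in fact yields the reversed order, unless the graph-system left-translation is defined with an inverse twist, which the paper (saying only that $L_{\sigma_\Gp}$ is ``defined similarly'') does not provide. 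The inner-translation identity does come out in the stated homomorphism form, but not for the reason you cite: that $(I_\sigma,t\circ\sigma)$ is a groupoid morphism concerns multiplicativity in the path variable and is irrelevant here; what is actually needed is $I_{\sigma_\Gp}=L_{\sigma_\Gp^{-1}}\circ R_{\sigma_\Gp}$ combined with $(\sigma\ast_2\sigma^\prime)^{-1}=(\sigma^\prime)^{-1}\ast_2\sigma^{-1}$, the anti-homomorphism property of $L$, the homomorphism property of $R$, and the commutation of left- with right-translations (they act at opposite endpoints of a path). Second, the step you defer---matching the case distinctions of Definition \ref{defi bisecongraphgroupioid} with the decomposition into $V_{\sigma,\sigma^\prime}$, $W_{\sigma,\sigma^\prime}$ and the remaining vertices in Proposition \ref{lemma bisecform}---is the actual content of the lemma rather than routine bookkeeping: on $W_{\sigma,\sigma^\prime}$ the product $\ast_2$ adjoins $\tilde\sigma(w)$ and $\tilde\sigma^\prime(w)$ as two separate, non-composed paths, and one must check that $R_{\sigma_\Gp}\circ R_{\sigma_\Gp^\prime}$ yields exactly these and not their composition, while the clauses of Definition \ref{defi bisecongraphgroupioid} that leave paths with coinciding targets untranslated must be shown to be triggered identically on both sides. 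Since avoiding precisely these degeneracies is why $\ast_2$ was introduced in place of $\ast_1$, announcing this verification without performing it leaves the lemma unproved.
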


There is an action of $\mathfrak {B}(\PD_\Gamma)$ on $G^{\vert \Gamma\vert}$ by
\beqs (\zeta_{\sigma_\Gp}\circ\ho_\Gamma)(\Gpp):= (\ho_\Gamma\circ R_{\sigma_\Gp})(\Gpp)
\eqs whenever $\sigma_\Gp\in \mathfrak {B}(\PD_\Gamma)$, $\Gpp\in\PD_\Gp$ and $\Gp\in \PD_\Gamma$.
Then for another $\breve\sigma\in \mathfrak {B}(\PD_\Gamma)$ it is true that,
\beqs ((\zeta_{\breve\sigma_\Gp}\circ\zeta_{\sigma_\Gp})\circ\ho_\Gamma)(\Gpp)= (\ho_\Gamma\circ R_{\breve\sigma\ast_2\sigma_\Gp})(\Gpp)=(\zeta_{\breve\sigma_\Gp\ast_2\sigma_\Gp}\circ\ho_\Gamma)(\Gpp)
\eqs yields.

Recall that, the map $\tilde\sigma\mapsto t\circ\tilde\sigma$ is a group isomorphism between the group of bisections $\mathfrak {B}(\PD_\Gamma\Sigma)$ and the group $\Diff(V_\Gamma)$ of finite diffeomorphisms in $V_\Gamma$. Therefore if the graphs $\Gp=\Gpp$ contain only the path $ \gamma$, then the action $\zeta_{\sigma_\Gp}$ is equivalent to an action of the finite diffeomorphism group $\Diff(V_\Gamma)$. Loosely speaking, the graph-diffeomorphisms $(R_{\sigma_\Gp(V)},t\circ\sigma_\Gp)$ on a subgraph $\Gpp$ of $\Gp$ transform graphs and respect the graph structure of $\Gp$. The diffeomorphism $t\circ\tilde\sigma$ in the finite path groupoid only implements the finite diffeomorphism in $\Sigma$, but it doesn't adopt any path groupoid or graph preserving structure. Summarising the bisections of a finite graph system respect the graph structure and implement the finite diffeomorphisms in $\Sigma$. There is another reason why the group of bisections is more fundamental than the path- or graph-diffeomorphism group. In section \ref{subsec dynsysfluxgroup} the concept of  $C^*$-dynamical systems is studied. It turns out that, there are three different $C^*$-dynamical systems, each is build from the analytic holonomy $C^*$-algebra and a point-norm continuous action of the group of bisections of a finite graph system. The actions are implemented by one of the three translations, i.e. the left-, right- or inner-translation in the finite product $G^{\vert\Gamma\vert}$. Furthermore the actions are related to each other by an unitary $1$-cocycle. Hence the automorphisms are exterior equivalent \cite[Def.:2.66]{Williams07}.  In this case the full information is contained in the $C^*$-dynamical systems constructed from the right-translation in a finite graph system. Moreover the path- (or graph-) diffeomorphisms define particular right-, left- or inner-translations associated to suitable bisections. This is why in the introduction and later in section \ref{sec analholalg} the path- (or graph-) diffeomorphism group are often indentified with the group of bisections in a finite path groupoid (or graph system).

Finally the left or right-translations in a finite path groupoid can be studied in the context of natural or non-standard identification of the configuration space. This new concept leads to two different notions of diffeomorphism-invariant states. The actions of path- and graph-diffeomorphism and the concepts of natural or non-standard identification of the configuration space was not used in the context of LQG before.
\subsection{The group-valued quantum flux operators associated to surfaces and graphs}\label{subsec fluxdef}

Let $G$ be the structure group of a principal fibre bundle $P(\Sigma,G,\pi)$. Then the quantum flux operators, which are associated to a fixed surface $S$, are $G$-valued operators. For the construction of the quantum flux operator $\rho_S(\gamma)$ different maps from a graph $\Gamma$ to a direct product $G\times G$ are considered. This is related to the fact that, one distinguishes between paths that are ingoing and paths that are outgoing with resepect to the surface orientation of $S$. If there are no intersection points of the surface $S$ and the source or target vertex of a path $\gamma_i$ of a graph $\Gamma$, then the map maps the path $\gamma_i$ to zero in both entries. For different surfaces or for a fixed surface different maps refer to different quantum flux operators.  
 
\begin{defi}
Let $\breve S$ be a finite set $\{S_i\}$ of surfaces in $\Sigma$, which is closed under a flip of orientation of the surfaces. Let $\Gamma$ be a graph such that each path in $\Gamma$ satisfies one of the following conditions 
\begin{itemize}
 \item the path intersects each surface in $\breve S$ in the source vertex of the path and there are no other intersection points of the path and any surface contained in $\breve S$,
 \item the path intersects each surface in $\breve S$ in the target vertex of the path and there are no other intersection points of the path and any surface contained in $\breve S$,
 \item the path intersects each surface in $\breve S$ in the source and target vertex of the path and there are no other intersection points of the path and any surface contained in $\breve S$,
 \item the path does not intersect any surface $S$ contained in $\breve S$.
\end{itemize} Finally let $\PD_\Gamma$ denotes the finite graph system associated to $\Gamma$. 

Then define the intersection functions $\iota_L:\breve S\times \Gamma\rightarrow \{\pm 1,0\}$ such that
\beqs \iota_L(S,\gamma):=
\left\{\begin{array}{ll}
1 &\text{ for a path }\gamma\text{ lying above and outgoing w.r.t. }S\\
-1 &\text{ for a path }\gamma\text{ lying below and outgoing w.r.t. }S\\
0 &\text{ the path }\gamma\text{ is not outgoing w.r.t. }S
\end{array}\right.
\eqs
and the intersection functions $\iota_R:\breve S\times \Gamma\rightarrow\{\pm 1,0\}$ such that
\beqs \iota_L(S,\gamma):= \left\{\begin{array}{ll}
-1 &\text{ for a path }\gp\text{ lying above and ingoing w.r.t. }S\\
1 &\text{ for a path }\gp\text{ lying below and ingoing w.r.t. }S\\
0 &\text{ the path }\gp\text{ is not ingoing w.r.t. }S
\end{array}\right.
\eqs whenever $S\in\breve S$ and $\gamma\in\Gamma$.

Define a map $o_L:\breve S\rightarrow G$ such that
\beqs o_L(S)&=o_L(S^ {-1})
\eqs whenever $S\in\breve S$ and $S^ {-1}$ is the surface $S$ with reversed orientation. Denote the set of such maps by $\breve o_L$. Respectively the map $o_R:\breve S\rightarrow G$ such that
\beqs o_R(S)&=o_R(S^ {-1})
\eqs whenever $S\in\breve S$. Denote the set of such maps by $\breve o_R$.
Moreover there is a map $o_L\times o_R:\breve S\rightarrow G\times G$ such that
\beqs (o_L,o_R)(S)&=(o_L,o_R)(S^ {-1})
\eqs whenever $S\in\breve S$. Denote the set of such maps by $\breve o$.

Then define the \textbf{group-valued quantum flux set for paths}
\beqs  \Gop_{\breve S,\Gamma}
:=\bigcup_{o_L\times o_R\in\breve o}\bigcup_{S\in\breve S}\Big\{& (\rho^L,\rho^R)\in\Map(\Gamma,G\times G): 
&(\rho^L, \rho^R)(\gamma):=(o_L(S)^{\iota_L(S,\gamma)},o_R(S)^{\iota_R(S,\gamma)})\Big\}
\eqs
where $\Map(\Gamma,G\times G)$ denotes the set of all maps from the graph $\Gamma$ to the direct product $G\times G$.

Define the \textbf{set of group-valued quantum fluxes for graphs}
\beqs G_{\breve S,\Gamma}:= \bigcup_{o_L\times o_R\in\breve o}\bigcup_{S\in\breve S}\Big\{ \rho_{S,\Gamma}\in\Map(\PD^{\op}_\Gamma,G^{\vert\Gamma\vert}\times G^{\vert\Gamma\vert}):\quad 
&\rho_{S,\Gamma}:=\rho_S\times...\times \rho_S\\&\text{ where }\rho_S(\gamma):=(o_L(S)^{\iota_L(\gamma,S)},o_R(S)^{\iota_R(\gamma,S)}),\\
&\rho_S\in\Gop_{\breve S,\Gamma},S\in\breve S,\gamma\in\Gamma\Big\}\eqs 
\end{defi}
Notice if $H$ is a closed subgroup of $G$, then $H_{\breve S,\Gamma}$ can be defined in analogy to $G_{\breve S,\Gamma}$.
In particular if the group $H$ is replaced by the center $\ZD(G)$ of the group $G$, then the set $\Gop_{\breve S,\Gamma}$ is replaced by $\Zop_{\breve S,\Gamma}$ and $G_{\breve S,\Gamma}$ is changed to $\ZD_{\breve S,\Gamma}$. 

Furthermore observe that, $(\iota_L\times \iota_R)(S^{-1},\gamma)=(-\iota_L\times -\iota_R)(S,\gamma)$ for every $\gamma\in\Gamma$ holds. Remark that, the condition $\rho^L(\gamma)=\rho^R(\gamma^{-1})$ is not required. 

\begin{exa}\label{exa Exa1}
For example the following example can be analysed. Consider a graph $\Gamma$ and two disjoint surface sets $\breve S$ and $\breve T$.
\begin{center}
\includegraphics[width=0.45\textwidth]{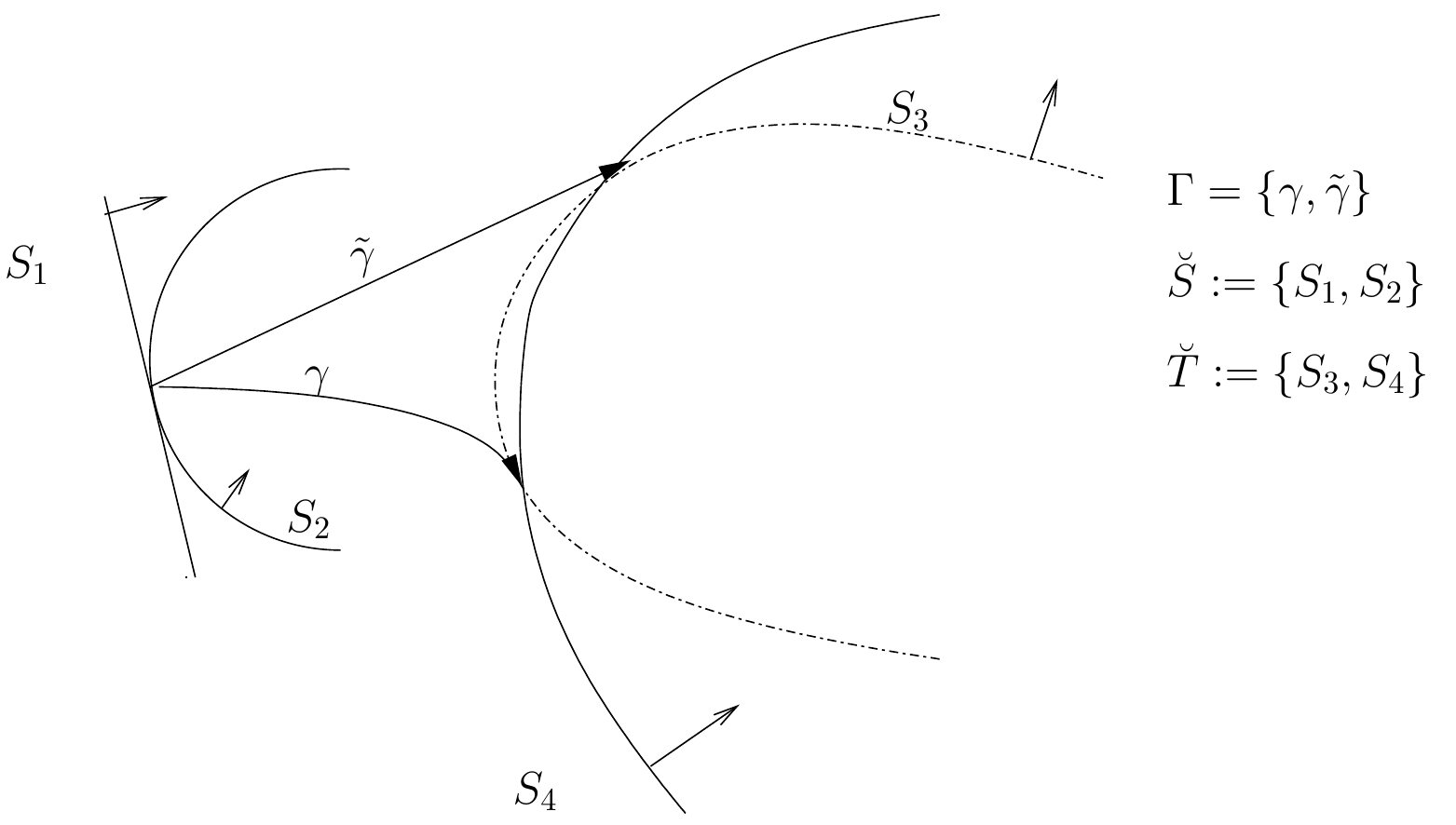}
\end{center}
Then the elements of $\Gop_{\breve S,\Gamma}$ are for example the maps $\rho^L_{i}\times \rho^R_{i}$ for $i=1,2$ such that 
\beqs 
\rho_1(\gamma)&:= (\rho^L_{1}, \rho^R_{1})(\gamma)=(\sigma_L(S_1)^{\iota_L(S_1,\gamma)},\sigma_R(S_1)^{\iota_R(S_1,\gamma)})=(g_{1},0)\\
\rho_1(\tg)&:= (\rho^L_{1}, \rho^R_{1})(\tg)=(\sigma_L(S_1)^{\iota_L(S_1,\tg)},\sigma_R(S_1)^{\iota_R(S_1,\tg)})= (g_1,0)\\
\rho_2(\gamma)&:= (\rho^L_{2}, \rho^R_{2})(\gamma)=(\sigma_L(S_2)^{\iota_L(S_2,\gamma)},\sigma_R(S_2)^{\iota_R(S_2,\gamma)})
=(g_{2},0)\\
\rho_2(\tg)&:= (\rho^L_{2}, \rho^R_{2})(\tg)=(\sigma_L(S_2)^{\iota_L(S_2,\tg)},\sigma_R(S_2)^{\iota_R(S_2,\tg)})
=(g_{2},0)\\
\rho_3(\gamma)&:= (\rho^L_{3}, \rho^R_{3})(\gamma)=(\sigma_L(S_3)^{\iota_L(S_3,\gamma)},\sigma_R(S_3)^{\iota_R(S_3,\gamma)})
=(0,h_{3}^{-1})\\
\rho_3(\tg)&:= (\rho^L_{3}, \rho^R_{3})(\tg)=(\sigma_L(S_3)^{\iota_L(S_3,\tg)},\sigma_R(S_3)^{\iota_R(S_3,\tg)})= (0,h_3^{-1})\\
\rho_4(\gamma)&:= (\rho^L_{4}, \rho^R_{4})(\gamma)=(\sigma_L(S_4)^{\iota_L(S_4,\gamma)},\sigma_R(S_4)^{\iota_R(S_4,\gamma)})
=(0,h_{4})\\
\rho_4(\tg)&:= (\rho^L_{4}, \rho^R_{4})(\tg)=(\sigma_L(S_4)^{\iota_L(S_4,\tg)},\sigma_R(S_4)^{\iota_R(S_4,\tg)})= (0,h_4)
\eqs 

This example shows that, the surfaces $\{S_1,S_2\}$ are similar, whereas the surfaces $\{T_1,T_2\}$ produce different signatures for different paths. Moreover the set of surfaces are chosen such that one component of the direct sum is always zero. 
\end{exa}

For a particular surface set $\breve S$, the following set is defined
\beqs\bigcup_{\sigma_L\times\sigma_R\in\breve\sigma}\bigcup_{S\in\breve S}
\Big\{ (\rho^L,\rho^R)\in\Map(\Gamma,G\times G): \quad(\rho^L, \rho^R)(\gamma):=(\sigma_L(S)^{\iota_L(S,\gamma)},0)\Big\}\eqs can be identified with 
\beqs\bigcup_{\sigma_L\in\breve\sigma_L}\bigcup_{S\in\breve S}\Big\{\rho\in\Map(\Gamma,G): \quad
\rho(\gamma):=\sigma_L(S)^{\iota_L(S,\gamma)}\Big\}
\eqs 
The same is observed for another surface set $\breve T$ and the set $\Gop_{\breve T,\Gamma}$ is identifiable with 
\beqs\bigcup_{\sigma_R\in\breve\sigma_R}\bigcup_{T\in\breve T}
\Big\{\rho\in\Map(\Gamma,G): \quad
\rho(\gamma):=\sigma_R(T)^{\iota_R(T,\gamma)}\Big\}
\eqs

The intersection behavoir of paths and surfaces plays a fundamental role in the definition of the flux operator. There are exceptional configurations of surfaces and paths in a graph. One of them is the following.

\begin{defi}
A surface $S$ has the \textbf{surface intersection property for a graph} $\Gamma$, if the surface intersects each path of $\Gamma$ once in the source or target vertex of the path and there are no other intersection points of $S$ and the path. 
\end{defi}

This is for example the case for the surface $S_1$ or the surface $S_3$, which are presented in example \thesection.\ref{exa Exa1}. Notice that in general, for the surface $S$ there are $N$ intersection points with $N$ paths of the graph. In the example the evaluated map $\rho_1(\gamma)=(g_1,0)=\rho_1(\tg)$ for $\gamma,\tg\in\Gamma$ if the surface $S_1$ is considered.

The property of a path lying above or below is not important for the definition of the surface intersection property for a surface. This indicates that the surface $S_4$ in the example \thesection.\ref{exa Exa1} has the surface intersection property, too.

Let a surface $S$ does not have the surface intersection property for a graph $\Gamma$, which contains only one path $\gamma$. Then for example the path $\gamma$ intersects the surface $S$ in the source and target vertices such that the path lies above the surface $S$. Then the map $\rho^ L\times \rho^ R$ is evaluated for the path $\gamma$ by
\beqs (\rho^ L\times \rho^ R)(\gamma)=(g,h^{-1})
\eqs
Hence simply speaking the surface intersection property reduces the components of the map $\rho^ L\times \rho^ R$, but for different paths to different components.

Now, consider a bunch of sets of surfaces such that for each surface there is only one intersection point.
\begin{defi}\label{def intprop}
A set $\breve S$ of $N$ surfaces has the \textbf{surface intersection property for a graph $\Gamma$} with $N$ independent edges, if it contain only surfaces, for which each path $\gamma_i$ of a graph $\Gamma$ intersects each surface $S_i$ only once in the same source or target vertex of the path $\gamma_i$, there are no other intersection points of each path $\gamma_i$ and each surface in $\breve S$ and there is no other path $\gamma_j$ that intersects the surface $S_i$ for $i\neq j$ where $1 \leq i,j\leq N$.
\end{defi}
Then for example consider the following configuration.

\begin{exa} 
\begin{center}
\includegraphics[width=0.45\textwidth]{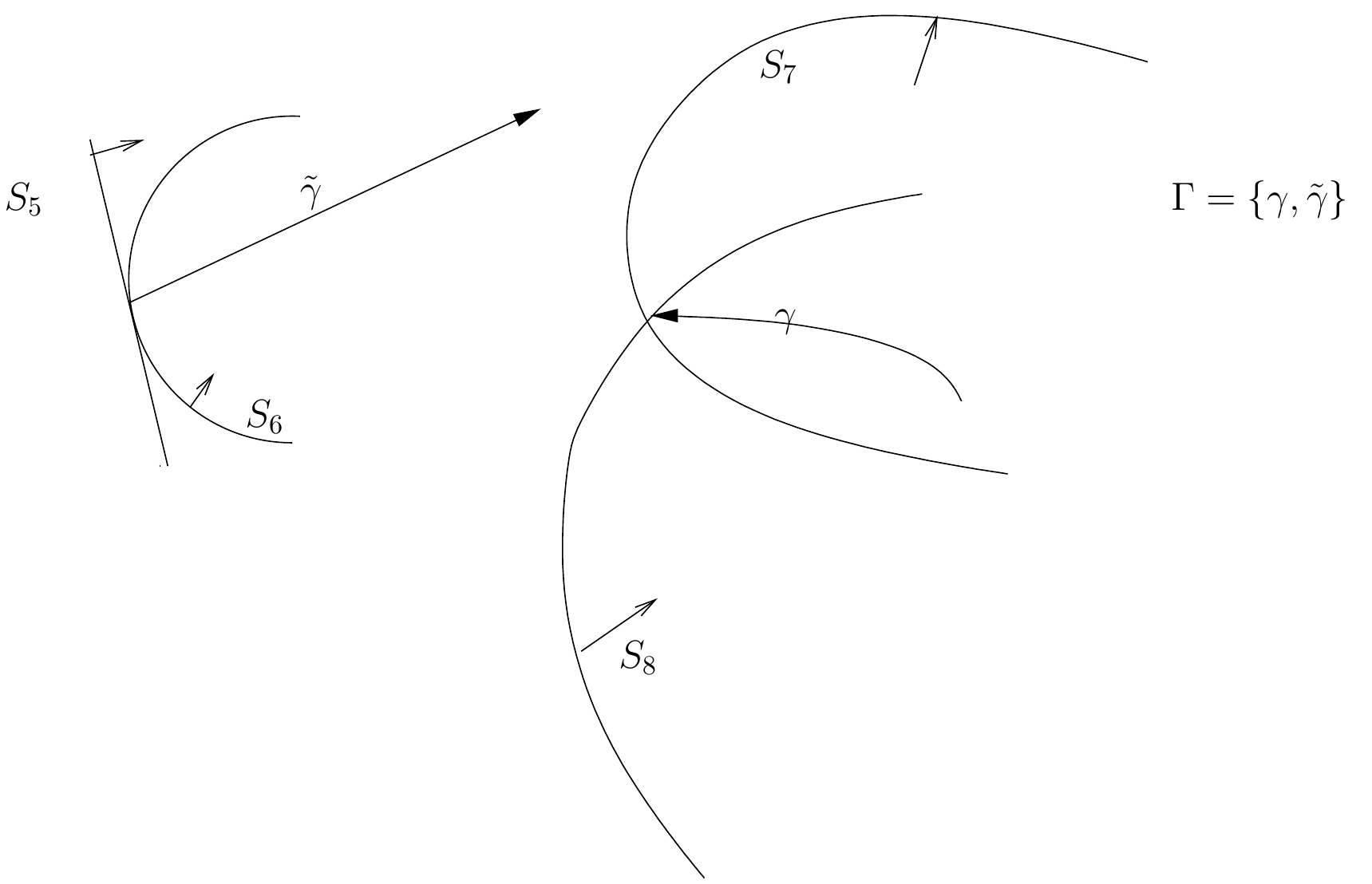}
\end{center} 
The sets $\{S_{6},S_{7}\}$ or $\{S_5,S_{8}\}$ have the surface intersection property for the graph $\Gamma$. 
The images of a map $E$ is
\beqs \rho_5(\tg)=(g_5,0),\quad \rho_{8}(\gamma)=(0,h_{8})
\eqs
\end{exa}
Note that simply speaking the property indicates that each map reduces to a component of $\rho^ L\times \rho^R$.

A set of surfaces that has the surface intersection property for a graph is further specialised by restricting the choice to paths lying ingoing and below with respect to the surface orientations. 
\begin{defi}
A set $\breve S$ of $N$ surfaces has the \hypertarget{simple surface intersection property for a graph}{\textbf{simple surface intersection property for a graph $\Gamma$}} with $N$ independent edges, if it contains only surfaces, for which each path $\gamma_i$ of a graph $\Gamma$ intersects only one surface $S_i$ only once in the target vertex of the path $\gamma_i$, the path $\gamma_i$ lies above and there are no other intersection points of each path $\gamma_i$ and each surface in $\breve S$. 
\end{defi}
\begin{exa}Consider the following example.
\begin{center}
\includegraphics[width=0.45\textwidth]{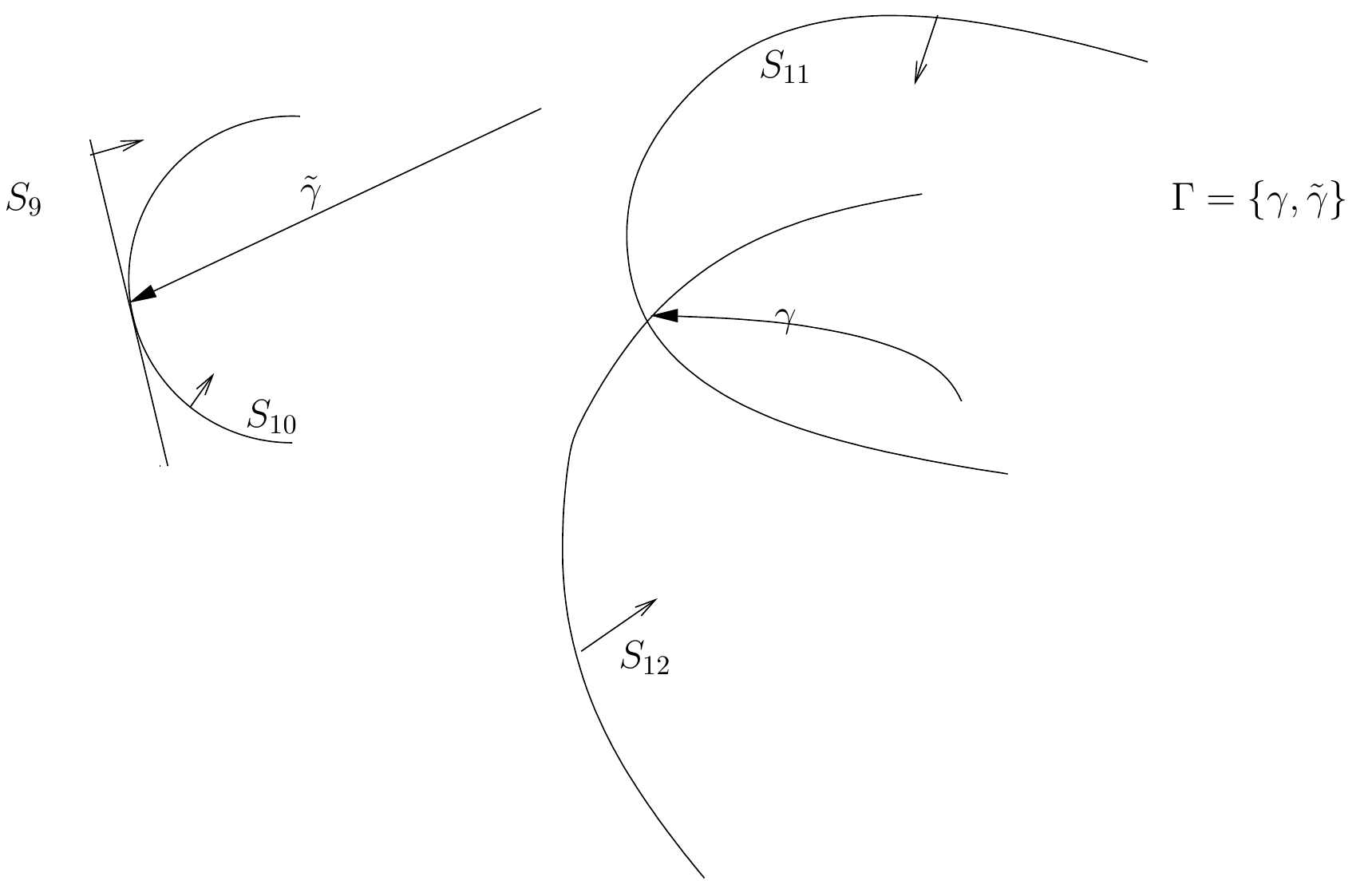}
\end{center} The sets $\{S_{9},S_{11}\}$ or $\{S_{10},S_{12}\}$ have the simple surface intersection property for the graph $\Gamma$.
Calculate
\beqs \rho_{9}(\tg)=(0,h_{9}^{-1}),\quad \rho_{11}(\gamma)=(0,h_{11}^{-1})
\eqs
\end{exa}
In this case the set $\Gop_{\breve S,\Gamma}$ reduces to
\beqs\bigcup_{\sigma_R\in\breve\sigma_R}\bigcup_{S\in\breve S}\Big\{\rho\in\Map(\Gamma,\go): \quad
\rho(\gamma):=\sigma_R(S)^{-1}\text{ for }\gamma\cap S= t(\gamma)\Big\}
\eqs Notice that, the set $\Gamma\cap \breve S=\{t(\gamma_i)\}$ for a surface $S_i\in\breve S$ and $\gamma_i\cap S_j\cap S_i =\{\varnothing\}$ for a path $\gamma_i$ in $\Gamma$ and $i\neq j$.

On the other hand, there exists a set of surfaces such that each path of a graph intersects all surfaces of the set in the same vertex. This contradicts the assumption that each path of a graph intersects only one surface once. 
\begin{defi}Let $\Gamma$ be a graph that contains no loops.

A set $\breve S$ of surfaces has the \hypertarget{same intersection property}{\textbf{same surface intersection property for a graph}} $\Gamma$ iff each path $\gamma_i$ in $\Gamma$ intersects with all surfaces of $\breve S$ in the same source vertex $v_i\in V_\Gamma$ ($i=1,..,N$), all paths are outgoing and lie below each surface $S\in\breve S$ and there are no other intersection points of each path $\gamma_i$ and each surface in $\breve S$. 

A surface set $\breve S$ has the \textbf{same right surface intersection property for a graph} $\Gamma$ iff each path $\gamma_i$ in $\Gamma$ intersects with all surfaces of $\breve S$ in the same target vertex $v_i\in V_\Gamma$ ($i=1,..,N$), all paths are ingoing and lie above each surface $S\in\breve S$ and there are no other intersection points of each path $\gamma_i$ and each surface in $\breve S$. 
\end{defi}

Recall the example \thesection.\ref{exa Exa1}. Then the set $\{S_{1},S_{2}\}$ has the same surface intersection property for the graph $\Gamma$.

Then the set $\Gop_{\breve S,\Gamma}$ reduces to
\beqs\bigcup_{\sigma_L\in\breve\sigma_L}\bigcup_{S\in\breve S}\Big\{\rho\in\Map(\Gamma,\go): \quad
\rho(\gamma):= \sigma_L(S)^{-1}\text{ for }\gamma\cap S= s(\gamma)\Big\}
\eqs Notice that, $\gamma\cap S_1\cap ...\cap S_N=s(\gamma)$ for a path $\gamma$ in $\Gamma$ whereas $\Gamma\cap\breve S=\{s(\gamma_i)\}_{1\leq i\leq N}$. Clearly $\Gamma\cap S_i=s(\gamma_i)$ for a surface $S_i$ in $\breve S$ holds. 
Simply speaking the physical intution behind that is given by fluxes associated to different surfaces that should act on the same path.
 
A very special configuration is the following.
\begin{defi}
A set $\breve S$ of surfaces has the \textbf{same surface intersection property for a graph $\Gamma$ containing only loops} iff each loop $\gamma_i$ in $\Gamma$ intersects with all surfaces of $\breve S$ in the same vertices $s(\gamma_i)=t(\gamma_i)$ in $V_\Gamma$ ($i=1,..,N$), all loops lie below each surface $S\in\breve S$ and there are no other intersection points of each loop in $\Gamma$ and each surface in $\breve S$. 
\end{defi}

Notice that, both properties can be restated for other surface and path configurations. Hence a surface set have the simple or same surface intersection property for paths that are outgoing and lie above (or ingoing and below, or outgoing and below). The important fact is related to the question if the intersection vertices are the same for all surfaces or not.

Finally for the definition of the quantum flux operators notice the following objects.
\begin{defi}
 The set of all images of maps in $\Gop_{\breve S,\Gamma}$ for a fixed surface set $\breve S$ and a fixed path $\gamma$ in $\Gamma$ is denoted by $\bar\Gop_{\breve S,\gamma}$. 

The set of all finite products of images of maps in $G_{\breve S,\Gamma}$ for a fixed surface set $\breve S$ and a fixed graph $\Gamma$ is denoted by $\bar G_{\breve S,\Gamma}$. 
\end{defi}
The product $\cdot$ on $\bar G_{\breve S,\Gamma}$ is given by
\beqs \rho_{S_1,\Gamma}(\Gamma)\cdot \rho_{S_2,\Gamma}(\Gamma)&=(\rho_{S_1}(\gamma_1)\cdot \rho_{S_2}(\gamma_1),...,\rho_{S_1}(\gamma_N)\cdot \rho_{S_2}(\gamma_N))\\
&=(o_L(S_1)^{-1}o_L(S_2)^{-1},...,o_L(S_2)^{-1}o_L(S_1)^{-1})\\
&=((o_L(S_2)o_L(S_1))^{-1},...,(o_L(S_2)o_L(S_1))^{-1})\\
&=\rho_{S_3,\Gamma}(\Gamma)
\eqs

\begin{defi}
Let $S$ be a surface and $\Gamma$ be a graph such that the only intersections of the graph and the surface in $S$ are contained in the vertex set $V_\Gamma$. Moreover let $\fPG$ be a finite path groupoid associated to $\Gamma$. 

Then define the set for a fixed surface $S$ by
\beqs &\Map_S(\PD_\Gamma\Sigma,G\times G)\\&
:= \bigcup_{o_L\times o_R\in\breve o}\bigcup_{S\in\breve S}\Big\{& (\rho^L,\rho^R)\in\Map(\PD_\Gamma\Sigma,G\times G): 
&(\rho^L, \rho^R)(\gamma):=(o_L(S)^{\iota_L(S,\gamma)},o_R(S)^{\iota_R(S,\gamma)})\Big\}
\eqs
\end{defi}

Then the quantum flux operators are elements of the following group.
\begin{prop}Let $\breve S$ a set of surfaces and $\Gamma$ be a fixed graph, which contains no loops, such that the set $\breve S$ has the same surface intersection property for the graph $\Gamma$. 

The set $\bar\Gop_{\breve S,\gamma}$ has the structure of a group.

The group $\bar\Gop_{\breve S,\gamma}$ is called the \textbf{flux group associated a path and a finite set of surfaces}.
\end{prop}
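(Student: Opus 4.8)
The plan is to show that, under the hypotheses, $\bar\Gop_{\breve S,\gamma}$ is nothing but the subgroup $G\times\{e_G\}$ of the direct product group $G\times G$, from which the group structure follows immediately. First I would use the same surface intersection property to pin down the form of the elements. Since $\Gamma$ contains no loops and $\breve S$ has the same surface intersection property for $\Gamma$, the fixed path $\gamma$ meets every surface $S\in\breve S$ only in the common source vertex $s(\gamma)$, with $\gamma$ outgoing and lying below each $S$. Hence $\iota_L(S,\gamma)=-1$ and $\iota_R(S,\gamma)=0$ for every $S\in\breve S$, so each image is $\rho_S(\gamma)=(o_L(S)^{-1},e_G)$. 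This is exactly the reduced description recorded after the definition of the same surface intersection property, and it shows $\bar\Gop_{\breve S,\gamma}\subseteq G\times\{e_G\}$.

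Next I would identify the set of first components. An admissible map $o_L$ is constrained only by $o_L(S)=o_L(S^{-1})$, which is consistent with assigning an arbitrary element of $G$ to the orbit $\{S,S^{-1}\}$; therefore, as $(o_L,o_R)$ runs over $\breve o$ and $S$ over $\breve S$, the first component $o_L(S)^{-1}$ attains every value of $G$. Consequently $\bar\Gop_{\breve S,\gamma}=G\times\{e_G\}$ as sets.

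Finally I would verify that this set is closed under the product induced from $G\times G$, which is the product $\cdot$ already used for $\bar G_{\breve S,\Gamma}$. Componentwise multiplication gives $(o_L(S_1)^{-1},e_G)\cdot(o_L(S_2)^{-1},e_G)=(o_L(S_1)^{-1}o_L(S_2)^{-1},e_G)$, which again lies in $G\times\{e_G\}$, so closure holds; associativity is inherited from $G$; the identity $(e_G,e_G)$ is realised by the constant admissible map $o_L\equiv e_G$; and the inverse of $(o_L(S)^{-1},e_G)$ is $(o_L(S),e_G)$, which is realised by the admissible map sending the orbit $\{S,S^{-1}\}$ to $o_L(S)^{-1}$. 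Hence $\bar\Gop_{\breve S,\gamma}$ is a group, in fact isomorphic to $G$.

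The genuinely substantive step is the first one: the same surface intersection property, together with the absence of loops, is precisely what forces a single common value $\iota_L=-1$ and kills the second component via $\iota_R=0$, so that all elements live in one factor and the componentwise product closes. Everything after that is bookkeeping, the only care being to exhibit explicit admissible maps realising the identity and the inverses; I expect no real obstacle there, since the admissibility condition never constrains the value on a single orientation orbit.
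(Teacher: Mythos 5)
Your proof is correct and takes essentially the same route as the paper's: both use the same surface intersection property (together with the absence of loops) to force $\iota_R=0$ and reduce every element to the single value $o_L(S)^{-1}$ in the left factor, and both then exploit the arbitrariness of the maps $o_L$ (subject only to $o_L(S)=o_L(S^{-1})$) to obtain closure, the identity and inverses. Your explicit identification $\bar\Gop_{\breve S,\gamma}=G\times\{e_G\}\cong G$ merely makes transparent what the paper states tersely, namely that for any product $g_1g_2$ "there always exists a map $\rho^L_{S,3}$" realising it.
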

\begin{proofs}This follows easily from the observation that in this case 
$\Gop_{\breve S,\gamma}$ reduces to
\beqs\bigcup_{o_L\in\breve o_L}\bigcup_{S\in\breve S}\Big\{& \rho^L\in\Map(\Gamma,G): 
&\rho^L(\gamma):=o_L(S)^{-1}\text{ for }\gamma\cap S=s(\gamma)\Big\}
\eqs

There always exists a map $\rho^L_{S,3}\in\Gop_{\breve S,\gamma}$ such that the following equation defines a  multiplication operation 
\beqs \rho^L_{S,1}(\gamma)\cdot\rho^L_{S,2}(\gamma)=g_{1}g_{2}:=\rho^L_{S,3}(\gamma)\in\bar\Gop_{\breve S,\gamma}\eqs with inverse $(\rho^L_S(\gamma))^{-1}$ such that
\beqs \rho^L_S(\gamma)\cdot (\rho^L_S(\gamma))^{-1}=(\rho^L_S(\gamma))^{-1}\cdot \rho^L_S(\gamma)=e_G\quad\forall\gamma\in\Gamma\eqs
\end{proofs}

Notice that for a loop $\alpha$ an element $\rho_S(\alpha)\in\bar\Gop_{\breve S,\gamma}$ is defined by \beqs\rho_S(\alpha):=(\rho_S^L\times\rho_S^R)(\alpha)= (g, h)\in G^2 
\eqs In the case of a path $\gp$ that intersects a surface $S$ in the source and target vertex there is also an element $\rho_S(\gp)\in\bar\Gop_{\breve S,\gamma}$ defined by \beqs\rho_S(\gp):=(\rho_S^L\times\rho_S^R)(\gp)= (g, h)\in G^2 
\eqs

\begin{prop}
Let $\breve S$ be a set of surfaces and $\Gamma$ be a fixed graph, which contains no loops, such that the set $\breve S$ has the same surface intersection property for the graph $\Gamma$. Let $\PD^{\op}_\Gamma$ be a finite orientation preserved graph system such that the set $\breve S$.

The set $\bar G_{\breve S,\Gamma}$ has the structure of a group.

The  set $\bar G_{\breve S,\Gamma}$ is called the \textbf{flux group associated a graph and a finite set of surfaces}.
\end{prop}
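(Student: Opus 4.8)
The plan is to exploit the same surface intersection property to collapse the a priori complicated elements $\rho_{S,\Gamma}(\Gamma)$ into diagonal elements of $G^{\vert\Gamma\vert}$, and then to recognise $\bar G_{\breve S,\Gamma}$ as the diagonal subgroup of the group $G^{\vert\Gamma\vert}$ equipped with component-wise multiplication. First I would unwind the hypothesis: since $\breve S$ has the same surface intersection property for $\Gamma$, every path $\gamma_i$ of $\Gamma$ meets each surface $S\in\breve S$ in its source vertex, is outgoing and lies below $S$. Consequently $\iota_L(S,\gamma_i)=-1$ and $\iota_R(S,\gamma_i)=0$ for all $i$ and all $S\in\breve S$, so that $\rho_S(\gamma_i)=(o_L(S)^{-1},e_G)$. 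Exactly as in the reduction used for the preceding proposition, the trivial right component may be suppressed, and therefore the image $\rho_{S,\Gamma}(\Gamma)=(o_L(S)^{-1},\dots,o_L(S)^{-1})$ is a diagonal element of $G^{\vert\Gamma\vert}$, uniformly in the paths.

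Next I would verify closure. The component-wise product displayed before the statement shows that the product of two diagonal images is again diagonal, with $\rho_{S_1,\Gamma}(\Gamma)\cdot\rho_{S_2,\Gamma}(\Gamma)=((o_L(S_2)o_L(S_1))^{-1},\dots,(o_L(S_2)o_L(S_1))^{-1})$, which is itself of the form $\rho_{S_3,\Gamma}(\Gamma)$. Since $\bar G_{\breve S,\Gamma}$ is by definition the set of all finite products of such images, it is automatically closed, and associativity is inherited from the ambient group $G^{\vert\Gamma\vert}$. For the unit I would take the constant map $o_L\equiv e_G$, giving $(e_G,\dots,e_G)$, and for inverses I would use that $\breve o_L$ consists of all maps subject only to $o_L(S)=o_L(S^{-1})$; replacing $o_L$ by the map $S\mapsto o_L(S)^{-1}$, which still satisfies the flip constraint, produces $(o_L(S),\dots,o_L(S))$, the two-sided inverse of $(o_L(S)^{-1},\dots,o_L(S)^{-1})$. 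Thus $\bar G_{\breve S,\Gamma}$ is nonempty, closed under the product, and closed under inversion, hence a subgroup of $G^{\vert\Gamma\vert}$; since $o_L(S)$ already exhausts $G$ as $o_L$ varies, the assignment $g\mapsto(g,\dots,g)$ exhibits $\bar G_{\breve S,\Gamma}$ as the full diagonal subgroup, isomorphic to $G$.

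The main obstacle I anticipate is not the group axioms themselves but ensuring that the reduction in the first step is genuinely uniform across all paths and all surfaces: it is precisely the same surface intersection property that forces the identical signature $\iota_L=-1$, $\iota_R=0$ for every pair $(S,\gamma_i)$, so that each generator is truly diagonal and the component-wise multiplication closes. Without this hypothesis the various components of $\rho_{S,\Gamma}(\Gamma)$ would carry different powers $o_L(S)^{\pm 1}$ and $o_R(S)^{\pm 1}$, the product would fail to preserve the diagonal form, and $\bar G_{\breve S,\Gamma}$ would not inherit a group structure from $G^{\vert\Gamma\vert}$. I would therefore devote the most care to justifying the suppression of the right component and the uniformity of the intersection data before invoking the routine subgroup verification.
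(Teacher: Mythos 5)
Your proof is correct and takes essentially the same approach as the paper's: both use the same surface intersection property to reduce each $\rho_{S,\Gamma}(\Gamma)$ to a diagonal element $(o_L(S)^{-1},\dots,o_L(S)^{-1})$ of $G^{\vert\Gamma\vert}$ with trivial right component, and then verify closure and inverses componentwise, exploiting that $o_L$ is an arbitrary map from $\breve S$ to $G$ (subject only to the orientation-flip constraint). Your additional remarks, the explicit unit and the identification of $\bar G_{\breve S,\Gamma}$ with the full diagonal subgroup isomorphic to $G$, go slightly beyond what the paper states but are consistent with its argument.
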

\begin{proofs}
This follows from the observation that the set $G_{\breve S,\Gamma}$ is identified with
\beqs  \bigcup_{\sigma_L\in\breve\sigma_L}\bigcup_{S\in\breve S}\Big\{ \rho_{S,\Gamma}\in\Map(\PD^{\op}_\Gamma,G^{\vert E_\Gamma\vert}):\quad 
&\rho_{S,\Gamma}:=\rho_S\times...\times \rho_S\\&\text{ where }\rho_S(\gamma):=o_L(S)^{-1},
\rho_S\in \Gop_{\breve S,\Gamma},S\in\breve S,\gamma\in\Gamma\Big\}\eqs 

Let $\breve S$ be a surface set having the same intersection property for a fixed graph $\Gamma:=\{\gamma_1,...,\gamma_N\}$. Then for two surfaces $S_1,S_2$ contained in $\breve S$ define
\beqs \rho_{S_1,\Gamma}(\Gamma)\cdot \rho_{S_2,\Gamma}(\Gamma)
&=(\rho_{S_1}(\gamma_1)\cdot \rho_{S_2}(\gamma_1),...,\rho_{S_1}(\gamma_N)\cdot \rho_{S_2}(\gamma_N))\\
&= (g_{S_1},...,g_{S_1})\cdot (g_{S_2},..., g_{S_2}) =( g_{S_1}g_{S_2},..., g_{S_1}g_{S_2})
\eqs where $\Gamma=\{\gamma_1,...,\gamma_N\}$.
Note that, since the maps $o_L$ are arbitrary maps from $\breve S$ to $G$, it is assumed that the maps satisfy $o_L(S_i):=g^{-1}_{S_i}\in G$ for $i=1,2$. 
Clearly this is related to in this particular case of the graph $\Gamma$ and can be generalised. 

The inverse operation is given by
\beqs (\rho_{S,\Gamma}(\Gamma))^{-1}=((\rho_S(\gamma_1))^{-1},...,(\rho_S(\gamma_N))^{-1})
\eqs
where $N=\vert \Gamma\vert$ and $\rho_S\in\Gop_{\breve S,\gamma}$ for $S\in\breve S$. Since it is true that
\beqs \rho_{S,\Gamma}(\Gamma)\cdot \rho_{S,\Gamma}(\Gamma)^{-1}&= (g_{S},...,g_{S})\cdot (g_{S}^{-1},..., g_{S}^{-1})\\
&=(\rho_{S}(\gamma_1)\cdot \rho_{S}(\gamma_1)^{-1},...,\rho_{S}(\gamma_N)\cdot \rho_{S}(\gamma_N)^{-1})\\
& =( g_{S}g_{S}^{-1},..., g_{S}g_{S}^{-1})=(e_G,...,e_G)
\eqs yields.
\end{proofs}
Notice that, it is not defined that
\beqs &\rho_{S_1,\Gamma}(\Gamma)\bullet_R \rho_{S_2,\Gamma}(\Gamma)\\&=(o_L(S_2)^{-1}o_L(S_1)^{-1},...,o_L(S_2)^{-1}o_L(S_1)^{-1})
=((o_L(S_1)o_L(S_2))^{-1},...,(o_L(S_1)o_L(S_2))^{-1})\\
&=\rho_{S_3,\Gamma}(\Gamma)
\eqs is true.
Moreover observe that, if all subgraphs of a finite orientation preserved graph system are \hyperlink{natural identification}{naturally identified}, then $\bar G_{\breve S,\Gp\leq\Gamma}$ is a subgroup of $\bar G_{\breve S,\Gamma}$ for all subgraphs $\Gp$ in $\PD_\Gamma$. If $G$ is assumed to be a compact Lie group, then the flux group $\bar G_{\breve S,\Gamma}$ is called the Lie flux group.

There is another group, if another surface set is considered.
\begin{prop}Let $\breve T$ be a set of surfaces and $\Gamma$ be a fixed graph such that the set $\breve T$ has the simple surface intersection property for the graph $\Gamma$. Let $\PD^{\op}_\Gamma$ be a finite orientation preserved graph system.

The set $\bar G_{\breve T,\Gamma}$ has the structure of a group.
\end{prop}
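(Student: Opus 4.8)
The plan is to mirror the proof of the preceding proposition for $\bar G_{\breve S,\Gamma}$ under the same surface intersection property, but with the two components exchanged and with care for the different support pattern forced by the \emph{simple} surface intersection property. First I would use that property to reduce $\Gop_{\breve T,\Gamma}$, and hence $G_{\breve T,\Gamma}$, to its right component: since each path $\gamma_i$ lies above and is ingoing, meeting its surface only at the target vertex, one has $\iota_L(S,\gamma_i)=0$ and $\iota_R(S,\gamma_i)=-1$, so the left entry is trivial and $\rho_S(\gamma_i)=o_R(S)^{-1}=:g_S$. Thus $G_{\breve T,\Gamma}$ is identified with
\beqs G_{\breve T,\Gamma}\simeq &\bigcup_{o_R\in\breve o_R}\bigcup_{S\in\breve T}\Big\{\rho_{S,\Gamma}\in\Map(\PD^{\op}_\Gamma,G^{\vert\Gamma\vert}):\rho_{S,\Gamma}:=\rho_S\times\cdots\times\rho_S,\ \rho_S(\gamma):=o_R(S)^{-1}\text{ for }\gamma\cap S=t(\gamma)\Big\}.\eqs

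Second, I would record the crucial structural feature: because the set has the simple surface intersection property, each surface $S$ in $\breve T$ meets exactly one edge of $\Gamma$, say $\gamma_i$. Evaluating the single map $\rho_{S,\Gamma}$ on the graph therefore yields a tuple $(e_G,\dots,e_G,g_S,e_G,\dots,e_G)\in G^{\vert\Gamma\vert}$ supported only in the $i$-th slot. This is the essential contrast with the same surface intersection property, where one surface already produces a diagonal tuple $(g_S,\dots,g_S)$ and products of images close up to a single image; here a single image occupies only one coordinate, so products of images coming from distinct surfaces genuinely populate distinct coordinates. I would take the multiplication on $\bar G_{\breve T,\Gamma}$ to be the componentwise product inherited from $G^{\vert\Gamma\vert}$, under which $\bar G_{\breve T,\Gamma}$ is by construction the set of all finite products of such supported tuples, and is thus closed by definition.

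Third, I would check the group axioms. Associativity is inherited immediately from that of $G$ in each coordinate. For the unit I would exhibit $(e_G,\dots,e_G)$ as a product of images: taking a surface $S$ meeting $\gamma_i$ together with its orientation reversal $S^{-1}\in\breve T$ (available since $\breve T$ is closed under flips), and using $(\iota_L\times\iota_R)(S^{-1},\gamma)=(-\iota_L\times -\iota_R)(S,\gamma)$ together with $o_R(S)=o_R(S^{-1})$, one gets $\rho_{S^{-1}}(\gamma_i)=o_R(S)^{+1}=g_S^{-1}$, so $\rho_{S,\Gamma}(\Gamma)\cdot\rho_{S^{-1},\Gamma}(\Gamma)$ is trivial in slot $i$ and in every other slot, i.e. the identity tuple. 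The same flip mechanism produces inverses: for any element written as a finite product $\prod_k\rho_{S_k,\Gamma}(\Gamma)$, the reversed product $\prod_k\rho_{S_k^{-1},\Gamma}(\Gamma)$ again lies in $\bar G_{\breve T,\Gamma}$ and inverts it coordinatewise. Since $o_R$ ranges over all maps $\breve T\to G$, every coordinate value is attainable, and the argument in fact identifies $\bar G_{\breve T,\Gamma}$ with $G^{\vert\Gamma\vert}$.

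The main obstacle I anticipate is not any single computation but keeping the bookkeeping of supports correct: one must verify that the products used to generate the identity and the inverses never force two surfaces to act on the same coordinate, which would reproduce the ill-definedness warned about in Problem \ref{problem group structure on graphs systems}, and this is precisely guaranteed by the simple surface intersection property assigning distinct surfaces to distinct edges. The orientation-flip closure of $\breve T$ and the sign behaviour of $\iota_R$ under reversal are the two ingredients that must be invoked carefully to supply inverses while staying inside $G_{\breve T,\Gamma}$.
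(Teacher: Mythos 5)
Your proposal is correct and follows essentially the same route as the paper: the simple surface intersection property reduces each generator $\rho_{T_i,\Gamma}(\Gamma)$ to a tuple supported only in the $i$-th slot, products of such generators fill all slots, so $\bar G_{\breve T,\Gamma}$ is identified with $G^{\vert\Gamma\vert}$ under componentwise multiplication, from which the group axioms follow. The only cosmetic difference is that you manufacture the unit and inverses via the orientation-flip closure of $\breve T$ together with the sign rule for $\iota_R$, whereas the paper's argument (inherited from the preceding proposition) relies solely on the fact that the maps $o_R$ range over all of $\Map(\breve T,G)$ --- a fact you also invoke, so nothing is missing.
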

The same arguments using the identification of $\bar G_{\breve T,\Gamma}$ with
\beqs  \bigcup_{\sigma_R\in\breve\sigma_R}\Big\{ \rho_{T,\Gamma}\in\Map(\PD^{\op}_\Gamma,G^{\vert E_\Gamma\vert}):\quad 
&\rho_{\breve T,\Gamma}:=\rho_{T_1}\times...\times \rho_{T_N}\\&\text{ where }\rho_{T_i}(\gamma):=o_R(T_i)^{-1},
\rho_{T_i}\in \Gop_{\breve T,\Gamma},T_i\in\breve T,\gamma\in\Gamma\Big\}\eqs 
which is given by
\beqs \rho_{T_1,\Gamma}(\Gamma)\cdot ...\cdot \rho_{T_N,\Gamma}(\Gamma)
&=(\rho_{T_1}(\gamma_1)e_G, e_G,...,e_G)\cdot 
(e_G, \rho_{T_2}(\gamma_2) e_G, e_G,...,e_G)\cdot 
 ...\cdot (e_G,...,e_G,\rho_{T_N}(\gamma_N)e_G)\\
&=(\rho^1_{T_1}(\gamma_1),...,\rho^1_{T_N}(\gamma_N))= (g_{1},...,g_{N})\in G^N\\
&=:\rho_{\breve T,\Gamma}(\Gamma)
\eqs
Then the multiplication operation is presented by
\beqs \rho^1_{\breve T,\Gamma}(\Gamma)\cdot \rho^2_{\breve T,\Gamma}(\Gamma)
&=(\rho^1_{T_1}(\gamma_1)\cdot \rho^2_{T_1}(\gamma_1),...,\rho^1_{T_N}(\gamma_N)\cdot \rho^2_{T_N}(\gamma_N))\\
&= (g_{1,1},...,g_{1,N})\cdot (g_{2,1},..., g_{2,N}) =( g_{1,1}g_{2,1},..., g_{1,N}g_{2,N})\in G^N
\eqs where $\Gamma=\{\gamma_1,...,\gamma_N\}$.

It is also possible that, the fluxes are located only in a vertex and do not depend on ingoing or outgoing, above or below orientation properties.
\begin{defi}\label{def Gloc}Let $\PD_\Gamma$ be a finite graph groupoid associated to a graph $\Gamma$ and let $N$ be the number of edges of the graph $\Gamma$. 

Define the set of maps 
\beqs G^{\loc}_{\Gamma}:=
\Big\{\textbf{g}_\Gamma\in\Map(\PD_\Gamma,G^{\vert \Gamma\vert}):& \textbf{g}_\Gamma:=g^1_\Gamma\circ s\times ...\times g^N_\Gamma\circ s\\  
&g^i_\Gamma\in\Map(\Gamma,G)\Big\}
\eqs 

Then $\bar G^{\loc}_{\Gamma}$ is the set of all images of maps in $G^{\loc}_{\Gamma}$ for all graphs in $\PD_\Gamma$ and $\bar G^{\loc}_{\Gamma}$ is called the \textbf{local flux group associated a finite graph system}. 
\end{defi}

\subsection{The group-valued quantum flux operators associated to surfaces and finite path groupoids}\label{subsec admfluxdef}

Recall the set of admissible maps $\Map^{\adm}(\PD_\Gamma\Sigma,G)$ presented in definition \ref{def admiss}. 

\begin{defi}\label{def adm}Let $\breve S$ be a finite set of surfaces which is closed under a flip of orientation of the surfaces. Let $\fPG$ be a finite path groupoid associated to a graph $\Gamma$ such that each path in $\PD_\Gamma\Sigma$ satisfies one of the following conditions 
\begin{itemize}
 \item the path intersects each surface in $\breve S$ in the source vertex of the path and there are no other intersection points of the path and any surface contained in $\breve S$,
 \item the path intersects each surface in $\breve S$ in the target vertex of the path and there are no other intersection points of the path and any surface contained in $\breve S$,
 \item the path intersects each surface in $\breve S$ in the source and target vertex of the path and there are no other intersection points of the path and any surface contained in $\breve S$,
 \item the path does not intersect any surface $S$ contained in $\breve S$.
\end{itemize} Finally, let $\PD_\Gamma$ denote the finite graph system associated to $\Gamma$. 

Then the \textbf{set of admissible maps associated to a graph and surfaces} $\breve S$ are defined by
\beqs \Gop^{\adm}_{\breve S,\Gamma}
:= \bigcup_{o_L\times o_R\in\breve o}\bigcup_{S\in\breve S}\Big\{& (\varrho^L,\varrho^R)\in\Map^{\adm}(\PD_\Gamma\Sigma,G\times G): 
&(\varrho^L, \varrho^R)(\gamma):=(o_L(S)^{\iota_L(S,\gamma)},o_R(S)^{\iota_R(S,\gamma)})\Big\}
\eqs

Define the \textbf{set of admissible maps associated to a finite graph system and surfaces} $\breve S$ is presented by
\beqs G^{\adm}_{\breve S,\Gamma}:=
\bigcup_{o_L\times o_R\in\breve o}\bigcup_{S\in\breve S}\Big\{\varrho_{S,\Gamma}\in\Map^{\adm}(\PD_\Gamma,G^{\vert \Gamma\vert}\times G^{\vert \Gamma\vert}):\quad
&\varrho_{S,\Gamma}:=\varrho_S\times ...\times\varrho_{S}\\
&\text{ where }\varrho_S(\gamma_i)=(o_L(S)^{\iota_L(S,\gamma)},o_R(S)^{\iota_R(S,\gamma_i)})\\
&\varrho_S\in\Gop^{\adm}_{\breve S,\Gamma}, S\in\breve S,\gamma_i\in\Gp,\Gp\in\PD_\Gamma\Big\}.\eqs
\end{defi}
Observe that, these maps have the following properties. For all elements of $\PD_\Gamma\Sigma_{v}$ (or $\PD_\Gamma\Sigma^{v}$) that intersect the surface $S$ only in their target (or source) vertex $v$ the maps $\varrho_{S}^L$ (or $\varrho_{S}^R$) in $\Gop^{\adm}_{\breve S,\Gamma}$ satisfies
\beq &\varrho^L_{S}(\gamma)=\varrho^L_{S}(\gamma\circ\gp)= \varrho^L_{S}(\gpp) =g_{S,L}\quad\forall\gamma,\gamma\circ\gp,\gpp\in\PD_\Gamma\Sigma_{v}\text{ and }v=s(\gamma)=S\cap\gamma\\
&\varrho^L_{S}(\gp^{-1})=\varrho^L_{S}((\gamma\circ\gp)^{-1})= k_{S,L}\quad\forall\gp^{-1},(\gamma\circ\gp)^{-1}\in\PD_\Gamma\Sigma_{v}\text{ and }v=t(\gp)=S\cap\gp^{-1}
\eq
Furthermore for paths $\gamma$ and $\gp$ that compose and intersect $S$ in the common vertex $t(\gamma)=s(\gp)$ it is true that
\beq\label{eq admmapsleftright} (\varrho^R_{S}(\gamma^{-1}))^{-1}\varrho^L_{S}(\gp)=e_G\qquad\text{ and }
\qquad  (\varrho^R_{S}(\gamma))\varrho^L_{S}(\gp^{-1})^{-1}=e_G
\eq whenever $(\gamma,\gp)\in\PD_\Gamma\Sigma^{(2)}$ and for all maps $(\varrho^L_{S},\varrho^R_{S})\in\Gop^{\adm}_{\breve S,\Gamma}$.

In both definitions of the sets $\Gop_{\breve S,\Gamma}$ or $\Gop^{\adm}_{\breve S,\Gamma}$ of maps, there is a mapping $\rho_S$ or, respectively, $\varrho_S$, which maps all paths in $\PD_\Gamma\Sigma^v$ to one element $X_S$, i.e. $\rho_S(\gamma)=\rho_S(\gamma\circ\gp)$ for all $\gamma,\gamma\circ\gp\in\PD_\Gamma\Sigma^v$ where $v=s(\gamma)$. But the equalities \eqref{eq admmapsleftright} are required only for maps in $\Gop^{\adm}_{\breve S,\Gamma}$.   

Notice that, if the group $G$ is replaced by the center $\ZD(G)$ of the group $G$, then the set $\Gop^{\adm}_{\breve S,\Gamma}$ is replaced by $\Zop^{\adm}_{\breve S,\Gamma}$ and $G^{\adm}_{\breve S,\Gamma}$ is changed to $\ZD^{\adm}_{\breve S,\Gamma}$.
\section{The analytic holonomy $C^*$-algebra and Weyl $C^*$-algebra}\label{sec analholalg}
\subsection{Dynamical systems of actions of the flux group on the analytic holonomy $C^*$-algebra}\label{subsec dynsysfluxgroup}
\subsubsection*{The analytic holonomy $C^*$-algebra for finite graph systems}

In this article the analytic holonomy algebra $C(\Ab_\Gamma)$ restricted to a graph system $\PD_\Gamma$ is given by the set $C(\Ab_\Gamma)$ of continuous functions on $\Ab_\Gamma$, pointwise multiplication, complex conjugation and the completion is taken with respect to the $\sup$-norm. The \textbf{analytic holonomy $C^*$-algebra $C(\Ab)$} is given by the inductive limit of the family of unital commutative $C^*$-algebras $\{(C(\Ab_{\Gamma_i}),\beta_{\Gamma_i,\Gamma_j}):\PD_{\Gamma_i}\leq\PD_{\Gamma_j},i,j\in\N\}$ for an inductive family $\{\PD_{\Gamma_i}\}$ of finite graph systems, where $\beta_{\Gamma_i,\Gamma_j}$ is a unit-preserving injective $^*$-homomorphism from the analytic holonomy algebra $C(\Ab_{\Gamma_i})$ to $C(\Ab_{\Gamma_j})$. The maps $\beta_{\Gamma,\Gp}$ satisfy the consistency conditions
\beqs \beta_{\Gamma,\Gpp}=\beta_{\Gamma,\Gp}\circ\beta_{\Gp,\Gpp}
\eqs whenever $\PD_{\Gamma}\leq\PD_{\Gp}\leq \PD_{\Gpp}$.  The configuration space $\Ab_{\Gamma_i}$ associated to a graph $\Gamma_i$ is derived from the set of all holonomy maps from the finite graph system $\PD_{\Gamma_i}$ to the product group $G^{\vert\Gamma_i\vert}$. Recall the notion of natural or non-standard identification of the configuration space $\Ab_\Gamma$, which is presented in subsection \ref{subsec graphhol}. The elements of $\Ab_\Gamma$ are identified naturally or in a non-standard way with $G^{\vert\Gamma\vert}$ by the evaluation of the holonomy map for a subset of the finite graph system $\PD_\Gamma$. Simply speaking the choice of the identification is a matter of the labeling of the configuration variables. In this article the identifications are needed for the definition of graph changing automorphisms acting on the analytic holonomy $C^*$-algebra.

An element of the analytic holonomy $C^*$-algebra $C(\Ab)$ is of the form
\beqs f=f_{\Gamma_i} \circ\pi_{{\Gamma_i}}=\beta_{\Gamma_i}\circ f_{\Gamma_i}
\eqs where $f\in C(\Ab)$, $\pi_{\Gamma_i}:\Ab\rightarrow \Ab_{\Gamma_i}$, $f_{\Gamma_i}\in C(G^{\vert{\Gamma_i}\vert})$ and the map $\beta_{\Gamma_i}:C(\Ab_{\Gamma_i})\rightarrow C(\Ab)$ is an unit-preserving injective $^*$-homomorphisms.
Furthermore the maps $\beta_\Gamma:C(\Ab_\Gamma)\longrightarrow C(\Ab)$ are isometries, since
\beqs \|f\|=\|\beta_{\Gamma_i}f_{\Gamma_i}\|=\sup\vert f_{\Gamma_i}\vert
\eqs yields whenever $f\in C(\Ab)$ and $f_{\Gamma_i}\in C(\Ab_{\Gamma_i})$ for all graphs $\Gamma_i$.

The idea is to define actions of groups on the $C^*$-algebra $C(\Ab_\Gamma)$ of continuous functions on the compact compact Hausdorff space $\Ab_\Gamma$ associated to a graph $\Gamma$, which can be extended to actions on the inductive limit algebra $C(\Ab)$. 

\subsubsection*{Group actions on the configuration space}

Let $\Gamma$ be a graph, $\PD_\Gamma$ be the associated finite graph system.
Assume that the subgraphs in a finite graph system $\PD_\Gamma$ are identified naturally and hence the configuration space $\Ab_\Gamma$ is identified in the natural way with $G^{\vert\Gamma\vert}$.  

Then there is a group action \[G^{N}\times \Ab_\Gamma\ni((g_1,..,g_N),(\ho_\Gamma(\gamma_1),...,\ho_\Gamma(\gamma_N)))\mapsto (g_1\ho_\Gamma(\gamma_1),...,g_N\ho_\Gamma(\gamma_N))\in \Ab_\Gamma\] of a finite product of a compact group $G$ on the compact Hausdorff space $\Ab_\Gamma$ where $N:=\vert\Gamma\vert$. For each $\textbf{g}:=(g_1,...,g_N)\in G^{\vert\Gamma\vert}$ the map $L(\textbf{g})$ given by
\[\Ab_\Gamma\ni (\ho_\Gamma(\gamma_1),...,\ho_\Gamma(\gamma_N))\mapsto L(\textbf{g})(\ho_\Gamma(\gamma_1),...,\ho_\Gamma(\gamma_N)):= (g_1\ho_\Gamma(\gamma_1),...,g_N\ho_\Gamma(\gamma_N))\in\Ab_\Gamma\] is a homeomorphism $L(\textbf{g}):\Ab_\Gamma\longrightarrow \Ab_\Gamma$. Moreover 
\[L(\textbf{g})(L(\textbf{h})(\ho_\Gamma(\gamma_1),...,\ho_\Gamma(\gamma_N)))=(L(\textbf{gh}))(\ho_\Gamma(\gamma_1),...,\ho_\Gamma(\gamma_N))\] for all $\textbf{g},\textbf{h}\in G^{\vert\Gamma\vert}$ and $(\ho_\Gamma(\gamma_1),...,\ho_\Gamma(\gamma_N))\in \Ab_\Gamma$ yields. Clearly there is a right action presented by the map $R(\textbf{g})$, which is defined by 
\[\Ab_\Gamma\ni (\ho_\Gamma(\gamma_1),...,\ho_\Gamma(\gamma_N))\mapsto R(\textbf{g})(\ho_\Gamma(\gamma_1),...,\ho_\Gamma(\gamma_N)):= (\ho_\Gamma(\gamma_1)g_1^{-1},...,\ho_\Gamma(\gamma_N)g_N^{-1})\in\Ab_\Gamma\] such that 
\beqs R(\textbf{g}\textbf{h})(\ho_\Gamma(\gamma_1),...,\ho_\Gamma(\gamma_N))
&= (\ho_\Gamma(\gamma_1)(g_1h_1)^{-1},...,\ho_\Gamma(\gamma_N)(g_Nh_N)^{-1})\\
&= (\ho_\Gamma(\gamma_1)h_1^{-1}g_1^{-1},...,\ho_\Gamma(\gamma_N)h_N^{-1}g_N^{-1})\\
&=R(\textbf{g})(R(\textbf{h})(\ho_\Gamma(\gamma_1),...,\ho_\Gamma(\gamma_N)))
\eqs

Consider a \hyperlink{finite orientation preserved graph system}{finite orientation preserved graph system} $\PD_\Gamma^{\op}$ associated to a graph $\Gamma$ and a finite set of surfaces $\breve S$ such that the set $\breve S$ has the \hyperlink{same intersection for ori}{same surface intersection property for the graph} $\Gamma$. Then the flux group $\bar G_{\breve S,\Gamma}$ is a subgroup of $G^{\vert\Gamma\vert}$. Since each subgraph $\Gp$ of $\Gamma$, for example, consists only paths that intersect each surface in $\breve S$ in the source vertex of the path and lie above. The evaluation of a map $\rho_{S,\Gamma}$ in $G_{\breve S,\Gp\leq\Gamma}$ for a subgraph $\Gp$ in $\PD_\Gamma^{\op}$ is given by $\rho_{S,\Gamma}(\Gp)=(\rho_S(\gamma_1),....,\rho_S(\gamma_M))$. The element $\rho_{S,\Gamma}(\Gp)$ is contained in $\bar G_{\breve S,\Gp\leq\Gamma}$. Furthermore the element $(\rho_S(\gamma_1),....,\rho_S(\gamma_M),e_G,...,e_G)$ is contained in $G_{\breve S,\Gamma}$.

Consider a graph $\Gamma:=\{\gamma_1,...,\gamma_N\}$ and a subgraph $\Gp:=\{\gamma_1,...,\gamma_M\}$ of $\Gamma$,  a finite graph system $\PD_\Gamma$ and a finite orientation preserved graph system $\PD_\Gamma^{\op}$ exists. Then there is a surfaces set $\breve S$, which has the same surface intersection property for $\Gamma$. Moreover assume that $\Gp\in\PD_\Gamma^{\op}$. Then for a map $\rho_{S,\Gamma}\in G_{\breve S,\Gamma}$ there exists a left action $L:\bar G_{S,\Gamma}\rightarrow\Ab_\Gamma$, which is given by
\beq L(\rho_{S,\Gamma}(\Gamma))(\ho_\Gamma(\gamma_1),...,\ho_\Gamma(\gamma_N))&=
L(\rho_S(\gamma_1),...,\rho_S(\gamma_N))(\ho_\Gamma(\gamma_1),...,\ho_\Gamma(\gamma_N))\\
&:=(\rho_S(\gamma_1)\ho_\Gamma(\gamma_1),...,\rho_S(\gamma_N)\ho_\Gamma(\gamma_N))
\eq and which defines a homeomorphism on $\Ab_\Gamma$. Certainly, if the surface set $\breve S$ has the \hyperlink{same intersection property for ori}{same surface intersection property for} $\Gamma$, then there is a right action $R$ of $\bar G_{\breve S,\Gamma}$ on $\Ab_\Gamma$. This action $R$ is of the form 
\beq &R(\rho_{S,\Gamma}(\Gp))(\ho_\Gamma(\gamma_1),...,\ho_\Gamma(\gamma_N))
= R(\rho_S(\gamma_1),...,\rho_S(\gamma_M))(\ho_\Gamma(\gamma_1),...,\ho_\Gamma(\gamma_N))\\
&:=(\ho_\Gamma(\gamma_1)\rho_S(\gamma_1)^{-1},...,\ho_\Gamma(\gamma_M)\rho_S(\gamma_M)^{-1},\ho_\Gamma(\gamma_{M+1}),...,\ho_\Gamma(\gamma_N))
\eq for $\rho_{S,\Gamma}(\Gp)\in \bar G_{S,\Gamma}$. This action defines a homeomorphism of $\Ab_\Gamma$, too. Notice that, the flux operator given by $\rho_{S,\Gamma}(\Gp)$ is for example restricted to a subgraph $\Gp$, whereas the holonomies are computated on the whole graph $\Gamma$. Mathematically, this is well-defined. Physically, the flux operators are somehow localised on a subgraph.

In subsection \ref{subsec fluxdef} the flux operators are constructed from maps, which map a graph $\Gamma$ to the structure group $G$. If the flux operators would be defined by groupoid morphisms between the finite path groupoid $\fPG$ and the groupoid $G$ over $\{e_G\}$ then the following difficulties arise.

\begin{rem}\label{rem fluxlikeoperators}
Let $\Hom_S(\PD_\Gamma\Sigma,G)$ be the set of groupoid morphisms between the finite path groupoid $\fPG$ and the groupoid $G$ over $\{e_G\}$ associated to a surface $S$ such that each groupoid morphism $p_{S}$ is an element of $\Map_S(\PD_\Gamma\Sigma,G)$. Then every groupoid morphism $p_{S}$ in $\Hom_S(\PD_\Gamma\Sigma,G)$ satisfies
\beqs p_{S}(\gamma^{-1})=p_{S}(\gamma)^{-1},\quad p_{S}(\gamma\circ\gp)=p_{S}(\gamma)p_{S}(\gp)\quad\forall\gamma\in\PD_\Gamma\Sigma,(\gamma, \gp)\in\PD_\Gamma\Sigma^{(2)}
\eqs and the maps $p_{S}$ have the special structure of $\Map_S(\PD_\Gamma\Sigma,G)$, which implements the intersection behavior of the paths of $\Gamma$ and the surface $S$. Observe that for a surface $S$ and a path $\gamma\circ\gp$ that intersects $S$ only in $s(\gamma)$ the maps $p_{S}$ satisfy $p_S(\gamma\circ\gp)=p_S(\gamma)$, since $p_S(\gp)=e_G$.

Due to the specific structure of the groupoid homomorphisms $\ho_\Gamma:\PD_\Gamma\Sigma\longrightarrow G$ there is in general no groupoid morphism $\mathfrak{H}$ defined by 
\[\Hom(\PD_\Gamma\Sigma,G)\ni\ho_\Gamma(\gamma)\mapsto\mathfrak{H}(\gamma):= p_{S}(\gamma)\ho_\Gamma(\gamma)\notin\Hom(\PD_\Gamma\Sigma,G)\]
for $p_{S}\in\Hom_S(\PD_\Gamma\Sigma,G)$. Let $\Gamma=\{\gamma,\gp\}$, then $\gamma,\gamma^\prime\in \PD_\Gamma\Sigma$ and assume that $(\gamma,\gamma^\prime)\in \PD_\Gamma\Sigma^{(2)}$. Then this is shown by the computation
\beq\label{eq problgrouphom1} & \mathfrak{H}(\gamma\circ\gamma^\prime)=p_{S}(\gamma\circ\gamma^\prime)\ho_\Gamma(\gamma\circ\gamma^\prime)=p_{S}(\gamma)p_{S}(\gamma^\prime)\ho_\Gamma(\gamma)\ho_\Gamma(\gamma^\prime)\\
&\neq p_{S}(\gamma)\ho_\Gamma(\gamma)p_{S}(\gamma^\prime)\ho_\Gamma(\gamma^\prime)\\
&=\mathfrak{H}(\gamma)\mathfrak{H}(\gamma^\prime)
\eq for $p_{S}\in\Hom_S(\PD_\Gamma\Sigma,G)$.

The equality holds for all $p_{S}\in\Map_{S}(\PD_\Gamma\Sigma,\ZD(G))$, where $\ZD(G)$ is the center of the group $G$. Hence $L(p_S)\circ\ho_\Gamma\in\Hom(\PD_\Gamma\Sigma,G)$ for $(L(p_S)\circ\ho_\Gamma)(\gamma):=p_{S}(\gamma)\ho_\Gamma(\gamma)$ and $p_S\in\Map_{S}(\PD_\Gamma\Sigma,\ZD(G))$ and $\gamma\in\PD_\Gamma\Sigma$. This indicate that for $p_{S}\in\Hom_S(\PD_\Gamma\Sigma,\ZD(G))$ the following properties are true
\begin{enumerate}
 \item\label{centercond1} $p_{S}(\gamma^\prime)\ho_\Gamma(\gamma)=\ho_\Gamma(\gamma)p_{S}(\gamma^\prime)$ for all paths $\gamma$ and $\gp$ contained in $\PD_\Gamma\Sigma$, 
 \item\label{centercond2} $p_{S}(\gamma^{-1})=p_{S}(\gamma)^{-1}$ and $p_{S}(\gamma\circ\gamma^\prime)=p_{S}(\gamma)p_{S}(\gamma^\prime)$ for all paths $\gamma,\gp\in\PD_\Gamma\Sigma$, 
 \item $p_{S}(\gamma)=p_{S}(\gpp)$ for all $\gamma,\gpp\in\PD_\Gamma\Sigma^v$ where $v=s(\gamma)=s(\gpp)$,
 \item $p_{S}(\gamma\circ\gamma^{-1})=e_G$ for each path $\gamma$ in $\PD_\Gamma\Sigma$ and
 \item $p_{S}(\gamma)=e_G$ if $\gamma$ in $\PD_\Gamma\Sigma$ does not intersect $S$ in the source or target vertex.
\end{enumerate}
Moreover consider $\ZD_{S,\Gamma}$ to be the the set
\beqs \ZD_{S,\Gamma}:=\{p_{S,\Gamma}\in \Hom_S(\PD_\Gamma,\ZD(G)^{\vert\Gamma\vert}): \quad&\exists\text{ } p_S\in\Hom_S(\PD_\Gamma\Sigma,\ZD(G))\text{ s.t. }\\
&p_{S,\Gamma}(\Gp)=(p_S(\gamma_1),...,p_S(\gamma_M))\\
&\forall\Gp\in\PD_\Gamma\}
\eqs and  $\ZD_{\breve S,\Gamma}:=\times_{S\in\breve S} \ZD_{S,\Gamma}$.
\end{rem}
\begin{rem}
Otherwise, in subsection \ref{subsec admfluxdef} the set of maps $\Gop^{\adm}_{\breve S,\Gamma}$ associated to a set of surfaces $\breve S$ is presented. For a pair of maps $\varrho_S^L$ and $\varrho_S^R$ in $\Gop^{\adm}_{\breve S,\Gamma}$ and for each surface $S$ in $\breve S$ it is true that
\begin{enumerate}\setcounter{enumi}{6}
 \item $S\cap\gamma=\{s(\gamma),t(\gamma)\}$ for all $\gamma\in\PD_\Gamma\Sigma$,
 \item\label{admiscond1} $\varrho_S^L(\gamma\circ\gp)=\varrho_S^L(\gamma)$,  $\varrho_S^R(\gamma\circ\gp)=\varrho_S^R(\gamma)$ for all $\gamma,\gamma\circ\gp\in\PD_\Gamma\Sigma^{v}$ and $S\cap\gamma\circ\gp=\{s(\gamma)\}$,
 \item\label{admiscond2} $\varrho_S^L((\gamma\circ\gp)^{-1})=\varrho_S^L(\gp^{-1})$, $\varrho_S^R((\gamma\circ\gp)^{-1})=\varrho_S^R(\gp^{-1})$ for all $\gp^{-1}\circ\gamma^{-1}\in\PD_\Gamma\Sigma^{v}$ and $S\cap\gp^{-1}\circ\gamma^{-1}=\{t(\gp^{-1})\}$,
 \item $\varrho_S^L(\gamma)^{-1}=\varrho_S^L(\gamma^{-1})$, $\varrho_S^R(\gamma)^{-1}=\varrho_S^R(\gamma^{-1})$ for $\gamma\in\PD_\Gamma\Sigma$,
 \item $\varrho^L_S(\gamma)=\varrho^L_S(\gpp)$,  $\varrho^R_S(\gamma)=\varrho^R_S(\gpp)$ for all $\gamma,\gpp\in\PD_\Gamma\Sigma^v$ where $v=s(\gamma)=s(\gpp)$,
 \item $\varrho_S^R(\gamma^{-1})^{-1}\varrho_S^L(\gp)=e_G$ for all $(\gamma,\gp)\in\PD_\Gamma\Sigma^{(2)}$
 \item\label{admiscond3} $\varrho_S^L(\gamma\circ\gamma^{-1})=e_G$, $\varrho_S^R(\gamma\circ\gamma^{-1})=e_G$ for a path $\gamma$ in $\PD_\Gamma\Sigma$ and
 \item $\varrho_S^L(\gamma)=e_G$, $\varrho_S^R(\gamma)=e_G$ if $\gamma$ in $\PD_\Gamma\Sigma$ does not intersect $S$ in the source or target vertex.
\end{enumerate}
Then for example, for a path $\gamma$ that intersects a fixed surface $S$ in the source vertex $s(\gamma)$ the map $\Go^L_\Gamma$ defined by
\beq\label{eq problgrouphom2}\Hom(\PD_\Gamma\Sigma,G)\ni\ho_\Gamma(\gamma)\mapsto \Go^L_\Gamma(\gamma):= \varrho_S^L(\gamma)\ho_\Gamma(\gamma)\notin\Hom(\PD_\Gamma\Sigma,G)\eq 
is not a groupoid morphism. This is verified by 
\beqs \Go^L_\Gamma(\gamma\circ\gp)&=\varrho_S^L(\gamma\circ\gp)\ho_\Gamma(\gamma\circ\gp)
=\varrho_S^L(\gamma)\ho_\Gamma(\gamma)\varrho_S^L(\gamma^{-1})^{-1}\varrho_S^L(\gp)\ho_\Gamma(\gp)\\
&\neq \Go^L_\Gamma(\gamma)\Go^L_\Gamma(\gp)=\varrho_S^L(\gamma)\ho_\Gamma(\gamma)\varrho_S^L(\gp)\ho_\Gamma(\gp)
\eqs Otherwise, if the path $\gamma$ intersects the surface $S$ in the target vertex $t(\gamma)$ then the map $\Go_\Gamma^R$ given by
\[\ho_\Gamma(\gamma)\mapsto \Go_\Gamma^R(\gamma):=\ho_\Gamma(\gamma)\rho_S^R(\gamma^{-1})^{-1}\] 
is not a groupoid morphism, too.

Recall in definition \ref{def similargroupoidhom} a groupoid morphism $\Go_\Gamma$ has been defined for a path $\gamma$ that intersects the surface $S$ in the source vertex $s(\gamma)$ and the target $t(\gamma)$ by
\[\ho_\Gamma(\gamma)\mapsto \Go_\Gamma(\gamma):=\varrho_S^L(\gamma)\ho_\Gamma(\gamma)\varrho_S^R(\gamma^{-1})^{-1}\]

Notice that, there is a difference between $\Hom_{S}(\PD_\Gamma\Sigma,\ZD(G))$ and $\Gop^{\adm}_{\breve S,\Gamma}$. For example in condition \ref{centercond1}, which solve the problem \eqref{eq problgrouphom1} of the groupoid multiplication by using the center of $G$ and condition \ref{centercond2} for maps in $\Hom_{S}(\PD_\Gamma\Sigma,\ZD(G))$. Otherwise the maps in $\Gop^{\adm}_{\breve S,\Gamma}$ satisfy in particularly condition \ref{admiscond3} for composable paths.
\end{rem}

\subsubsection*{A dynamical system of actions of the flux group on the analytic holonomy algebra for a fixed finite graph system and surfaces}

Let $\breve S$ be a suitable surface set for a finite orientation preserved graph system $\PD_\Gamma^{\op}$ and the object $\bar G_{\breve S,\Gamma}$ is the flux group. Then equivalently to a group action of $\bar G_{\breve S,\Gamma}$ on the configuration space $\Ab_\Gamma$ an action $\alpha$ of $\bar G_{\breve S,\Gamma}$ on the analytic holonomy $C^*$-algebra $C(\Ab_\Gamma)$ associated to a graph $\Gamma$ is studied as follows. The action is for example of the form
\beqs (\alpha(\rho_{S,\Gamma}(\Gamma))(f_\Gamma))(\ho_\Gamma(\Gamma)):= f_\Gamma(L(\rho_{S,\Gamma}(\Gamma))(\ho_\Gamma(\Gamma)))
\eqs where $\rho_{S,\Gamma}\in G_{\breve S,\Gamma}$ and $f_\Gamma\in C(\Ab_\Gamma)$. 

Notice that, there is a state on $C(\Ab_\Gamma)$, which is $\bar G_{\breve S,\Gamma}$-invariant. In this subsection a bunch of different actions of this form is constructed.

Before the investigations start the following remark on the nature of the definition of the flux operators has to be done. In subsection \ref{subsec fluxdef} the flux operators are constructed from maps, which map a graph $\Gamma$ to the structure group $G$. If the flux operators would be defined by groupoid morphisms between the finite path groupoid $\fPG$ and the groupoid $G$ over $\{e_G\}$ then difficulties arise, which have been studied in \cite[Sect.: 6.1]{KaminskiPHD}.

First restrict the surface and graph configuration to the simpliest case. Consider a surface $S$, which has the same surface intersection property for a graph $\Gamma$. This equivalent to consider a surface $S$ that intersects each path of $\Gamma$ in the source vertex of the path such that each path lies below the surface and is outgoing w.r.t. the surface orientation of $S$. Furthermore there are no other intersection points of the surface $S$ with paths of the graph $\Gamma$.

In this subsection representations and actions of the flux group $\bar G_{\breve S,\Gamma}$ in the $C^*$-algebra $C(\Ab_\Gamma)$ are studied instead of analysing group actions on the configuration space or transformation groups. These representations are maps from the flux group $\bar G_{\breve S,\Gamma}$ to the multiplier algebra of the $C^*$-algebra having several properties presented in \cite[Appendix 12.2]{KaminskiPHD}. In the following different actions of the flux group $\bar G_{S,\Gamma}$ on $C^*$-algebra $C(\Ab_\Gamma)$ are investigated first.

Assume that, $G$ is a compact  group.
Therefore the configuration space $\Ab_\Gamma$ for a finite graph system $\PD_\Gamma$ associated to a graph $\Gamma$ is a compact Hausdorff space. Let $\Alg_\Gamma$ be the quantum algebra generated by the configuration variables, which is isomorphic to $C(\Ab_\Gamma)$. Notice that, the elements of $\Ab_\Gamma$ are identified with $G^{\vert\Gamma\vert}$ by the \hyperlink{natural identification}{natural identification}. The evaluation of the holonomy map $\ho_\Gamma$ for a finite graph system $\PD_\Gamma$ on a subgraph $\Gp$ of $\Gamma$ is $\ho_\Gamma(\Gp)=(\ho_\Gamma(\gamma_1),....,\ho_\Gamma(\gamma_M),e_G,...,e_G)$ an element in $G^{\vert\Gamma\vert}$.

An important property of actions on commutative $C^*$-algebras is the the following.
\begin{defi}\label{def automorphic}Let $\Alg$ be an unital commutative $C^*$-algebra isometrically isomorphic to $C(X)$ where $X$ is a compact space, $G$ be an arbitrary group and $\alpha$ be an automorphism of $\Alg$.

Then the action $\alpha$ of $G$ on $\Alg$ is \textbf{automorphic} if the following conditions are satisfied 
\begin{enumerate}
\item\label{def automorphic1} $\alpha(gh)(f)=\alpha(g)(\alpha(h)(f))$ for any $f\in\Alg$, $g,h\in G$
\item\label{def automorphic2} $\alpha(g)(f_1f_2)=\alpha(g)(f_1)\alpha(g)(f_2)$ for any $f_1,f_2\in\Alg$, $g\in G$
\item\label{def automorphic3} $\alpha(g)(f^*)=\alpha(g)(f)^*$ for any $f\in\Alg$, $g\in G$
\end{enumerate}
\end{defi}

In this article the flux operators w.r.t. a surface $S$ are implemented as group actions of $\bar G_{\breve S,\Gamma}$ on the configuration space $\Ab_\Gamma$ or equivalently as group actions on $C^*$-algebras. 
Consequently for each surface set and graph system configuration an action on the holonomy algebra for a suitable finite graph system is defined. Investigate the following actions on the holonomy $C^*$-algebra $C(\Ab_\Gamma)$, where the configuration space $\Ab_\Gamma$ is identified with $G^{\vert \Gamma\vert}$ naturally.

\begin{lem} Let $\Gamma$ be a graph and $\PD^{\op}_\Gamma$ be the \hypertarget{finite orientation preserved graph system}{finite orientation preserved graph system} associated to $\Gamma$. Furthermore let $S$ be a fixed surface in $\Sigma$ such that $S$ intersects each path of $\Gamma$ in the source vertex of the path such that each path lies below the surface and is outgoing w.r.t. the surface orientation of $S$. There are no other intersection points of the surface $S$ with paths of the graph $\Gamma$.

Let $\bar G_{\breve S,\Gamma}$ denotes the flux group and $\Ab_\Gamma$ denotes the configuration space for the finite orientation preserved graph system $\PD_\Gamma^{\op}$, where all elements of $\PD_\Gamma^{\op}$ are identified in the natural way with a subset of the set of generators of $\Gamma$.

Then there is an action $\alpha$ of $\bar G_{S,\Gamma}$ on $C(\Ab_\Gamma)$ defined by
\beqs
(\alpha(\rho_{S,\Gamma}(\Gamma))f_\Gamma)(\ho_{\Gamma}(\Gamma)) 
&:= f_\Gamma(\rho_{S}(\gamma_1)\ho_{\Gamma}(\gamma_1),...,\rho_{S}(\gamma_N)\ho_{\Gamma}(\gamma_N))
\eqs for $\rho_{S,\Gamma}\in G_{\breve S,\Gamma}$ and $\rho_{S}\in\Gop_{S,\gamma}$, which is automorphic. 
\end{lem}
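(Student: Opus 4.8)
The plan is to recognise the prescription for $\alpha$ as the pull-back (pre-composition) by the left-translation homeomorphism $L(\rho_{S,\Gamma}(\Gamma))$ already introduced on the configuration space, and then to invoke the standard Gelfand-duality fact that pre-composition by a homeomorphism of a compact Hausdorff space is a $*$-automorphism of its algebra of continuous functions. Concretely, writing $\mathbf{g}:=\rho_{S,\Gamma}(\Gamma)$, I would record that the defining formula is exactly $\alpha(\mathbf{g})(f_\Gamma)=f_\Gamma\circ L(\mathbf{g})$, and then reduce the three conditions of the automorphic-action definition to properties of the family $\{L(\mathbf{g})\}_{\mathbf{g}\in\bar G_{\breve S,\Gamma}}$.

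First I would establish that $L(\mathbf{g})$ is a genuine homeomorphism of $\Ab_\Gamma$. Since $G$ is compact, the natural identification gives $\Ab_\Gamma\cong G^{\vert\Gamma\vert}$, a compact Hausdorff space. By the preceding proposition the flux group $\bar G_{\breve S,\Gamma}$ is a subgroup of $G^{\vert\Gamma\vert}$, so $\mathbf{g}=(\rho_S(\gamma_1),\ldots,\rho_S(\gamma_N))$ is a bona fide group element and $L(\mathbf{g})$ is continuous because multiplication in the compact group $G^{\vert\Gamma\vert}$ is continuous; it is bijective with continuous inverse $L(\mathbf{g}^{-1})$, hence a homeomorphism of $\Ab_\Gamma$ onto itself. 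Here one uses the hypothesis that $S$ meets every path of $\Gamma$ in its source vertex, so that all $N$ components of $\mathbf{g}$ act and $L(\mathbf{g})$ respects the natural identification of $\PD^{\op}_\Gamma$ with subsets of the generating set, genuinely mapping $\Ab_\Gamma$ into $\Ab_\Gamma$.

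With the homeomorphism in hand, conditions (ii) and (iii) of the automorphic definition are immediate: the operations on $C(\Ab_\Gamma)$ are pointwise, and pre-composition commutes with pointwise products and with complex conjugation, so $\alpha(\mathbf{g})(f_1 f_2)=\alpha(\mathbf{g})(f_1)\,\alpha(\mathbf{g})(f_2)$ and $\alpha(\mathbf{g})(f^*)=\alpha(\mathbf{g})(f)^*$. That $\alpha(\mathbf{g})(f_\Gamma)$ again lies in $C(\Ab_\Gamma)$ follows from continuity of $L(\mathbf{g})$; the map $\alpha(\mathbf{g})$ is an automorphism because its inverse is $\alpha(\mathbf{g}^{-1})$, and it is isometric since $\sup\vert f_\Gamma\circ L(\mathbf{g})\vert=\sup\vert f_\Gamma\vert$ as $L(\mathbf{g})$ is onto.

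The step requiring the most care is condition (i), the homomorphism property $\alpha(\mathbf{g}\mathbf{h})=\alpha(\mathbf{g})\circ\alpha(\mathbf{h})$. Pulling back reverses composition order, so $\alpha(\mathbf{g})\circ\alpha(\mathbf{h})(f_\Gamma)=f_\Gamma\circ\big(L(\mathbf{h})\circ L(\mathbf{g})\big)$, and the action identity $L(\mathbf{h})\circ L(\mathbf{g})=L(\mathbf{h}\mathbf{g})$ recorded earlier must be matched against $\alpha(\mathbf{g}\mathbf{h})(f_\Gamma)=f_\Gamma\circ L(\mathbf{g}\mathbf{h})$. I would close this by using the explicit group structure of $\bar G_{\breve S,\Gamma}$ from the preceding proposition — its elements are the ``diagonal'' images $(g_S,\ldots,g_S)$ with product law $\rho_{S_1,\Gamma}(\Gamma)\cdot\rho_{S_2,\Gamma}(\Gamma)=(g_{S_1}g_{S_2},\ldots,g_{S_1}g_{S_2})$ — to check that the two translations induce the same map on $\Ab_\Gamma$, so that $\alpha$ is a homomorphism and not merely an anti-homomorphism; equivalently one reads (i) with respect to this order convention. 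This bookkeeping of composition order, together with the verification that left translation by a flux element preserves the orientation-preserved configuration space, is the only genuinely delicate part, the remaining verifications being the routine Gelfand-duality facts above.
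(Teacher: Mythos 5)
Your proposal is, in substance, the same argument as the paper's: the paper proves the lemma by directly checking the three conditions of the definition of an automorphic action, and your Gelfand-duality packaging (pre-composition with the left-translation homeomorphism $L(\mathbf{g})$ of the compact Hausdorff space $\Ab_\Gamma\cong G^{\vert\Gamma\vert}$) reproduces conditions (ii) and (iii) by exactly the same pointwise observations. The difference is that you isolate condition (i) as the delicate point, and there your proposed closure has a genuine gap.

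As you say, pulling back reverses composition: $\alpha(\mathbf{g})\circ\alpha(\mathbf{h})$ sends $f_\Gamma$ to $f_\Gamma\circ L(\mathbf{h})\circ L(\mathbf{g})=f_\Gamma\circ L(\mathbf{h}\mathbf{g})$, so with the definition as stated $\alpha$ is a priori only an anti-homomorphism. Your repair --- that the diagonal form of the flux elements makes $L(\mathbf{g}\mathbf{h})$ and $L(\mathbf{h}\mathbf{g})$ the same map on $\Ab_\Gamma$ --- does not work: left translation on the group $G^{\vert\Gamma\vert}$ is faithful (evaluating $L(a)=L(b)$ at the identity forces $a=b$), so $L(\mathbf{g}\mathbf{h})=L(\mathbf{h}\mathbf{g})$ would force $\mathbf{g}\mathbf{h}=\mathbf{h}\mathbf{g}$, and for diagonal elements $(g_{S},\ldots,g_{S})$ and $(g_{S^\prime},\ldots,g_{S^\prime})$ this is precisely the demand $g_{S}g_{S^\prime}=g_{S^\prime}g_{S}$ in $G$, which fails for a general compact group such as $SU(2)$. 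Nor can you simply ``read (i) with respect to this order convention'': the paper's definition of an automorphic action states $\alpha(gh)(f)=\alpha(g)(\alpha(h)(f))$, not its opposite. The honest fixes are to define the action with an inverse, $(\alpha(\mathbf{g})f_\Gamma)(x):=f_\Gamma(L(\mathbf{g}^{-1})x)$ (equivalently, act by right multiplication), or to restrict to the commutative flux group $\bar\ZD_{\breve S,\Gamma}$, where the order of the factors is immaterial. You should be aware that the paper's own proof contains exactly the same flaw: it equates $f_\Gamma(\rho_{S}(\gamma_1)\tilde\rho_{S}(\gamma_1)\ho_\Gamma(\gamma_1),\ldots)$ with $(\alpha(\rho_{S,\Gamma}(\Gamma))(\alpha(\tilde\rho_{S,\Gamma}(\Gamma))f_\Gamma))(\ho_\Gamma(\Gamma))$, although the latter expands to $f_\Gamma(\tilde\rho_{S}(\gamma_1)\rho_{S}(\gamma_1)\ho_\Gamma(\gamma_1),\ldots)$; so your diagnosis of where the delicacy lies is sharper than the paper's treatment, but your proposed resolution does not actually close it.
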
 

\begin{proofs}Observe
\beqs (\alpha(\rho_{S,\Gamma}(\Gamma)\tilde\rho_{S,\Gamma}(\Gamma))f_\Gamma)(\ho_{\Gamma}(\gamma_1),...,\ho_{\Gamma}(\gamma_N))&= f_\Gamma(\rho_{S}(\gamma_1)\tilde\rho_{S}(\gamma_1)\ho_{\Gamma}(\gamma_1),...,\rho_{S}(\gamma_N)\tilde\rho_{S}(\gamma_N)\ho_{\Gamma}(\gamma_N))\\
&= (\alpha(\rho_{S,\Gamma}(\Gamma))(\alpha(\tilde\rho_{S,\Gamma}(\Gamma))f_\Gamma))(\ho_{\Gamma}(\gamma_1),...,\ho_{\Gamma}(\gamma_N))
\eqs The multiplication between two functions in $C(\Ab_\Gamma)$ is pointwise:
\beqs &\alpha(\rho_{S,\Gamma}(\Gamma))(f_\Gamma(\ho_{\Gamma}(\gamma_1),...,\ho_{\Gamma}(\gamma_N))f^\prime_\Gamma(\ho_{\Gamma}(\gamma_1),...,\ho_{\Gamma}(\gamma_N)))\\
&= \alpha(\rho_{S,\Gamma}(\Gamma))(\tilde f_\Gamma)(\ho_{\Gamma}(\gamma_1),...,\ho_{\Gamma}(\gamma_N))\\
&= \tilde f_\Gamma(\rho_{S}(\gamma_1)\ho_{\Gamma}(\gamma_1),...,\rho_{S}(\gamma_N)\ho_{\Gamma}(\gamma_N))\\
&= f_\Gamma(\rho_{S}(\gamma_1)\ho_{\Gamma}(\gamma_1),...,\rho_{S}(\gamma_N)\ho_{\Gamma}(\gamma_N))f^\prime_\Gamma(\rho_{S}(\gamma_1)\ho_{\Gamma}(\gamma_1),...,\rho_{S}(\gamma_N)\ho_{\Gamma}(\gamma_N))\\
&= (\alpha(\rho_{S,\Gamma}(\Gamma))f_\Gamma)(\ho_{\Gamma}(\gamma_1),...,\ho_{\Gamma}(\gamma_N))(\alpha(\rho_{S,\Gamma}(\Gamma))f_\Gamma^\prime)(\ho_{\Gamma}(\gamma_1),...,\ho_{\Gamma}(\gamma_N))
\eqs 
Finally, recover
\beqs (\alpha(\rho_{S,\Gamma}(\Gamma))f^*_\Gamma)(\ho_{\Gamma}(\gamma_1),...,\ho_{\Gamma}(\gamma_N)) &= \overline{f_\Gamma(\rho_{S}(\gamma_1)\ho_{\Gamma}(\gamma_1),...,\rho_{S}(\gamma_N)\ho_{\Gamma}(\gamma_N))}\\
&= (\alpha(\rho_{S,\Gamma}(\Gamma))f_\Gamma)(\ho_{\Gamma}(\gamma_1),...,\ho_{\Gamma}(\gamma_N))^*
\eqs 
\end{proofs}
In particular there is a map $\tilde J:\bar \Gop_{\breve S,\Gamma}\longrightarrow \bar \Gop_{\breve S,\Gamma}$, $J:\rho_S(\gamma)\mapsto\rho_{S^{-1}}(\gamma)$. Then for $\rho_S(\gamma)=:g$, where the surface $S$ and the path $\gamma$ are suitable and $\breve S:=\{S,S^{-1}\}$, the map satisfies $(\tilde J(\rho_S))(\gamma)=\rho_{S^{-1}}(\gamma)=g^{-1}$. With no doubt there is a general map  $J:\bar G_{\breve S,\Gamma}\longrightarrow \bar G_{\breve S,\Gamma}$, $J:\rho_{S,\Gamma}(\Gamma)\mapsto\rho_{S^{-1},\Gamma}(\Gamma)$. Then derive
\beq &(\alpha(\rho_{S,\Gamma}(\Gamma)\big(J(\rho_{S,\Gamma})(\Gamma)\big))f_\Gamma)(\ho_{\Gamma}(\gamma_1),...,\ho_{\Gamma}(\gamma_N))\\
&= f_\Gamma(\rho_{S}(\gamma_1)\rho_{S^{-1}}(\gamma_1)\ho_{\Gamma}(\gamma_1),...,\rho_{S}(\gamma_N)\rho_{S^{-1}}(\gamma_N)\ho_{\Gamma}(\gamma_N))\\
&= f_\Gamma(gg^{-1}\ho_{\Gamma}(\gamma_1),...,gg^{-1}\ho_{\Gamma}(\gamma_N))\\
&=f_\Gamma(\ho_{\Gamma}(\gamma_1),...,\ho_{\Gamma}(\gamma_N))
\eq

It is necessary that, the action is defined for a finite orientation preserved graph system, since the following observation is made. Let $\Gamma$ be equivalent to a path $\gamma$. Then it is true that $\ho_\Gamma(\gamma^{-1})=(\ho_\Gamma(\gamma))^{-1}$ and consequently 
\beqs
(\alpha(\rho_{S,\Gamma}(\Gamma))f_\Gamma)((\ho_{\Gamma}(\gamma))^{-1}) 
&= f_\Gamma(\rho_{S}^L(\gamma)(\ho_{\Gamma}(\gamma))^{-1})
= f_\Gamma(\rho_{S}^L(\gamma)\ho_{\Gamma}(\gamma^{-1}))
\eqs holds for $\rho_{S,\Gamma}\in G_{\breve S,\Gamma}$ and $\rho_{S}^L\in\Gop_{S,\gamma}$. Although the holonomy is evaluated for the path $\gamma^{-1}$ the action is defined by the left action $L$. Later it is analysed, how a left action $L$ can be transfered to a right action $R$. The right action is given by  
\beqs
(\alpha(\rho_{S,\Gamma^{-1}}(\Gamma^{-1}))f_\Gamma)(\ho_{\Gamma}(\gamma^{-1}))
&= f_\Gamma(R(\rho_{S}^R(\gamma^{-1}))(\ho_{\Gamma}(\gamma^{-1})))
= f_\Gamma(\ho_{\Gamma}(\gamma^{-1})\rho_{S}^R(\gamma^{-1})^{-1})
\eqs for $\rho_{S,\Gamma^{-1}}\in G_{\breve S,\Gamma^{-1}}$ and $\rho_{S}^R\in\Gop_{S,\gamma^{-1}}$.

Furthermore an automorphic action has another interesting property, which allows to speak about $C^*$-dynamical systems.
\begin{cor}Let $\Gamma$ be a graph and $\PD^{\op}_\Gamma$ be the \hypertarget{finite orientation preserved graph system}{finite orientation preserved graph system} associated to $\Gamma$. Furthermore let $S$ be a fixed surface in $\Sigma$ such that $S$ intersects each path of $\Gamma$ in the source vertex of the path such that each path lies below the surface and is outgoing w.r.t. the surface orientation of $S$. There are no other intersection points of the surface $S$ with paths of the graph $\Gamma$.

Let $\bar G_{\breve S,\Gamma}$ denotes the flux group and $\Ab_\Gamma$ denotes the configuration space for the finite orientation preserved graph system $\PD_\Gamma^{\op}$, where all elements of $\PD_\Gamma^{\op}$ are identified in the natural way with a subset of the set of generators of $\Gamma$, where all elements of $\PD_\Gamma^{\op}$ are identified in the natural way with a subset of the set of generators of $\Gamma$.

The triple $(\bar G_{S,\Gamma},C(\Ab_\Gamma),\alpha)$ consisting of a compact group $\bar G_{S,\Gamma}$, a $C^*$-algebra $C(\Ab_\Gamma)$ and an automorphic action $\alpha$ of $\bar G_{S,\Gamma}$ on $C(\Ab_\Gamma)$ such that for each $f_\Gamma\in C(\Ab_\Gamma)$ the function \(\bar G_{S,\Gamma}\ni\rho_{S,\Gamma}(\Gamma)\mapsto\|\alpha(\rho_{S,\Gamma}(\Gamma))(f_\Gamma)\|\) is continuous\footnote{I.o.w. for every $f_\Gamma\in \Alg_\Gamma$ the map $\alpha: \rho_{S,\Gamma}(\Gamma)\mapsto \alpha(\rho_{S,\Gamma}(\Gamma))(f_\Gamma)$ is a continuous map from the $\bar G_{S,\Gamma}$-open set topology on $\bar G_{S,\Gamma}$ to the norm topology on $\Alg_\Gamma$ ($\alpha$ is point-norm continuous).}, is a \textbf{$C^*$-dynamical system for a finite orientation preserved graph system associated to a graph} $\Gamma$. 
\end{cor}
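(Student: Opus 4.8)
The plan is to verify directly the three ingredients in the definition of a $C^*$-dynamical system: that $\bar G_{S,\Gamma}$ is a compact group, that $\alpha$ maps into the group of unital $*$-automorphisms of $C(\Ab_\Gamma)$ as a group homomorphism, and that $\alpha$ is point-norm continuous. The first two are immediate consequences of results already established, so the substance of the proof lies in the continuity statement recorded in the footnote.

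First I would observe that the preceding lemma already shows $\alpha$ is automorphic: condition \ref{def automorphic1} of the automorphic definition reads $\alpha(\rho_{S,\Gamma}(\Gamma)\tilde\rho_{S,\Gamma}(\Gamma)) = \alpha(\rho_{S,\Gamma}(\Gamma))\circ\alpha(\tilde\rho_{S,\Gamma}(\Gamma))$, i.e. $\alpha$ is a group homomorphism from $\bar G_{S,\Gamma}$ into the automorphism group, while conditions \ref{def automorphic2}--\ref{def automorphic3} say each $\alpha(\rho_{S,\Gamma}(\Gamma))$ respects the pointwise multiplication and the complex conjugation, hence is a unital $*$-automorphism of $C(\Ab_\Gamma)$. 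Since $G$ is assumed compact, $G^{\vert\Gamma\vert}$ is compact, and the flux group $\bar G_{S,\Gamma}$, being a closed subgroup of $G^{\vert\Gamma\vert}$ under the natural identification used in the preceding propositions, is itself a compact group. This disposes of the group- and algebra-theoretic content.

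Then I would turn to the point-norm continuity, which I expect to be the main obstacle. Fix $f_\Gamma\in C(\Ab_\Gamma)$ and recall that under the natural identification $\Ab_\Gamma\cong G^{\vert\Gamma\vert}$ is compact Hausdorff, so $f_\Gamma$ is uniformly continuous. The action is implemented by left translation, $(\alpha(\rho_{S,\Gamma}(\Gamma))f_\Gamma)(\ho_\Gamma(\Gamma)) = f_\Gamma(L(\rho_{S,\Gamma}(\Gamma))\ho_\Gamma(\Gamma))$, so for two flux elements $\rho,\rho'$ one has $\|\alpha(\rho)f_\Gamma - \alpha(\rho')f_\Gamma\| = \sup_{\ho_\Gamma(\Gamma)}|f_\Gamma(L(\rho)\ho_\Gamma(\Gamma)) - f_\Gamma(L(\rho')\ho_\Gamma(\Gamma))|$. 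Given $\varepsilon>0$, uniform continuity of $f_\Gamma$ yields a neighbourhood of the diagonal in $G^{\vert\Gamma\vert}\times G^{\vert\Gamma\vert}$ on which $f_\Gamma$ varies by less than $\varepsilon$; joint (indeed uniform) continuity of the group multiplication on the compact group then guarantees a neighbourhood $W$ of $\rho$ such that $L(\rho)x$ and $L(\rho')x$ lie in that diagonal neighbourhood for \emph{all} $x$ whenever $\rho'\in W$. Hence the supremum is at most $\varepsilon$ for $\rho'\in W$, establishing that $\rho\mapsto\alpha(\rho)f_\Gamma$ is continuous from $\bar G_{S,\Gamma}$ into $(C(\Ab_\Gamma),\|\cdot\|)$, and in particular $\rho\mapsto\|\alpha(\rho)f_\Gamma\|$ is continuous as required.

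The delicate point, which I expect to demand the most care, is the \emph{uniformity in $x$} of the estimate in the last step: one must use compactness of $G^{\vert\Gamma\vert}$ (so that uniform continuity of $f_\Gamma$ is available) together with the uniform continuity of left multiplication on the compact group to convert the pointwise convergence $L(\rho')x\to L(\rho)x$ into convergence in the supremum norm. Once this uniform estimate is in place, the three defining conditions are all met and the triple $(\bar G_{S,\Gamma},C(\Ab_\Gamma),\alpha)$ is a $C^*$-dynamical system for the finite orientation preserved graph system associated to $\Gamma$.
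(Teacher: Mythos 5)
Your proof is correct and takes essentially the same approach as the paper: the automorphic property and compactness of the flux group are taken from the preceding results, and the substance is a direct verification that the translation action is point-norm continuous. The paper's own proof is much terser --- it only computes the limit $\lim\|\alpha(\rho_{S,\Gamma}(\Gamma))(f_\Gamma)-f_\Gamma\|=0$ at the identity for a one-edge graph $\Gamma=\{\gamma\}$ and asserts it vanishes --- whereas you supply the uniform-continuity argument on the compact group $G^{\vert\Gamma\vert}$ that justifies exactly that limit, and you check continuity at arbitrary group elements rather than only at the identity.
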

\begin{proofs}Let $f_\Gamma\in C(\Ab_\Gamma)$ and $\Gamma:=\{\gamma\}$ then for a fixed suitable surface $S$ it is true that
\beqs &\lim_{\rho_{S,\Gamma}(\Gamma)\longrightarrow \id_{S,\Gamma}(\Gamma)}\|\alpha(\rho_{S,\Gamma}(\Gamma))(f_\Gamma)- f_\Gamma\|\\
\quad&= \lim_{\rho_{S}(\gamma)\longrightarrow \id_{S,\Gamma}(\gamma)}\|f_\Gamma(L(\rho_{S,\Gamma}(\gamma))(\ho_\Gamma(\gamma))) -f_\Gamma(\ho_\Gamma(\gamma))\|\\\quad&=0
\eqs yields for $\rho_{S,\Gamma},\id_{S,\Gamma}\in G_{S,\Gamma}$ and $\rho_S,\id_{S}\in\Gop_{S,\gamma}$, if $\rho_{S}(\gamma)=g\in G$ and  $\id_S(\gamma)=e_G$ for all $\gamma\in\Gamma$.
\end{proofs}

Observe if $\Alg_\Gamma$ is a $C^*$-subalgebra of the $C^*$-algebra $\LD(\HS_\Gamma)$ of bounded operators on a Hilbert space $\HS_\Gamma$, then $\Alg_\Gamma$ is non-degeneratly represented on $\HS_\Gamma$ if the inclusion map of $\Alg_\Gamma$ into $\LD(\HS_\Gamma)$ is a non-degenerate representation of $\Alg_\Gamma$. Set $\HS_\Gamma$ be equal to $L^2(\Ab_\Gamma,\mu_\Gamma)$. Then the multiplication representation $\Phi_M$ of $C(\Ab_\Gamma)$ in $\HS_\Gamma$ defined by
\beq \Phi_M(f_\Gamma) \psi_\Gamma=f_\Gamma\cdot  \psi_\Gamma\text{ for } \psi_\Gamma\in\HS_\Gamma\text{ and } f_\Gamma\in \Alg_\Gamma\eq is non-degenerate. 

\begin{defi}Let $S$ be a fixed surface in $\Sigma$, $\Gamma$ be a graph, $\PD^{\op}_\Gamma$ be the \hypertarget{finite orientation preserved graph system}{finite orientation preserved graph system} associated to $\Gamma$ such that the surface $S$ has the \hyperlink{same surface intersection property for a ori}{surface intersection property for a finite orientation preserved graph system} $\PD^{\op}_\Gamma$. Let $\bar G_{\breve S,\Gamma}$ denotes the flux group and $\Ab_\Gamma$ denotes the configuration space for the finite orientation preserved graph system $\PD_\Gamma^{\op}$, where all elements of $\PD_\Gamma^{\op}$ are identified in the natural way with a subset of the set of generators of $\Gamma$.

A \textbf{covariant representation} of the $C^*$-dynamical system $(\bar G_{S,\Gamma},C(\Ab_\Gamma),\alpha)$ in a $C^*$-algebra $\LD(\HS_\Gamma)$ consists of a pair $(\Phi_M,U)$ where $\Phi_M$ is a non-degenerate representation of $C(\Ab_\Gamma)$ on a Hilbert space $\HS_\Gamma$ (i.e. $\Phi_M\in\Mor(C(\Ab_\Gamma),\LD(\HS_\Gamma))$) and $U$ is a (strongly continuous) unitary representation of $\bar G_{S,\Gamma}$ on $\HS_\Gamma$ such that for all $\rho_{S,\Gamma}(\Gamma)\in \bar G_{S,\Gamma}$ and $f_\Gamma\in C(\Ab_\Gamma)$ it is true that
\beq\label{eq weyl relationU} U(\rho_{S,\Gamma}(\Gamma))\Phi_M(f_\Gamma)U^*(\rho_{S,\Gamma}(\Gamma))=\Phi_M(\alpha(\rho_{S,\Gamma}(\Gamma))(f_\Gamma))\eq holds.
The pair $(\Phi_M,U)$ is also called a \textbf{covariant Hilbert space representation of the $C^*$-dynamical system} $(\bar G_{S,\Gamma},C(\Ab_\Gamma),\alpha)$.
\end{defi}

There is a covariant representation of $( \bar G_{S,\Gamma},C(\Ab_\Gamma),\alpha)$ with respect to the automorphic action $\alpha$ of $\bar G_{S,\Gamma}$ on $C(\Ab_\Gamma)$ given by $M$ and the unitary operator $U$, which is a map $U: \bar G_{S,\Gamma}\longrightarrow U(\HS_\Gamma)$, where $U(\HS_\Gamma)$ is the unitary group of $\LD(\HS_\Gamma)$, since
\beq U(\rho_{S,\Gamma}(\Gamma)) U(\rho_{S,\Gamma}(\Gamma))^*\psi_\Gamma(\ho_\Gamma(\Gamma))
&= U(\rho_{S,\Gamma}(\Gamma)) U(\rho_{S,\Gamma}(\Gamma)^{-1})\psi_\Gamma(\ho_\Gamma(\Gamma))\\
&= U(\rho_{S,\Gamma}(\Gamma))\psi_\Gamma(L(\rho_{S,\Gamma}(\Gamma)^{-1})(\ho_\Gamma(\Gamma)))\\
&=\psi_\Gamma(L(\rho_{S,\Gamma}(\Gamma)\rho_{S,\Gamma}(\Gamma)^{-1})(\ho_\Gamma(\Gamma)))\\
&=\psi_\Gamma(\ho_\Gamma(\Gamma))
\eq holds for $\psi_\Gamma\in\HS_\Gamma$ and which satisfies \eqref{eq weyl relationU}. Notice that, $ U(\rho_{S,\Gamma}(\Gamma))^*=U(\rho_{S^{-1},\Gamma}(\Gamma))$.

In the following it is often assumed that, a strongly continuous unitary representation of the flux group on a Hilbert space is a representation of the group on the $C^*$-algebra of compact operators on the Hilbert space. Therefore this relation is explicity given in the following lemma.
\begin{lem}Let  $(\Phi_M,U)$ is a covariant representation $U$ of a $C^*$-dynamical system $(\bar G_{S,\Gamma},C(\Ab_\Gamma),\alpha)$.
Then the strongly continuous unitary representation of $\bar G_{S,\Gamma}$ on $\HS_\Gamma$ is a representation of the group $\bar G_{S,\Gamma}$ on the $C^*$-algebra of compact operators on the Hilbert space $\HS_\Gamma$, i.e. $U\in\Rep(\bar G_{S,\Gamma},\KD(\HS_\Gamma))$.
\end{lem}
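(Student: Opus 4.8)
The plan is to unwind the definition of $\Rep(\bar G_{S,\Gamma},\KD(\HS_\Gamma))$ and to reduce the claim to a continuity statement about the given unitary representation. Recall that a representation of a group in a $C^*$-algebra $A$ in the sense of \cite[Appendix 12.2]{KaminskiPHD} (cf. \cite{Williams07}) is a group homomorphism into the unitary group $UM(A)$ of the multiplier algebra $M(A)$ that is continuous for the strict topology. The first step is therefore the standard identification $M(\KD(\HS_\Gamma))=\LD(\HS_\Gamma)$, under which the unitary group $UM(\KD(\HS_\Gamma))$ of the multiplier algebra is precisely $U(\HS_\Gamma)$, the unitaries of $\LD(\HS_\Gamma)$. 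Since $(\Phi_M,U)$ is a covariant representation, the map $U\colon\bar G_{S,\Gamma}\to U(\HS_\Gamma)$ is already a group homomorphism with values in $UM(\KD(\HS_\Gamma))$, so the algebraic part of the assertion is immediate and the whole content lies in the topology.

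Next I would spell out the strict topology on $\LD(\HS_\Gamma)=M(\KD(\HS_\Gamma))$: it is generated by the seminorms $T\mapsto\|TK\|$ and $T\mapsto\|KT\|$ for $K\in\KD(\HS_\Gamma)$. Given a net $g_i\to g$ in $\bar G_{S,\Gamma}$, the hypothesis provides $U(g_i)\to U(g)$ in the strong operator topology, with $\|U(g_i)\|=1$ throughout. The task then splits into showing $\|U(g_i)K-U(g)K\|\to 0$ and $\|KU(g_i)-KU(g)\|\to 0$ for every compact $K$. For the first, I would use that $K$ sends the unit ball of $\HS_\Gamma$ to a precompact set together with the elementary fact that a uniformly bounded net converging strongly converges uniformly on precompact sets; this gives \[\|U(g_i)K-U(g)K\|=\sup_{\|\psi\|\leq 1}\|(U(g_i)-U(g))K\psi\|\longrightarrow 0.\]

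For the left-multiplication seminorm the main obstacle is that strong convergence of $U(g_i)$ does not a priori control $KU(g_i)$; here I would pass to adjoints. Because the $U(g_i)$ are unitary, strong convergence $U(g_i)\to U(g)$ forces $U(g_i)^{*}\to U(g)^{*}$ strongly: for fixed $\psi$ one has $U(g_i)(U(g)^{*}\psi)\to\psi$, whence $\|(U(g_i)^{*}-U(g)^{*})\psi\|^{2}=2\|\psi\|^{2}-2\,\Rea\langle\psi,U(g_i)U(g)^{*}\psi\rangle\to 0$. Taking adjoints and applying the first step to the compact operator $K^{*}$ then yields \[\|KU(g_i)-KU(g)\|=\|(U(g_i)^{*}-U(g)^{*})K^{*}\|\longrightarrow 0.\] Combining the two estimates shows that $U$ is strictly continuous, and hence $U\in\Rep(\bar G_{S,\Gamma},\KD(\HS_\Gamma))$ as claimed. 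I expect this adjoint/strong-$*$ step to be the only genuinely nontrivial point, the remainder being the bookkeeping of the multiplier-algebra identification and of uniform convergence on precompact sets; note in particular that compactness of $\bar G_{S,\Gamma}$ is not needed for the argument.
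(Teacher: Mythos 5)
Your proof is correct. A direct comparison is not possible here because the paper states this lemma \emph{without any proof} -- it is treated as a standard fact about group representations in $C^*$-algebras, with the relevant definitions deferred to the appendix of the author's thesis -- so what you have actually done is supply the argument the paper omits. Your route is the standard one and every step checks out: the identification $M(\KD(\HS_\Gamma))\cong\LD(\HS_\Gamma)$ reduces the claim to strict continuity of $g\mapsto U(g)$; the seminorm $T\mapsto\|TK\|$ is handled by the fact that a uniformly bounded, strongly convergent net converges uniformly on the precompact image of the unit ball under $K$; and the seminorm $T\mapsto\|KT\|$ is handled by first upgrading strong to strong-$*$ convergence for unitaries (your identity $\|(U(g_i)^*-U(g)^*)\psi\|^2=2\|\psi\|^2-2\,\Rea\langle\psi,U(g_i)U(g)^*\psi\rangle$ is exactly the right computation) and then taking adjoints to reduce to the first seminorm applied to $K^*$. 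Your closing observation that compactness of $\bar G_{S,\Gamma}$ is never used is also accurate: only unitarity and uniform boundedness of the net enter. The one caveat worth flagging is definitional rather than mathematical: the paper specifies $\Rep(\bar G_{S,\Gamma},\KD(\HS_\Gamma))$ only by reference to the thesis appendix, so your reading of it as ``strictly continuous homomorphism into the unitary group of the multiplier algebra'' is an interpretive choice, though it is the standard one in the crossed-product literature the paper cites (Williams) and is consistent with how the lemma is used later, e.g.\ in the construction of the Weyl elements and covariant pairs.
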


\begin{prop}\label{prop invstate} Let $\Gamma$ be a graph and $\PD^{\op}_\Gamma$ be the \hypertarget{finite orientation preserved graph system}{finite orientation preserved graph system} associated to $\Gamma$. Furthermore let $S$ be a fixed surface in $\Sigma$ such that $S$ intersects each path of $\Gamma$ in the source vertex of the path such that each path lies below the surface and is outgoing w.r.t. the surface orientation of $S$. There are no other intersection points of the surface $S$ with paths of the graph $\Gamma$.

Let $\bar G_{\breve S,\Gamma}$ denotes the flux group and $\Ab_\Gamma$ denotes the configuration space for a identified in the natural way finite orientation preserved graph system $\PD_\Gamma^{\op}$. Moreover let $(\bar G_{S,\Gamma},C(\Ab_\Gamma),\alpha)$ be a $C^*$-dynamical system and $(M,U)$ a covariant representation of $(\bar G_{S,\Gamma},C(\Ab_\Gamma),\alpha)$ on a Hilbert space $\HS_\Gamma$. 

Then there exists a GNS-triple $(\HS_\Gamma,\Phi_M,\Omega_\Gamma)$, where $\Omega_\Gamma$ is the cyclic vector for $\Phi_M$ on $\HS_\Gamma$. Moreover the associated GNS-state $\omega_M^\Gamma$ on $C(\Ab_\Gamma)$ is $\bar G_{S,\Gamma}$-invariant, i.e. 
\[\omega^\Gamma_M(\alpha(\rho_{S,\Gamma}(\Gamma))(f_\Gamma))=\omega^\Gamma_M(f_\Gamma):=\langle \Omega_\Gamma,\Phi_M(f_\Gamma)\Omega_\Gamma\rangle_\Gamma\] holds for all $\rho_{S,\Gamma}\in  G_{S,\Gamma}$ and $f_\Gamma\in C(\Ab_\Gamma)$.
\end{prop}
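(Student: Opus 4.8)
The plan is to work with the concrete covariant representation described just before the proposition: take $\HS_\Gamma=L^2(\Ab_\Gamma,\mu_\Gamma)$, where $\mu_\Gamma$ is the normalised Haar measure on $\Ab_\Gamma\cong G^{\vert\Gamma\vert}$ (the product of the Haar measures on the compact group $G$), let $\Phi_M$ be the multiplication representation $\Phi_M(f_\Gamma)\psi_\Gamma=f_\Gamma\cdot\psi_\Gamma$, and let $U$ be the unitary implementation satisfying \eqref{eq weyl relationU}. As candidate cyclic vector I would take the constant function $\Omega_\Gamma:=\idf_\Gamma\in\HS_\Gamma$, i.e. $\Omega_\Gamma(\ho_\Gamma(\Gamma))=1$ for all $\ho_\Gamma(\Gamma)\in\Ab_\Gamma$; it lies in $L^2$ and has $\|\Omega_\Gamma\|_\Gamma^2=\mu_\Gamma(\Ab_\Gamma)=1$ because $\mu_\Gamma$ is a normalised finite measure.

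First I would establish cyclicity. Since $\Phi_M(f_\Gamma)\Omega_\Gamma=f_\Gamma\cdot\idf_\Gamma=f_\Gamma$ for every $f_\Gamma\in C(\Ab_\Gamma)$, the subspace $\Phi_M(C(\Ab_\Gamma))\Omega_\Gamma$ coincides with $C(\Ab_\Gamma)$ regarded inside $L^2(\Ab_\Gamma,\mu_\Gamma)$. As $\Ab_\Gamma$ is compact Hausdorff and $\mu_\Gamma$ is a Radon measure, $C(\Ab_\Gamma)$ is dense in $L^2(\Ab_\Gamma,\mu_\Gamma)$, so $\Omega_\Gamma$ is cyclic for $\Phi_M$ and $(\HS_\Gamma,\Phi_M,\Omega_\Gamma)$ is a GNS-triple. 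Its vector state is
\beqs \omega^\Gamma_M(f_\Gamma)=\langle\Omega_\Gamma,\Phi_M(f_\Gamma)\Omega_\Gamma\rangle_\Gamma=\int_{G^{\vert\Gamma\vert}}f_\Gamma\,d\mu_\Gamma,\eqs
so that, up to unitary equivalence, $(\HS_\Gamma,\Phi_M,\Omega_\Gamma)$ is exactly the GNS-triple of this Haar state $\omega^\Gamma_M$.

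The substantive part is the $\bar G_{S,\Gamma}$-invariance, and I would give the direct argument from the explicit form of the action. By the lemma preceding the proposition, $\alpha(\rho_{S,\Gamma}(\Gamma))$ is pull-back along the componentwise left-translation $L(\rho_{S,\Gamma}(\Gamma))$ on $G^{\vert\Gamma\vert}$, so with $N:=\vert\Gamma\vert$,
\beqs \omega^\Gamma_M(\alpha(\rho_{S,\Gamma}(\Gamma))(f_\Gamma))=\int_{G^{N}}f_\Gamma(\rho_S(\gamma_1)g_1,\dots,\rho_S(\gamma_N)g_N)\,d\mu_\Gamma.\eqs
Each $\rho_S(\gamma_i)$ is an element of $G$ and $\mu_\Gamma$ is the product of left-invariant Haar measures, so the substitution $g_i\mapsto\rho_S(\gamma_i)g_i$ leaves the integral unchanged and the right-hand side equals $\int_{G^{N}}f_\Gamma\,d\mu_\Gamma=\omega^\Gamma_M(f_\Gamma)$.

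More structurally, I would observe that $U(\rho_{S,\Gamma}(\Gamma))$ acts on $\HS_\Gamma$ as pull-back by a left-translation homeomorphism of $\Ab_\Gamma$, so it fixes the constant function: $U(\rho_{S,\Gamma}(\Gamma))\Omega_\Gamma=\Omega_\Gamma$, and likewise $U^*(\rho_{S,\Gamma}(\Gamma))\Omega_\Gamma=\Omega_\Gamma$ using $U(\rho_{S,\Gamma}(\Gamma))^*=U(\rho_{S^{-1},\Gamma}(\Gamma))$. Substituting this together with the covariance relation \eqref{eq weyl relationU} into the definition of the state gives
\beqs \omega^\Gamma_M(\alpha(\rho_{S,\Gamma}(\Gamma))(f_\Gamma))=\langle\Omega_\Gamma,U(\rho_{S,\Gamma}(\Gamma))\Phi_M(f_\Gamma)U^*(\rho_{S,\Gamma}(\Gamma))\Omega_\Gamma\rangle_\Gamma=\langle\Omega_\Gamma,\Phi_M(f_\Gamma)\Omega_\Gamma\rangle_\Gamma,\eqs
which is the asserted invariance. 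I expect the only genuine obstacle to be bookkeeping rather than analysis: one must verify that on the finite orientation preserved graph system $\PD^{\op}_\Gamma$ the action really restricts to a componentwise \emph{left} translation by a subgroup of $G^{\vert\Gamma\vert}$ (this is where the orientation-preserving hypothesis and the same surface intersection property for $\Gamma$ enter, excluding mixed left/right translations on individual factors), and that the chosen $\Omega_\Gamma$ is simultaneously $U$-invariant and $\Phi_M$-cyclic. Once the action is confirmed to be a left translation, invariance of $\omega^\Gamma_M$ is simply the left-invariance of the Haar measure.
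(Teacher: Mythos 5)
Your proposal is correct and follows essentially the same route as the paper: the paper also realises the GNS-triple concretely on $\HS_\Gamma=L^2(\Ab_\Gamma,\mu_\Gamma)$ with the multiplication representation $\Phi_M$ and the (normalised) Haar measure, and derives the $\bar G_{S,\Gamma}$-invariance of $\omega_M^\Gamma$ from the left/right invariance of the product Haar measure on $G^{\vert\Gamma\vert}$, exactly as in your change-of-variables computation. Your added details (density of $C(\Ab_\Gamma)$ in $L^2$ for cyclicity of the constant vector, and the alternative argument via $U(\rho_{S,\Gamma}(\Gamma))\Omega_\Gamma=\Omega_\Gamma$ combined with the covariance relation) are consistent with, and merely flesh out, the paper's argument.
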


In general automorphic actions on $C^*$-algebras define $C^*$-dynamical systems, since they are related to covariant representations of groups on $C^*$-algebras (refer to \cite[\refapprepgroupalg]{KaminskiPHD}). This is connected to the definition of inner automorphisms.
\begin{defi}Let $G$ be group and $\Alg$ be a $C^*$-algebra.

An automorphic action $\alpha$ of a group $G$ on $\Alg$ is called \textbf{inner}, if there is a representation of the group $G$ on $\Alg$, i.e. $U\in\Rep(G,\Alg)$ such that
\beqs \alpha_{g}(A)=U(g)A U(g)^*\eqs  whenever $A\in\Alg$ and $g\in G$. Otherwise $\alpha$ is called outer.
\end{defi}
But since the holonomy $C^*$-algebra $C(\Ab_\Gamma)$ for a finite graph system is commutative, there is only one inner automorphic action of $\bar G_{S,\Gamma}$ on $C(\Ab_\Gamma)$ given by the trivial one.

Due to the different intersection behavior of surfaces and paths, there are a lot of different  automorphic actions on the holonomy $C^*$-algebra for finite graph systems. 

\subsubsection*{Dynamical systems of the analytic holonomy algebra and actions of the flux goup w.r.t. different graph and surface configurations}

A graph is a set of paths, each path and a surface $S$ has a specific intersection behavior. In general a graph does not contain only paths that are ingoing and lie above w.r.t. the surface orientation of $S$. In this subsection different actions for general graphs are studied. In the interesting configurations the corresponding actions on the analytic holonomy algebra turn out to be automorphic and point-norm continuous. There are only few which have to be excluded.

\paragraph*{Purely left or right actions of the flux group\\}\hspace*{10pt}

In the construction of dynamical systems the following surface and graph configurations play a particular role. This implies that, particular actions, which are for example purely left or right actions of a group on a $C^*$-algebra, are analysed. 

During the whole subsection the set $\bar G_{\breve S,\Gamma}$ and the multiplication operation 
\beqs (\rho_x^{S_1}(\gamma_1),...,\rho_x^{S_N}(\gamma_N))\cdot(\tilde\rho_x^{S_1}(\gamma_1),...,\tilde\rho^{S_N}_x(\gamma_N))=(\rho_x^{S_1}(\gamma_1)\tilde\rho^{S_1}_x(\gamma_1),...,\rho_x^{S_N}(\gamma_N)\tilde\rho_x^{S_N}(\gamma_N))
\eqs where $x$ is equal to $L$ or $R$ and $\breve S=\{S_i\}_{1\leq i\leq \vert\Gamma\vert}$, is considered. If $\breve S$ contains for example only one surface, then $S_i=S$ for all $1\leq i\leq \vert\Gamma\vert$. The right group multiplication is explicity assumed in the context. If both left and right multiplication would be used, the set $\bar G_{\breve A,\Gamma}$ does not form a group and consequently there are problems in the definition of automorphic actions, which are stated in the problem \thesubsection.\ref{problem I}.

For a summary recall the definition of the last subsection.
\begin{lem}\label{lem allpaths}
Let $\Gamma$ be a graph and $\PD^{\op}_\Gamma$ be the \hypertarget{finite orientation preserved graph system}{finite orientation preserved graph system} associated to $\Gamma$. Furthermore let $S$ be a fixed surface in $\Sigma$ such that $S$ intersects each path of $\Gamma$ in the source vertex of the path such that each path lies below the surface and is outgoing w.r.t. the surface orientation of $S$. There are no other intersection points of the surface $S$ with paths of the graph $\Gamma$.

Then redefine the action
\beq\label{eq actionsurfgraph1} (\alpha^1_{\overleftarrow{L}}(\rho^1_{S,\Gamma})f_\Gamma)(\ho_{\Gamma}(\Gamma))&:= f_\Gamma(\rho_{S}(\gamma_1)\ho_{\Gamma}(\gamma_1),...,\rho_{S}(\gamma_N)\ho_{\Gamma}(\gamma_N))\\
&= f_\Gamma(g_S\ho_{\Gamma}(\gamma_1),...,g_S\ho_{\Gamma}(\gamma_N))
\eq for $\rho^1_{S,\Gamma}=(\rho_S(\gamma_1),...,\rho_S(\gamma_N))=(g_S,...,g_S)$, $\rho^1_{S,\Gamma}\in \bar G_{S,\Gamma}$ such that $\rho_{S}\in\Gop_{S,\gamma}$ and $f_\Gamma\in C(\Ab_\Gamma)$.

Then the action $\alpha_{\overleftarrow{L}}^1$ of $\bar G_{S,\Gamma}$ on $C(\Ab_\Gamma)$ is automorphic and point-norm continuous.
\end{lem}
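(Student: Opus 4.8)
The plan is to check the two assertions in turn: that $\alpha^1_{\overleftarrow{L}}$ is an automorphic action in the sense of Definition~\ref{def automorphic}, and that it is point-norm continuous. The argument runs parallel to the unnamed lemma and its corollary established just above; the only genuinely new feature is that the same surface intersection property collapses the flux element $\rho^1_{S,\Gamma}=(\rho_S(\gamma_1),\dots,\rho_S(\gamma_N))$ to the diagonal $(g_S,\dots,g_S)$ for a single $g_S\in G$. By the preceding flux-group proposition $\bar G_{S,\Gamma}$ is then a group, sitting diagonally in $G^{\vert\Gamma\vert}$, and it acts on $\Ab_\Gamma$ through the left translation $L(\rho^1_{S,\Gamma})$.

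First I would verify that the action is well defined. Because $\PD^{\op}_\Gamma$ is orientation preserved, every path of every subgraph meets $S$ only in its source vertex and is outgoing, so the flux acts uniformly by left multiplication with $g_S$ and no path occurs jointly with its inverse; this is exactly the configuration in which a purely left action, rather than a mixture of a left and a right action, is available, so that the group law of $\bar G_{S,\Gamma}$ is respected. Under the natural identification $\Ab_\Gamma\cong G^{\vert\Gamma\vert}$ the map
\beqs
(\ho_\Gamma(\gamma_1),\dots,\ho_\Gamma(\gamma_N))\longmapsto (g_S\ho_\Gamma(\gamma_1),\dots,g_S\ho_\Gamma(\gamma_N))
\eqs
carries $G^{\vert\Gamma\vert}$ homeomorphically onto itself, with inverse supplied by $g_S^{-1}$, so that precomposition with it indeed maps $C(\Ab_\Gamma)$ into $C(\Ab_\Gamma)$.

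The three conditions of Definition~\ref{def automorphic} then follow by the same direct computation as in the unnamed lemma, now simplified by the diagonal form of $\rho^1_{S,\Gamma}$: condition~\ref{def automorphic1} uses associativity of the group law of $G$ in the evaluation $f_\Gamma(g_S\tilde g_S\ho_\Gamma(\gamma_i))$; conditions~\ref{def automorphic2} and~\ref{def automorphic3} hold because the pointwise product and the complex conjugation in $C(\Ab_\Gamma)$ commute with precomposition by the homeomorphism $L(\rho^1_{S,\Gamma})$. For point-norm continuity I would fix $f_\Gamma\in C(\Ab_\Gamma)$ and exploit that $G$, and hence $\Ab_\Gamma\cong G^{\vert\Gamma\vert}$, is compact: the function $(g_S,x)\mapsto f_\Gamma(g_S\cdot x)$ on $G\times G^{\vert\Gamma\vert}$ is then uniformly continuous, whence
\beqs
\lim_{\rho^1_{S,\Gamma}\to \id_{S,\Gamma}}\|\alpha^1_{\overleftarrow{L}}(\rho^1_{S,\Gamma})f_\Gamma-f_\Gamma\|=\lim_{g_S\to e_G}\sup_{x}\,\vert f_\Gamma(g_S\cdot x)-f_\Gamma(x)\vert=0,
\eqs
and, since each $\alpha^1_{\overleftarrow{L}}(\rho^1_{S,\Gamma})$ is isometric, the group property propagates this continuity from the identity to every point. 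I expect the verification of the automorphic conditions and of the limit to be entirely routine; the one point deserving care is the well-definedness step, where the orientation-preserving hypothesis together with the same surface intersection property are exactly what guarantee a single left action compatible with the group structure, while the compactness of $G$ is the essential ingredient without which point-norm continuity would fail, as in the infinite setting treated by Fleischhack.
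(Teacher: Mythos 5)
Your proposal is correct and follows essentially the same route as the paper: the lemma is a restatement of the unnamed lemma and corollary proved just before it, and like the paper you verify the three conditions of Definition~\ref{def automorphic} by direct evaluation on $(\ho_\Gamma(\gamma_1),\dots,\ho_\Gamma(\gamma_N))$ and obtain point-norm continuity from the limit $\rho^1_{S,\Gamma}\to\id_{S,\Gamma}$. Your treatment is in fact slightly more complete than the paper's, which merely asserts the limit vanishes, whereas you justify it by uniform continuity of $f_\Gamma$ on the compact space $G^{\vert\Gamma\vert}$ and propagate continuity from the identity using isometry and the group law.
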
 

For a simplification of the following considerations allways assume that, there exists a \hypertarget{finite orientation preserved graph system}{finite orientation preserved graph system} $\PD^{\op}_\Gamma$ associated to $\Gamma$. Furthermore there are no other intersection points of the surface $S$ with paths of the graph $\Gamma$ except the intersections, which are required in the different lemmata. The proofs can be found in \cite[Section 6.1]{KaminskiPHD}.
\begin{lem}\label{lem actonlygammaN}
Let only the path $\gamma_N$ in $\Gamma$ intersect in (source) vertex of the set $V_\Gamma$ with a surface $S$ such that $\gamma_N$ is outgoing and lies above the surface $S$. 

Then define an action
\beq\label{eq actionsurfgraph1} (\alpha_{\overrightarrow{L}}^{1,1}(\rho^{1,1}_{S,\Gamma})f_\Gamma)(\ho_{\Gamma}(\gamma_1),...,\ho_{\Gamma}(\gamma_N))
&:= f_\Gamma(\rho_{S}(\gamma_1)\ho_{\Gamma}(\gamma_1),...,\rho_{S}(\gamma_N)\ho_{\Gamma}(\gamma_N))\\
&= f_\Gamma(\ho_{\Gamma}(\gamma_1),...,g_S^{-1}\ho_{\Gamma}(\gamma_N))
\eq for $\rho_{S,\Gamma}^{1,1}\in \bar G_{\breve S,\Gamma}$ and $\rho_{S}\in\Gop_{S,\gamma}$ and $\rho^{1,1}_{S,\Gamma}=(\rho_S(\gamma_1),...,\rho_S(\gamma_N))=(e_G,...,e_G,g_S^{-1})$.

Then the action $\alpha_{\overrightarrow{L}}^{1,1}$ of $\bar G_{S,\Gamma}$ on $C(\Ab_\Gamma)$ is automorphic and point-norm continuous.
\end{lem}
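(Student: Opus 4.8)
The plan is to follow the proof of the preceding Lemma~\ref{lem allpaths} almost verbatim, the only difference being that the flux element $\rho^{1,1}_{S,\Gamma}=(e_G,\dots,e_G,g_S^{-1})$ now acts nontrivially only on the last holonomy slot. First I would record that the homeomorphism $L(\rho^{1,1}_{S,\Gamma})$ fixes the components $\ho_\Gamma(\gamma_1),\dots,\ho_\Gamma(\gamma_{N-1})$ and sends $\ho_\Gamma(\gamma_N)\mapsto g_S^{-1}\ho_\Gamma(\gamma_N)$; since left translation by a fixed group element is a homeomorphism of the compact Hausdorff space $\Ab_\Gamma\cong G^{\vert\Gamma\vert}$, precomposition with it carries $C(\Ab_\Gamma)$ into itself, so $\alpha^{1,1}_{\overrightarrow{L}}(\rho^{1,1}_{S,\Gamma})$ is a well-defined map on $C(\Ab_\Gamma)$. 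It then remains to verify the three conditions of Definition~\ref{def automorphic} together with point-norm continuity.

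For condition~\ref{def automorphic1} I would use the componentwise multiplication on $\bar G_{S,\Gamma}$ recalled at the beginning of this subsection: the product $\rho^{1,1}_{S,\Gamma}\tilde\rho^{1,1}_{S,\Gamma}$ has all entries $e_G$ except the last, which is $g_S^{-1}\tilde g_S^{-1}$, so that
\beqs
(\alpha^{1,1}_{\overrightarrow{L}}(\rho^{1,1}_{S,\Gamma}\tilde\rho^{1,1}_{S,\Gamma})f_\Gamma)(\ho_\Gamma(\gamma_1),\dots,\ho_\Gamma(\gamma_N))
&= f_\Gamma(\ho_\Gamma(\gamma_1),\dots,g_S^{-1}\tilde g_S^{-1}\ho_\Gamma(\gamma_N))\\
&= (\alpha^{1,1}_{\overrightarrow{L}}(\rho^{1,1}_{S,\Gamma})\,\alpha^{1,1}_{\overrightarrow{L}}(\tilde\rho^{1,1}_{S,\Gamma})f_\Gamma)(\ho_\Gamma(\gamma_1),\dots,\ho_\Gamma(\gamma_N)),
\eqs
exactly as in Lemma~\ref{lem allpaths}. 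Conditions~\ref{def automorphic2} and~\ref{def automorphic3} are immediate, because products and complex conjugation in $C(\Ab_\Gamma)$ are taken pointwise while $\alpha^{1,1}_{\overrightarrow{L}}$ only changes the point at which $f_\Gamma$ is evaluated; hence it commutes with pointwise multiplication and with the involution.

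Finally, point-norm continuity I would establish as in the corollary following the first automorphic lemma: fixing $f_\Gamma\in C(\Ab_\Gamma)$ and letting $\rho^{1,1}_{S,\Gamma}\to\id_{S,\Gamma}$, i.e.\ $g_S^{-1}\to e_G$ in $G$, the difference $\|\alpha^{1,1}_{\overrightarrow{L}}(\rho^{1,1}_{S,\Gamma})f_\Gamma-f_\Gamma\|$ tends to $0$ because $f_\Gamma$ is uniformly continuous on the compact group $G^{\vert\Gamma\vert}$ and left translation in $G$ is continuous. The only point requiring care --- and the place where the hypotheses are genuinely used --- is that $\bar G_{S,\Gamma}$ must be a group under the purely left componentwise multiplication, so that $\alpha^{1,1}_{\overrightarrow{L}}$ does not run into the left/right mixing obstruction noted in the problem at the start of this subsection; the intersection hypothesis (here reduced to the single intersection of $\gamma_N$ with $S$, with $\gamma_N$ outgoing and lying above, so that only the datum $\iota_L=+1$ enters) guarantees that a single left translation is involved, so this obstruction does not occur and the remainder of the argument is a routine transcription of the preceding lemma.
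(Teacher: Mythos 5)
Your proposal is correct and takes essentially the same approach as the paper: the paper itself gives no in-text proof of this lemma (it defers to \cite[Section 6.1]{KaminskiPHD}), but its proofs of the analogous unnumbered lemma and of the $C^*$-dynamical-system corollary at the start of Section \ref{subsec dynsysfluxgroup} consist of exactly your steps --- verification of the three conditions of Definition \ref{def automorphic} by the componentwise computation, followed by point-norm continuity from continuity of translation on the compact group $G^{\vert\Gamma\vert}$. The only caveat, which your write-up inherits verbatim from the paper's own computation, concerns the order of composition in condition \ref{def automorphic1}: with $(\alpha(\rho)f_\Gamma)(\cdot):=f_\Gamma(L(\rho)\cdot)$ one literally obtains $\alpha(\rho)\circ\alpha(\tilde\rho)=\alpha(\tilde\rho\rho)$, so for non-abelian $G$ the identity as you (and the paper) write it requires either swapping the product or using the convention $f_\Gamma(L(\rho)^{-1}\cdot)$ --- a defect of the paper's argument as much as of yours, not a gap specific to your proposal.
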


\begin{lem}\label{lem allpaths1N-1}Let $\Gamma$ be a graph given by $\{\gamma_1,...,\gamma_N\}$.
Moreover let only the paths $\gamma_1,...,\gamma_{1,N-1}$ intersect in (source) vertices of the set $V_\Gamma$ with a surface $S$ such that all paths are outgoing and lie below. 

Then define the action
\beq\label{eq actionsurfgraph2}  (\alpha_{\overleftarrow{L}}^{1,N-1}(\rho^{1,N-1}_{S,\Gamma})f_\Gamma)(\ho_{\Gamma}(\gamma_1),...,\ho_{\Gamma}(\gamma_N))
&:= f_\Gamma(\rho_{S}(\gamma_1)\ho_{\Gamma}(\gamma_1),...,\rho_{S}(\gamma_N)\ho_{\Gamma}(\gamma_N))\\
&= f_\Gamma(g_S\ho_{\Gamma}(\gamma_1),...,g_S\ho_{\Gamma}(\gamma_{1,N-1}),\ho_{\Gamma}(\gamma_N))
\eq for $\rho_{S,\Gamma}^{1,N-1}\in \bar G_{S,\Gamma}$ and $\rho_S\in\Gop_{S,\gamma}$ and $\rho^{1,N-1}_{S,\Gamma}=(\rho_S(\gamma_1),...,\rho_S(\gamma_N))=(g_S,...,g_S,e_G)$.

The action $\alpha_{\overleftarrow{L}}^{1,N-1}$ of $\bar G_{S,\Gamma}$ on $C(\Ab_\Gamma)$ is automorphic and point-norm continuous.
\end{lem}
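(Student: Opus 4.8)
The plan is to run the same argument that establishes Lemma \ref{lem allpaths}, verifying in turn the three conditions of Definition \ref{def automorphic} and then point-norm continuity, while keeping track of the one feature peculiar to this configuration: because $\gamma_N$ does not meet $S$, the flux element has the shape $\rho^{1,N-1}_{S,\Gamma}=(\rho_S(\gamma_1),\dots,\rho_S(\gamma_N))=(g_S,\dots,g_S,e_G)$, so it acts by left multiplication with $g_S$ on the holonomies of $\gamma_1,\dots,\gamma_{N-1}$ and trivially on the holonomy of $\gamma_N$. The first thing I would record is that the set of elements of this form is closed under the group operation of $\bar G_{S,\Gamma}$. Since $S$ has the same surface intersection property for the subgraph $\{\gamma_1,\dots,\gamma_{N-1}\}$ while $\gamma_N\cap S=\varnothing$, the proposition establishing that $\bar G_{\breve S,\Gamma}$ is a group under the component-wise (purely left) multiplication applies, and the product of $(g_S,\dots,g_S,e_G)$ with $(\tilde g_S,\dots,\tilde g_S,e_G)$ equals $(g_S\tilde g_S,\dots,g_S\tilde g_S,e_G)$, which again carries a trivial last entry. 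Hence each $L(\rho^{1,N-1}_{S,\Gamma})$ is a homeomorphism of $\Ab_\Gamma$ and the assignment $\rho^{1,N-1}_{S,\Gamma}\mapsto\alpha^{1,N-1}_{\overleftarrow L}(\rho^{1,N-1}_{S,\Gamma})$ is well defined.

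For the homomorphism property (Definition \ref{def automorphic}\ref{def automorphic1}) I would compute both sides directly, using component-wise group multiplication and associativity in $G$, obtaining
\begin{equation*}
\begin{aligned}
(\alpha^{1,N-1}_{\overleftarrow L}(\rho^{1,N-1}_{S,\Gamma}\tilde\rho^{1,N-1}_{S,\Gamma})f_\Gamma)(\ho_\Gamma(\gamma_1),\dots,\ho_\Gamma(\gamma_N))
&= f_\Gamma(g_S\tilde g_S\ho_\Gamma(\gamma_1),\dots,g_S\tilde g_S\ho_\Gamma(\gamma_{N-1}),\ho_\Gamma(\gamma_N))\\
&= (\alpha^{1,N-1}_{\overleftarrow L}(\rho^{1,N-1}_{S,\Gamma})(\alpha^{1,N-1}_{\overleftarrow L}(\tilde\rho^{1,N-1}_{S,\Gamma})f_\Gamma))(\ho_\Gamma(\gamma_1),\dots,\ho_\Gamma(\gamma_N)).
\end{aligned}
\end{equation*}
Conditions \ref{def automorphic2} and \ref{def automorphic3} are then immediate, since the multiplication on $C(\Ab_\Gamma)$ is pointwise and the action is precomposition with the fixed homeomorphism $L(\rho^{1,N-1}_{S,\Gamma})$: precomposition distributes over pointwise products and commutes with complex conjugation, exactly as in the computation given for Lemma \ref{lem allpaths}. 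These three verifications carry over verbatim, the only bookkeeping difference being that the last slot is untouched.

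For point-norm continuity I would argue as in the corollary following Lemma \ref{lem allpaths}: as $\rho^{1,N-1}_{S,\Gamma}\to\id$, i.e. $g_S\to e_G$ in the compact group $G$, the map $L(\rho^{1,N-1}_{S,\Gamma})$ tends to the identity uniformly on the compact Hausdorff space $\Ab_\Gamma$, and since every $f_\Gamma\in C(\Ab_\Gamma)$ is uniformly continuous there, $\|\alpha^{1,N-1}_{\overleftarrow L}(\rho^{1,N-1}_{S,\Gamma})f_\Gamma-f_\Gamma\|_{\sup}\to 0$. The main obstacle is not any single computation but the structural point one must secure before the computations are legitimate: that $\bar G_{S,\Gamma}$ really is a group under a \emph{purely} left multiplication for this intersection pattern. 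This is precisely the issue flagged in the discussion preceding the lemma (and in the problem about mixing left and right multiplication), and it is what forces the restriction to a finite orientation preserved graph system $\PD^{\op}_\Gamma$ and to elements whose $\gamma_N$-component is $e_G$; once closure of this class of elements is confirmed, the remaining verifications are routine adaptations of the proof of Lemma \ref{lem allpaths}.
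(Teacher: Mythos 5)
Your proposal is correct and follows essentially the same route as the paper: the paper gives no separate proof of this lemma (it defers the details to \cite[Section 6.1]{KaminskiPHD}) and treats it as a routine adaptation of the proof given for the preceding all-paths lemma and its corollary, which is exactly what you do — componentwise verification of the three conditions of Definition \ref{def automorphic} and the limit argument $g_S\to e_G$ for point-norm continuity, with the $\gamma_N$-slot left untouched. Your additional remark that elements of the form $(g_S,\dots,g_S,e_G)$ are closed under the purely left multiplication is the same group-structure point the paper secures beforehand via the flux-group propositions and the restriction to $\PD^{\op}_\Gamma$.
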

In the following an action is defined, which does not lead to a point-norm continuous automorphic action on the analytic holonomy algebra.
 \begin{lem}Let $\Gamma$ be a graph given by $\{\gamma_1,...,\gamma_N\}$.
Moreover let all paths intersect in (source) vertices of the set $V_\Gamma$ with a surface $S$ such that all paths $\gamma_1,...,\gamma_{N-1}$ are outgoing and lie below, $\gamma_N$ is outgoing and lies above the surface $S$. There are no other intersection point of paths and the surface $S$.

Then define the action
\beq\label{eq actionsurfgraph2}  (\alpha_L^1(\rho^1_{S,\Gamma})f_\Gamma)(\ho_{\Gamma}(\gamma_1),...,\ho_{\Gamma}(\gamma_N))
&:= f_\Gamma(\rho_{S}(\gamma_1)\ho_{\Gamma}(\gamma_1),...,\rho_{S}(\gamma_N)\ho_{\Gamma}(\gamma_N))\\
&= f_\Gamma(g_S\ho_{\Gamma}(\gamma_1),...,g_S^{-1}\ho_{\Gamma}(\gamma_N))
\eq for $\rho_{S,\Gamma}^1\in  G_{\breve S,\Gamma}$ and $\rho_S\in\Gop_{S,\gamma}$ and $\rho^1_{S,\Gamma}=(\rho_S(\gamma_1),...,\rho_S(\gamma_N))=(g_S,...,g_S^{-1})$.

The action $\alpha_L^1$ of $\bar G_{S,\Gamma}$ on $C(\Ab_\Gamma)$ is automorphic and the action is not point-norm continuous. 
\end{lem}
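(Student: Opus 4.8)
The plan is to treat the two assertions separately. For the automorphic property I would verify the three conditions of Definition \ref{def automorphic} exactly as in Lemma \ref{lem allpaths}. Conditions \ref{def automorphic2} and \ref{def automorphic3} are immediate, since the multiplication and the involution on $C(\Ab_\Gamma)$ are pointwise: precomposing the argument $(\ho_\Gamma(\gamma_1),\dots,\ho_\Gamma(\gamma_N))$ with the fixed left translation by $(g_S,\dots,g_S,g_S^{-1})$ commutes with products and with complex conjugation. Condition \ref{def automorphic1} follows by the same computation as in the purely left cases of Lemma \ref{lem allpaths} and Lemma \ref{lem allpaths1N-1}, using the right group multiplication fixed throughout this subsection on $\bar G_{S,\Gamma}$; the only change is that the $N$-th slot carries $g_S^{-1}$ rather than $g_S$. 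Thus each $\alpha_L^1(\rho^1_{S,\Gamma})$ is a $^*$-automorphism of $C(\Ab_\Gamma)$ and $\alpha_L^1$ is automorphic.

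For the failure of point-norm continuity I would first isolate the decisive structural difference from the earlier lemmas. In Lemmas \ref{lem allpaths}, \ref{lem actonlygammaN} and \ref{lem allpaths1N-1} the defining families $(g_S,\dots,g_S)$, $(e_G,\dots,e_G,g_S^{-1})$ and $(g_S,\dots,g_S,e_G)$ are already closed under the componentwise multiplication, so $g_S\mapsto\rho^1_{S,\Gamma}$ is a group isomorphism $G\cong\bar G_{S,\Gamma}$ and point-norm continuity at the unit reduces, as in the corollary to Lemma \ref{lem allpaths}, to continuity of $f_\Gamma$ as $g_S\to e_G$. Here the product
\begin{equation*}
(g_S,\dots,g_S,g_S^{-1})\cdot(\tilde g_S,\dots,\tilde g_S,\tilde g_S^{-1})=(g_S\tilde g_S,\dots,g_S\tilde g_S,g_S^{-1}\tilde g_S^{-1})
\end{equation*}
has the defining form $(x,\dots,x,x^{-1})$ only when $g_S^{-1}\tilde g_S^{-1}=(g_S\tilde g_S)^{-1}$, i.e. only when $g_S$ and $\tilde g_S$ commute. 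For non-abelian $G$ the family $\{(g_S,\dots,g_S,g_S^{-1}):g_S\in G\}$ is therefore not a subgroup of $G^{\vert\Gamma\vert}$, and the flux group $\bar G_{S,\Gamma}$ of finite products is strictly larger; in particular it contains elements $(e_G,\dots,e_G,c)$ with $c$ a nontrivial commutator, which are never reached by the one-parameter family used to test continuity in the cases above.

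I would then derive the discontinuity from this mismatch. The composite $\alpha_L^1(\rho^1_{S,\Gamma}(g))\circ\alpha_L^1(\rho^1_{S,\Gamma}(\tilde g))$ sends $f_\Gamma$ to $f_\Gamma(\tilde g g\,\ho_\Gamma(\gamma_1),\dots,\tilde g g\,\ho_\Gamma(\gamma_{N-1}),\tilde g^{-1}g^{-1}\ho_\Gamma(\gamma_N))$, so that the automorphism attached to a word in the generators depends on the \emph{reversed} product, whereas the underlying flux-group element depends on the product itself. Consequently, extending $\alpha_L^1$ multiplicatively over all of $\bar G_{S,\Gamma}$ so as to keep Condition \ref{def automorphic1} cannot be reconciled with the geometric formula on the larger group, and the assignment $\rho\mapsto\alpha_L^1(\rho)$ fails to be point-norm continuous at the unit. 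The main obstacle is precisely this step: making the informal order reversal rigorous by analysing the group generated by the mixed-orientation family and exhibiting a net of generator products approaching the unit of $\bar G_{S,\Gamma}$ whose induced automorphisms stay a fixed distance from the identity; quantifying that gap with a single $f_\Gamma\in C(\Ab_\Gamma)$ separating $c\,\ho_\Gamma(\gamma_N)$ from $\ho_\Gamma(\gamma_N)$, and using the compactness of $G$, is the computation carried out in \cite[Section 6.1]{KaminskiPHD}, which I would follow.
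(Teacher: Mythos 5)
Your first half is fine: verifying the three conditions of Definition \ref{def automorphic} for $\alpha_L^1$ goes exactly as in Lemma \ref{lem allpaths}, because with the componentwise multiplication fixed at the start of this subsection every slot is multiplied in the same order, and the $g_S^{-1}$ sitting in the $N$-th slot changes nothing in that computation; this is also what separates the present lemma from Problem \ref{problem I}, where the same geometric configuration is combined with a reversed multiplication in the last slot and automorphy genuinely fails. Note, however, that for this lemma the paper itself contains no proof to compare against: it states the result and refers, for all the lemmata of this paragraph, to \cite[Section 6.1]{KaminskiPHD}.

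The genuine gap is in your second half. Your structural observation is correct and relevant: the family $\{(g_S,\dots,g_S,g_S^{-1}):g_S\in G\}$ is not closed under componentwise multiplication when $G$ is non-abelian, and the group $\bar G_{S,\Gamma}$ it generates contains elements $(e_G,\dots,e_G,c)$ with $c$ a non-trivial commutator. But the inference from this to failure of point-norm continuity is a non sequitur. The order reversal you exhibit in $\alpha_L^1(\rho^1_{S,\Gamma}(g))\circ\alpha_L^1(\rho^1_{S,\Gamma}(\tilde g))$ concerns the algebraic condition \ref{def automorphic1} (homomorphism versus anti-homomorphism); it says nothing about continuity, and if pressed it would undermine the automorphy you just established rather than produce discontinuity. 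Worse, the net you promise to exhibit cannot exist if $\bar G_{S,\Gamma}$ carries the subspace topology of $G^{\vert\Gamma\vert}$: every $\alpha_L^1(\rho)$ is composition with a componentwise translation, so for any $f_\Gamma$ continuous on the compact space $G^{\vert\Gamma\vert}$, uniform continuity gives $\|\alpha_L^1(\rho)f_\Gamma-f_\Gamma\|\rightarrow 0$ whenever $\rho\rightarrow(e_G,\dots,e_G)$ in that topology; the compactness of $G$, which you invoke in your favour, actually works against you. A discontinuity claim can only be meaningful after one specifies a different (finer) topology on $\bar G_{S,\Gamma}$ — for instance the one transported from the parametrisation by $g_S$, respectively by words in the generators — and identifying and justifying that topology is precisely the missing idea. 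Finally, your closing sentence defers the decisive computation to \cite[Section 6.1]{KaminskiPHD}; since that citation is all the paper offers as well, your proposal does not actually contain a proof of the discontinuity assertion.
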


\begin{lem}Let $\Gamma:=\{\gamma_1,...,\gamma_N\}$ be a graph. The paths $\{\gamma_1,...,\gamma_{1,N-1}\}$ intersect in their source vertices with a surface $S$ such that the paths are outgoing and lie below the surface $S$. The path $\gamma_N$ in $\Gamma$ intersects in the source vertex with the surface $S^\prime$ such that $\gamma_N$ is outgoing and lies above the surface $S^\prime$. There are no other intersection point of paths and the surface $S$ and $S^\prime$.

Then the action defined by
\beq\label{eq actionsurfgraph2}  (\alpha_{L}^2(\rho_{S,\Gamma}(\Gamma),\rho_{S^\prime,\Gamma}(\Gamma))f_\Gamma)(\ho_{\Gamma}(\gamma_1),...,\ho_{\Gamma}(\gamma_N))
&:= f_\Gamma(\rho_{S}(\gamma_1)\ho_{\Gamma}(\gamma_1),...,\rho_{S^\prime}(\gamma_N)\ho_{\Gamma}(\gamma_N))\\
&= f_\Gamma(g_S\ho_{\Gamma}(\gamma_1),...,g_{S^\prime}^{-1}\ho_{\Gamma}(\gamma_N))
\eq for $\rho_{\breve S,\Gamma}^1\in \bar G_{\breve S,\Gamma}$ and $\rho_S,\rho_{S^\prime}\in\Gop_{\breve S,\gamma}$ and $\rho^{1}_{\breve S,\Gamma}=(\rho_S(\gamma_1),...,\rho_{S^\prime}(\gamma_N))=(g_S,...,g_{S^\prime}^{-1})$ where $g_S\neq g_{S^\prime}$.

The action $\alpha_{L}^2$ of $\bar G_{\breve S,\Gamma}$ on $C(\Ab_\Gamma)$ is automorphic and point-norm continuous.
\end{lem}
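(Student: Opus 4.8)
The strategy mirrors the proof of Lemma \ref{lem allpaths} and Lemma \ref{lem allpaths1N-1}: the only genuinely new feature is that two \emph{distinct} surfaces $S$ and $S^\prime$ contribute, $S$ governing the edges $\gamma_1,\dots,\gamma_{N-1}$ and $S^\prime$ governing $\gamma_N$ through the two \emph{independent} maps $o_L(S),o_L(S^\prime)$. The essential preliminary step, and the one separating this statement from the preceding (non-continuous) lemma, is to check that $\bar G_{\breve S,\Gamma}$ is an honest group under the componentwise multiplication of $G^{\vert\Gamma\vert}$. A flux element here has the shape $(g_S,\dots,g_S,g_{S^\prime}^{-1})$ with $g_S,g_{S^\prime}\in G$ free, and I would verify closure by
\beqs
(g_S,\dots,g_S,g_{S^\prime}^{-1})\cdot(\tilde g_S,\dots,\tilde g_S,\tilde g_{S^\prime}^{-1})
=(g_S\tilde g_S,\dots,g_S\tilde g_S,(\tilde g_{S^\prime}g_{S^\prime})^{-1}),
\eqs
which is again of the admissible form (a constant first block followed by the inverse of a group element). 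This is precisely the property that breaks down when a single surface ties the last entry $g_S^{-1}$ to the first entry $g_S$. Hence $\bar G_{\breve S,\Gamma}$ is a compact subgroup of $G^{\vert\Gamma\vert}$ and $\alpha_L^2$ is the restriction to it of the standard left action $L$ on $\Ab_\Gamma\cong G^{\vert\Gamma\vert}$; the hypothesis $g_S\neq g_{S^\prime}$ is only a mild genericity and plays no role.

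Once this is in place, I would dispatch the three conditions of Definition \ref{def automorphic} by the same direct computation as in the first Lemma of this subsection. For condition (i) the group-homomorphism property follows from associativity of left multiplication in $G$ on each factor separately,
\beqs
&(\alpha_L^2(\rho_{\breve S,\Gamma}(\Gamma)\,\tilde\rho_{\breve S,\Gamma}(\Gamma))f_\Gamma)(\ho_\Gamma(\gamma_1),\dots,\ho_\Gamma(\gamma_N))\\
&= f_\Gamma(g_S\tilde g_S\ho_\Gamma(\gamma_1),\dots,g_{S^\prime}^{-1}\tilde g_{S^\prime}^{-1}\ho_\Gamma(\gamma_N))\\
&= (\alpha_L^2(\rho_{\breve S,\Gamma}(\Gamma))\,\alpha_L^2(\tilde\rho_{\breve S,\Gamma}(\Gamma))f_\Gamma)(\ho_\Gamma(\gamma_1),\dots,\ho_\Gamma(\gamma_N)),
\eqs
where the middle line uses the componentwise product computed above. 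Conditions (ii) and (iii) hold verbatim, since multiplication in $C(\Ab_\Gamma)$ is pointwise and complex conjugation commutes with precomposition by the homeomorphism $(\ho_\Gamma(\gamma_i))\mapsto(g_S\ho_\Gamma(\gamma_1),\dots,g_{S^\prime}^{-1}\ho_\Gamma(\gamma_N))$ of $\Ab_\Gamma$.

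For point-norm continuity I would argue as in the Corollary: $G$ is compact, so $\bar G_{\breve S,\Gamma}$ is compact and $\Ab_\Gamma\cong G^{\vert\Gamma\vert}$ is a compact Hausdorff space on which every $f_\Gamma\in C(\Ab_\Gamma)$ is uniformly continuous. As $(g_S,g_{S^\prime})\to(e_G,e_G)$ the corresponding homeomorphisms converge uniformly to the identity, whence
\beqs
\lim_{(g_S,g_{S^\prime})\to(e_G,e_G)}\|\alpha_L^2(\rho_{\breve S,\Gamma}(\Gamma))f_\Gamma-f_\Gamma\|=0,
\eqs
and continuity at an arbitrary group element then follows from (i). The main obstacle is therefore not computational but the closure check of the first paragraph: one must confirm that using two surfaces with independent parameters keeps us inside the purely-left regime in which $\bar G_{\breve S,\Gamma}$ is closed under the componentwise product; once that is secured, automorphy and point-norm continuity reduce to the routine verifications above.
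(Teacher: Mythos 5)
Your proposal is correct and takes essentially the same route as the paper: the paper defers the detailed proof of this lemma to \cite[Section 6.1]{KaminskiPHD}, but its in-text proofs of the analogous statements in this subsection (the automorphy computation in the first lemma and the point-norm limit in the subsequent corollary) proceed exactly as you do, by componentwise verification of the three conditions of definition \ref{def automorphic} followed by the continuity limit as the flux tends to the identity. Your preliminary closure check --- that two independent parameters $g_S,g_{S^\prime}$ keep $(g_S,\dots,g_S,g_{S^\prime}^{-1})$ inside a genuine subgroup of $G^{\vert\Gamma\vert}$, in contrast to the tied single-surface elements $(g_S,\dots,g_S,g_S^{-1})$ of the preceding lemma --- is precisely the distinction the paper itself draws in problem \thesubsection.\ref{problem I}, so it is the same key point rather than a different argument.
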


\begin{lem}\label{lem diffactionsLN}Let all paths in $\Gamma:=\{\gamma_1,...,\gamma_N\}$ intersect in vertices of the set $V_\Gamma$ with a surface $S_N$ such that $\gamma_1,...,\gamma_{N}$ are outgoing and lie below the surface $S_N$. Let the paths $\gamma_1,...,\gamma_{N-1}$ in $\Gamma$ intersect in vertices of the set $V_\Gamma$ with a surface $S_{N-1}$ such that $\gamma_1,...,\gamma_{1,N-1}$ are outgoing and lie below the surface $S_{N-1}$. The same is true for a surface $S_{N-2}$ and paths $\{\gamma_1,...,\gamma_{N-2}\}$, and so on, til $S_{1}$ and $\{\gamma_1\}$. There are no other intersections between the paths and surfaces $S_{1},...,S_{N-1}$ and $S_N$. Moreover let each path $\gamma_i$ in $\Gamma$ intersects in vertices of the set $V_\Gamma$ with a surface $S_{1,i}$ such that $\gamma_i$ is outgoing and lies below the surface $S_i$ for $i=1,...,N$. There are no other intersections between the paths in $\Gamma$ and surfaces $S_{1,1},...,S_{1,N}$. The surfaces $S_N$ and $S_{1,N}$ coincide. The set $\breve S:=\{S_{1,i}\}_{1\leq i\leq N}$ has the \hyperlink{simple surface intersection property for a graph}{simple surface intersection property} for $\Gamma$.

The action for two different maps $\rho_{S_{N-1},\Gamma}$ in $G_{S_{N-1},\Gamma}$ and $\tilde\rho_{S_{N},\Gamma}$ in $G_{S_{N},\Gamma}$, such that there is an action of $\bar G_{S_{1,N-1},\Gamma}\times \bar G_{S_{N},\Gamma}$ on $C(\Ab_\Gamma)$ given by
\beq\label{eq actionsurfgraph3} &(\alpha^2_{\overleftarrow{L}}(\rho_{S_{N-1},\Gamma},\tilde\rho_{S_{N},\Gamma})f_\Gamma)(\ho_{\Gamma}(\gamma_1),...,\ho_{\Gamma}(\gamma_N))\\
&:= f_\Gamma(\rho_{S_{N-1}}(\gamma_1)\ho_{\Gamma}(\gamma_1),\rho_{S_{N-1}}(\gamma_2)\ho_{\Gamma}(\gamma_2),...,\rho_{S_{N-1}}(\gamma_{N-1})\ho_{\Gamma}(\gamma_{N-1}),\tilde\rho_{S_{N}}(\gamma_N)\ho_{\Gamma}(\gamma_N))\\
&= f_\Gamma(g_{S_{N-1}}\ho_{\Gamma}(\gamma_1),g_{S_{N-1}}\ho_{\Gamma}(\gamma_2),...,g_{S_{N-1}}\ho_{\Gamma}(\gamma_{N-1}),h_{S_{N}}\ho_{\Gamma}(\gamma_N))\eq

The action for $(N-1)$-different maps $\rho_{S_2,\Gamma}$ in $G_{S_2,\Gamma}$ , $\tilde\rho_{S_{1,3},\Gamma}$ in $G_{S_{1,3},\Gamma}$ til $\breve\rho_{S_{1,N-2},\Gamma}$ in $G_{S_{1,N-2},\Gamma}$, such that there is an action of $\bar G_{S_{2},\Gamma}\times \bar G_{S_{1,3},\Gamma}\times\cdots \times \bar G_{S_{1,N-2},\Gamma}$ on $C(\Ab_\Gamma)$ given by
\beq\label{eq actionsurfgraph4} &(\alpha_{\overleftarrow{L}}^{N-1}((\rho_{S_2,\Gamma},\rho_{S_2,\Gamma},\tilde\rho_{S_{1,3},\Gamma}...,\breve\rho_{S_{1,N-1},\Gamma}))f_\Gamma)(\ho_{\Gamma}(\gamma_1),...,\ho_{\Gamma}(\gamma_N))\\ &:= f_\Gamma(\rho_{S_2}(\gamma_1)\ho_{\Gamma}(\gamma_1),\rho_{S_2}(\gamma_2)\ho_{\Gamma}(\gamma_2),\tilde\rho_{S_{1,3}}(\gamma_3)\ho_{\Gamma}(\gamma_3),...,\breve\rho_{1,S_{N-2}}(\gamma_N)\ho_{\Gamma}(\gamma_N))\\
&= f_\Gamma(g_{S_2}\ho_{\Gamma}(\gamma_1),g_{S_2}\ho_{\Gamma}(\gamma_2),k_{S_{1,3}}\ho_{\Gamma}(\gamma_3),..., h_{S_{1,N-2}}\ho_{\Gamma}(\gamma_N))\eq
whenever $N>5$.

Respectively the action of $N$-different maps is equivalent to an action of $\bar G_{S_{1,1},\Gamma}\times\cdots \times \bar G_{S_{1,N},\Gamma}$ on $C(\Ab_\Gamma)$, which is defined by
\beq\label{eq actionsurfgraph5} &(\alpha^N_{\overleftarrow{L}}(\rho_{S_{1,1},\Gamma}(\Gamma),...,\breve\rho_{S_{1,N},\Gamma}(\Gamma))f_\Gamma)(\ho_{\Gamma}(\gamma_1),...,\ho_{\Gamma}(\gamma_N)) \\
&:= f_\Gamma(\rho_{S_{1,1}}(\gamma_1)\ho_{\Gamma}(\gamma_1),...,\rho_{S_{1,N}}(\gamma_N)\ho_{\Gamma}(\gamma_N))\\
&= f_\Gamma(g_{S_{1,1}}\ho_{\Gamma}(\gamma_1),...,h_{S_{1,N}}\ho_{\Gamma}(\gamma_N))\eq

Then the actions $\alpha_{\overleftarrow{L}}^2$, ..., $\alpha_{\overleftarrow{L}}^{1,N-1}$ and $\alpha_{\overleftarrow{L}}^N$ of $\bar G_{\breve S,\Gamma}$ on $C(\Ab_\Gamma)$ are automorphic and point-norm continuous actions.
\end{lem}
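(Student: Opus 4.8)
The plan is to recognise that all three families of actions listed in the statement---$\alpha_{\overleftarrow{L}}^2$, $\alpha_{\overleftarrow{L}}^{N-1}$ and $\alpha_{\overleftarrow{L}}^N$---are pull-backs to $C(\Ab_\Gamma)$ of left-translations of the compact Hausdorff space $\Ab_\Gamma$, which under the natural identification is $G^{\vert\Gamma\vert}$. In each case the action has the shape $(\alpha(\rho)f_\Gamma)(x)=f_\Gamma(L(\rho)x)$, where $L(\rho)$ multiplies the $i$-th holonomy coordinate on the left by a fixed element of $G$, and where the intersection pattern of the surfaces partitions the index set $\{1,\dots,\vert\Gamma\vert\}$ into blocks on each of which a single flux-group element acts homogeneously. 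Thus all three assertions reduce to one statement: a left-translation action of a finite product of flux groups, acting block-homogeneously with the component-wise multiplication recalled at the head of this subsection, is automorphic in the sense of Definition \ref{def automorphic} and point-norm continuous. This lets me avoid repeating the computation separately for each displayed action.

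First I would verify the three conditions of Definition \ref{def automorphic}, exactly as in the prototype computation for Lemma \ref{lem allpaths}. Conditions (ii) and (iii) are formal: composing with the fixed map $L(\rho)$ commutes with pointwise multiplication and with complex conjugation of functions, so $\alpha(\rho)(f_1 f_2)=\alpha(\rho)(f_1)\,\alpha(\rho)(f_2)$ and $\alpha(\rho)(f^*)=\alpha(\rho)(f)^*$. For the homomorphism property (i) one uses associativity of the multiplication of $G$ applied coordinate-wise, giving $L(\rho)\circ L(\tilde\rho)=L(\rho\cdot\tilde\rho)$ with $\rho\cdot\tilde\rho$ the component-wise product. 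The crucial point is that $\rho\cdot\tilde\rho$ again lies in the same product of flux groups: by the propositions establishing that $\bar G_{\breve S,\Gamma}$ (and, for the simple surface intersection property, $\bar G_{\breve T,\Gamma}$) is a group, each block-factor is closed under the component-wise product, and a finite product of groups is again a group. Hence $\rho\mapsto\alpha(\rho)$ is a genuine action by automorphisms of $C(\Ab_\Gamma)$.

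Next I would establish point-norm continuity. Since $G$ is compact, $\Ab_\Gamma$ identified with $G^{\vert\Gamma\vert}$ is a compact Hausdorff space, so every $f_\Gamma\in C(\Ab_\Gamma)$ is uniformly continuous. As the flux-group element $\rho$ tends to the identity---each block-parameter tending to $e_G$---the translation $L(\rho)$ tends to the identity uniformly on $\Ab_\Gamma$, whence uniform continuity of $f_\Gamma$ yields $\|\alpha(\rho)f_\Gamma-f_\Gamma\|\to 0$. This is continuity of $\rho\mapsto\alpha(\rho)f_\Gamma$ at the identity, and combined with property (i) together with the fact that each $\alpha(\rho)$ is an isometric $^*$-automorphism it upgrades to continuity at every point. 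This is precisely the uniform-continuity argument already used to turn the prototype action into a $C^*$-dynamical system above.

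The part to get right---and what separates these actions from the action $\alpha_L^1$ shown \emph{not} to be point-norm continuous---is the block-homogeneity ensured by the stated surface intersection properties. Here each flux-group element occupies a block of coordinates on which it acts by one and the same element, so $\rho\mapsto L(\rho)$ is a bona fide group homomorphism into the homeomorphism group of $\Ab_\Gamma$; in the excluded configuration a single parameter $g_S$ appears simultaneously as $g_S$ and $g_S^{-1}$, so for non-abelian $G$ the assignment fails to be a homomorphism and the reduction collapses. I therefore expect the only genuine work to be bookkeeping: checking, for each of the index partitions dictated by the intersection data for $S_1,\dots,S_N$ and $S_{1,1},\dots,S_{1,N}$, that the blocks are indeed homogeneous and pairwise disjoint, so that the product-group structure, and hence the reduction to the prototype case, truly applies in each of the three displayed situations.
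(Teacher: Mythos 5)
Your proposal is correct and takes essentially the same route as the paper: the paper gives no separate argument for this lemma (it defers the details to \cite[Section 6.1]{KaminskiPHD}), and the intended proof is exactly the one you give, namely the prototype verification of Definition \ref{def automorphic} for $\alpha^1_{\overleftarrow{L}}$ together with the limit computation for point-norm continuity, repeated for block-homogeneous left translations by the product of flux groups $\bar G_{S_{i},\Gamma}$. Your unifying block-homogeneity observation and the uniform-continuity justification of the limit on the compact space $G^{\vert\Gamma\vert}$ merely make explicit what the paper's prototype computations assert.
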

\begin{lem}Let all paths of a graph $\Gamma$ intersect in vertices of the set $V_\Gamma$ with a surface $S$ such that all paths are ingoing and lie above the surface. Moreover let the set $\breve S:=\{S_{1,i}\}_{1\leq i\leq N}$ has the \hyperlink{simple surface intersection property for a graph}{simple surface intersection property} for $\Gamma$.

Then there is an action such that
\beq\label{eq actionsurfgraph6} &(\alpha^{\overleftarrow{R}}_1(\rho^1_{S,\Gamma})f_\Gamma(\ho_{\Gamma}(\gamma_1),...,\ho_{\Gamma}(\gamma_N))f_\Gamma)(\ho_{\Gamma}(\gamma_1),...,\ho_{\Gamma}(\gamma_N)) \\
&\quad:= f_\Gamma(\ho_{\Gamma}(\gamma_1)\rho_S(\gamma_1)^{-1},...,\ho_{\Gamma}(\gamma_N)\rho_S(\gamma_N)^{-1})\\
&\quad= f_\Gamma(\ho_{\Gamma}(\gamma_1)g^{-1}_S,...,\ho_{\Gamma}(\gamma_N)g^{-1}_S)
\eq 
This action is changed such that
\beq\label{eq actionsurfgraph7} &(\alpha^{\overleftarrow{R}}_N(\rho^N_{\breve S,\Gamma})f_\Gamma)(\ho_{\Gamma}(\gamma_1),...,\ho_{\Gamma}(\gamma_N))\\
&\quad:= f_\Gamma(\ho_{\Gamma}(\gamma_1)\rho_{S_{1,1}}(\gamma_1)^{-1},...,\ho_{\Gamma}(\gamma_N)\rho_{S_{1,N}}(\gamma_N)^{-1})\\
&\quad= f_\Gamma(\ho_{\Gamma}(\gamma_1)g^{-1}_{S_{1,1}},...,\ho_{\Gamma}(\gamma_N)h^{-1}_{S_{1,N}})
\eq 

Then the actions $\alpha^{\overleftarrow{R}}_1$ of $\bar G_{S,\Gamma}$ on $C(\Ab_\Gamma)$,... and $\alpha^{\overleftarrow{R}}_N$ of $\bar G_{\breve S,\Gamma}$ on $C(\Ab_\Gamma)$ are automorphic and point-norm continuous actions.
\end{lem}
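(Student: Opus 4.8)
The plan is to verify, for each of the actions $\alpha^{\overleftarrow{R}}_1,\dots,\alpha^{\overleftarrow{R}}_N$, the three defining properties of an automorphic action (Definition \ref{def automorphic}) and then point-norm continuity, following the scheme already carried out for the purely left actions in Lemma \ref{lem allpaths} and its corollary. First I would record that the maps are well-defined endomorphisms of $C(\Ab_\Gamma)$. Under the simple surface intersection property each edge $\gamma_i$ meets exactly one surface $S_{1,i}$ of $\breve S$ in its target vertex, so $\rho_{S_{1,i}}(\gamma_i)=g_{S_{1,i}}\in G$ is unambiguously attached to the $i$-th slot; on the finite orientation preserved graph system $\PD^{\op}_\Gamma$ the configuration space $\Ab_\Gamma$ is naturally identified with $G^{\vert\Gamma\vert}$, the right multiplication $(\ho_\Gamma(\gamma_1),\dots,\ho_\Gamma(\gamma_N))\mapsto(\ho_\Gamma(\gamma_1)\rho_S(\gamma_1)^{-1},\dots,\ho_\Gamma(\gamma_N)\rho_S(\gamma_N)^{-1})$ is a homeomorphism of $\Ab_\Gamma$, and precomposition of a continuous $f_\Gamma$ with a homeomorphism is again continuous. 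Hence $\alpha^{\overleftarrow{R}}_k(\rho)$ maps $C(\Ab_\Gamma)$ into itself.

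The crux is property (i) of Definition \ref{def automorphic}, the homomorphism law $\alpha(\rho\tilde\rho)=\alpha(\rho)\circ\alpha(\tilde\rho)$. Writing the action slotwise as $\ho_\Gamma(\gamma_i)\mapsto\ho_\Gamma(\gamma_i)\rho_S(\gamma_i)^{-1}$, a direct computation gives $\alpha^{\overleftarrow{R}}(\rho\cdot\tilde\rho)f_\Gamma$ evaluated at $(\ho_\Gamma(\gamma_i))_i$ equal to $f_\Gamma\bigl(\ho_\Gamma(\gamma_i)(\rho_S(\gamma_i)\tilde\rho_S(\gamma_i))^{-1}\bigr)_i=f_\Gamma\bigl(\ho_\Gamma(\gamma_i)\tilde\rho_S(\gamma_i)^{-1}\rho_S(\gamma_i)^{-1}\bigr)_i$, whereas $\alpha^{\overleftarrow{R}}(\rho)\bigl(\alpha^{\overleftarrow{R}}(\tilde\rho)f_\Gamma\bigr)$ produces $f_\Gamma\bigl(\ho_\Gamma(\gamma_i)\rho_S(\gamma_i)^{-1}\tilde\rho_S(\gamma_i)^{-1}\bigr)_i$. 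Because the inversion in $\ho_\Gamma\cdot\rho^{-1}$ reverses the order of the two group elements in $G$, these agree precisely when $\bar G_{\breve S,\Gamma}$ is equipped with the \emph{right} (reversed) multiplication declared at the beginning of this subsection, i.e. the product read in the order dictated by the right translations. With that multiplication one has the slotwise identity $R(\rho\cdot\tilde\rho)=R(\rho)\circ R(\tilde\rho)$ on $\Ab_\Gamma$, and pulling back along it yields exactly property (i). I would stress here that this is the very reason a purely right action must be used: mixing left and right multiplication would leave $\bar G_{\breve S,\Gamma}$ without a group structure and obstruct the automorphism law, as noted at the start of this subsection. I expect this bookkeeping of the order of multiplication to be the main obstacle; everything else is routine.

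Properties (ii) and (iii) then follow verbatim from the left-action case, since the only algebra operations involved are the pointwise product and pointwise complex conjugation on $C(\Ab_\Gamma)$, both of which commute with precomposition by the homeomorphism $R(\rho)$ of $\Ab_\Gamma$: thus $\alpha^{\overleftarrow{R}}(\rho)(f_\Gamma f'_\Gamma)=\alpha^{\overleftarrow{R}}(\rho)(f_\Gamma)\,\alpha^{\overleftarrow{R}}(\rho)(f'_\Gamma)$ and $\alpha^{\overleftarrow{R}}(\rho)(f_\Gamma^{*})=\bigl(\alpha^{\overleftarrow{R}}(\rho)(f_\Gamma)\bigr)^{*}$ hold immediately.

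Finally, for point-norm continuity I would mirror the corollary to Lemma \ref{lem allpaths}: it suffices to show $\lim_{\rho\to\id}\|\alpha^{\overleftarrow{R}}(\rho)f_\Gamma-f_\Gamma\|=0$ in the sup-norm for each fixed $f_\Gamma\in C(\Ab_\Gamma)$. Since $G$ is compact, $\Ab_\Gamma\cong G^{\vert\Gamma\vert}$ is a compact Hausdorff space and $f_\Gamma$ is uniformly continuous; the slotwise right multiplication $G^{\vert\Gamma\vert}\times\Ab_\Gamma\to\Ab_\Gamma$ is jointly continuous, so as $\rho\to\id$ (equivalently $g_{S_{1,i}}\to e_G$) the shifted arguments $\ho_\Gamma(\gamma_i)g_{S_{1,i}}^{-1}$ converge to $\ho_\Gamma(\gamma_i)$ uniformly in $\ho_\Gamma(\gamma_i)$, whence $\alpha^{\overleftarrow{R}}(\rho)f_\Gamma\to f_\Gamma$ uniformly. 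The argument is identical for $\alpha^{\overleftarrow{R}}_1,\dots,\alpha^{\overleftarrow{R}}_N$, the single difference being how many slots carry a nontrivial group element (all equal to one $g_S$ for $\alpha^{\overleftarrow{R}}_1$ under the single surface $S$, and attached slot by slot to the $S_{1,i}$ for $\alpha^{\overleftarrow{R}}_N$ under $\breve S$). This establishes that each $\alpha^{\overleftarrow{R}}_k$ is automorphic and point-norm continuous.
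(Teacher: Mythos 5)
Your proposal is correct and follows essentially the same verification scheme the paper employs for the analogous purely left actions (the paper itself does not prove this lemma here but defers it to \cite[Section 6.1]{KaminskiPHD}): well-definedness via the homeomorphism of $\Ab_\Gamma\cong G^{\vert\Gamma\vert}$, the componentwise check of the three conditions of Definition \ref{def automorphic}, and point-norm continuity from uniform continuity on the compact space. Your explicit treatment of the order reversal caused by the inversion in $\ho_\Gamma\mapsto\ho_\Gamma\rho_S^{-1}$, resolved by the multiplication convention fixed at the start of the subsection, is in fact more careful than the paper's own proof of the left-action lemma, which silently relies on the same convention.
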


\begin{lem}
Let all paths intersect in their target vertices contained in the set $V_\Gamma$ with a surface $S$ such that all paths are ingoing and lie below the surface $S$. 

Then there is an action such that
\beq\label{eq actionsurfgraph8} &(\alpha^{\overrightarrow{R}}_1(\rho^1_{S,\Gamma})f_\Gamma)(\ho_{\Gamma}(\gamma_1),...,\ho_{\Gamma}(\gamma_N))\\
&\quad:= f_\Gamma(\ho_{\Gamma}(\gamma_1)\rho^{-1}_S(\gamma_1),...,\ho_{\Gamma}(\gamma_N)\rho^{-1}_S(\gamma_N))\\
&\quad= f_\Gamma(\ho_{\Gamma}(\gamma_1)g_S,...,\ho_{\Gamma}(\gamma_N)g_S)
\eq 
holds.

Then the action $\alpha^{\overrightarrow{R}}_1$ of $\bar G_{S,\Gamma}$ on $C(\Ab_\Gamma)$ is an automorphic and point-norm continuous action.
\end{lem}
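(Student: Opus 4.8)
The plan is to argue exactly as in the proof of Lemma~\ref{lem allpaths}, with the left translation replaced by a right translation, and then to check the three requirements of Definition~\ref{def automorphic} together with point-norm continuity. First I would fix the set-up: since $G$ is compact, the configuration space $\Ab_\Gamma$ for the orientation preserved graph system $\PD^{\op}_\Gamma$ is identified with $G^{\vert\Gamma\vert}$ and is a compact Hausdorff space, so that $C(\Ab_\Gamma)$ is a unital commutative $C^*$-algebra with the $\sup$-norm. For each $\rho_{S,\Gamma}(\Gamma)\in\bar G_{S,\Gamma}$ I introduce the componentwise right translation
\[
R(\rho_{S,\Gamma}(\Gamma)):\Ab_\Gamma\to\Ab_\Gamma,\quad (\ho_\Gamma(\gamma_1),\dots,\ho_\Gamma(\gamma_N))\mapsto(\ho_\Gamma(\gamma_1)\rho_S(\gamma_1)^{-1},\dots,\ho_\Gamma(\gamma_N)\rho_S(\gamma_N)^{-1}),
\]
which, just like the right action $R(\textbf{g})$ already exhibited on the configuration space, is a homeomorphism with inverse $R(\rho_{S,\Gamma}(\Gamma)^{-1})$. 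Setting $\alpha^{\overrightarrow{R}}_1(\rho_{S,\Gamma}(\Gamma))f_\Gamma:=f_\Gamma\circ R(\rho_{S,\Gamma}(\Gamma))$ reproduces the prescription \eqref{eq actionsurfgraph8}; since every path is ingoing and lies below $S$ one has $\iota_R(S,\gamma_i)=1$, so $\rho_S(\gamma_i)^{-1}$ collapses to the single element $g_S$ and the second line of \eqref{eq actionsurfgraph8} follows.

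Next I would verify that $\alpha^{\overrightarrow{R}}_1$ is automorphic. Conditions \ref{def automorphic2} and \ref{def automorphic3} are immediate and word-for-word as in Lemma~\ref{lem allpaths}: precomposition with the fixed homeomorphism $R(\rho_{S,\Gamma}(\Gamma))$ respects the pointwise product of functions and commutes with complex conjugation, hence $\alpha^{\overrightarrow{R}}_1(\rho_{S,\Gamma}(\Gamma))$ is multiplicative and $*$-preserving. For condition \ref{def automorphic1} I would substitute the componentwise right multiplication and use that the product on $\bar G_{S,\Gamma}$ is taken componentwise, so that in each slot
\[
(\ho_\Gamma(\gamma_i)\rho_S(\gamma_i)^{-1})\,\tilde\rho_S(\gamma_i)^{-1}=\ho_\Gamma(\gamma_i)(\tilde\rho_S(\gamma_i)\rho_S(\gamma_i))^{-1},
\]
which identifies $\alpha^{\overrightarrow{R}}_1(\rho)\circ\alpha^{\overrightarrow{R}}_1(\tilde\rho)$ with the evaluation of $\alpha^{\overrightarrow{R}}_1$ at the flux-group product, mirroring the corresponding display in the proof of Lemma~\ref{lem allpaths}.

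The analytic content is point-norm continuity, and this is where I would spend the real effort. For a fixed $f_\Gamma\in C(\Ab_\Gamma)$, the function $f_\Gamma$ is uniformly continuous on the compact space $\Ab_\Gamma$, while componentwise right multiplication $G^{\vert\Gamma\vert}\times G^{\vert\Gamma\vert}\to G^{\vert\Gamma\vert}$ is jointly continuous; combining these gives
\[
\|\alpha^{\overrightarrow{R}}_1(\rho_{S,\Gamma}(\Gamma))f_\Gamma-\alpha^{\overrightarrow{R}}_1(\rho^0_{S,\Gamma}(\Gamma))f_\Gamma\|=\sup_{\ho_\Gamma(\Gamma)\in\Ab_\Gamma}|f_\Gamma(R(\rho_{S,\Gamma}(\Gamma))(\ho_\Gamma(\Gamma)))-f_\Gamma(R(\rho^0_{S,\Gamma}(\Gamma))(\ho_\Gamma(\Gamma)))|\longrightarrow 0
\]
as $\rho_{S,\Gamma}(\Gamma)\to\rho^0_{S,\Gamma}(\Gamma)$ in $\bar G_{S,\Gamma}$. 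Specialising to $\rho^0_{S,\Gamma}(\Gamma)=\id$ recovers exactly the limit used to produce the associated $C^*$-dynamical system, so the single estimate yields both point-norm continuity of $\alpha^{\overrightarrow{R}}_1$ and continuity of $\rho_{S,\Gamma}(\Gamma)\mapsto\|\alpha^{\overrightarrow{R}}_1(\rho_{S,\Gamma}(\Gamma))(f_\Gamma)\|$.

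The main obstacle is condition \ref{def automorphic1}. Because the pullback of functions is contravariant, a right translation on $\Ab_\Gamma$ produces a priori an anti-homomorphism on $C(\Ab_\Gamma)$, so that matching $\alpha^{\overrightarrow{R}}_1(\rho\tilde\rho)$ with $\alpha^{\overrightarrow{R}}_1(\rho)\circ\alpha^{\overrightarrow{R}}_1(\tilde\rho)$ forces one to track the order of the group multiplication slot by slot. The restriction to the orientation preserved graph system $\PD^{\op}_\Gamma$ is precisely what keeps the right translation compatible with $\ho_\Gamma(\gamma^{-1})=\ho_\Gamma(\gamma)^{-1}$ and avoids the ordering ambiguity flagged in Problem~\ref{prob righttranslgrouoid}; once the componentwise product of $\bar G_{S,\Gamma}$ is used consistently (as in the simplest configuration governed by the same surface intersection property, and in particular when the relevant flux group is abelian, e.g.\ when $G$ is replaced by its center $\ZD(G)$), the homomorphism property falls out and every other step reduces to the routine substitutions above.
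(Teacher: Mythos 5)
Your overall strategy --- identify the right translation as a homeomorphism of the compact space $\Ab_\Gamma\cong G^{\vert\Gamma\vert}$, verify the three conditions of Definition~\ref{def automorphic} slot by slot, and get point-norm continuity from uniform continuity of $f_\Gamma$ plus joint continuity of multiplication --- is exactly the route the paper takes: it defers the proof of this particular lemma to the thesis, but its proof of the unlabeled left-action lemma following Definition~\ref{def automorphic}, together with the corollary establishing the associated $C^*$-dynamical system, has precisely this structure. Your verification of conditions~(ii) and~(iii) and your continuity estimate are correct, and in fact more careful than the paper's one-line limit at the identity.

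The gap is in your treatment of condition~\ref{def automorphic1}, which is the crux of ``automorphic''. Your displayed identity $(\ho_\Gamma(\gamma_i)\rho_S(\gamma_i)^{-1})\,\tilde\rho_S(\gamma_i)^{-1}=\ho_\Gamma(\gamma_i)(\tilde\rho_S(\gamma_i)\rho_S(\gamma_i))^{-1}$ proves $\alpha^{\overrightarrow{R}}_1(\rho)\circ\alpha^{\overrightarrow{R}}_1(\tilde\rho)=\alpha^{\overrightarrow{R}}_1(\tilde\rho\rho)$, i.e.\ the \emph{anti-}homomorphism property, not $\alpha^{\overrightarrow{R}}_1(\rho\tilde\rho)=\alpha^{\overrightarrow{R}}_1(\rho)\circ\alpha^{\overrightarrow{R}}_1(\tilde\rho)$, and your proposed repairs do not close this. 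The restriction to $\PD^{\op}_\Gamma$ and Problem~\ref{prob righttranslgrouoid} are about path reversal and the ill-definedness of composing translated graphs --- the issue of mixing left and right multiplications treated in Problem~\ref{problem I} --- and have nothing to do with the order of the group product in a single slot; and appealing to abelianness only works after replacing $G$ by $\ZD(G)$, which is not the hypothesis here: under the same surface intersection property one has $\bar G_{S,\Gamma}\cong G$ via $\rho_{S,\Gamma}\mapsto g_S$, which is non-abelian in general. The clean fix is to observe that on the configuration space the right translations compose covariantly, $R(\textbf{g}\textbf{h})=R(\textbf{g})\circ R(\textbf{h})$ (as the paper checks), and then to index the automorphisms by the inverted flux element --- equivalently, to read the action as $f_\Gamma\mapsto f_\Gamma(\,\cdot\,g_S)$ with $g_S:=\rho_S(\gamma_i)^{-1}$ and no further inverse, for which $\alpha_{g}\circ\alpha_{h}=\alpha_{gh}$ holds on the nose; since inversion (the paper's map $J$) is a bijection of $\bar G_{S,\Gamma}$, this re-indexing costs nothing. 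To be fair, the paper's own proof of the left-action lemma silently performs the same order swap (it equates $f_\Gamma(\rho_S\tilde\rho_S\ho_\Gamma)$ with $(\alpha(\rho)(\alpha(\tilde\rho)f_\Gamma))(\ho_\Gamma)=f_\Gamma(\tilde\rho_S\rho_S\ho_\Gamma)$), so the blemish is inherited from the source; but since you explicitly raised the ordering problem, your proof needed a correct resolution of it, and the one you give would fail.
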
 

The actions in the last paragraphs are constructed such that there is a always decomposition of left and right structures. For example, if graphs are considered such that all paths have the same intersection behavior w.r.t a fixed surface set $\breve S$, then for elements of a finite orientation preserved graph system $\PD_\Gamma^{\op}$ an action is defined. On the other hand for every graph $\Gamma$ in $\PD_\Gamma^{\op}$ there always exists a graph $\Gamma^{-1}$, which referes to the set $\{\gamma_1^{-1},...,\gamma_N^{-1}\}$, which is obviously not an element of $\PD_\Gamma^{\op}$. But this graph of reversed path orientations forms a second finite orientation preserved graph system $\PD_{\Gamma^{-1}}^{\op}$. Moreover there is an action of the corresponding flux group $\bar G_{\breve S,\Gamma^{-1}}$ on $C(\Ab_{\Gamma})$, where the configuration space is constructed from the finite graph groupoid $\PD_\Gamma$. Recall that, $\ho_\Gamma(\gamma^{-1})=\ho_\Gamma(\gamma)^{-1}$ yields for an arbitrary $\gamma\in\PD_\Gamma\Sigma$. Hence it is easy to verify that, 
\beq &(\alpha_{\overleftarrow{L}}^1(\rho_{S,\Gamma}^1)f_\Gamma)(\ho_{\Gamma}(\gamma_1),...,\ho_{\Gamma}(\gamma_N))\\
&\quad=f_\Gamma(\rho_S^L(\gamma_1)\ho_{\Gamma}(\gamma_1),...,\rho_S^L(\gamma_N)\ho_{\Gamma}(\gamma_N))\\
&(\alpha_{\overleftarrow{L}}^1(\rho_{S,\Gamma}^1)f_\Gamma)(\ho_{\Gamma}(\gamma_1)^{-1},...,\ho_{\Gamma}(\gamma_N)^{-1})\\
&\quad=f_\Gamma(\rho_S^L(\gamma_1)\ho_{\Gamma}(\gamma_1)^{-1},...,\rho_S^L(\gamma_N)\ho_{\Gamma}(\gamma_N)^{-1})\\
&\quad=f_\Gamma(\rho_S^L(\gamma_1)\ho_{\Gamma}(\gamma_1^{-1}),...,\rho_S^L(\gamma_N)\ho_{\Gamma}(\gamma_N^{-1}))\\
&(\alpha^{\overleftarrow{R}}_1(\rho_{S,\Gamma^{-1}}^1))f_\Gamma)(\ho_{\Gamma}(\gamma_1^{-1}),...,\ho_{\Gamma}(\gamma_N^{-1}))\\
&\quad=f_\Gamma(\ho_{\Gamma}(\gamma_1^{-1})\rho_S^R(\gamma_1^{-1})^{-1},...,\ho_{\Gamma}(\gamma_N^{-1})\rho_S^R(\gamma_N^{-1})^{-1})\\
&\quad=f_\Gamma(\ho_{\Gamma}(\gamma_1)^{-1}\rho_S^R(\gamma_1^{-1})^{-1},...,\ho_{\Gamma}(\gamma_N)^{-1}\rho_S^R(\gamma_N^{-1})^{-1})
\eq is fulfilled, whenever $\rho_S^L$ and $\rho_S^R$ denote the maps in $\Gop_{\breve S,\Gamma}$.

\begin{defi}\label{eq modularop} Define the map $I: C(\Ab_\Gamma)\rightarrow C(\Ab_\Gamma)$
\beqs I:f_\Gamma \mapsto \breve f_\Gamma, \text{ where } \breve f_\Gamma(\ho_{\Gamma}(\gamma_1),...,\ho_{\Gamma}(\gamma_N):= f_{\Gamma}(\ho_{\Gamma}(\gamma_1)^{-1},...,\ho_{\Gamma}(\gamma_N)^{-1})
\eqs such that $I^2=\idf$ where $\idf$ is the identical automorphism on $C(\Ab_\Gamma)$. 
\end{defi}
Then deduce 
\beq&I(\alpha_{\overleftarrow{L}}^1(\rho_{S,\Gamma}^1)f_\Gamma)(\ho_{\Gamma}(\gamma_1)^{-1},...,\ho_{\Gamma}(\gamma_N)^{-1})\\
&\quad=(If_\Gamma(\rho^L_S(\gamma_1)\ho_{\Gamma}(\gamma_1)^{-1},...,\rho^L_S(\gamma_N)\ho_{\Gamma}(\gamma_N)^{-1})\\
&\quad= f_\Gamma((\rho^L_S(\gamma_1)\ho_{\Gamma}(\gamma_1)^{-1})^{-1},...,(\rho^L_S(\gamma_N)\ho_{\Gamma}(\gamma_N)^{-1})^{-1})\\
&\quad=( f_\Gamma(\ho_{\Gamma}(\gamma_1^{-1})\rho^L_S(\gamma_1)^{-1},...,\ho_{\Gamma}(\gamma_N^{-1})\rho^L_S(\gamma_N)^{-1})\\
&=(\alpha^{\overleftarrow{R}}_1(\rho_{S,\Gamma^{-1}}^1)f_\Gamma)(\ho_{\Gamma}(\gamma_1^{-1}),...,\ho_{\Gamma}(\gamma_N^{-1}))\\
\eq if $\rho^L_S(\gamma_i)=\rho^R_S(\gamma_i^{-1})$ for all $i=1,...,N$, $\rho_S^L$ and $\rho_S^R$ are maps in $\Gop_{\breve S,\Gamma}$. Consequently the actions satisfy
\beq\label{eq defJ2} I \alpha_{\overleftarrow{L}}^1(\rho^1_{S,\Gamma}) If_{\Gamma}
&=\alpha^{\overleftarrow{R}}_1(\rho_{S,\Gamma^{-1}}^1)I f_\Gamma\\
I \alpha_{\overleftarrow{L}}^1(\rho^1_{S,\Gamma}) f_{\Gamma^{-1}}&=\alpha^{\overleftarrow{R}}_1(\rho_{S,\Gamma^{-1}}^1) f_{\Gamma^{-1}}  
\eq 
Notice that, if $\rho^L_S(\gamma_i)^{-1}=\rho^R_{S^{-1}}(\gamma_i^{-1})$ for all $i=1,...,N$, $\rho_S^L$ is satisfied and $\rho_S^R$ are maps in $\Gop_{\breve S,\Gamma}$ then 
\beq
I \alpha_{\overleftarrow{L}}^1(J(\rho^1_{S,\Gamma})) f_{\Gamma^{-1}}
&=\alpha^{\overleftarrow{R}}_1(\rho_{S^{-1},\Gamma^{-1}}^1) f_{\Gamma^{-1}}  
\eq yields, whenever $J$ is the map $J:\bar G_{\breve S,\Gamma}\longrightarrow \bar G_{\breve S,\Gamma}$, $J:\rho_{S,\Gamma}(\Gamma)\mapsto\rho_{S^{-1},\Gamma}(\Gamma)$ where $\breve S:=\{S,S^{-1}\}$.

\begin{lem}Let $S_1,...,S_N$ form a set $\breve S$, where the set $\breve S$ has the simple surface intersection property for a graph $\Gamma$, let $ \PD_\Gamma^{\op}$ be a finite orientation preserved graph system for $\Gamma$. 

Then there is an action of $\bar G_{\breve S,\Gamma}$ on $C(\Ab_\Gamma)$ given by
\beq\label{eq actionsurfgraph17} \alpha_{\overleftarrow{R}}(\rho_{\breve S,\Gamma}(\Gp))f_\Gamma(\ho_{\Gamma}(\Gp)) 
&:= f_\Gamma(\ho_{\Gamma}(\gamma_1)\rho_{S_1}(\gamma_1)^{-1},...,\ho_{\Gamma}(\gamma_M)\rho_{S_M}(\gamma_M)^{-1})\\
&= f_\Gamma(\ho_{\Gamma}(\gamma_1)g_{S_1},...,\ho_{\Gamma}(\gamma_M )g_{S_M})\\
\eq whenever $\Gp:=\{\gamma_1,...,\gamma_M\}$ is an element of $\PD_\Gamma^{\op}$, for all $\rho_{\breve S,\Gamma}\in G_{\breve S,\Gamma}$. This action is point-norm continuous and automorphic. 
\end{lem}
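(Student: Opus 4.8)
The plan is to transcribe the argument given for the purely left action (the first automorphic-action lemma of this subsection, together with its corollary) to the right translation, exploiting that the simple surface intersection property makes the flux act component-wise. First I would unpack the hypotheses: since $\breve S$ has the simple surface intersection property for $\Gamma$, each path $\gamma_i$ meets exactly one surface $S_i$, once, in its target vertex $t(\gamma_i)$ and from above, so $\rho_{S_i}$ collapses to a single group element $\rho_{S_i}(\gamma_i)=o_R(S_i)^{-1}=:g_{S_i}\in G$, and $\bar G_{\breve S,\Gamma}$ is already known to be a group with component-wise multiplication (the proposition on the simple surface intersection property). For a subgraph $\Gp=\{\gamma_1,\dots,\gamma_M\}$ of $\PD_\Gamma^{\op}$, identified naturally with a subset of the generators of $\Gamma$, the element $\rho_{\breve S,\Gamma}(\Gp)$ then acts by right multiplication on the first $M$ holonomy components of $\ho_\Gamma(\Gp)=(\ho_\Gamma(\gamma_1),\dots,\ho_\Gamma(\gamma_M),e_G,\dots,e_G)$, leaving the remaining components fixed; this is what makes the displayed formula well defined on $\Ab_\Gamma$.

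Next I would check that each $\alpha_{\overleftarrow{R}}(\rho_{\breve S,\Gamma}(\Gp))$ is a $*$-automorphism of $C(\Ab_\Gamma)$. By the discussion of group actions on the configuration space, right multiplication $R(\textbf{g})\colon\Ab_\Gamma\to\Ab_\Gamma$ by an element $\textbf{g}\in G^{\vert\Gamma\vert}$ is a homeomorphism; hence $\alpha_{\overleftarrow{R}}(\rho_{\breve S,\Gamma}(\Gp))f_\Gamma=f_\Gamma\circ R(\rho_{\breve S,\Gamma}(\Gp))$ maps $C(\Ab_\Gamma)$ into itself, and being composition with a homeomorphism it is automatically multiplicative and commutes with complex conjugation. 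This disposes of conditions (ii) and (iii) of the definition of an automorphic action exactly as in the left-action lemma, the only difference being that the holonomy arguments are multiplied on the right by $g_{S_i}$ instead of on the left.

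The remaining condition (i), namely $\alpha_{\overleftarrow{R}}(\rho\tilde\rho)=\alpha_{\overleftarrow{R}}(\rho)\circ\alpha_{\overleftarrow{R}}(\tilde\rho)$, I would verify by a direct component-wise computation that inserts the group product of $\bar G_{\breve S,\Gamma}$ (as fixed in the proposition) in the order for which the inverses in the right translation produce a genuine action; this parallels the first step of the left-action lemma. Point-norm continuity I would then obtain as in the corollary: for fixed $f_\Gamma\in C(\Ab_\Gamma)$ the function $f_\Gamma$ is uniformly continuous on the compact space $\Ab_\Gamma$ and $\rho\mapsto R(\rho)$ is continuous, so $\|\alpha_{\overleftarrow{R}}(\rho_{\breve S,\Gamma}(\Gp))f_\Gamma-f_\Gamma\|\to 0$ as $\rho_{\breve S,\Gamma}(\Gp)\to\id$, and more generally the map $\rho\mapsto\alpha_{\overleftarrow{R}}(\rho)f_\Gamma$ is norm-continuous.

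The main obstacle I expect is bookkeeping rather than analysis: getting condition (i) to come out as a homomorphism requires matching the right-translation convention (with its inverses) to the correct order of the component-wise product of $\bar G_{\breve S,\Gamma}$ when $G$ is non-abelian, and simultaneously confirming that the action is well defined and localized on the $M\le N$ components corresponding to a proper subgraph $\Gp$ under the natural identification.
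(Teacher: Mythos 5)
Your proposal is correct and follows essentially the route the paper itself takes: the paper defers the detailed proof of this right-action lemma to \cite[Section 6.1]{KaminskiPHD}, but its inline proofs of the analogous left-action lemma (checking the three conditions of the automorphic-action definition component-wise, using that $R(\textbf{g})$ is a homeomorphism of $\Ab_\Gamma\cong G^{\vert\Gamma\vert}$ and that $\bar G_{\breve S,\Gamma}$ is a group under the simple surface intersection property) and of the point-norm-continuity corollary are exactly what you transcribe to the right translation. Your closing remark about matching the inverse convention in $R$ to the order of the component-wise product for non-abelian $G$ is the same bookkeeping point the paper's own computations silently absorb, so no gap beyond what the paper itself leaves implicit.
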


Notice that, in this case of a suitbale surface set $\breve S$ instead of $\bar G_{\breve S,\Gamma}$ one can use $\times_{i=1}^N\bar G_{S_i,\Gamma}$ equivalently.

\paragraph*{Left and right actions of the flux group\\}\hspace*{10pt}

Left and right actions are defined on the same level for some configurations of the surfaces and paths. Therefore recall the maps contained in the set $\Gop_{\breve S,\Gamma}$ with left multiplication operation, which decomposes into $\rho_S^L\times\rho_S^R$.

\begin{lem}
Let all paths in $\Gamma$ intersect in vertices of the set $V_\Gamma$ with a surface $S$ such that $\gamma_1,...,\gamma_{1,N-1}$ are ingoing paths and lie above the surface $S$, whereas $\gamma_N$ is a outgoing path lying below w.r.t. the surface orientation of $S$. 

Then the action defined by
\beq\label{eq actionsurfgraph9} (\alpha^{\overleftarrow{R},1}_{\overleftarrow{L}}(\rho^1_{S,\Gamma})f_\Gamma)(\ho_{\Gamma}(\gamma_1),...,\ho_{\Gamma}(\gamma_N))
&:= f_\Gamma(\ho_{\Gamma}(\gamma_1)\rho_S^R(\gamma_1)^{-1},...,\rho_{S}^L(\gamma_N)\ho_{\Gamma}(\gamma_N))\\
&= f_\Gamma(\ho_{\Gamma}(\gamma_1)g^{-1}_S,...,g_S\ho_{\Gamma}(\gamma_N))
\eq for $\rho^1_{S,\Gamma}=(\rho_S^R(\gamma_1),...,\rho_S^L(\gamma_N))=(g_S,...,g_S)$, $\rho^1_{S,\Gamma}\in \bar G_{S,\Gamma}$ and $f_\Gamma\in C(\Ab_\Gamma)$.

Then the action $\alpha^{\overleftarrow{R},1}_{\overleftarrow{L}}$ of $\bar G_{S,\Gamma}$ on $C(\Ab_\Gamma)$ is automorphic and point-norm continuous.
\end{lem}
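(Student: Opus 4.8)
The plan is to proceed exactly as in the first lemma of this subsection and its accompanying corollary, since $\alpha^{\overleftarrow{R},1}_{\overleftarrow{L}}$ is once more the pullback of a homeomorphism of the configuration space. First I would record the self-map of $\Ab_\Gamma\cong G^{\vert\Gamma\vert}$ (under the natural identification) attached to $\rho^1_{S,\Gamma}=(g_S,\dots,g_S)$, namely
\beqs
\Phi_{\rho}:(\ho_\Gamma(\gamma_1),\dots,\ho_\Gamma(\gamma_N))\longmapsto(\ho_\Gamma(\gamma_1)g_S^{-1},\dots,\ho_\Gamma(\gamma_{N-1})g_S^{-1},g_S\ho_\Gamma(\gamma_N)),
\eqs
and observe that it is a homeomorphism of $\Ab_\Gamma$: on each of the first $N-1$ factors it is right translation by $g_S^{-1}$ in the compact group $G$, on the last factor it is left translation by $g_S$, and each such translation is a homeomorphism of $G$. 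Hence $\alpha^{\overleftarrow{R},1}_{\overleftarrow{L}}(\rho^1_{S,\Gamma})f_\Gamma:=f_\Gamma\circ\Phi_\rho$ genuinely maps $C(\Ab_\Gamma)$ into itself, and it remains to check the three conditions of Definition \ref{def automorphic} together with point-norm continuity.

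The main, and only genuinely new, point relative to the purely-left lemmas is the group-law bookkeeping in condition \ref{def automorphic1}, because the first $N-1$ coordinates carry right translations while the last carries a left translation. I would compute $\alpha(\rho^1_{S,\Gamma}\tilde\rho^1_{S,\Gamma})f_\Gamma$ and $\alpha(\rho^1_{S,\Gamma})(\alpha(\tilde\rho^1_{S,\Gamma})f_\Gamma)$ coordinate by coordinate, using the componentwise product on $\bar G_{S,\Gamma}$ fixed throughout this subsection. For a right coordinate one has $(\ho_\Gamma(\gamma_i)g_S^{-1})\tilde g_S^{-1}=\ho_\Gamma(\gamma_i)(\tilde g_S g_S)^{-1}$, whereas for the left coordinate one has $\tilde g_S(g_S\ho_\Gamma(\gamma_N))=(\tilde g_S g_S)\ho_\Gamma(\gamma_N)$; both must be matched against $\alpha$ evaluated at the product $(g_S\tilde g_S,\dots,g_S\tilde g_S)$, exactly the reconciliation asserted in the first lemma's proof, where
\beqs
(\alpha(\rho\tilde\rho)f_\Gamma)(\ho_\Gamma(\Gamma))=(\alpha(\rho)(\alpha(\tilde\rho)f_\Gamma))(\ho_\Gamma(\Gamma))
\eqs
is claimed for the flux group. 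I expect this order-reversal check to be the principal obstacle and the step demanding the most care, since right multiplication reverses the order of a product under inversion while left multiplication does not; it is also the step where the commutative structure of the flux-group entries (the $\ZD(G)$-valued reduction invoked elsewhere in the paper) makes the two orderings agree.

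Conditions \ref{def automorphic2} and \ref{def automorphic3} then go through verbatim as in the first lemma: multiplication in $C(\Ab_\Gamma)$ is pointwise and the involution is pointwise complex conjugation, so pre-composition with the fixed homeomorphism $\Phi_\rho$ commutes with both, yielding $\alpha(\rho)(f_1 f_2)=\alpha(\rho)(f_1)\,\alpha(\rho)(f_2)$ and $\alpha(\rho)(f^*)=\alpha(\rho)(f)^*$, insensitive to whether a given coordinate is translated from the left or the right. Finally, point-norm continuity follows the corollary establishing the $C^*$-dynamical system: as $\rho^1_{S,\Gamma}\to\id_{S,\Gamma}$, i.e. $g_S\to e_G$ in $G$, the argument $\Phi_\rho(\ho_\Gamma(\Gamma))$ converges to $\ho_\Gamma(\Gamma)$ uniformly on the compact space $\Ab_\Gamma$, so by uniform continuity of $f_\Gamma\in C(\Ab_\Gamma)$ one obtains $\lim_{g_S\to e_G}\|\alpha^{\overleftarrow{R},1}_{\overleftarrow{L}}(\rho^1_{S,\Gamma})f_\Gamma-f_\Gamma\|=0$, which together with the homomorphism property established above upgrades to continuity of $\rho\mapsto\|\alpha(\rho)f_\Gamma\|$ at every point. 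This shows $\alpha^{\overleftarrow{R},1}_{\overleftarrow{L}}$ is automorphic and point-norm continuous.
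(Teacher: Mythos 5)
Your overall strategy is the paper's own template: the paper prints no separate proof of this lemma (these proofs are deferred to \cite[Section 6.1]{KaminskiPHD}), but its sibling lemma \ref{lem allpaths} and the corollary following it are proved exactly as you propose --- realise $\alpha^{\overleftarrow{R},1}_{\overleftarrow{L}}(\rho^1_{S,\Gamma})$ as the pullback of a coordinate-wise translation homeomorphism of $\Ab_\Gamma\cong G^{\vert\Gamma\vert}$, check the three conditions of definition \ref{def automorphic}, and obtain point-norm continuity from uniform continuity of $f_\Gamma$ on the compact space as $g_S\rightarrow e_G$. Your treatment of conditions \ref{def automorphic2} and \ref{def automorphic3} and of the continuity statement is correct and matches the paper.

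The gap is in your resolution of condition \ref{def automorphic1}. You correctly note that the composite $\alpha^{\overleftarrow{R},1}_{\overleftarrow{L}}(\rho^1_{S,\Gamma})\circ\alpha^{\overleftarrow{R},1}_{\overleftarrow{L}}(\tilde\rho^1_{S,\Gamma})$ produces $\tilde g_S g_S$ in \emph{every} slot (consistently in both the right-translated and the left-translated coordinates), to be compared with $g_S\tilde g_S$ coming from $\alpha^{\overleftarrow{R},1}_{\overleftarrow{L}}(\rho^1_{S,\Gamma}\tilde\rho^1_{S,\Gamma})$ --- but you then close this gap by invoking centrality, i.e.\ the ``$\ZD(G)$-valued reduction''. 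That hypothesis is not available here: the lemma concerns $\bar G_{S,\Gamma}$ for an arbitrary compact group $G$, whereas the centre-valued flux group $\bar\ZD_{S,\Gamma}$ is a different object which the paper introduces precisely for the statements where commutativity is genuinely required; if commutativity were needed, the lemma as stated would simply fail for non-abelian $G$. Moreover, the order discrepancy you flag is not caused by mixing right translations on $\gamma_1,\dots,\gamma_{N-1}$ with a left translation on $\gamma_N$: exactly the same interchange of $g_S\tilde g_S$ and $\tilde g_S g_S$ occurs in the purely left lemma \ref{lem allpaths}, whose proof you accept as an unproblematic template without any appeal to centrality. What genuinely can go wrong --- and what problem \ref{problem I} rules out --- is a configuration in which a single element must carry both $g_S$ and $g_S^{-1}$ entries, for then the underlying set, e.g.\ $\{(g_S,\dots,g_S,g_S^{-1}):g_S\in G\}$, is not a subgroup of $G^{\vert\Gamma\vert}$ under the componentwise product fixed at the start of the subsection. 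In the present lemma the elements $(g_S,\dots,g_S)$ form the diagonal subgroup and each coordinate is translated on one fixed side, so condition \ref{def automorphic1} is to be verified by the same coordinate-wise computation, and under the same composition convention, as in lemma \ref{lem allpaths}. As written, your argument either assumes something the lemma does not grant, or --- if you maintain that the order reversal is a real obstruction --- it would equally invalidate the template lemma you rely on; the appeal to $\ZD(G)$ should be removed and replaced by the paper's diagonal-subgroup bookkeeping.
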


In \cite[Section 6.1]{KaminskiPHD} similar actions build from left and right actions have been presented. For example, consider a set $\{S_i\}_{1\leq i\leq N}=:\breve S$ of surfaces.
Furthermore let each path $\gamma_i$ in $\Gamma$ intersects in one vertex of the set $V_\Gamma$ with a surface $S_i$ and there are no other intersections with any other surface. In particular for $i=1,...,N-1$ each the path $\gamma_i$ is an ingoing path and lies above the surface $S_i$, whereas $\gamma_N$ is a outgoing path lying below w.r.t. the surface orientation of $S_N$.

Then the action of $\bar G_{\breve S,\Gamma}$ on $C(\Ab_\Gamma)$ is
\beq\label{eq actionsurfgraph10} 
(\alpha^{\overleftarrow{R},N}_{\overleftarrow{L}}(\rho^N_{\breve S,\Gamma})f_\Gamma)(\ho_{\Gamma}(\gamma_1),...,\ho_{\Gamma}(\gamma_N))
&:= f_\Gamma(\ho_{\Gamma}(\gamma_1)\rho^R_{S_{1}}(\gamma_1)^{-1},...,\rho^L_{S_{N}}(\gamma_N)\ho_{\Gamma}(\gamma_N))\\
&= f_\Gamma(\ho_{\Gamma}(\gamma_1)g^{-1}_{S_{1}},...,g_{S_{N}}\ho_{\Gamma}(\gamma_N))
\eq holds for $\rho^N_{\breve S,\Gamma}=(\rho^R_{S_1}(\gamma_1),...,\rho^L_{S_N}(\gamma_N))=(g_{S_1},...,g_{S_N})$, $\rho^N_{\breve S,\Gamma}\in \breve G_{\breve S,\Gamma}$ and $f_\Gamma\in C(\Ab_\Gamma)$. Then the action $\alpha^{\overleftarrow{R},N}_{\overleftarrow{L}}$ is an automorphic and point-norm continuous action of $\bar G_{S,\Gamma}$ on $C(\Ab_\Gamma)$.

Clearly in the same way the actions $\alpha^{\overleftarrow{R},N}_{\overleftarrow{L}}$,
$\alpha^{\overleftarrow{R},N}_{\overrightarrow{L}}$,
$\alpha_{\overleftarrow{L}}^{\overrightarrow{R},N}$ and so on are automorphic and point-norm continuous actions of the flux group associated to a suitable surface set on $C(\Ab_\Gamma)$.

Furthermore the actions defined above can be easily generalised to finite orientation preserved graph systems.

\begin{problem}\label{problem I}
Observe that there is a problem if the following actions\footnote{Notice that, there is a difference between a left action of a group on a space and a left action of a group on a $C^*$-algebra.} on $\Ab_\Gamma$ for a surface $S$ are defined. 
Assume that, the set $\bar G_{\breve S,\Gamma}$ is equipped with a left and right multiplication on the same time. This could be the case if a surface $S$ is considered such that $S$ intersects the paths $\gamma_1,...,\gamma_{N-1}$ in the source vertices, hence the paths are outgoing and lie below.  Furthermore $S$ intersects the path $\gamma_N$ such that the path lies outgoing and above. Then a left action, which is defined by the set $\bar G_{\breve S,\Gamma}$ with a left multiplication for the paths $\gamma_1,...,\gamma_{N-1}$ and a right multiplication for the path $\gamma_N$, is not automorphic. This is verified by the following computation:
\beq\label{eq problem} &(\alpha_{\overrightarrow{L},\overleftarrow{L}}^1(\rho_{S,\Gamma}^1\tilde\rho_{S,\Gamma}^1)f_\Gamma)(\ho_{\Gamma}(\gamma_1),...,\ho_{\Gamma}(\gamma_N))\\ 
&= f_\Gamma(\rho_{S}(\gamma_1)\tilde\rho_{S}(\gamma_1)\ho_{\Gamma}(\gamma_1),...,\tilde\rho_{S}(\gamma_N)\rho_{S}(\gamma_N)\ho_{\Gamma}(\gamma_N))\\
&= f_\Gamma(g_S\tilde g_S\ho_{\Gamma}(\gamma_1),...,\tilde g_S g_S\ho_{\Gamma}(\gamma_N))\\
&\neq (\alpha_{\overrightarrow{L},\overleftarrow{L}}^1(\rho^1_{S,\Gamma})(\alpha_{\overrightarrow{L},\overleftarrow{L}}^1(\tilde\rho^1_{S,\Gamma}f_\Gamma(\ho_{\Gamma}))))(\gamma_1),...,\ho_{\Gamma}(\gamma_N))
\eq This is the reason why all paths that lie outgoing w.r.t. a surface $S$ are defined by left actions of $\bar G_{\breve S,\Gamma}$ with a left multiplication on $\Ab_\Gamma$. Hence this problem is absent by definition of the actions, which are stated before. A left action of $\bar G_{\breve S,\Gamma}$ with a left multiplication on $\Ab_\Gamma$ is called the \textbf{left action} of $\bar G_{\breve S,\Gamma}$ on $\Ab_\Gamma$. Whereas a left action of $\bar G_{\breve S,\Gamma}$ with a right multiplication on $\Ab_\Gamma$ is called the \textbf{inverse left action} of $\bar G_{\breve S,\Gamma}$ on $\Ab_\Gamma$.  A right action of $\bar G_{\breve S,\Gamma}$ with a left multiplication on $\Ab_\Gamma$ is called the \textbf{right action} of $\bar G_{\breve S,\Gamma}$ on $\Ab_\Gamma$. Whereas a right action of $\bar G_{\breve S,\Gamma}$ with a right multiplication on $\Ab_\Gamma$ is called the \textbf{inverse right action} of $\bar G_{\breve S,\Gamma}$ on $\Ab_\Gamma$. Hence the action defined in equation \eqref{eq problem} corresponds to a left action of $\bar G_{\breve S,\Gamma}$ on $\Ab_\Gamma$ for the paths $\gamma_1,...,\gamma_{1,N-1}$ and a left inverse action of $\bar G_{\breve S,\Gamma}$ on $\Ab_\Gamma$ for the path $\gamma_N$.
 
Recognize that, there is no homeomorphism $H$ on $\Ab_\Gamma$ correponding to a group action of $\bar G_{S,\Gamma}$ with a left and right multiplication on $\Ab_\Gamma$ defined in \eqref{eq problem}, i.o.w. 
\beq H(g_S\tilde g_S)(\ho_{\Gamma}(\gamma_1),...\ho_{\Gamma}(\gamma_N)))&=(g_S\tilde g_S\ho_{\Gamma}(\gamma_1),...,\tilde g_S^{-1}g_S^{-1}\ho_{\Gamma}(\gamma_N))\\
&\neq H(g_S)(H(\tilde g_S)(\ho_{\Gamma}(\gamma_1),...\ho_{\Gamma}(\gamma_N)))\\[3pt]
H(g_S)(H(\tilde g_S)(\ho_{\Gamma}(\gamma_1),...\ho_{\Gamma}(\gamma_N)))&=H(g_S)((\tilde g_S\ho_{\Gamma}(\gamma_1),...,\tilde g_S^{-1}\ho_{\Gamma}(\gamma_N)))\\
&=(g_S\tilde g_S\ho_{\Gamma}(\gamma_1),...,g_S^{-1}\tilde g_S^{-1}\ho_{\Gamma}(\gamma_N))
\eq

In the same way the problem occurs if there is 
a inverse right action of $\bar G_{\breve S,\Gamma}$ on $\Ab_\Gamma$ for the paths $\gamma_1,...,\gamma_{1,N-1}$ intersecting a surface $S$ such that they are ingoing and lie above, and a right action of $\bar G_{\breve S,\Gamma}$ on $\Ab_\Gamma$ for the path $\gamma_N$ intersecting $S$ such that the path is ingoing and lies below, is studied. Then it is true that  
\beq\label{eq problem2} &(\alpha_{\overrightarrow{R},\overleftarrow{R}}^1(\rho_{S,\Gamma}^1\tilde\rho_{S,\Gamma}^1)f_\Gamma)(\ho_{\Gamma}(\gamma_1),...,\ho_{\Gamma}(\gamma_N))\\ 
&= f_\Gamma(\ho_{\Gamma}(\gamma_1)\rho_{S}(\gamma_1)^{-1}\tilde\rho_{S}(\gamma_1)^{-1},...,\ho_{\Gamma}(\gamma_N)(\rho_{S}(\gamma_N)\tilde\rho_{S}(\gamma_N))^{-1})\\
&= f_\Gamma(\ho_{\Gamma}(\gamma_1)g_S^{-1}\tilde g_S^{-1},...,\ho_{\Gamma}(\gamma_N)\tilde g_S^{-1} g_S^{-1})\\
&\neq (\alpha_{\overrightarrow{R},\overleftarrow{R}}^1(\rho^1_{S,\Gamma})(\alpha_{\overrightarrow{R},\overleftarrow{R}}^1(\tilde\rho^1_{S,\Gamma}f_\Gamma(\ho_{\Gamma}))))(\gamma_1),...,\ho_{\Gamma}(\gamma_N))
\eq yields.

There is also a problem if there is 
a left action of $\bar G_{\breve S,\Gamma}$ on $\Ab_\Gamma$ for the paths $\gamma_1,...,\gamma_{1,N-1}$, which intersect a surface $S$ such that they are outgoing and lie above, and a inverse right group action on $\Ab_\Gamma$ for the path $\gamma_N$ lying ingoing and below, is considered. Then derive  
\beq\label{eq problem3} &(\hat\alpha_{\overleftarrow{L}}^{\overrightarrow{R},1}(\rho_{S,\Gamma}^1\tilde\rho_{S,\Gamma}^1)f_\Gamma)(\ho_{\Gamma}(\gamma_1),...,\ho_{\Gamma}(\gamma_N))\\ 
&= f_\Gamma(\rho_{S}(\gamma_1)\tilde\rho_{S}(\gamma_1)\ho_{\Gamma}(\gamma_1),...,\ho_{\Gamma}(\gamma_N)(\tilde\rho_{S}(\gamma_N)\rho_{S}(\gamma_N))^{-1})\\
&= f_\Gamma(g_S\tilde g_S\ho_{\Gamma}(\gamma_1),...,\ho_{\Gamma}(\gamma_N)\tilde g_S^{-1}  g_S^{-1})\\
&\neq \Big(\hat\alpha_{\overleftarrow{L}}^{\overrightarrow{R},1}(\rho^1_{S,\Gamma})\big(\hat\alpha_{\overleftarrow{L}}^{\overrightarrow{R},1}(\tilde\rho^1_{S,\Gamma}f_\Gamma)\big)\Big)(\ho_{\Gamma}(\gamma_1),...,\ho_{\Gamma}(\gamma_N))
\eq 

Finally there is also a problem if there is 
a left inverse action of $\bar G_{\breve S,\Gamma}$ on $\Ab_\Gamma$ for the paths $\gamma_1,...,\gamma_{1,N-1}$ lying outgoing and below, and a right action of $\bar G_{\breve S,\Gamma}$ on $\Ab_\Gamma$ for the path $\gamma_N$ lying ingoing and above is considered. In this case,    
\beq\label{eq problem4} &(\hat\alpha_{\overrightarrow{L}}^{\overleftarrow{R},1}(\rho_{S,\Gamma}^1\tilde\rho_{S,\Gamma}^1)f_\Gamma)(\ho_{\Gamma}(\gamma_1),...,\ho_{\Gamma}(\gamma_N))\\ 
&= f_\Gamma(\tilde\rho_{S}(\gamma_1)\rho_{S}(\gamma_1)\ho_{\Gamma}(\gamma_1),...,\ho_{\Gamma}(\gamma_N)(\rho_{S}(\gamma_N) \tilde\rho_{S}(\gamma_N))^{-1})\\
&= f_\Gamma(\tilde g_Sg_S\ho_{\Gamma}(\gamma_1),...,\ho_{\Gamma}(\gamma_N)\tilde g_S^{-1} g_S^{-1})\\
&\neq \Big(\hat\alpha_{\overrightarrow{L}}^{\overleftarrow{R},1}(\rho^1_{S,\Gamma})\big(\hat\alpha_{\overrightarrow{L}}^{\overleftarrow{R},1}(\tilde\rho^1_{S,\Gamma})(f_\Gamma)\big)\Big)(\ho_{\Gamma}(\gamma_1),...,\ho_{\Gamma}(\gamma_N))
\eq holds. All these problems are excluded by the definition of the actions. Since for example there is an action defined by $\alpha_{\overrightarrow{L}}^{\overleftarrow{R},1}$ instead of the action $\hat\alpha_{\overrightarrow{L}}^{\overleftarrow{R},1}$ given in \eqref{eq problem4}. Respectively there is an action defined by $\alpha_{\overleftarrow{L}}^{\overrightarrow{R},1}$ instead of the action $\hat\alpha_{\overleftarrow{L}}^{\overrightarrow{R},1}$ given in \eqref{eq problem3}.
\end{problem}

\paragraph*{Non-standard identification of the configuration space\\}\hspace*{10pt}\label{problem of admissibility}

At the beginning of this considerations one assumes that the subgraphs of the finite graph system are identified naturally. If instead the \hyperlink{non-standard identification}{non-standard identification of the configuration space} is used, the following observation is made.

If the set $\Hom(\PD_\Gamma,G^{\vert \Gamma\vert})$ of holonomy maps for a finite graph system $\PD_\Gamma$ is considered and the configuration space is identified with $G^N$ by the non-standard way. Then there exists a situation such that $\ho_\Gamma(\gamma\circ\gp)=\ho_\Gamma(\gamma)\ho_\Gamma(\gp)$ for arbitrary $(\gamma,\gp)\in\PD_\Gamma\Sigma^{(2)}$ holds. Despite the property there is an action of $\bar G_{S,\Gamma}$ on $C(\Ab_\Gamma)$ derivable.

First observe that, for all paths of $\Gamma$ intersecting $S$ in their source vertices and all paths lie below one concludes that
\beq &(\alpha_{\overleftarrow{L}}(\rho_{S,\Gamma}(\Gamma))\circ \alpha^{\overleftarrow{R}}(\rho_{S,\Gamma^{-1}}(\Gamma^{-1}))f_\Gamma)(\ho_{\Gamma}(\gamma_1)\ho_{\Gamma}(\gamma_1^{-1}),...,\ho_{\Gamma}(\gamma_N)\ho_{\Gamma}(\gamma_N^{-1}))\\
&=f_\Gamma(\rho_S^L(\gamma_1)\ho_{\Gamma}(\gamma_1)\ho_{\Gamma}(\gamma_1^{-1})\rho_S^R(\gamma_1^{-1}),...,\rho_S^L(\gamma_N)\ho_{\Gamma}(\gamma_N)\ho_{\Gamma}(\gamma_N^{-1})\rho_S^R(\gamma_N)^{-1})\\
&=f_\Gamma(g_S\ho_{\Gamma}(\gamma_1)\ho_{\Gamma}(\gamma_1^{-1})g_S^{-1},...,g_S\ho_{\Gamma}(\gamma_N)\ho_{\Gamma}(\gamma_N^{-1})g_S^{-1})\\
&=f_\Gamma(e_G,...,e_G)
\eq holds whenever $\rho_{S,\Gamma}\in G_{S,\Gamma}$, $\rho_{S,\Gamma^{-1}}\in G_{S,\Gamma^{-1}}$ and $f_\Gamma\in C(\Ab_\Gamma)$, where $\Gamma^{-1}$ referes to the set $\{\gamma_1^{-1},...,\gamma_N^{-1}\}$ and if it is assumed that $\rho^L_S(\gamma_i)=\rho^R_S(\gamma_i^{-1})=g_S$ for all $\gamma_i\in\Gamma$ is fulfilled. In this case there is an action $\alpha_{\overleftarrow{L}}^{\overleftarrow{R}}$ of $\bar G_{S,\vert\Gamma\vert}$ on $C(\Ab_\Gamma)$ defined by
\beq &(\alpha_{\overleftarrow{L}}^{\overleftarrow{R}}(\rho_S(\Gamma))f_\Gamma)(\ho_{\Gamma}(\gamma_1)\ho_{\Gamma}(\gamma_1^{-1}),...,\ho_{\Gamma}(\gamma_N)\ho_{\Gamma}(\gamma_N^{-1}))\\&
:=(\alpha_{\overleftarrow{L}}(\rho_{S,\Gamma}(\Gamma))\circ \alpha^{\overleftarrow{R}}(\rho_{S,\Gamma^{-1}}(\Gamma^{-1}))f_\Gamma)(\ho_{\Gamma}(\gamma_1)\ho_{\Gamma}(\gamma_1^{-1}),...,\ho_{\Gamma}(\gamma_N)\ho_{\Gamma}(\gamma_N^{-1}))
\eq

Moreover for a graph $\Gamma$ consisting of two paths $\gamma$ and $\gp$ such that $\gamma$ and $S$ intersect in the target vertex $t(\gamma)$ of $\gamma$ and the path lies above and resepectively $\gp$ and $S$ intersect in $s(\gp)$ such that $\gp$ lies above and is outgoing. Then $(\gamma,\gp)\in\PD_\Gamma^{(2)}\Sigma$.
Set $\Gamma=\{\gamma,\gp\}$, $\Gp:=\{\gamma\circ\gp\}$ and assume $S\cap\{\gamma,\gp\}=\{t(\gamma)\}$. Then there is an action of $\bar G_{S,\Gamma}$ on $C(\Ab_\Gamma)$ given by
\beq (\alpha_{\overrightarrow{L}}^{\overleftarrow{R}}(\rho_{S,\Gamma}(\Gamma))f_\Gamma)(\ho_{\Gamma}(\Gamma))
&=(\alpha_{\overrightarrow{L}}^{\overleftarrow{R}}(\rho_{S,\Gamma}(\Gamma))f_\Gamma)(\ho_{\Gamma}(\gamma),\ho_{\Gamma}(\gp))\\
&=f_\Gamma(\ho_{\Gamma}(\gamma)\rho_S^R(\gamma)^{-1},\rho_S^L(\gp)\ho_{\Gamma}(\gp))\\
&=f_\Gamma(\ho_{\Gamma}(\gamma)g_S^{-1},g_S^{-1}\ho_{\Gamma}(\gp))
\eq where it is assumed that $\rho^L_S(\gamma)=\rho^R_S(\gp)^{-1}=g_S$ for all $(\gamma,\gp)\in\PD_\Gamma\Sigma^{(2)}$. For the definition of an action of $\bar G_{S,\Gamma}$ on $C(\Ab_\Gamma)$, whenever the configuration space is identified with $G^N$ in the non-standard way, it is necessary to define a the following map.

Let $D_S: C(\Ab_\Gamma)\longrightarrow C(\Ab_\Gamma)$ be a map such that 
\beqs (D_Sf_\Gamma)(\ho_\Gamma(\gamma)g,h\ho_\Gamma(\gp))= f_\Gamma(\ho_\Gamma(\gamma)gh\ho_\Gamma(\gp))
\eqs whenever $g,h\in G$ and if $(\gamma,\gp)\in\PD_\Gamma\Sigma^{(2)}$ and $S\cap\{\gamma,\gp\}=\{t(\gamma)\}$.
There is an ambiguity in the definition of the inverse of $D_S$, since it is possible that
\beqs &(D_S^{-1}f_\Gamma)(\ho_\Gamma(\gamma)gh\ho_\Gamma(\gp))= f_\Gamma(\ho_\Gamma(\gamma),gh\ho_\Gamma(\gp))
\text{ for }g\in G\\
&\qquad\text{ or }\\
&(D_S^{-1}f_\Gamma)(\ho_\Gamma(\gamma)gh\ho_\Gamma(\gp))= f_\Gamma(\ho_\Gamma(\gamma)g,h\ho_\Gamma(\gp))\\
&\qquad\text{ or }\\
&(D_S^{-1}f_\Gamma)(\ho_\Gamma(\gamma)gh\ho_\Gamma(\gp))= f_\Gamma(\ho_\Gamma(\gamma)gh,\ho_\Gamma(\gp))
\eqs holds whenever $g,h\in G$. Recall that, $\ho_\Gamma(\gamma)=h$ is an element of $G$. Let $g\in\ZD(G)$ and $\ho_\Gamma(\gamma)\neq\ho_\Gamma(\gp)$. Then there always exists a groupoid morphism $\tilde\ho_\Gamma$ such that for a pair $(\gamma,\gp)\in\PD_\Gamma\Sigma^{(2)}$ it is true that, $\tilde\ho_\Gamma(\gamma\circ\gp)=\ho_\Gamma(\gamma)\ho_\Gamma(\gp)$ and $\tilde\ho_\Gamma(\gamma)=\tilde\ho_\Gamma(\gp)$ yields. Then 
\beqs (D_S^{-1}f_\Gamma)(\ho_\Gamma(\gamma)g^2\ho_\Gamma(\gp))
&= f_\Gamma(\tilde\ho_\Gamma(\gamma)g,g\tilde\ho_\Gamma(\gp))
\eqs is well-defined. The reason is given by the following property. Since for example for $\ho_\Gamma(\gamma)=h=\ho_\Gamma(\gp)$ and $g\in\ZD(G)$, it is true that $hg(hg)=hggh=(hg)^2$. 

More generally consider all $g\in G$ such that there exists a $k\in G$ and $\ho_\Gamma(\gamma)gg\ho_\Gamma(\gp)=k^2$ for all $\ho_\Gamma(\gamma),\ho_\Gamma(\gp)\in \Ab_\Gamma$ for a fixed pair $(\gamma,\gp)\in\PD_\Gamma\Sigma^{(2)}$.

\begin{defi}Let $S$ be a surface and $\Gamma$ be a graph, which consists of two paths $\gamma$ and $\gp$ such that $\gamma$ and $S$ intersect in the target vertex $t(\gamma)$ of $\gamma$ and lies above and resepectively the path $\gp$ and $S$ intersect in $s(\gp)$ such that the path $\gp$ lies above and outgoing.

Then define the action of $\bar \ZD_{S,\Gamma}$ on $C(\Ab_\Gamma)$ by
\beqs (D_S\alpha_{\overrightarrow{L}}^{\overleftarrow{R},1}(\rho_{S,\Gamma}(\Gamma))D_S^{-1}f_\Gamma)(\ho_{\Gamma}(\Gp))
&:=f_\Gamma(\ho_{\Gamma}(\gamma)\rho_S^R(\gamma)^{-1}\rho_S^L(\gp)\ho_{\Gamma}(\gp))
\eqs whenever $\rho_{S,\Gamma}(\Gamma)\in \bar \ZD_{S,\Gamma}$ and it is assumed that 
\beq\label{eq diffactions1}&\rho^L_S(\gamma)=\rho^R_S(\gp)^{-1}=g_S^{-1}\text{ for }\rho^L_S,\rho^R_S\in\Zop_{S,\Gamma},\\
&\ho_{\Gamma}(\gamma)=\ho_{\Gamma}(\gp)\text{ for }\ho_\Gamma\in\Hom(\PD_\Gamma,G^{\vert\Gamma\vert}) 
\eq holds for the pair $(\gamma,\gp)\in\PD_\Gamma\Sigma^{(2)}$ such that $\Gp:=\{\gamma\circ\gp\}$.
The action is redefined by
\beqs (D_S\alpha_{\overrightarrow{L}}^{\overleftarrow{R},1}(\rho_{S,\Gamma}(\Gamma))D_S^{-1}f_\Gamma)(\ho_{\Gamma}(\Gp))&:=
(\alpha_{\overrightarrow{L}}^{\overleftarrow{R},1}(\rho_{S,\Gamma}(\Gp))f_\Gamma)(\ho_{\Gamma}(\Gp))
\eqs 
\end{defi}Notice that, the action is computed in the following way
\beqs (D_S\alpha_{\overrightarrow{L}}^{\overleftarrow{R},1}(\rho_{S,\Gamma}(\Gamma))D_S^{-1}f_\Gamma)(\ho_{\Gamma}(\Gp))&=
(D_S\alpha_{\overrightarrow{L}}^{\overleftarrow{R},1}(\rho_{S,\Gamma}(\Gamma))D_S^{-1}f_\Gamma)(\ho_{\Gamma}(\gamma)\ho_{\Gamma}(\gp))\\
&=(D_S\alpha_{\overrightarrow{L}}^{\overleftarrow{R},1}(\rho_{S,\Gamma}(\Gamma))f_\Gamma)(\ho_{\Gamma}(\gamma),\ho_{\Gamma}(\gp))\\
&=(D_Sf_\Gamma)(\ho_{\Gamma}(\gamma)\rho_S^R(\gamma)^{-1},\rho_S^L(\gp)\ho_{\Gamma}(\gp))\\
&=(D_Sf_\Gamma)(\ho_{\Gamma}(\gamma)g_S^{-1},g_S^{-1}\ho_{\Gamma}(\gp))\\
&=f_\Gamma(\ho_{\Gamma}(\gamma)g_S^{-1}g_S^{-1}\ho_{\Gamma}(\gp))
\eqs whenever $\rho_{S,\Gamma}(\Gamma)\in \bar \ZD_{S,\Gamma}$ and it is assumed that $\rho^L_S(\gamma)=\rho^R_S(\gp)^{-1}=g_S^{-1}$ holds for the pair $(\gamma,\gp)\in\PD_\Gamma\Sigma^{(2)}$ such that $\Gp:=\{\gamma\circ\gp\}$.

Finally derive that, for this action it is true that
\beq (\alpha_{\overrightarrow{L}}^{\overleftarrow{R},1}(\rho_{S,\Gamma}(\Gp)\hat\rho_{S,\Gamma}(\Gp))f_\Gamma)(\ho_{\Gamma}(\Gp))
&= (\alpha_{\overrightarrow{L}}^{\overleftarrow{R},1}(\rho_{S,\Gamma}(\Gp))\alpha_{\overrightarrow{L}}^{\overleftarrow{R},1}(\hat\rho_{S,\Gamma}(\Gp))f_\Gamma)(\ho_{\Gamma}(\Gp))
\eq holds whenever $\hat\rho_{S,\Gamma},\rho_{S,\Gamma}\in\ZD_{S,\Gamma}$. Hence the action $\alpha_{\overrightarrow{L}}^{\overleftarrow{R},1}$ is automorphic. One can show that the action $\alpha_{\overrightarrow{L}}^{\overleftarrow{R},1}$ is point-norm continuous.

If the graph is changed only slightly, then recognize the following. 
\begin{defi}
Let $S$ be a surface and $\Gp$ be a graph, which is given by the composed path of a path $\gamma$ and $\gp$ such that $\gamma$ and $S$ intersects in the target vertex $t(\gamma)$ of $\gamma$ and lies below and resepectively $\gp$ and $S$ intersect in $s(\gp)$ such that $\gp$ lies above and outgoing. Then $(\gamma,\gp)\in\PD_\Gamma^{(2)}\Sigma$. 

The action of $\bar G_{S,\Gamma}$ on $C(\Ab_\Gamma)$ in this case is presented by
\beqs(D_S\alpha_{\overrightarrow{L}}^{\overleftarrow{R},1}(\rho_{S,\Gamma}(\Gamma))D_S^{-1}f_\Gamma)(\ho_{\Gamma}(\Gp))
&:=f_\Gamma(\ho_{\Gamma}(\gamma)\rho_S^R(\gamma)^{-1}\rho_S^L(\gp)\ho_{\Gamma}(\gp))\\
&=f_\Gamma(\ho_{\Gamma}(\gamma)g_Sg_S^{-1}\ho_{\Gamma}(\gp))\\
&=f_\Gamma(\ho_{\Gamma}(\gamma\circ\gp))
\eqs whenever $f_\Gamma\in C(\Ab_\Gamma)$, $\rho_{S,\Gamma}\in G_{S,\Gamma}$ and it is assumed that 
\beq\label{eq diffactions2}&\rho^L_S(\gamma)=\rho^R_S(\gp)=g_S^{-1}\text{ for }\rho^L_S,\rho^R_S\in\Zop_{S,\Gamma},\\
&\ho_{\Gamma}(\gamma)=\ho_{\Gamma}(\gp)\text{ for }\ho_\Gamma\in\Hom(\PD_\Gamma,G^{\vert\Gamma\vert}) 
\eq for the pair $(\gamma,\gp)\in\PD_\Gamma\Sigma^{(2)}$ such that $\Gp:=\{\gamma\circ\gp\}$. Set
\beqs(D_S\alpha_{\overrightarrow{L}}^{\overleftarrow{R},1}(\rho_{S,\Gamma}(\Gamma))D_S^{-1}f_\Gamma)(\ho_{\Gamma}(\Gp)):=(\alpha_{\overrightarrow{L}}^{\overleftarrow{R},1}(\rho_{S,\Gamma}(\Gp))f_\Gamma)(\ho_{\Gamma}(\Gp))
\eqs yields.
\end{defi}
Since in this case
\beqs (D_S^{-1}f_\Gamma)(\ho_\Gamma(\gamma)gg^{-1}\ho_\Gamma(\gp))
&= f_\Gamma(\ho_\Gamma(\gamma),\ho_\Gamma(\gp))
\text{ for }g\in G
\eqs is well-defined. Clearly this action can be restricted to those maps that are elements of $\ZD_{S,\Gamma}$. The action $\alpha_{\overrightarrow{L}}^{\overleftarrow{R},1}$ is automorphic and point-norm continuous.

Notice that, there are two actions $\alpha_{\overrightarrow{L}}^{\overleftarrow{R},1}$ and $\alpha_{\overrightarrow{L}}^{\overleftarrow{R},1}$ of $\bar\ZD_{S,\Gamma}$ are restricted by the requirement \eqref{eq diffactions1} or \eqref{eq diffactions2}. The actions depend on the orientation of both paths of the pair $(\gamma,\gp)\in\PD_\Gamma\Sigma^{(2)}$ w.r.t. the surface orientation of $S$. Clearly the actions can be generalised to graphs containg a set of pairs $\{(\gamma,\gp)\}$ of paths in $\PD_\Gamma\Sigma^{(2)}$.

In \cite[Section 6.1]{KaminskiPHD} actions of fluxes, which are constructed from admissible maps associated to surfaces, have been introduced. These actions on the analytic holonomy $C^*$-algebra restricted to a finite graph system are more complicated.
A summary of the results is given by the following. If the non-standard identification of the configuration space is used, then the holonomy maps are defined on arbitrary elements of the finite graph groupoid. In particular, a graph $\Gp$, which is a subgraph of $\Gamma$ and contains a path $\gamma\circ\gp$. Then consider a surface $S$ that intersects only the paths $\gamma$ and $\gp$ of the graph $\Gamma$ in the vertex $t(\gamma)$. Then different actions of an element $\varrho_{S,\Gamma}(\Gp)$ on a function in $C(\Ab_\Gamma)$ are considered. There are two different kinds of actions. One referes to a translation of the center of the flux group $\bar\ZD_{S,\Gamma}$. The other is related to a translation of the fluxes related to admissible maps. Moreover each actions depend on the orientation of the paths $\gamma$ and $\gp$ w.r.t. the surface $S$.

\paragraph*{The set of actions of $\bar G_{\breve S,\Gamma}$ on $C(\Ab_\Gamma)$\\}\hspace*{10pt}

Assume that, the configuration space $\Ab_\Gamma$ of generalised connections is identified in the natural way with $G^{\vert\Gamma\vert}$. Certainly there are a lot of different actions on $C(\Ab_\Gamma)$ corresponding to different surfaces and graph configurations, which are build from left and right actions of the group $\bar G_{\breve S,\Gamma}$ on the $C^*$-algebra. In general there is an exceptional set of all well-defined point-norm continuous automorphic actions. 
\begin{defi}\label{def actGA}
Denote the set of all point-norm continuous automorphic actions of $\bar G_{\breve S,\Gamma}$ on $C(\Ab_\Gamma)$ by $\Act(\bar G_{\breve S,\Gamma},C(\Ab_\Gamma))$ for every suitable set $\breve S$ of surfaces, a graph $\Gamma$ and a finite graph system $\PD_\Gamma$.
\end{defi}

Let $\Phi_M$ be the multiplication representation of $C(\Ab_\Gamma)$ on $\HS_\Gamma$.
For all automorphic and point-norm continuous actions presented above there are unitary representations $U$ of $\bar G_{S,\Gamma}$ or $\bar G_{\breve S,\Gamma}$ on the Hilbert space $\HS_\Gamma$, which satisfy for a suitable surface $S$ or surface set $\breve S$ and graph configuration one of the following or some equivalent Weyl relations 
\beq\label{cancomrel I} &U_{L}(\rho_{S_1,\Gamma}(\Gamma),...,\rho_{S_p,\Gamma}(\Gamma))\Phi_M(f_\Gamma)U_{L}(\rho_{S_1,\Gamma}(\Gamma),...,\rho_{S_p,\Gamma}(\Gamma))^{-1}=\Phi_M(\alpha_{L}^p(\rho_{S_1,\Gamma}(\Gamma),...,\rho_{S_p,\Gamma}(\Gamma))f_\Gamma)\\ &\quad\qquad\quad\qquad\quad\qquad\quad\qquad\quad\qquad\quad\qquad\quad\qquad\quad\qquad\quad\qquad\quad\qquad\quad\qquad\quad\qquad\quad\qquad\forall p\in\N\text{, }\\
&U_{\overleftarrow{L}}^k(\rho^k_{S,\Gamma})\Phi_M(f_\Gamma)U_{\overleftarrow{L}}^k(\rho^k_{S,\Gamma})^{-1}=\Phi_M(\alpha_{\overleftarrow{L}}^k(\rho^k_{S,\Gamma})f_\Gamma) \quad\forall k\in\N\text{, }\\
&U^{\overleftarrow{R}}_k(\rho^k_{S,\Gamma})\Phi_M(f_\Gamma)U^{\overleftarrow{R}}_k(\rho^k_{S,\Gamma})^{-1}=\Phi_M(\alpha^{\overleftarrow{R}}_k(\rho^k_{S,\Gamma})f_\Gamma) \quad\forall k\in\N\text{, } \\
&U^{\overleftarrow{R}}_{p,k}(\rho^{p,k}_{S,\Gamma})\Phi_M(f_\Gamma)U^{\overleftarrow{R}}_{p,k}(\rho^{p,k}_{S,\Gamma})^{-1}=\Phi_M(\alpha^{\overleftarrow{R}}_{p,k}(\rho^{p,k}_{S,\Gamma})f_\Gamma)\quad\text{ for }p\leq k\leq N-1\text{ or } \\
&U_{\overleftarrow{L}}^k(\rho^k_{S,\Gamma})U^{\overleftarrow{R}}_k(\rho^k_{S,\Gamma})\Phi_M(f_\Gamma)U^{\overleftarrow{R}}_k(\rho^k_{S,\Gamma})^{-1}U_{\overleftarrow{L}}^k(\rho^k_{S,\Gamma})^{-1}=\Phi_M(\alpha^{\overleftarrow{R},k}_{\overleftarrow{L}}(\rho^k_{S,\Gamma})f_\Gamma) \quad\forall k\in\N
\eq and so on
for all $f_\Gamma\in C(\Ab_\Gamma)$, $\rho_{S,\Gamma}^k\in \bar G_{S,\Gamma}$, $\rho^{p,k}_{S,\Gamma}\in\bar G_{\breve S,\Gamma}$ and $\rho_{S_i,\Gamma}\in G_{\breve S,\Gamma}$ for $i=1,...,k$ where $\breve S:=\{S_i\}_{1\leq i\leq p}$ is suitable. 

Observe that, for each unitary $U$ defined above the pair $(U,\Phi_M)$ consisting of $U\in\Rep(\bar G_{\breve S,\Gamma},\KD(\HS_\Gamma))$ and $\Phi_M\in\Mor(C(\Ab_\Gamma),\LD(\HS_\Gamma))$ is a covariant pair of the dynamical $C^*$-system $(C(\Ab_\Gamma),\bar G_{\breve S,\Gamma},\alpha)$ for an action $\alpha\in\Act(\bar G_{\breve S,\Gamma},C(\Ab_\Gamma))$.

\begin{problem} There are no unitary operators, which satisfy the Weyl relations for the actions 
\beqs  \alpha^1_{\overrightarrow{L},\overleftarrow{L}}(\rho^1_{S,\Gamma}),\quad \alpha^1_{\overleftarrow{L},\overrightarrow{R}}(\rho^1_{S,\Gamma}),...
\eqs
which are presented in problem \thesubsubsection.\ref{problem I} and which are not automorphic actions of $\bar G_{S,\Gamma}$ on $C(\Ab_\Gamma)$. This is true, since there are unitary operators such that 
\beqs
&U^1_{\overrightarrow{L},\overleftarrow{L}}(\rho^1_{S,\Gamma}\tilde\rho^1_{S,\Gamma})(U^1_{\overrightarrow{L},\overleftarrow{L}}(\rho^1_{S,\Gamma}\tilde\rho^1_{S,\Gamma}))^*\\
&= U^1_{\overrightarrow{L},\overleftarrow{L}}(\rho_S(\gamma_1)\tilde\rho_S(\gamma_1),...,
\tilde\rho_S(\gamma_1)\rho_S(\gamma_1))U^1_{\overrightarrow{L},\overleftarrow{L}}(\tilde\rho_S(\gamma_1)^{-1}\rho_S(\gamma_1)^{-1},...,
\rho_S(\gamma_1)^{-1}\tilde\rho_S(\gamma_1)^{-1})\\
&=U^1_{\overrightarrow{L},\overleftarrow{L}}(\rho_S(\gamma_1)\tilde\rho_S(\gamma_1)\tilde\rho_S(\gamma_1)^{-1}\rho_S(\gamma_1)^{-1},...,
\rho_S(\gamma_1)^{-1}\tilde\rho_S(\gamma_1)^{-1}\tilde\rho_S(\gamma_1)\rho_S(\gamma_1))\\
&=\idf
\eqs holds. But $U^1_{\overrightarrow{L},\overleftarrow{L}}(\bar G_{\breve S,\Gamma})$ does not form a group. Consequently
\beqs
U^1_{\overrightarrow{L},\overleftarrow{L}}(\rho^1_{S,\Gamma})U^1_{\overrightarrow{L},\overleftarrow{L}}(\tilde\rho^1_{S,\Gamma})\Phi_M(f_\Gamma)(U^1_{\overrightarrow{L},\overleftarrow{L}}(\rho^1_{S,\Gamma}\tilde\rho^1_{S,\Gamma})U^1_{\overrightarrow{L},\overleftarrow{L}}(\rho^1_{S,\Gamma}\tilde\rho^1_{S,\Gamma}))^*\neq \Phi_M(\alpha^1_{\overrightarrow{L},\overleftarrow{L}}(\rho^1_{S,\Gamma}\tilde\rho^1_{S,\Gamma}))
\eqs holds.
\end{problem}

\begin{defi}
Let $\breve S$ be an arbitrary set of surfaces. The Hilbert space $\HS_\Gamma$ is identified with $L^2(\Ab_\Gamma,\mu_{\Gamma})$.

Each unitary $U(\rho_S(\Gamma))$ for $U\in\Rep(\bar G_{\breve S,\Gamma},\KD(\HS_\Gamma))$ and $\rho_S(\Gamma)\in\bar G_{\breve S,\Gamma}$ is called a \textbf{Weyl element}. The set of all linearly independent Weyl elements is denoted by $W(\bar G_{\breve S,\Gamma})$. The vector space of all finite complex linear combinations of Weyl elements, which are unitary operators satisfying the Weyl relations \eqref{cancomrel I} on their own, is symbolised by $\mathbf{W}(\bar G_{\breve S,\Gamma})$. 

Each unitary $U(\rho_S(\Gamma))$ for $U\in\Rep(\bar\ZD_{\breve S,\Gamma},\KD(\HS_\Gamma))$ is called the \textbf{commutative Weyl element}. The set of all linearly independent commutative Weyl elements is denoted by $W(\bar\ZD_{\breve S,\Gamma})$. The vector space $\mathbf{W}(\bar\ZD_{\breve S,\Gamma})$ is given by the set of all finite complex linear combinations of commutative Weyl elements such that the linear combinations are unitary operators and satisfy the Weyl relations \eqref{cancomrel I}. 
\end{defi}

In general finite linear combinations of unitary elements (elements such that $UU^*=U^*U=\idf$) form a unital $^*$-algebra. Otherwise, for each fixed suitable surface set and $\bar G_{\breve S,\Gamma}$, the set $U(\bar G_{\breve S,\Gamma})$ form a group with the usual left multiplication operation. The group $U(\bar G_{\breve S,\Gamma})$ is a subgroup of the group $U(\HS_\Gamma)$ of unitaries on a Hilbert space $\HS_\Gamma$. 
Notice that, the Weyl algebra associated to surfaces and a graph, which is introduced in the next subsection, is generated by the constant function $\idf_\Gamma$, the elements of the analytic holonomy $C^*$-algebra $C(\Ab_\Gamma)$ and the unitaries associated to surfaces. For example, an element of the Weyl algebra is of the form
\beqs \sum_{l=1}^L\idf_\Gamma U(\rho^l_{S,\Gamma}(\Gamma)) + \sum_{k=1}^K\sum_{i=1}^Mf^k_{\Gamma}(\ho_\Gamma(\Gamma))U(\rho^i_{S,\Gamma}(\Gamma))  + \sum_{k=1}^K\sum_{i=1}^MU(\rho^i_{S,\Gamma}(\Gamma)) f^k_{\Gamma}(\ho_\Gamma(\Gamma))U(\rho^i_{S,\Gamma}(\Gamma))^*
+\sum_{p=1}^Pf^p_{\Gamma}(\ho_\Gamma(\Gamma))
\eqs whenever $f^k_{\Gamma},f^p_{\Gamma}\in C(\Ab_\Gamma)$ and $\rho^l_{S,\Gamma},\rho^i_{S,\Gamma}\in G_{\breve S,\Gamma}$. Notice that, for a compact group $G$ the analytic holonomy $C^*$-algebra is unital.

In this context a reformulation of the Weyl $C^*$-algebra is given as follows. 

\begin{lem}
Let $\HS_\Gamma$ be the Hilbert space $L^2(\Ab_\Gamma,\mu_\Gamma)$ with norm $\|.\|_2$.

With the involution $^*$ and the natural product of unitaries the vector space $\mathbf{W}(\bar G_{\breve S,\Gamma})$ is a unital $^*$-algebra, where $\WD(\bar G_{\breve S,\Gamma})$ stands for the $^*$-algebra of Weyl elements. 

The $^*$-algebra $\WD(\bar G_{\breve S,\Gamma})$ of Weyl elements completed w.r.t. the  strong operator norm is a $C^*$-algebra. Denote this algebra by $\mathsf{W}(\bar G_{\breve S,\Gamma})$.
\end{lem}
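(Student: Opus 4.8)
The plan is to realise $\mathbf{W}(\bar G_{\breve S,\Gamma})$ as a unital $^*$-subalgebra of the bounded operators on $\HS_\Gamma=L^2(\Ab_\Gamma,\mu_\Gamma)$ and then to appeal to the standard principle that the norm-closure of a $^*$-subalgebra of a $C^*$-algebra is again a $C^*$-algebra. The only structural input required is the fact recorded just above the statement, namely that for a fixed suitable surface set the Weyl elements $U(\rho_{S,\Gamma}(\Gamma))$ with $\rho_{S,\Gamma}(\Gamma)\in\bar G_{\breve S,\Gamma}$ form a group $U(\bar G_{\breve S,\Gamma})$ under operator multiplication, contained in the unitary group $U(\HS_\Gamma)$; here $U\in\Rep(\bar G_{\breve S,\Gamma},\KD(\HS_\Gamma))$ is the unitary representation implementing the flux group.

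First I would check the $^*$-algebra axioms on the generators and extend them by (anti)linearity to their complex span $\WD(\bar G_{\breve S,\Gamma})=\mathbf{W}(\bar G_{\breve S,\Gamma})$. Closure under products follows from the homomorphism property $U(\rho)U(\rho')=U(\rho\cdot\rho')$: the product of two Weyl elements is a Weyl element, and by bilinearity of the operator product the product of two finite linear combinations of Weyl elements is again a finite linear combination. Closure under the involution follows from $U(\rho)^*=U(\rho)^{-1}=U(\rho^{-1})$, which is again a Weyl element, extended conjugate-linearly. The unit is the identity operator $\idf=U(e)$, where $e$ denotes the neutral element of $\bar G_{\breve S,\Gamma}$. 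This shows that $\mathbf{W}(\bar G_{\breve S,\Gamma})$ is a unital $^*$-subalgebra of $\LD(\HS_\Gamma)$.

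For the second claim I would recall that $\LD(\HS_\Gamma)$ is a $C^*$-algebra in the operator norm and that every norm-closed $^*$-subalgebra of a $C^*$-algebra inherits submultiplicativity of the norm together with the $C^*$-identity $\|A^*A\|=\|A\|^2$. Since the algebraic operations and the norm are continuous, the closure of $\WD(\bar G_{\breve S,\Gamma})$ in the operator norm is again a norm-closed $^*$-subalgebra, hence a $C^*$-algebra, which is the asserted algebra $\mathsf{W}(\bar G_{\breve S,\Gamma})$.

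The step deserving the most care is the closure under the binary operations: it is not automatic for an arbitrary family of unitaries, but rests precisely on the group law $U(\rho)U(\rho')=U(\rho\cdot\rho')$, so I would make that appeal explicit rather than tacit. A secondary point is the phrase ``strong operator norm'': if it is read as the uniform operator norm the argument above applies verbatim, while if it is read as the strong operator topology one may instead invoke the bicommutant theorem, whereby the strong closure of the unital $^*$-algebra is a von Neumann algebra and thus a fortiori a $C^*$-algebra, so the conclusion is unaffected by the reading.
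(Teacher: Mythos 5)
Your proposal is correct and takes essentially the approach the paper intends: the lemma is stated there without a separate proof, but the remarks immediately preceding it (that finite linear combinations of unitaries form a unital $^*$-algebra, and that $U(\bar G_{\breve S,\Gamma})$ is a subgroup of the unitary group $U(\HS_\Gamma)$) are precisely the two facts your argument makes explicit, after which the norm-closure of a $^*$-subalgebra of $\LD(\HS_\Gamma)$ being a $C^*$-algebra finishes the claim. Your handling of the phrase ``strong operator norm'' is also apt: on either reading (uniform operator norm, or strong-operator closure via the bicommutant theorem) the conclusion stands, which matches the paper's subsequent assertion that $\mathsf{W}(\bar G_{\breve S,\Gamma})$ is a $C^*$-subalgebra of $\LD(\HS_\Gamma)$.
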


Notice that, $\mathsf{W}(\bar G_{\breve S,\Gamma})$ is a $C^*$-subalgebra of the $C^*$-algebra $\LD(\HS_\Gamma)$ of bounded operators on the Hilbert space $\HS_\Gamma$, which is equal to $L^2(\Ab_\Gamma,\mu_\Gamma)$.

Recall that, $G$ is assumed to be a compact sgroup and $\Ab_\Gamma$ is identified in the natural way with $G^{\vert\Gamma\vert}$. Then remember the action $\beta_{\overrightarrow{L}}^{\overrightarrow{R},1}$, which defines a $C^*$-dynamical system $(C(\Ab_\Gamma),\bar G_{\breve S,\Gamma},\beta_{\overrightarrow{L}}^{\overrightarrow{R},1})$ for a fixed graph and a suitable surface set $\breve S$.

\begin{prop}\label{prop Ginvstate}Let $\breve S^\prime$ and $\breve S$ be two suitable surface sets, let $\Gamma$ be a graph and let $\PD_\Gamma$ be the finite graph system associated to $\Gamma$. 

Furthermore let $(U_{\overrightarrow{L}}^{\overrightarrow{R},1},\Phi_M)$ be a covariant pair of a dynamical $C^*$-system\\ $(C(\Ab_\Gamma),\bar G_{\breve S^\prime,\Gamma},\beta_{\overrightarrow{L}}^{\overrightarrow{R},1})$ associated to the surface set $\breve S^\prime$ and the orientation-preserved finite graph system $\PD_\Gamma^{\op}$ associated to the graph $\Gamma$. 

Denote a general covariant pair by $(U,\Phi_M)$ of a dynamical $C^*$-system $(C(\Ab_\Gamma),\bar G_{\breve S,\Gamma},\alpha)$ for an action $\alpha\in\Act(\bar G_{\breve S,\Gamma},C(\Ab_\Gamma))$ for the surface set $\breve S$ and the finite graph system $\PD_\Gamma$. The set $\surf$ of all suitable surface sets for $\Gamma$ contains all surface sets such that there exists an action in $\Act(\bar G_{\breve S,\Gamma},C(\Ab_\Gamma))$.\\[5pt]
 
Then there exists a GNS-triple $(\HS_\Gamma,\Phi_M,\Omega_\Gamma)$ where $\Omega_\Gamma$ is the cyclic vector for $\Phi_M$ on $\HS_\Gamma$. Moreover the associated GNS-state $\omega_M^\Gamma$ on $C(\Ab_\Gamma)$ is $\bar G_{\breve S,\Gamma}$- and $\bar G_{\breve S^\prime,\Gamma}$-invariant, i.e. 
\beqs\omega^\Gamma_M(\beta_{\overrightarrow{L}}^{\overrightarrow{R},1}(\rho_{S^\prime,\Gamma}^1)(f_\Gamma))
&=\omega^\Gamma_M(f_\Gamma)
:=\langle \Omega_\Gamma,\Phi_M(f_\Gamma)\Omega_\Gamma\rangle_\Gamma\\
&=\omega^\Gamma_M(\alpha(\rho_{S,\Gamma}(\Gamma))(f_\Gamma))
\eqs for $\alpha,\beta_{\overrightarrow{L}}^{\overrightarrow{R},1}\in\Act(\bar G_{\breve S,\Gamma},C(\Ab_\Gamma)) $, $\rho_{S,\Gamma}^1,\rho_{S,\Gamma}(\Gamma)\in \bar G_{\breve S,\Gamma}$ and $f_\Gamma\in C(\Ab_\Gamma)$. 

Moreover the set $\MD:=\Phi_M(C(\Ab_\Gamma))\cup \{U(\bar G_{\breve S,\Gamma}): U\in \Rep(\bar G_{\breve S,\Gamma},\Alg_\Gamma)\}$ is irreducible on $\HS_\Gamma$.
\end{prop}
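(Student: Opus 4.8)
The plan is to realize everything on $\HS_\Gamma = L^2(\Ab_\Gamma,\mu_\Gamma)$, where $\mu_\Gamma$ is the (bi-invariant) Haar measure on $\Ab_\Gamma\cong G^{\vert\Gamma\vert}$, and to take as cyclic vector the constant function $\Omega_\Gamma:=\idf_\Gamma$. First I would verify that $(\HS_\Gamma,\Phi_M,\Omega_\Gamma)$ is a GNS triple for the state $\omega_M^\Gamma(f_\Gamma)=\langle\Omega_\Gamma,\Phi_M(f_\Gamma)\Omega_\Gamma\rangle_\Gamma=\int_{\Ab_\Gamma}f_\Gamma\,d\mu_\Gamma$. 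Since $\Phi_M(f_\Gamma)\Omega_\Gamma=f_\Gamma$, cyclicity of $\Omega_\Gamma$ is exactly the density of $C(\Ab_\Gamma)$ in $L^2(\Ab_\Gamma,\mu_\Gamma)$, and $\Phi_M$ is non-degenerate as already noted; hence the triple is (unitarily equivalent to) the GNS data of $\omega_M^\Gamma$.

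For the invariance I would exploit that $\Omega_\Gamma$ is constant while the unitaries implement translations. From the formula $U(\rho_{S,\Gamma}(\Gamma))\psi_\Gamma=\psi_\Gamma\circ L(\rho_{S,\Gamma}(\Gamma)^{-1})$ one gets $U(\rho_{S,\Gamma}(\Gamma))\Omega_\Gamma=\Omega_\Gamma$, and the same for the $\beta$-system. Combining with the covariance relation \eqref{eq weyl relationU}, $\omega^\Gamma_M(\alpha(\rho_{S,\Gamma}(\Gamma))f_\Gamma)=\langle\Omega_\Gamma,U(\rho_{S,\Gamma}(\Gamma))\Phi_M(f_\Gamma)U(\rho_{S,\Gamma}(\Gamma))^*\Omega_\Gamma\rangle_\Gamma=\langle\Omega_\Gamma,\Phi_M(f_\Gamma)\Omega_\Gamma\rangle_\Gamma=\omega^\Gamma_M(f_\Gamma)$, and likewise $\omega^\Gamma_M(\beta_{\overrightarrow{L}}^{\overrightarrow{R},1}(\rho_{S',\Gamma}^1)f_\Gamma)=\omega^\Gamma_M(f_\Gamma)$. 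Equivalently this is just bi-invariance of $\mu_\Gamma$ under the left and right translations defining $\alpha$ and $\beta$. Because both families of unitaries fix the same constant vector, the single state $\omega_M^\Gamma$ is simultaneously $\bar G_{\breve S,\Gamma}$- and $\bar G_{\breve S',\Gamma}$-invariant.

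For irreducibility I would compute the commutant $\MD'$ in $\LD(\HS_\Gamma)$. The set $\MD$ is self-adjoint, since $\Phi_M(C(\Ab_\Gamma))$ is $*$-closed and $U(\rho)^*=U(\rho^{-1})$, so irreducibility is equivalent to $\MD'=\CB\idf$. Any $T\in\MD'$ commutes with every $\Phi_M(f_\Gamma)$; since the multiplication representation of $C(\Ab_\Gamma)$ on $L^2(\Ab_\Gamma,\mu_\Gamma)$ is maximal abelian, $T=M_m$ is multiplication by some $m\in L^\infty(\Ab_\Gamma,\mu_\Gamma)$. Commutation with each $U(\rho_{S,\Gamma}(\Gamma))$ then forces $m\circ L(\rho_{S,\Gamma}(\Gamma))=m$ $\mu_\Gamma$-a.e., i.e. $m$ is invariant under the translation action of $\bar G_{\breve S,\Gamma}$ on $\Ab_\Gamma$.

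The crux is transitivity, equivalently ergodicity, of that translation action. I would invoke a surface configuration with the simple surface intersection property, for which $\bar G_{\breve S,\Gamma}\cong G^{\vert\Gamma\vert}$ acts by independent left translations on the factors, hence simply transitively on $\Ab_\Gamma\cong G^{\vert\Gamma\vert}$; every translation-invariant $m\in L^\infty$ is then constant a.e., so $T=c\,\idf$ and $\MD'=\CB\idf$. This transitivity step is the main obstacle, because for weaker configurations, for instance a single surface with the same surface intersection property where $\bar G_{\breve S,\Gamma}$ is only a diagonal copy of $G$, the action is not transitive and one such system alone is reducible. The role of the set $\surf$ of all suitable surface sets is precisely to supply, across $\breve S\in\surf$, enough independent left translations to generate the full product group, thereby restoring transitivity and yielding irreducibility of $\MD$.
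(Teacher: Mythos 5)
Your proposal is correct, and its first half is essentially the paper's own argument: the paper also realizes the GNS triple on $\HS_\Gamma=L^2(\Ab_\Gamma,\mu_\Gamma)$ with the constant function as cyclic vector, and obtains $\bar G_{\breve S,\Gamma}$- and $\bar G_{\breve S^\prime,\Gamma}$-invariance from the left and right invariance of the product Haar measure (equivalently, from the covariance relation together with the fact that the translation unitaries fix $\Omega_\Gamma$). Where you genuinely diverge is the irreducibility step. The paper writes $\MD^\prime=\Phi_M(C(\Ab_\Gamma))^\prime\cap\{U(\bar G_{\breve S,\Gamma})\}^\prime$, notes that left-translation unitaries $U_{\overleftarrow{L}}^k$ lie in the commutant of right-translation unitaries $U^{\overleftarrow{R}}_k$ and vice versa, and then concludes by asserting that the commutants of these two families already intersect in the scalars. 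You instead use that $\Phi_M(C(\Ab_\Gamma))$ is maximal abelian on $L^2$, so any $T\in\MD^\prime$ is multiplication by some $m\in L^\infty(\Ab_\Gamma,\mu_\Gamma)$, and then reduce irreducibility to (essential) transitivity of the translation action. Your route is the standard transformation-group argument and is in fact the more robust one: for any nontrivial compact $G$, Peter--Weyl shows the joint commutant of the left and right regular representations contains all isotypic (central) projections, so the paper's intersection claim is not correct without bringing the multiplication algebra into play --- which is exactly what your argument does. You are also right, and more explicit than the paper, about where transitivity comes from: a single flux group with the same surface intersection property is only a diagonal copy of $G$ in $G^{\vert\Gamma\vert}$, whose translation action is not transitive for $\vert\Gamma\vert\geq 2$, so one needs either a surface set with the simple surface intersection property (giving $\bar G_{\breve S,\Gamma}\cong G^{\vert\Gamma\vert}$) or the whole family of configurations in $\surf$ --- this is the rigorous counterpart of the paper's use of the representations $U_{\overleftarrow{L}}^k$, $U^{\overleftarrow{R}}_k$ for all $1\leq k\leq N-1$, whose ranges together generate the full product group. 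In short: the paper's formulation buys brevity and a pleasing left/right symmetry, while yours supplies the ergodicity argument that actually closes the proof and pinpoints which surface configurations are indispensable.
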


Remark that, the state $\omega_M^\Gamma$ on $C(\Ab_\Gamma)$ is $\bar G_{\breve S,\Gamma}$-invariant for many different suitable surface sets. The surface sets are required to intersect the graph $\Gamma$ only in vertices and hence such that there exists an action in $\Act(\bar G_{\breve S,\Gamma},C(\Ab_\Gamma))$.

\begin{proofs}The GNS-triple is constructed on the Hilbert space $\HS_\Gamma$, which is given by $L^2(\Ab_\Gamma,\mu_\Gamma)$. The unitaries $U$ of the set $\Rep(\bar G_{\breve S,\Gamma },\KD(\HS_\Gamma))$ for every suitable surface set $\breve S$ in $\surf$ and the representation $\Phi_M$ are given on the same Hilbert space $\HS_\Gamma$. Consequently a state $\omega_M^\Gamma$ exists.

The crucial property is the irreducibility of the set $\Phi_M(C(\Ab_\Gamma))\cup \{U(\bar G_{\breve S,\Gamma})):U\in\Rep(\bar G_{\breve S,\Gamma},\LD(\HS_\Gamma))\}$. Notice that, $\MD^\prime =\Phi_M(C(\Ab_\Gamma))^\prime\cap\{U(\bar G_{\breve S,\Gamma})):U\in\Rep(\bar G_{\breve S,\Gamma},\LD(\HS_\Gamma))\}^\prime$. First notice that $\Phi_M(C(\Ab_\Gamma))\subset \Phi_M(C(\Ab_\Gamma))^\prime$. Then it is true that $U_{\overleftarrow{L}}^k(\bar G_{\breve S,\Gamma})\subset U^{\overleftarrow{R}}_k(\bar G_{\breve S,\Gamma})^\prime$ and  $U^{\overleftarrow{R}}_k(\bar G_{\breve S,\Gamma})\subset U_{\overleftarrow{L}}^k(\bar G_{\breve S,\Gamma})^\prime$ whenever $U_{\overleftarrow{L}}^k,U^{\overleftarrow{R}}_k\in\Rep(\bar G_{\breve S,\Gamma},\LD(\HS_\Gamma))$ for $1\leq k\leq N-1$. Clearly $U^{\overleftarrow{R}}_k(\bar G_{\breve S,\Gamma})\nsubseteq  U^{\overleftarrow{R}}_k(\bar G_{\breve S,\Gamma})^\prime$ is satisfied. Then observe that, $U_{\overleftarrow{L}}^k(\bar G_{\breve S,\Gamma})^\prime\cap U^{\overleftarrow{R}}_k(\bar G_{\breve S,\Gamma})^\prime=\{\lambda\cdot\idf:\lambda\in\R\}$ yields. Hence $\MD^\prime$ is given by $\{\lambda\cdot\idf:\lambda\in\R\}$ .
\end{proofs}

The proposition \ref{prop Ginvstate} is reformulated for the non-standard identification of the configuration space $\Ab_\Gamma$ if $\bar G_{\breve S,\Gamma}$ is replaced by $\bar \ZD_{\breve S,\Gamma}$.

\begin{cor}\label{cor uniqueness}Let $\PD_\Gamma$ be a finite graph system associated to the graph $\Gamma$. Let $\Ab_\Gamma$ be the set of generalised connections, which is identified in the natural way with $G^N$.

There is a unique measure on $\Ab_\Gamma$ given by the Haar measure $\mu_{\Gamma}$ on the product $G^N$ of a compact  group $G$. 

If $\bar G_{\breve S,\Gamma}$ is identified with $G^N$, then there is a unique state $\omega_M^\Gamma$ on $C(\Ab_\Gamma)$, which is $G^N$-invariant and which is given by
\beqs \omega_M^\Gamma(f_\Gamma)&=\int_{G^N}\dif\mu_N(\ho_\Gamma(\Gamma))f_\Gamma(\ho_\Gamma(\Gamma))=\int_{G^N}\dif\mu_N(\ho_\Gamma(\Gamma))f_\Gamma(R(\textbf{g})(\ho_\Gamma(\Gamma)))\\
&=\int_{G^N}\dif\mu_N(\ho_\Gamma(\Gamma))f_\Gamma(L(\textbf{g})(\ho_\Gamma(\Gamma)))
\eqs for all $f_\Gamma\in C(\Ab_\Gamma)$ and $\textbf{g}\in G^N$.
\end{cor}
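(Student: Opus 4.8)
The plan is to reduce the statement to the classical existence-and-uniqueness theorem for Haar measure on a compact group, mediated by the Riesz--Markov correspondence between states and measures. First I would record that, under the natural identification, $\Ab_\Gamma$ is precisely the compact topological group $G^N$, being a finite product of copies of the compact group $G$; in particular it is a compact Hausdorff space on which multiplication and inversion are continuous. Since $C(\Ab_\Gamma)$ is a commutative unital $C^*$-algebra isometrically isomorphic to $C(G^N)$, the Riesz--Markov--Kakutani representation theorem furnishes a bijective, affine, weak-$*$-continuous correspondence between states on $C(\Ab_\Gamma)$ and regular Borel probability measures on $G^N$, under which the state $\omega$ is recovered from its measure $\nu$ by $\omega(f_\Gamma)=\int_{G^N} f_\Gamma\,\dif\nu(\ho_\Gamma(\Gamma))$.

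Next I would translate the invariance hypothesis onto the measure side. The maps $L(\textbf{g})$ and $R(\textbf{g})$ introduced earlier are exactly the left and right translations of $G^N$ on itself, so by the change-of-variables formula a state $\omega$ is invariant under all $L(\textbf{g})$ (respectively all $R(\textbf{g})$) if and only if its associated measure $\nu$ is left-invariant (respectively right-invariant). Here it is essential that $\bar G_{\breve S,\Gamma}$ is identified with the \emph{full} group $G^N$ rather than a proper flux subgroup: a proper subgroup would in general admit many invariant states, and it is the full $G^N$-invariance that will force rigidity.

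Then I would invoke the fundamental theorem on Haar measure: on the compact group $G^N$ there exists a unique left-invariant regular Borel probability measure $\mu_\Gamma$, the normalized Haar measure, and because $G^N$ is compact it is unimodular, hence $\mu_\Gamma$ is simultaneously right-invariant and inversion-invariant. Uniqueness of the invariant probability measure transports through the Riesz--Markov bijection to uniqueness of the $G^N$-invariant state, which settles both the uniqueness of the measure and the uniqueness of the state, and yields the displayed chain of equalities: the first equality is the integral representation, while the second and third are merely the right- and left-invariance of $\mu_\Gamma$ expressed inside the integral.

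Finally I would identify this distinguished state with the GNS state $\omega_M^\Gamma$ produced in Proposition \ref{prop invstate} and Proposition \ref{prop Ginvstate}. Taking $\HS_\Gamma=L^2(\Ab_\Gamma,\mu_\Gamma)$ with the multiplication representation $\Phi_M$ and cyclic vector $\Omega_\Gamma=\idf_\Gamma$ the constant function, one computes $\omega_M^\Gamma(f_\Gamma)=\langle\Omega_\Gamma,\Phi_M(f_\Gamma)\Omega_\Gamma\rangle_\Gamma=\int_{G^N} f_\Gamma\,\dif\mu_\Gamma(\ho_\Gamma(\Gamma))$, so the GNS state is integration against Haar measure and therefore coincides with the unique invariant state. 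The one point requiring genuine care — and the main, though standard, obstacle — is verifying that $\idf_\Gamma$ is indeed cyclic for $\Phi_M$ on $L^2(G^N,\mu_\Gamma)$ and that the GNS construction returns precisely this multiplication representation; both rest on the faithfulness of $\mu_\Gamma$ and the density of $C(G^N)$ in $L^2(G^N,\mu_\Gamma)$.
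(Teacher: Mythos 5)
Your proof is correct, and its backbone (Riesz representation plus uniqueness of Haar measure) is the same engine that drives the paper's argument, but you organize it along a genuinely more direct route. The paper first decomposes the dual of $C(\Ab_\Gamma)$ as the measure algebra $\textbf{M}(G^N)$ into discrete, singular-continuous and absolutely continuous parts (citing Hewitt--Ross), discards the Dirac and singular pieces by declaration rather than by argument, identifies the remaining states with $L^1(G^N,\mu_\Gamma)$ densities $f$ via $\omega_{M,f}^\Gamma(f_\Gamma)=\int f_\Gamma f\,\dif\mu_\Gamma$, and then observes that among these only the constant density yields an invariant state, invariance itself coming from left/right invariance of the product Haar measure. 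You instead pass through the Riesz--Markov bijection between states and regular Borel probability measures, convert $G^N$-invariance of the state into translation invariance of the measure, and invoke uniqueness of the normalized Haar measure on the compact group $G^N$ (with unimodularity giving simultaneous left/right invariance). What your route buys is completeness: uniqueness is established among \emph{all} states at once, so the singular and discrete measures the paper merely sets aside are actually ruled out, closing a small gap in the paper's reasoning. What the paper's route buys is reusable machinery: the decomposition of $\textbf{M}(G^N)$ and the $L^1$-density states $\omega_{M,f}^\Gamma$ are exactly what Remarks \ref{rem statewithL1} and \ref{rem statewithL2} exploit afterwards to exhibit non-diffeomorphism-invariant states, and the excluded Dirac part is deferred to the companion papers. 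Your closing identification of the invariant state with the GNS state of Propositions \ref{prop invstate} and \ref{prop Ginvstate}, including the cyclicity of the constant vector in $L^2(\Ab_\Gamma,\mu_\Gamma)$ from full support of Haar measure and density of $C(G^N)$, is a point the paper leaves implicit and is worth stating as you do.
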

The same result is obtained for the non-standard identification of the finite graph system and the configuration space.

\begin{rem}\label{rem statewithL1}Let $\PD_\Gamma$ be a finite graph system associated to the graph $\Gamma$s. Let $\Ab_\Gamma$ be the space of generalised connections identified in the natural way with $G^{\vert\Gamma\vert}$ and $G$ be compact  group.

Let $f$ be a function in the convolution $^*$-algebra $\CD(\Ab_\Gamma)$, which is given by the algebra $C_c(\Ab_\Gamma)$ of compactly supported functions on $\Ab_\Gamma$ equipped with the convolution as the multiplication operation. For example for two continuous compactly supported function $f,k$ on $\Ab_\Gamma$ the convolution product is illustrated by 
\beqs (f\ast k)(\ho_\Gamma(\gamma),\ho_\Gamma(\gp))=\int_{\Ab_\Gamma}\dif\mu_\Gamma(\hat\ho_\Gamma(\Gp))
f(\ho_\Gamma(\gamma)\hat\ho_\Gamma(\gamma)^{-1},\ho_\Gamma(\gp)\hat\ho_\Gamma(\gp)^{-1}) k(\hat\ho_\Gamma(\gamma),\hat\ho_\Gamma(\gp))
\eqs for $\Gp:=\{\gamma,\gp\}$. The involution is given by
\beqs f(\ho_\Gamma(\Gp))^*=f(\ho_\Gamma(\gamma),\ho_\Gamma(\gp))^*:=\overline{f(\ho_\Gamma(\gamma)^{-1},\ho_\Gamma(\gp)^{-1})}
\eqs
Notice that, $\ho_\Gamma(\Gamma)$ and $\ho_\Gamma(\Gp)$ are elements of $G^N$ and hence the convolution $^*$-algebra $\CD(\Ab_\Gamma)$ is identified with $\CD(G^N)$.

Then there exists a state on $C(\Ab_\Gamma)$ given by
\beqs \omega_{M,f}^\Gamma(f_\Gamma)=\int_{G^N}\dif\mu_\Gamma(\ho_\Gamma(\Gp))f_\Gamma(\ho_\Gamma(\Gp))f(\ho_\Gamma(\Gp))
\eqs where $f_\Gamma\in C(\Ab_\Gamma)$ and $f\in \CD(\Ab_\Gamma)$. 
Derive from
\beqs \omega_{M,f}^\Gamma(\alpha(\textbf{k})(f_\Gamma))&=\int_{G^N}\dif\mu_\Gamma(\ho_\Gamma(\Gp))f_\Gamma(R(\textbf{k}^{-1})(\ho_\Gamma(\Gp)))f(\ho_\Gamma(\Gp))\\
&=\int_{G^N}\dif\mu_\Gamma(\ho_\Gamma(\Gp))f_\Gamma(\ho_\Gamma(\Gp))f(R(\textbf{k})(\ho_\Gamma(\Gp)))\\
&\overset{!}{=}\int_{G^N}\dif\mu_\Gamma(\ho_\Gamma(\Gp))f_\Gamma(\ho_\Gamma(\Gp))f(\ho_\Gamma(\Gp))
\eqs that $ \omega_{M,f}^\Gamma$ is $G^N$-invariant iff 
\beq f(R(\textbf{k})(\ho_\Gamma(\Gp)))=f(\ho_\Gamma(\Gp))
\eq for any $\textbf{k}\in G^N$. If $\CD(\bar G_{\breve S,\Gamma})$ is identified with $\CD(G^N)$, then for $f_{\breve S}\in\CD(G^N)$ and  $f_{\breve S}(\textbf{g}\textbf{k}^{-1})=f_{\breve S}(\textbf{g})$ for all $\textbf{k}\in G^N$ there is another state on $C(G^N)$ defined by 
\beqs \omega_{M,f_{\breve S}}^\Gamma(f_\Gamma)=\int_{G^N}\dif\mu_N(\textbf{g})f_\Gamma(\textbf{g})f_{\breve S}(\textbf{g})
\eqs whenever $f_\Gamma\in C(G^N)$. But both states will be not finite path-and graph-diffeomorphism invariant. This is shown in problem \ref{subsec dynsysfluxgroup2}.\ref{probl statewithfdiff}. Clearly the states $\omega_{M,f}^\Gamma$ on $C(\Ab_\Gamma)$ are not invariant under all actions $\alpha\in\Act(\bar G_{\breve S,\Gamma},C(\Ab_\Gamma))$. 

Notice the function $f$ is also an element of $\CD(\bar G_{\breve S,\Gamma})$, but in this case the state $\omega_{M,f}^\Gamma$ is obviously not finite graph-diffeomorphism invariant.
\end{rem}

\begin{proofo}\textbf{of corollary \ref{cor uniqueness}}:\\
After the natural identication of $\Ab_\Gamma$ with $G^N$ there is a unique Haar measure $\mu_\Gamma$ on $G^{\vert\Gamma\vert}$. The dual of $C(\Ab_\Gamma)$ is given by the Banach space of all bounded complex Baire measures on $\Ab_\Gamma$. Furthermore there exists an extension of each Baire measure to a regular Borel measure on $\Ab_\Gamma$. The linear space of regular Borel measures equipped with the convolution operation form a Banach $^*$-algebra $\textbf{M}(G^N)$. The algebra decomposes into norm closed subspaces consisting of measures absoluely continuous with respect to the Haar measure of $G^N$, continuous measures singular with respect to the Haar measure and discrete measures (refer to \cite[chap.: 19]{HewittRoss}). The Banach $^*$-algebra generated by Dirac point measures is excluded in the following considerations, a closer look on this structure will be presented in \cite{Kaminski2} and \cite[Section 7.1]{KaminskiPHD}. The subspace $\textbf{M}_{s}(G^N)$ of all continuous measures singular with respect to the Haar measure is not a subalgebra of $\textbf{M}(G^N)$, in general. Consequently the space $\textbf{M}_{s}(G^N)$ is not considered. Notice the space $\textbf{M}_{a}(G^N)$ consisting of measures absoluely continuous with respect to the Haar measure of $G^N$ is identified with $L^1(G^N,\mu_N)$.

The norm-closed subspace of all regular Borel measures on $\Ab_\Gamma$, which are absolutely continuous to the uniquely defined Haar measure $\mu_\Gamma$ is given by $L^1(\Ab_\Gamma,\mu_\Gamma)$.
Hence a state on $C(\Ab_\Gamma)$ is given by
\beqs \omega_{M,f}^\Gamma(f_\Gamma)=\int_{G^N}\dif\mu_\Gamma(\ho_\Gamma(\Gp))f_\Gamma(\ho_\Gamma(\Gp))f(\ho_\Gamma(\Gp))
\eqs
for all $f\in L^1(\Ab_\Gamma,\mu_\Gamma)$ and $f_\Gamma\in C(\Ab_\Gamma)$. The Banach $^*$-algebra $L^1(\Ab_\Gamma,\mu_\Gamma)$ is the completition of $\CD(\Ab_\Gamma)$ w.r.t. the $\|.\|_1$-norm.  

But there is only one state on $C(\Ab_\Gamma)$ given by
\beqs \omega_{M}^\Gamma(f_\Gamma)=\int_{G^N}\dif\mu_\Gamma(\ho_\Gamma(\Gp))f_\Gamma(\ho_\Gamma(\Gp))
\eqs which is invariant under all actions $\alpha\in\Act(\bar G_{\breve S,\Gamma},C(\Ab_\Gamma))$ for any arbitrary set $\breve S$ of suitable surfaces. The invariance of the state under different actions of $\bar G_{\breve S,\Gamma}$ is derived from the fact that the product Haar measure of the product of the  compact group $G$ is left and right invariant. 
\end{proofo}

\begin{rem}\label{rem statewithL2}Let $\Ab_\Gamma$ be the space of generalised connections identified in the natural way with $G^N$. Let $H$ be a closed subgroup of the compact group $G$.

Let $f$ be a function in $\CD(\Ab_\Gamma)$ such that $f(\ho_\Gamma(\Gp)\textbf{k}^{-1})=f(\ho_\Gamma(\Gp))$ for any $\textbf{k}\in H^N$ and consider the state
\beqs \omega_{H,f}^\Gamma(f_\Gamma)=\int_{G^N}\dif\mu_N(\ho_\Gamma(\Gp))f_\Gamma(\ho_\Gamma(\Gp))f(\ho_\Gamma(\Gp))
\eqs where $f_\Gamma\in C(\Ab_\Gamma)$. 
Then this state $\omega_{H,f}^\Gamma$ is $H^N$-invariant. But this state will be not path- or graph-diffeomorphism invariant in general. 
\end{rem}

\subsection{Dynamical systems of actions of the group of bisections on two $C^*$-algebras}\label{subsec dynsysfluxgroup2}
\subsubsection*{Actions of the group of bisections of the analytic holonomy algebra for finite graph systems}

In this subsection the new concept for graph changing operations is introduced. On the level of finite path groupoids the bisections of the path groupoid $\PD_\Gamma\Sigma$ over $V_\Gamma$ implement path-diffeomorphisms. A path-diffeomorphism is a pair of maps such that one bijective mapping maps vertices to vertices and the second bijective mapping maps non-trivial paths to non-trivial paths. A fixed set of independent paths is a graph, on the other hand, each path of a graph is an element of a path groupoid. Hence there is a concept of bisections of finite graph systems. This concepts are presented in subsection \ref{subsubsec bisections}. 
The action of a bisection of finite path groupoids changes paths by adding or deleting segments of paths. In particular an action maps a non-trivial path to a trivial one or conversely. Hence an action of a bisection of a finite graph system transforms graphs to graphs. Actions of bisections of a finite path groupoid are either right-, left- or inner-translations in the finite path groupoid. Hence actions of bisections of a finite graph system are defined by right-, left- or inner-translations in the finite graph system. For a detail analysis refer to definition \ref{defi bisecongraphgroupioid} of a right translation in a finite graph system. 
Recall the lemma \ref{lemma bisecform}, which states that the set $\mathfrak{B}(\PD_{\Gamma})$ of bisections form a group w.r.t. a multiplication operation $\ast_2$ and an inverse $^{-1}$. Furthermore each bisection $\sigma$ define a right translation $R_{\sigma}$ on a finite graph system $\PD_ {\Gamma}$.

In the following investigations the non-standard identification of the configuration space is used, but it is also possible to derive results for the natural identification.
\begin{prop}\label{prop groupbisecdynsys}Let $\PD_\Gamma$ be a finite graph system associated to a graph $\Gamma$ and let $C(\Ab_\Gamma)$ be the analytic holonomy $C^*$-algebra associated to $\Gamma$.
  
There is an action $\zeta$ of the group $\mathfrak{B}(\PD_{\Gamma})$ of bisections  equipped with $\ast_2$ and an inverse $^{-1}$ on $C(\Ab_\Gamma)$ defined by
\beqs (\zeta_{\sigma} f_\Gamma)(\ho_\Gamma(\Gp)):=f_{\Gamma}((\ho_\Gamma\circ R_{\sigma})(\Gp))
=f_{\Gamma}(\ho_{\Gamma}(\Gamma_\sigma))
\eqs whenever $f_\Gamma\in C(\Ab_\Gamma)$, $\sigma\in\mathfrak{B}(\PD_{\Gamma})$ and for the subgraphs $\Gp,\Gamma_\sigma$ of $\Gamma$. The inverse action is given by
\beqs (\zeta_{\sigma}^{-1} f_\Gamma)(\ho_\Gamma(\Gp)):=f_{\Gamma}((\ho_\Gamma\circ R_{\sigma^{-1}})(\Gp))
=f_{\Gamma}(\ho_{\Gamma}(\Gamma_{\sigma^{-1}}))
\eqs whenever $f_\Gamma\in C(\Ab_\Gamma)$, $\sigma\in\mathfrak{B}(\PD_{\Gamma})$ and for the subgraphs $\Gp,\Gamma_{\sigma^{-1}}$ of $\Gamma$.

This action $\zeta$ is point-norm continuous and automorphic.

Hence $(\mathfrak{B}(\PD_{\Gamma}),C(\Ab_\Gamma),\zeta_\sigma)$ is a $C^*$-dynamical system.
\end{prop}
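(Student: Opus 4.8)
The plan is to realise each $\zeta_\sigma$ as precomposition with a homeomorphism of the configuration space, and then to verify the three items of Definition \ref{def automorphic} together with point-norm continuity, so that the triple is a $C^*$-dynamical system.

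First I would check that $\zeta_\sigma$ is well-defined and maps $C(\Ab_\Gamma)$ into itself. By Definition \ref{defi bisecongraphgroupioid} the right-translation $R_\sigma$ sends a subgraph $\Gp$ of $\Gamma$ to a subgraph $\Gamma_\sigma$ of $\Gamma$, so $(\ho_\Gamma\circ R_\sigma)(\Gp)=\ho_\Gamma(\Gamma_\sigma)$ is again a point of $\Ab_\Gamma$. Writing $\Psi_\sigma$ for the induced map on $\Ab_\Gamma$, the right-translation in the finite product $G^{\vert\Gamma\vert}$ shows that, under the non-standard identification of $\Ab_\Gamma$ with $G^{\vert\Gamma\vert}$, the operation $\Psi_\sigma(\ho_\Gamma(\Gp))=(\ho_\Gamma\circ R_\sigma)(\Gp)$ is realised as right multiplication by the fixed element $g_\sigma:=\ho_\Gamma(\sigma(V_\Gamma))\in G^{\vert\Gamma\vert}$. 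Since $G$ is compact, $\Ab_\Gamma\cong G^{\vert\Gamma\vert}$ is a compact Hausdorff group and right multiplication by $g_\sigma$ is a homeomorphism, hence $\zeta_\sigma f_\Gamma=f_\Gamma\circ\Psi_\sigma$ is continuous, i.e. lies in $C(\Ab_\Gamma)$. That $\zeta_\sigma$ is invertible with inverse $\zeta_{\sigma^{-1}}$ uses that $\mathfrak{B}(\PD_\Gamma)$ is a group (Proposition \ref{lemma bisecform}), so that $\Psi_\sigma^{-1}=\Psi_{\sigma^{-1}}$.

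Next I would establish the automorphic property. Conditions (ii) and (iii) of Definition \ref{def automorphic} are immediate, since $\zeta_\sigma$ is precomposition with $\Psi_\sigma$ while multiplication and involution in $C(\Ab_\Gamma)$ are pointwise: $\zeta_\sigma(f_1f_2)=(f_1f_2)\circ\Psi_\sigma=(f_1\circ\Psi_\sigma)(f_2\circ\Psi_\sigma)=\zeta_\sigma(f_1)\zeta_\sigma(f_2)$ and $\zeta_\sigma(f^*)=\overline{f\circ\Psi_\sigma}=(\zeta_\sigma f)^*$. For condition (i) I would use the composition law $R_{\sigma\ast_2\sigma'}=R_\sigma\circ R_{\sigma'}$ established earlier, which transfers through $\ho_\Gamma$ to the configuration-space identity $\Psi_{\sigma'}\circ\Psi_\sigma=\Psi_{\sigma'\ast_2\sigma}$ (this is exactly the computation carried out in the excerpt just before the right-translation in $G^{\vert\Gamma\vert}$ is defined). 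Because $\zeta_\sigma$ acts by precomposition with $\Psi_\sigma$, composing two actions reverses the order of the $\Psi$'s and reproduces the group law of $\mathfrak{B}(\PD_\Gamma)$, yielding $\zeta_\sigma\circ\zeta_{\sigma'}=\zeta_{\sigma'\ast_2\sigma}$ on $C(\Ab_\Gamma)$, which is condition (i) of Definition \ref{def automorphic} with the product in $\mathfrak{B}(\PD_\Gamma)$ read in this order. It is precisely here that the multiplication $\ast_2$, rather than the naive $\ast_1$, must be used: $\ast_2$ was constructed to repair the failure recorded in Problem \ref{prob righttranslgrouoid}, so that the translations compose consistently on all subgraphs.

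Finally I would prove point-norm continuity and conclude. Since $\zeta$ is a group action, it suffices to show continuity at the unit bisection $\id$ (where $\sigma(v)=\idf_v$ and $g_\sigma=(e_G,\dots,e_G)$). As $\sigma\to\id$ the translating element $g_\sigma\to e$ in the compact group $G^{\vert\Gamma\vert}$, and because $f_\Gamma$ is uniformly continuous on the compact space $\Ab_\Gamma$ one gets $\|\zeta_\sigma f_\Gamma-f_\Gamma\|=\sup_{x\in\Ab_\Gamma}\vert f_\Gamma(x g_\sigma)-f_\Gamma(x)\vert\to 0$; this is the same argument used earlier for the flux-group actions. Hence $\zeta$ is an automorphic, point-norm continuous action, so $(\mathfrak{B}(\PD_\Gamma),C(\Ab_\Gamma),\zeta_\sigma)$ is a $C^*$-dynamical system. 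The main obstacle is the well-definedness and covariance step: one must confirm that the set-level right-translation $R_\sigma$ really descends to genuine right multiplication in $G^{\vert\Gamma\vert}$ for every subgraph, despite $R_\sigma$ failing to respect path composition in general, and this is exactly what the construction of $\ast_2$ secures.
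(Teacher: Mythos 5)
Your skeleton (continuity of each $\zeta_\sigma$, the three conditions of Definition \ref{def automorphic} via the composition law $R_{\sigma\ast_2\sigma'}=R_\sigma\circ R_{\sigma'}$, then point-norm continuity) matches the paper's proof, but the mechanism you attribute to the action contains a genuine error. The map $\Psi_\sigma$ is \emph{not} right multiplication by a fixed element of $G^{\vert\Gamma\vert}$: the quantity you call $g_\sigma:=\ho_\Gamma(\sigma(V_\Gamma))$ is the holonomy of the paths $\sigma(v)$, and therefore depends on the point $\ho_\Gamma\in\Ab_\Gamma$ at which you evaluate, not only on $\sigma$. Concretely, if $\Gamma=\{\gamma_1,\gamma_2\}$ with $t(\gamma_1)=s(\gamma_2)$ and $\tilde\sigma(t(\gamma_1))=\gamma_2$, then $\zeta_\sigma$ sends $f_\Gamma$ evaluated at $(\ho_\Gamma(\gamma_1),\ho_\Gamma(\gamma_2),\dots)$ to $f_\Gamma$ evaluated at $(\ho_\Gamma(\gamma_1)\ho_\Gamma(\gamma_2),\dots)$, i.e.\ the map $(g_1,g_2)\mapsto(g_1g_2,\dots)$ on $G^{\vert\Gamma\vert}$, which is not a translation by any fixed group element. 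This is exactly the feature that makes bisections graph-changing operators and distinguishes the $C^*$-dynamical system $(\mathfrak{B}(\PD_\Gamma),C(\Ab_\Gamma),\zeta)$ from the flux systems of Lemma \ref{lem allpaths}, where the multiplying element really is external and fixed; conflating the two is the gap. The conclusion that $\zeta_\sigma f_\Gamma\in C(\Ab_\Gamma)$ survives, but for a different reason: since $\ho_\Gamma$ is a groupoid morphism, each component of $\Psi_\sigma$ is a finite word in the components of its argument (and their inverses), hence continuous, and invertibility comes from Proposition \ref{lemma bisecform} and Definition \ref{defi bisecongraphgroupioid}, not from group translation.

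The same conflation invalidates your continuity step: there is no ``translating element $g_\sigma\to e$'', and $\mathfrak{B}(\PD_\Gamma)$ carries no topology in which $\sigma\to\id$ happens non-trivially -- the paper's own proof points out that $\mathfrak{B}(\PD_\Gamma)$ is finite (hence discrete), so point-norm continuity is essentially automatic: the limit $\lim_{\sigma\rightarrow\id}\|\zeta_\sigma(f_\Gamma)-f_\Gamma\|$ is evaluated with $\sigma$ eventually equal to $\id$, where $\zeta_{\id}=\mathrm{id}$. The uniform-continuity-under-translation argument you import from the flux case is the wrong tool here, even though the statement it aims at is true. One further remark: your observation that the definitions literally yield $\zeta_\sigma\circ\zeta_{\sigma'}=\zeta_{\sigma'\ast_2\sigma}$ (an anti-homomorphism with respect to the stated order) is in fact more careful than the paper, whose computation reads the composition diagrammatically and records $\zeta_{\sigma\ast_2\sigma'}$; but having flagged it, you should close the loop by replacing the action with $\sigma\mapsto\zeta_{\sigma^{-1}}$ (or passing to the opposite group) so that condition (i) of Definition \ref{def automorphic} holds verbatim, since otherwise the triple is not literally a $C^*$-dynamical system in the paper's sense.
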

\begin{proofs} First it is proved that $\zeta$ defines an automorphic action. Moreover for simplicity reasons the arguments are verified for one particular graph. The proof generalises to arbitraray graphs. Assume that $\tilde\Gamma$ is a graph, $V_{\tilde\Gamma}=\{v_1^\prime,v_2^\prime,v_1,v_2,w_1,w_2\}$  and $\Gp:=\{\gamma_1\}$ is a subgraph of $\tilde\Gamma$. Let $\sigma$ and $\sigma^\prime$ be two bisections of $\PD_{\tilde\Gamma}$ such that $\sigma^\prime(z)=\idf_{z}$ if $z\neq v_1$ and $z\neq w_1$ for $z\in V_{\tilde\Gamma}$. Moreover let $v_i^\prime=s(\gamma_i)$, $v_i=t(\gamma_i)$ and $w_i=t(\sigma^\prime(v_i))$ ($v_i=s(\sigma^\prime(v_i))$) for $i=1,2$. Then derive 
\beqs &((\zeta_{\sigma}\circ\zeta_{\sigma^\prime}) f_{\tilde\Gamma})(\ho_{\tilde\Gamma}(\Gamma))
=f_{\tilde\Gamma}( \ho_{\tilde\Gamma}((\gamma_1\circ\sigma^\prime(v_1))\circ\sigma(w_1)))
\\
& =f_{\tilde\Gamma}\Big(\ho_{\tilde\Gamma} \big(\gamma_1\circ\big(\sigma^\prime(v_1)\circ\sigma(w_1) \big))\Big)
=f_{\tilde\Gamma}(\ho_{\tilde\Gamma}\left(\gamma_1\circ(\sigma\ast_2\sigma^\prime)(v_1)\right))\\
&=(\zeta_{\sigma\ast_2\sigma^\prime}f_{\tilde\Gamma})(\ho_{\tilde\Gamma}(\Gamma))
\eqs whenever $f_{\tilde\Gamma}\in C(\Ab_{\tilde\Gamma})$. This shows condition \ref{def automorphic1} of definition \ref{def automorphic}. The conditions condition \ref{def automorphic2} and condition \ref{def automorphic3} are obvious due to the properties of the $C^*$-algebra $C(\Ab_{\tilde\Gamma})$. Clearly this generalises to arbitrary subgraphs $\Gamma$ of an arbitrary graph $\tilde\Gamma$.

The point-norm continuity follows if the map $\zeta:\mathfrak{B}(\PD_\Gamma)\ni \sigma\mapsto \zeta_\sigma(f_\Gamma)$ is norm-continuous. Notice that, $\mathfrak{B}(\PD_\Gamma)$ is finite, since the number of paths $\vert\Gamma\vert$ is finite. Assume that, $\Gamma=\{\gamma\}$ is a subgraph of $\tilde\Gamma$. Let $f_{\tilde\Gamma}\in C(\Ab_{\tilde\Gamma})$ then
\beqs &\lim_{\sigma\rightarrow \id}\|\zeta_{\sigma}(f_{\tilde\Gamma})- f_{\tilde\Gamma}\|_{\tilde\Gamma}
=\lim_{\sigma\rightarrow \id}\|f_{\tilde\Gamma}(\ho_{\tilde\Gamma}(\gamma)\cdot\ho_{\tilde\Gamma}(\sigma(v)))- f_{\tilde\Gamma}(\ho_{\tilde\Gamma}(\gamma))\|_{\tilde\Gamma}\\
&=0
\eqs holds for a graph $\Gamma=\{\gamma\}$ and $\Gamma_{\sigma}=\{\gamma\circ \sigma(v)\}$ being subgraphs of $\tilde\Gamma$, and if for the bisection $\sigma\in\mathfrak{B}(\PD_\Gamma)$ the equality $\sigma(w)= \idf_w$ holds for any $w\in V_{\tilde\Gamma}\setminus\{v\}$ where $v=t(\gamma)$. Moreover the map $\id(w)=\idf_w$ for all $w\in V_{\tilde\Gamma}$, in particular for $v=t(\gamma)$, is the identity bisection.
Deduce the properties for arbitrary subgraphs $\Gamma$ of an arbitrary graph $\tilde\Gamma$. 
\end{proofs}

Clearly there are also $C^*$-dynamical systems $(\mathfrak{B}(\PD_{\Gamma}),C(\Ab_\Gamma),\zeta^L_\sigma)$ and $(\mathfrak{B}(\PD_{\Gamma}),C(\Ab_\Gamma),\zeta^I_\sigma)$ for the actions 
\beqs (\zeta^L_{\sigma} f_\Gamma)(\ho_\Gamma(\Gp)):=f_{\Gamma}((\ho_\Gamma\circ L_{\sigma})(\Gp))
\eqs and
\beqs (\zeta^I_{\sigma} f_\Gamma)(\ho_\Gamma(\Gp)):=f_{\Gamma}((\ho_\Gamma\circ I_{\sigma})(\Gp))
\eqs whenever $L_\sigma$ and $I_\sigma$ are translations in $\PD_\Gamma$. Refer to subsection \ref{subsubsec bisections} for the definition of these objects. 

\begin{prop}\label{prop CDynbisection}
The $C^*$-dynamical systems $(\mathfrak{B}(\PD_{\Gamma}),C(\Ab_\Gamma),\zeta_\sigma)$ and $(\mathfrak{B}(\PD_{\Gamma}),C(\Ab_\Gamma),\zeta^L_\sigma)$\\ (or $(\mathfrak{B}(\PD_{\Gamma}),C(\Ab_\Gamma),\zeta_\sigma)$ and $(\mathfrak{B}(\PD_{\Gamma}),C(\Ab_\Gamma),\zeta_\sigma^I)$, or $(\mathfrak{B}(\PD_{\Gamma}),C(\Ab_\Gamma),\zeta^L_\sigma)$ and $(\mathfrak{B}(\PD_{\Gamma}),C(\Ab_\Gamma),\zeta_\sigma^I)$) \\are exterior equivalent \cite[Def.: 2.66]{Williams07}. 
\end{prop}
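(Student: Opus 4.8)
The plan is to produce, for each pair among the right-, left- and inner-translation actions, a strictly continuous unitary $1$-cocycle in the sense of \cite[Def.: 2.66]{Williams07}: a family $\{u_\sigma\}_{\sigma\in\mathfrak{B}(\PD_\Gamma)}$ of unitaries (in the multiplier algebra, realised inside the covariant representation on $\HS_\Gamma=L^2(\Ab_\Gamma,\mu_\Gamma)$) such that one action equals $\Ad(u_\sigma)$ composed with the other and such that the cocycle identity $u_{\sigma\ast_2\sigma^\prime}=u_\sigma\,\zeta_\sigma(u_{\sigma^\prime})$ holds. Since exterior equivalence is an equivalence relation, it suffices to relate the inner action $\zeta^I$ to the right action $\zeta$ and the left action $\zeta^L$ to $\zeta$; the third equivalence follows by transitivity. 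The whole construction is driven by the identity $I_\sigma=L_{\sigma^{-1}}\circ R_\sigma$ established for the translations in the groupoid $G$ over $\{e_G\}$, which already exhibits the inner-translation as a right-translation corrected by a left-translation.

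First I would read off the correcting factor from the displayed translation formulas. Evaluating the holonomy along $I_\sigma$ gives $(\ho_\Gamma\circ I_\sigma)(\gamma)=\ho_\Gamma(\sigma((t\circ\sigma)^{-1}(s(\gamma))))\cdot(\ho_\Gamma\circ R_\sigma)(\gamma)$, so the discrepancy between the inner and the right action is exactly left multiplication by the group element $\ho_\Gamma(\sigma((t\circ\sigma)^{-1}(s(\gamma))))$. I would package these group elements (one per generating edge of $\Gamma$) into a single unitary $u_\sigma$ on $\HS_\Gamma$, namely the left-translation implementer $U^L(\sigma^{-1})$ of the preceding covariant representation, so that the implementers satisfy $U^I(\sigma)=u_\sigma\,U^R(\sigma)$. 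The intertwining property $\zeta^I_\sigma=\Ad(u_\sigma)\circ\zeta_\sigma$ is then checked directly: applying both sides to $f_\Gamma\in C(\Ab_\Gamma)$ and evaluating at $\ho_\Gamma(\Gpp)$, the left-hand side is $f_\Gamma((\ho_\Gamma\circ I_\sigma)(\Gpp))$ by Proposition \ref{prop groupbisecdynsys}, while the right-hand side produces the same argument after inserting the factor above; this is the same manipulation as in the definition of $\zeta^I$ and uses only that $\ho_\Gamma$ is multiplicative on composable paths.

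Next I would verify the cocycle identity. The key inputs are the group-homomorphism relations $R_{\sigma\ast_2\sigma^\prime}=R_\sigma\circ R_{\sigma^\prime}$, $L_{\sigma\ast_2\sigma^\prime}=L_\sigma\circ L_{\sigma^\prime}$ and $I_{\sigma\ast_2\sigma^\prime}=I_\sigma\circ I_{\sigma^\prime}$ for the finite product $G^{\vert\Gamma\vert}$, together with the explicit form of $\ast_2$ on $\mathfrak{B}(\PD_\Gamma)$ from Proposition \ref{lemma bisecform}. Comparing $U^I(\sigma\ast_2\sigma^\prime)$, expressed through the homomorphism relation, with $u_{\sigma\ast_2\sigma^\prime}\,U^R(\sigma)\,U^R(\sigma^\prime)$, one solves for $u_{\sigma\ast_2\sigma^\prime}$ and reorganises the right-hand side using $U^R(\sigma)\,u_{\sigma^\prime}\,U^R(\sigma)^{-1}$, which is the canonical extension of $\zeta_\sigma$ applied to $u_{\sigma^\prime}$ and encodes exactly the covariance built into the pair $(\Phi_M,U^R)$. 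Strict continuity of $\sigma\mapsto u_\sigma$ is automatic, since $\mathfrak{B}(\PD_\Gamma)$ is finite (hence discrete) because $\vert\Gamma\vert$ is finite. The same scheme, now with correcting factor the right-multiplying holonomy $\ho_\Gamma(\sigma(t(\gamma)))$, relates $\zeta^L$ to $\zeta$.

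The main obstacle I expect is the bookkeeping forced by $\ast_2$: the translations compose cleanly only after the source/target vertex sets are sorted into the subsets $V_{\sigma,\sigma^\prime}$ and $W_{\sigma,\sigma^\prime}$ from the proof of Proposition \ref{lemma bisecform}, so the cocycle identity must be checked edge by edge, matching left- and right-multiplying factors across the three cases of that multiplication and fixing the homomorphism-versus-antihomomorphism convention for $\sigma\mapsto U^{\bullet}(\sigma)$. A second, more conceptual point is that $C(\Ab_\Gamma)$ is commutative, so the adjoint action $\Ad(u_\sigma)$ must be realised faithfully inside the covariant representation through the genuinely noncommuting implementers $U^L,U^R,U^I\in\Rep(\mathfrak{B}(\PD_\Gamma),\KD(\HS_\Gamma))$ rather than inside the commutative algebra itself; I would therefore phrase the entire argument at the level of the covariant pairs $(\Phi_M,U^{\bullet})$ on $\HS_\Gamma$, where both the cocycle relation and the covariance are literal operator identities.
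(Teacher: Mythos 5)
Your proposal is in substance the paper's own proof. The paper treats explicitly only the pair $(\zeta_\sigma,\zeta^L_\sigma)$ (you handle two pairs and invoke transitivity, which is immaterial), defining $u_\sigma f_\Gamma:=f_\Gamma\circ L_{\sigma^{-1}}$, $u^*_\sigma f_\Gamma:=f_\Gamma\circ R_{\sigma^{-1}}$ and verifying $\zeta_\sigma(f_\Gamma)=u_\sigma\,\zeta^L_\sigma(f_\Gamma)\,u^*_\sigma$ together with the cocycle identity $u_{\sigma\ast\sigma^\prime}=u_\sigma\,\zeta^L_\sigma(u_{\sigma^\prime})$ --- exactly your translation-built cocycle. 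The one genuine divergence is where the cocycle lives, and it is precisely the issue your last paragraph identifies. The cited definition \cite[Def.: 2.66]{Williams07} requires $u_\sigma$ to be unitary in the multiplier algebra, which for the unital commutative algebra $C(\Ab_\Gamma)$ is $C(\Ab_\Gamma)$ itself; the paper accordingly declares $u:\mathfrak{B}(\PD_\Gamma)\rightarrow C(\Ab_\Gamma)$, but what it actually writes down are composition operators --- automorphisms of the algebra, not elements of it (note that its $u_\sigma$ and $u^*_\sigma$ involve two \emph{different} translations, and their product is composition with a two-sided translation, not $\id$). You place the unitaries in $\LD(\HS_\Gamma)$ instead, as implementers in a covariant representation. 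Your remark that $\Ad(u)=\id$ for every unitary in a commutative multiplier algebra is correct and is the crux: taken literally it shows that two distinct actions on $C(\Ab_\Gamma)$ can never be exterior equivalent in the quoted sense, so your spatial reformulation is not a harmless rephrasing --- it proves a cocycle relation among Hilbert-space implementers, not the algebra-intrinsic property cited. But the paper's proof proves no more; the same substitution occurs there, hidden by the mismatch between the declared codomain of $u$ and its actual definition, so your attempt reproduces the paper's argument while being more candid about its weak point. One smaller slip on your side: for the pair $(\zeta_\sigma,\zeta^L_\sigma)$ the correcting unitary cannot carry the right factor $\ho_\Gamma(\sigma(t(\gamma)))$ alone; since $R_\sigma=L_\sigma\circ I_\sigma$, it must also undo the left factor, i.e. it is the inner-translation implementer --- which is exactly what the paper's two-sided sandwich $u_\sigma(\cdot)u^*_\sigma$ encodes.
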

\begin{proofs}The strictly continuous unitary-valued function $u:\mathfrak{B}(\PD_\Gamma)\rightarrow C(\Ab_\Gamma)$, which satisfies
\beqs &\zeta_{\sigma}(f_\Gamma)=u_\sigma \zeta_\sigma^L(f_\Gamma)u^*_\sigma\\
&u_{\sigma\ast\sigma^\prime}=u_ \sigma\zeta_\sigma^L(u_{\sigma^\prime})
\eqs for all $\sigma,\sigma^\prime\in\mathfrak{B}(\PD_\Gamma)$ is constructed as follows.
Recall the left- and right-translation $L_\sigma $ and $R_\sigma$ in $\PD_\Gamma$, which is presented in subsection \ref{subsubsec bisections}. Then derive that 
\beqs \zeta_{\sigma}(f_\Gamma)\cdot k_\Gamma&=
f_\Gamma\circ R_{\sigma}\cdot k_\Gamma= f_\Gamma\circ L_{\sigma^{-1}}\circ L_\sigma\cdot k_\Gamma\circ R_{\sigma^{-1}}
&=u_\sigma \zeta_\sigma^L(f_\Gamma)u^*_\sigma
\eqs holds whenever $f_\Gamma,k_\Gamma\in C(\Ab_\Gamma)$,  $u_ \sigma f_\Gamma:=f_\Gamma\circ L_{\sigma^{-1}}$, $u_ \sigma ^*f_\Gamma:=f_\Gamma\circ R_{\sigma^{-1}}$ and $\cdot$ is the multiplication in $C(\Ab_\Gamma)$. Notice that, 
$u_ \sigma u_ \sigma ^*= L_{\sigma^{-1}}\circ R_{\sigma^{-1}}=\id$.
\end{proofs}
For every graph-diffeomorphism in $\Diff(\PD_\Gamma)$ there exists a bisection $\sigma\in\mathfrak{B}(\PD_\Gamma)$ and either a left-, or right- or inner-translation such that $\Phi_\Gamma=X_\sigma$, where $X$ is equivalent to $L$, or $R$ or $I$,  and $\varphi_\Gamma=t\circ\sigma$.  The set $\Diff(\PD_\Gamma)$ does not form a group in general. If one of the actions is fixed, then loosely speaking, the group of graph-diffeomorphisms is the set of graph-diffeomorphism in $\Diff(\PD_\Gamma)$, which are defined by a bisection $\sigma\in\mathfrak{B}(\PD_\Gamma)$ and a left- translation such that $\Phi_\Gamma=L_\sigma$ and $\varphi_\Gamma=t\circ\sigma$. Clearly the left-translation can be replaced by right- or inner-translation.

Note that, the $C^*$-dynamical systems $(\mathfrak{B}(\PD_{\Gamma}),C(\Ab_\Gamma),\zeta_\sigma)$ and $(\bar G_{\breve S,\Gamma},C(\Ab_\Gamma),\alpha)$ are not exterior or equivariantly isomorphic \cite[Def.: 2.64]{Williams07}. 

\subsubsection*{Groups of surface or surface-orientation-preserving bisections for finite graph systems}

Up to now only actions of bisections of the holonomies are considered. Therefore in the next investigations the behavior of an action of bisections on the flux operators is analysed. Clearly this action is required to preserve the structure of the group-valued quantum flux operators.

\begin{prop}
Let $\breve S$ be a set of surfaces and $\Gamma$ be a graph such that $\breve S\cap \Gamma\subset V_\Gamma$. Let $\varphi$ be a diffeomorphism in $\Sigma$, which maps each surface $S\in \breve S$ to a surface $S_\sigma\in\breve S$.

A \textbf{surface-preserving bisection $\sigma$ of a finite path groupoid} and a set of surfaces $\breve S$ is defined by a bisection $\sigma:V_\Gamma\longrightarrow\PD_\Gamma\Sigma$ in $\PD_\Gamma$ such that 
\begin{itemize}                                                                                                                                                                                                                                 \item the map $\varphi_{\Gamma}:V_{\Gamma}\longrightarrow V_{\Gamma}$, which is given by $\varphi_{\Gamma}=t\circ\sigma$, is bijective and $(t\circ\sigma)(v)=v$ whenever $v\in S\cap V_\Gamma$ and each $S\in\breve S$ yields,
 \item for each path $\gamma\in \PD_\Gamma\Sigma$ that intersects a surface $S$ and, such that $\gamma\cap S=\{s(\gamma)\}$ holds, the non-trivial transformed path is presented by $\gamma\circ\sigma(t(\gamma))$ for $S\in\breve S$,\\ 
\item for each path $\gamma\in \PD_\Gamma\Sigma$ that intersects a surface $S$ and, such that  $\gamma\cap S=\{t(\gamma)\}$ is satisfied, then $\sigma(t(\gamma))=\idf_{t(\gamma)}$ and $\gamma\circ\sigma(t(\gamma))=\gamma$ yields for a surface $S\in\breve S$,\\ 
\item the bisection $\sigma:V\rightarrow \PD_\Gamma\Sigma$, where $V=V_\Gamma\setminus V_{\breve S}$ and $V_{\breve S}=V_\Gamma\cap \{S_i:S_i\in\breve S\}$, is such that $(t\circ\sigma)(v)\in V$ for all $v\in V$ holds. Then for each $\gamma\in\PD_\Gamma\Sigma$ such that $\gamma\cap S=\{\varnothing\}$, the transformed path is given by $\gamma\circ\sigma(t(\gamma))$.
\end{itemize}
The set of all surface-preserving bisections for a path groupoid forms a group equipped with $\ast$ and $^{-1}$ and is called the \textbf{group $\mathfrak{B}_{\breve S,\diff}(\PD_{\Gamma}\Sigma)$ of surface-preserving bisections of a finite path groupoid}. 
\end{prop}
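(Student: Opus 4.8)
The plan is to exhibit $\mathfrak{B}_{\breve S,\diff}(\PD_\Gamma\Sigma)$ as a subgroup of the group $\mathfrak{B}(\PD_\Gamma\Sigma)$ of all bisections, which is already known to be a group under $\ast$ and $^{-1}$ by Lemma \ref{lem groupbisection}. Since every surface-preserving bisection is in particular a bisection, the set $\mathfrak{B}_{\breve S,\diff}(\PD_\Gamma\Sigma)$ sits inside $\mathfrak{B}(\PD_\Gamma\Sigma)$, and associativity of $\ast$ together with the identity and inverse axioms are inherited from the ambient group. Hence it suffices to verify the three subgroup conditions: that the identity bisection $v\mapsto\idf_v$ is surface-preserving, that the product $\sigma\ast\sigma'$ of two surface-preserving bisections is again surface-preserving, and that the inverse $\sigma^{-1}$ of a surface-preserving bisection is surface-preserving. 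The key structural tool throughout is the group isomorphism $\sigma\mapsto t\circ\sigma$ onto $\Diff(V_\Gamma)$ recorded earlier, which reduces the statements about the induced diffeomorphisms to statements about composition in $\Diff(V_\Gamma)$.

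The identity bisection is immediate: $t\circ\id=\id_{V_\Gamma}$ fixes every vertex, in particular every $v\in S\cap V_\Gamma$, and $\id(v)=\idf_v$ makes all four defining clauses hold trivially. For closure under $\ast$ I would first treat the vertex-fixing clause. From the product formula $(\sigma\ast\sigma')(v)=\sigma'(v)\circ\sigma(t(\sigma'(v)))$ in Lemma \ref{lem groupbisection} one reads off $t\circ(\sigma\ast\sigma')=(t\circ\sigma)\circ(t\circ\sigma')$, so that $t\circ(\sigma\ast\sigma')$ is a composite of two diffeomorphisms of $V_\Gamma$ each fixing $S\cap V_\Gamma$ pointwise and hence fixes $S\cap V_\Gamma$ pointwise as well. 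Because $\breve S$ is finite and the diffeomorphism induced by each factor permutes $\breve S$ (each $S$ is sent to some $S_\sigma\in\breve S$), the composite diffeomorphism again permutes $\breve S$; the same finiteness argument handles the inverse, where $t\circ\sigma^{-1}=(t\circ\sigma)^{-1}$ fixes $S\cap V_\Gamma$ and inverts the permutation of $\breve S$.

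The remaining, more delicate, part is to check that the three path clauses survive $\ast$ and $^{-1}$, and this is where I expect the main bookkeeping obstacle. The crucial case is a path $\gamma$ with $\gamma\cap S=\{t(\gamma)\}$, for which surface-preservation demands $(\sigma\ast\sigma')(t(\gamma))=\idf_{t(\gamma)}$. Writing $v=t(\gamma)$, the target-intersection clause for $\sigma'$ gives $\sigma'(v)=\idf_v$, whence $t(\sigma'(v))=v$ and $\sigma(t(\sigma'(v)))=\sigma(v)=\idf_v$ by the same clause for $\sigma$; the product formula then yields $(\sigma\ast\sigma')(v)=\idf_v\circ\idf_v=\idf_v$, as required. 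For a source-intersection path and for a non-intersecting path one substitutes into $(\sigma\ast\sigma')(t(\gamma))=\sigma'(t(\gamma))\circ\sigma(t(\sigma'(t(\gamma))))$ and uses the prescribed right-translation form $R_\sigma(\gamma)=\gamma\circ\sigma(t(\gamma))$ together with the vertex-fixing clause to confirm that the transformed path has the prescribed shape and that, in the non-intersecting case, the image target vertex again lies in $V=V_\Gamma\setminus V_{\breve S}$.

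For the inverse one argues analogously from $\sigma^{-1}(v)=\sigma((t\circ\sigma)^{-1}(v))^{-1}$: at a target-intersection vertex $v$ the fixed-point property $(t\circ\sigma)^{-1}(v)=v$ gives $\sigma^{-1}(v)=\sigma(v)^{-1}=\idf_v$, and the source-intersecting and non-intersecting clauses follow from the corresponding clauses for $\sigma$ after applying $(t\circ\sigma)^{-1}$. Collecting these verifications establishes that $\mathfrak{B}_{\breve S,\diff}(\PD_\Gamma\Sigma)$ is closed under $\ast$ and $^{-1}$ and contains the identity, hence is a subgroup of $\mathfrak{B}(\PD_\Gamma\Sigma)$ and in particular a group.
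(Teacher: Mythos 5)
The paper gives no proof of this proposition at all: the group property of $\mathfrak{B}_{\breve S,\diff}(\PD_\Gamma\Sigma)$ is simply asserted, with the only proved antecedents being Lemma \ref{lem groupbisection} (the group of all bisections of a finite path groupoid) and Proposition \ref{lemma bisecform} (bisections of a finite graph system). Your subgroup argument --- inheriting associativity, identity and inverses from the ambient group $\mathfrak{B}(\PD_\Gamma\Sigma)$ and checking that the four defining clauses are stable under $\ast$ and $^{-1}$ --- is therefore supplying exactly the verification the paper leaves implicit, and it is the natural way to do so. Your treatment of the vertex-fixing clause, via $t\circ(\sigma\ast\sigma')=(t\circ\sigma)\circ(t\circ\sigma')$ and $t\circ\sigma^{-1}=(t\circ\sigma)^{-1}$, and of the target-intersection clause, where both factors contribute $\idf_v$ so that $(\sigma\ast\sigma')(v)=\idf_v\circ\idf_v=\idf_v$ and $\sigma^{-1}(v)=\sigma(v)^{-1}=\idf_v$, is correct; these are indeed the places where the product formula of Lemma \ref{lem groupbisection} has to be used with care. (The remark about the composite permuting the finite set $\breve S$ is harmless but not needed here: for surface-preserving, as opposed to surface-orientation-preserving, bisections the surfaces and their vertices are fixed outright.)

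The one place where you assert more than you prove is the source-intersection clause (and its analogue for non-intersecting paths): closure requires that $\gamma\circ(\sigma\ast\sigma')(t(\gamma))=\gamma\circ\sigma'(t(\gamma))\circ\sigma\bigl(t(\sigma'(t(\gamma)))\bigr)$ is again \emph{non-trivial}, and this does not follow from merely ``substituting into the product formula'', since nothing in the clause hypotheses directly forbids the second factor from cancelling the first. It does follow from the ingredients you name, but the argument should be made explicit: because $t\circ\sigma$ and $t\circ\sigma'$ are bijections of $V_\Gamma$ fixing $V_{\breve S}$ pointwise, each maps $V_\Gamma\setminus V_{\breve S}$ bijectively onto itself; hence the target $\bigl((t\circ\sigma)\circ(t\circ\sigma')\bigr)(t(\gamma))$ of the transformed path lies off every surface in $\breve S$, while its source $s(\gamma)$ lies on $S$, so the two endpoints are distinct and the transformed path cannot be a trivial path $\idf_v$. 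The same endpoint argument, applied to $t\circ\sigma^{-1}=(t\circ\sigma)^{-1}$, closes the corresponding step for inverses. With that sentence added, your proof is complete.
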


Observe for a certain bisection $\sigma\in\mathfrak{B}_{\breve S,\diff}(\PD_{\Gamma}\Sigma_v)$ the right translation $R_{\sigma(v)}(\gamma)$ in $\fPG$ for a $v\in V_\Gamma$ defines a path-diffeomorpism $\Phi_\Gamma(\gamma)$ such that $(\varphi_\Gamma,\Phi_\Gamma)\in\Diff(\PD_\Gamma\Sigma_v)$. Consequently similar to surface-preserving bisections of a finite path groupoid the corresponding surface-preserving path-diffeomorphisms is defined. Refer to \cite[Section 6.2]{KaminskiPHD} for a detailed investigation.

In general it follows that, $R_\sigma(\idf_v)=\sigma(v)$ for $v\in V$, where $V=V_\Gamma\setminus V_S$ and $V_S=V_\Gamma\cap S$, and $R_\sigma(\idf_w)=\idf_w$ for $w\in V_S$ for each $S\in\breve S$ is satisfied. Since $\idf_v$ and $\idf_w$  for $v\in V$ and $w\in V_S$ are elements of $\PD_\Gamma\Sigma$.

\begin{prop}
A \textbf{surface-preserving bisection $\sigma$ of a finite graph system} is defined as a bisection $\sigma:V_\Gamma\longrightarrow\PD_\Gamma$ in $\PD_\Gamma$ such that there is a surface-preserving bisection $\tilde\sigma$ of a finite path groupoid $\fPG$ and 
$\sigma_\Gamma(V)=\{\tilde\sigma(v_i):v_i\in V\}$ whenever $V$ is a subset of $V_\Gamma$.

The set surface-preserving bisections for a finite graph system forms a group equipped with $\ast_2$ and $^{-1}$ and is called the
\textbf{group $\mathfrak{B}_{\breve S,\diff}(\PD_{\Gamma})$ of surface-preserving bisections of a finite graph system and a surface set $\breve S$}.
\end{prop}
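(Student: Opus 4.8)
The plan is to realise $\mathfrak{B}_{\breve S,\diff}(\PD_\Gamma)$ as a subgroup of the group $\mathfrak{B}(\PD_\Gamma)$ of all bisections of the finite graph system, whose group structure under $\ast_2$ and $^{-1}$ was established in Proposition \ref{lemma bisecform}. By Definition \ref{defi bisecongraphgroupioid} every bisection $\sigma$ of $\PD_\Gamma$ arises from a bisection $\tilde\sigma\in\mathfrak{B}(\PD_\Gamma\Sigma)$ of the underlying finite path groupoid via $\sigma(V)=\{\tilde\sigma(v_i):v_i\in V\}$, and $\sigma$ is surface-preserving precisely when such an inducing $\tilde\sigma$ can be chosen inside the group $\mathfrak{B}_{\breve S,\diff}(\PD_\Gamma\Sigma)$ of surface-preserving bisections of the finite path groupoid. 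Thus the whole argument reduces to transporting the group laws along the assignment $\tilde\sigma\mapsto\sigma$.

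First I would record that this assignment is compatible with the two multiplications: by the explicit construction of $\ast_2$ in the proof of Proposition \ref{lemma bisecform}, the product $\sigma\ast_2\sigma'$ is the graph-system bisection induced by $\tilde\sigma\ast\tilde\sigma'$, and likewise $\sigma^{-1}$ is induced by $\tilde\sigma^{-1}$, where $\ast$ and $^{-1}$ are the path-groupoid operations. In other words the map sending a path-groupoid bisection to the graph-system bisection it induces is a surjective homomorphism from $(\mathfrak{B}(\PD_\Gamma\Sigma),\ast,{}^{-1})$ onto $(\mathfrak{B}(\PD_\Gamma),\ast_2,{}^{-1})$. Granting this, closure of $\mathfrak{B}_{\breve S,\diff}(\PD_\Gamma)$ is immediate: if $\sigma,\sigma'$ are surface-preserving, choose inducing $\tilde\sigma,\tilde\sigma'\in\mathfrak{B}_{\breve S,\diff}(\PD_\Gamma\Sigma)$; since the latter is a group by the preceding proposition, $\tilde\sigma\ast\tilde\sigma'$ and $\tilde\sigma^{-1}$ again lie in $\mathfrak{B}_{\breve S,\diff}(\PD_\Gamma\Sigma)$, so they induce surface-preserving bisections, which are exactly $\sigma\ast_2\sigma'$ and $\sigma^{-1}$.

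It then remains to exhibit a neutral element: the identity bisection $v\mapsto\idf_v$ is induced by the identity bisection of $\PD_\Gamma\Sigma$, which fixes every vertex and therefore trivially satisfies the surface-preserving conditions, since every $v\in S\cap V_\Gamma$ is fixed and every intersecting path is sent to itself. Hence $\id\in\mathfrak{B}_{\breve S,\diff}(\PD_\Gamma)$, while associativity is inherited from $\mathfrak{B}(\PD_\Gamma)$. Together these four points show that $\mathfrak{B}_{\breve S,\diff}(\PD_\Gamma)$ is a subgroup, which is exactly the assertion.

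The main obstacle will be the first step, namely verifying carefully that the inducing path-groupoid bisection of $\sigma\ast_2\sigma'$ is genuinely $\tilde\sigma\ast\tilde\sigma'$, rather than merely some surface-preserving bisection. This is delicate because $\ast_2$ is defined through the auxiliary vertex sets $V_{\sigma,\sigma'}$ and $W_{\sigma,\sigma'}$ appearing in the proof of Proposition \ref{lemma bisecform}, so one must check that the surface-preserving condition $(t\circ\tilde\sigma)(v)=v$ for $v\in S\cap V_\Gamma$ interacts correctly with the partition of $V_\Gamma$ into these sets. In particular I would have to confirm that fixing the surface vertices forces the products $\tilde\sigma\ast\tilde\sigma'$ to avoid the ill-defined configurations of Problem \ref{prob withoutcond}, so that $\sigma\ast_2\sigma'$ is again a well-defined graph-system bisection carrying the required inducing structure.
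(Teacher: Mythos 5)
Your overall strategy --- realise $\mathfrak{B}_{\breve S,\diff}(\PD_\Gamma)$ as the image of the subgroup $\mathfrak{B}_{\breve S,\diff}(\PD_\Gamma\Sigma)$ under the inducing assignment $\tilde\sigma\mapsto\sigma$, and then quote the fact that a homomorphic image of a subgroup is a subgroup --- is reasonable, and your identity and associativity observations are unproblematic. (The paper itself states this proposition without any proof, so the verification genuinely has to be supplied.) The gap is your first step: the claim that ``by the explicit construction of $\ast_2$ \dots the product $\sigma\ast_2\sigma'$ is the graph-system bisection induced by $\tilde\sigma\ast\tilde\sigma'$'' is not only unproven, it is contradicted by the construction you cite. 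In the proof of Proposition \ref{lemma bisecform}, $(\sigma\ast_2\sigma')$ coincides with the bisection induced by $\tilde\sigma\ast\tilde\sigma'$ only on the vertex subset $V_{\sigma,\sigma'}$; on $W_{\sigma,\sigma'}$ the two values $\tilde\sigma(w)$ and $\tilde\sigma'(w)$ are deliberately kept as separate, uncomposed paths, and on the remaining vertices trivial paths $\idf_p$ are inserted. The paper's own example makes this concrete: when $t(\gamma_1)=t(\gamma_3)$, the pointwise-induced product $\ast_1$ yields $\{\gamma_3\circ\gamma_2,\gamma_1\circ\gamma_2\}\notin\PD_\Gamma$, whereas $(\sigma\ast_2\sigma')(V_\Gamma)=\{\idf_{t(\gamma_1)},\idf_{t(\gamma_3)}\}$. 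Hence $\tilde\sigma\mapsto\sigma$ is \emph{not} a homomorphism from $(\mathfrak{B}(\PD_\Gamma\Sigma),\ast)$ onto $(\mathfrak{B}(\PD_\Gamma),\ast_2)$ in your sense, and the subgroup-image argument does not go through as stated. You flag exactly this as ``the main obstacle'' and then defer it; but closure under $\ast_2$ and $^{-1}$ is the entire mathematical content of the proposition, so deferring it leaves the statement unproved.

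What actually has to be shown is weaker than your homomorphism claim but still requires work: for $\tilde\sigma,\tilde\sigma'\in\mathfrak{B}_{\breve S,\diff}(\PD_\Gamma\Sigma)$, the object produced by the $\ast_2$ recipe is induced by \emph{some} element of $\mathfrak{B}_{\breve S,\diff}(\PD_\Gamma\Sigma)$ --- in general a truncation of $\tilde\sigma\ast\tilde\sigma'$, not $\tilde\sigma\ast\tilde\sigma'$ itself. Here the surface-preserving conditions do the real work and should be used explicitly: since $t\circ\tilde\sigma$ and $t\circ\tilde\sigma'$ are bijective, fix every $v\in V_\Gamma\cap S$ for each $S\in\breve S$, act there by $\idf_v$ (in the target-intersection case), and map $V_\Gamma\setminus V_{\breve S}$ into itself, a surface vertex can never collide with a distinct vertex under the target maps; so surface vertices always land in the unproblematic piece of the decomposition into $V_{\sigma,\sigma'}$, $W_{\sigma,\sigma'}$ and the remainder, where the product again acts trivially, while on the non-surface vertices each of the three pieces of the $\ast_2$ output (composed paths, retained paths, trivial paths) stays away from the surfaces, so the four defining conditions can be checked case by case. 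Together with the analogous check for $\sigma^{-1}(V_\Gamma)=\sigma((t\circ\sigma)^{-1}(V_\Gamma))^{-1}$, this is what closes the argument; without it your proposal is a plan, not a proof.
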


A right-translation $R_\sigma$ in the finite graph system $\PD_\Gamma$ is defined for a bisection $\sigma\in \mathfrak{B}_{\breve S,\diff}(\PD_{\Gamma})$ in definition \ref{defi bisecongraphgroupioid}.

For example for a path $\gamma$ that intersect a surface $S$ in $t(\gamma)$, then $\gamma_\sigma=\gamma\circ\sigma(t(\gamma))=\gamma$ is satisfied. For a path $\gp$ that intersects $S^\prime$ in $s(\gp)$ the transformed path is $\gamma_\sigma=\gamma\circ\sigma(t(\gamma))$. In both cases the surfaces satisfy $S=S_\sigma$ and $S^\prime=S_\sigma^\prime$.

\begin{center}
\includegraphics[width=0.6\textwidth]{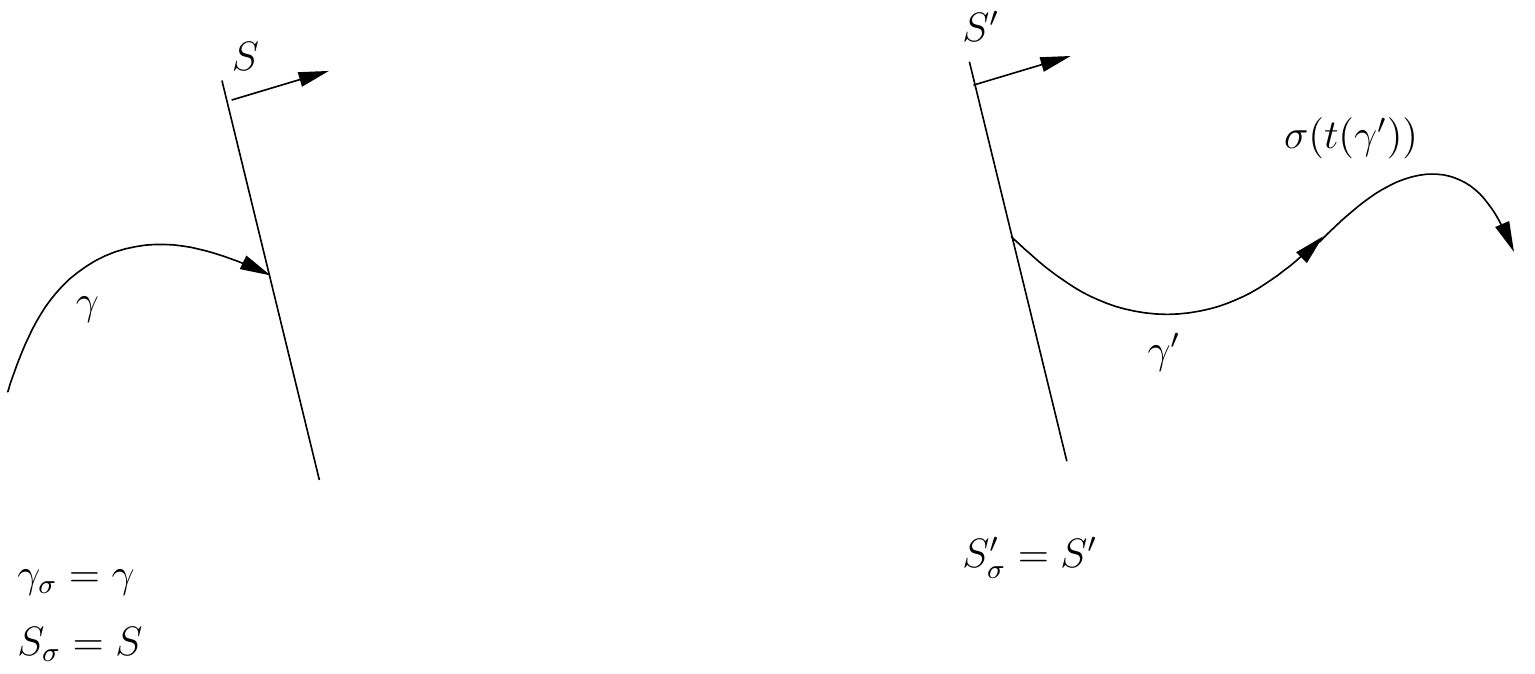}
\end{center}

Especially for a finite surface-preserving graph-diffeomorphism $(\varphi_{\Gamma},\Phi_{\Gamma})$ this implies that, the number of intersections between surfaces in $\breve S$ and a set of paths $\Gamma$ does not change. Certainly it follows that the surfaces are fixed, whereas the graph $\Gamma$ is changed to $\Gamma_\sigma$. 

Now the question arise whether another dissimilar situation is possible. The idea is to implement bisections of such a way that the orientation of the transformed surface with respect to path obtained by the bisection is preserved and paths that are ingoing w.r.t. the orientation of the unchanged surface are ingoing w.r.t. the transformed surface.  The same is true for paths that are outgoing w.r.t. the orientation of the unchanged surface.

Now consider a diffeomorphism or actions of bisections, which maps a surface $S$ into another $S_\sigma$. Then the modified path-diffeomorphisms are maps such that paths, which intersect a surface $S$ and are ingoing w.r.t. a surface $S$, to paths that are ingoing w.r.t the transformed surface $S_\sigma$.
It is possible that the path $\gamma$ lies above the surface $S$, whereas the transformed path $\Phi_\Gamma(\gamma)$ lies below the surface $\varphi(S)$. This case is needed to be excluded. The new action of bisections are required to preserve the orientation properties of the paths and the surfaces such that there is no interchange of the left and right unitary representation of $\bar G_{\breve S,\Gamma}$ on $\HS_\Gamma$. 
 
\begin{defi}
Let $\breve S$ be a set of surfaces and $\Gamma$ be a graph such that $\breve S\cap \Gamma\subset V_\Gamma$. Let $\varphi$ be a diffeomorphism in $\Sigma$, which maps each surface $S\in \breve S$ to a surface $S_\sigma\in\breve S$.

A map $\sigma$ is called \textbf{surface-orientation-preserving bisection for a finite path groupoid}, if $\sigma$ is a bisection in $\mathfrak{B}(\PD_\Gamma\Sigma)$ such that 
\begin{itemize}
\item the map $\varphi_\Gamma:V_\Gamma\longrightarrow V_\Gamma$, which is given by $\varphi_\Gamma=t\circ\sigma$, is bijective and $\varphi_\Gamma=\varphi\vert_{V_\Gamma}$, and
\item for each $\gamma\in\PD_\Gamma\Sigma$ that intersects a surface $S$ and, such that $\gamma\cap S=\{t(\gamma)\}$ holds, the non-trivial transformed path is given by $\gamma\circ\sigma(t(\gamma))=:\gamma_\sigma$ for a surface $S\in\breve S$ yields. Moreover if $\gamma$ lie above (or below) the surface $S$ and $\gamma_\sigma$ is non-trivial, then $\gamma_\sigma$ lie above (or below) the surface $S_\sigma$. Except of a vertex $s(\gamma)$ such that $s(\gamma)\cap S_\sigma=\{s(\gamma)\}$ , the vertex $t(\gamma_\sigma)$ is the only intersection vertex of $S_\sigma$ and $\gamma_\sigma$.
\item For each $\gamma\in\PD_\Gamma\Sigma$ that intersects a surface $S$, and such that $\gamma\cap S=\{s(\gamma)\}$, it is  true that $\sigma(s(\gamma))=\idf_{s(\gamma)}$, $(t\circ\sigma)(s(\gamma))=s(\gamma)$ and hence $\gamma\circ\sigma(s(\gamma))=\gamma\circ\idf_{s(\gamma)}=\gamma$ for a surface $S\in\breve S$. Furthermore if $\gamma$ is located above $S$, then $\gamma$ is located above the surface $S_\sigma$. 
\item The map $\sigma:V\rightarrow \PD_\Gamma\Sigma$, where $V=V_\Gamma\setminus V_{\breve S}$ and $V_{\breve S}=V_\Gamma\cap \{S_i:S_i\in\breve S\}$, is such that $(t\circ\sigma)(v)\in V$ for all $v\in V$ yields. Then for each $\gamma\in\PD_\Gamma\Sigma$ such that $\gamma\cap S=\{\varnothing\}$, the transformed path is given by $\gamma\circ\sigma(t(\gamma))$.
\end{itemize}
\end{defi}
Clearly this concept can be generalised to surface-orientation-preserving bisections of a finite graph system. Moreover similar to surface-orientation-preserving bisections of a finite graph system the corresponding surface-preserving-orientation graph-diffeomorphisms can be defined (\cite[Section 6.2]{KaminskiPHD}).

\begin{cor} 
The set $\mathfrak{B}_{\breve S,\ori}(\PD_\Gamma\Sigma)$ of all surface-orientation-preserving bisections of a finite path groupoid equipped with multiplication $\ast$ and inversion $^{-1}$ forms a group and it is called  the \textbf{group of surface-orientation-preserving bisections of a finite path groupoid associated to surfaces}.

The set $\mathfrak{B}_{\breve S,\ori}(\PD_\Gamma)$ of all surface-orientation-preserving bisections of a finite graph system equipped with multiplication $\ast_2$ and inversion $^{-1}$ forms a group and is called  the \textbf{group $\mathfrak{B}_{\breve S,\ori}(\PD_\Gamma)$ of surface-orientation-preserving bisections associated to graphs and surfaces}.
\end{cor}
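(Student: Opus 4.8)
The plan is to invoke the subgroup criterion. Lemma \ref{lem groupbisection} already establishes that $\mathfrak{B}(\PD_\Gamma\Sigma)$ is a group under $\ast$ with inverse $^{-1}$, and by construction $\mathfrak{B}_{\breve S,\ori}(\PD_\Gamma\Sigma)$ is a subset of $\mathfrak{B}(\PD_\Gamma\Sigma)$; so it is enough to check that this subset contains the unit, is closed under $\ast$, and is closed under $^{-1}$. The unit is immediate: the identity bisection $\id\colon v\mapsto\idf_v$ has associated diffeomorphism $t\circ\id=\id_{V_\Gamma}$, the restriction of the identity map of $\Sigma$, which fixes every surface $S\in\breve S$ and leaves every path unchanged, hence trivially satisfies all the orientation and intersection-vertex clauses of the definition.

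For closure under multiplication I would take $\sigma,\sigma^\prime\in\mathfrak{B}_{\breve S,\ori}(\PD_\Gamma\Sigma)$ with associated diffeomorphisms $\varphi=t\circ\sigma$ and $\varphi^\prime=t\circ\sigma^\prime$, each mapping $\breve S$ into $\breve S$. The decisive tool is the group isomorphism $\sigma\mapsto t\circ\sigma$ onto $\Diff(V_\Gamma)$ established earlier, which gives $t\circ(\sigma\ast\sigma^\prime)=(t\circ\sigma)\circ(t\circ\sigma^\prime)=\varphi\circ\varphi^\prime$; since a composition of diffeomorphisms each permuting the finite set $\breve S$ again permutes $\breve S$, the surface-mapping clause holds for $\sigma\ast\sigma^\prime$. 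The orientation clause I would verify transitively: because $R_{\sigma\ast\sigma^\prime}=R_\sigma\circ R_{\sigma^\prime}$, a path $\gamma$ lying above (resp.\ below) $S$ is first sent by $R_{\sigma^\prime}$ to $\gamma_{\sigma^\prime}$ lying above (resp.\ below) $S_{\sigma^\prime}=\varphi^\prime(S)$, and then by $R_\sigma$ to a path lying above (resp.\ below) $\varphi(\varphi^\prime(S))=S_{\sigma\ast\sigma^\prime}$; the single-target-intersection clause is likewise preserved because each factor introduces only its own new target intersection vertex. For inversion I would use that $\varphi$ restricts to a bijection of the finite set $\breve S$, so $\varphi^{-1}=t\circ\sigma^{-1}$ again maps $\breve S$ onto $\breve S$, while the above/below relations are symmetric under passing from $\sigma$ to $\sigma^{-1}$; hence $\sigma^{-1}\in\mathfrak{B}_{\breve S,\ori}(\PD_\Gamma\Sigma)$.

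For the finite graph system statement I would lift the path-groupoid result through Definition \ref{defi bisecongraphgroupioid}: every surface-orientation-preserving bisection of $\PD_\Gamma$ is, by definition, induced by a surface-orientation-preserving bisection $\tilde\sigma$ of the finite path groupoid, and Proposition \ref{lemma bisecform} shows that the multiplication $\ast_2$ on $\mathfrak{B}(\PD_\Gamma)$ and its inverse are built from $\ast$ and $^{-1}$ on $\mathfrak{B}(\PD_\Gamma\Sigma)$. Consequently the closure of $\mathfrak{B}_{\breve S,\ori}(\PD_\Gamma\Sigma)$ under $\ast$ and $^{-1}$ transports to closure of $\mathfrak{B}_{\breve S,\ori}(\PD_\Gamma)$ under $\ast_2$ and $^{-1}$, and the subgroup criterion applies once more, now relative to the group $\mathfrak{B}(\PD_\Gamma)$ of Proposition \ref{lemma bisecform}.

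The main obstacle I anticipate is the orientation bookkeeping in the closure-under-$\ast$ step: the defining conditions are stated locally, per path and per surface, distinguishing source-intersections from target-intersections and above from below, so the genuine content is to confirm that these labels are transported consistently through both factors and that the composition creates no spurious additional intersection vertex. This is more delicate for $\mathfrak{B}_{\breve S,\ori}(\PD_\Gamma)$ than for the path groupoid, since $\ast_2$ is not a naive componentwise product; the reduction to the path-groupoid case via Definition \ref{defi bisecongraphgroupioid} is precisely what keeps this bookkeeping manageable.
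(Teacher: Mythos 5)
Your proposal is correct and follows the route the paper implicitly intends: the paper states this corollary without any proof, as an immediate consequence of Lemma \ref{lem groupbisection} and Proposition \ref{lemma bisecform}, and your subgroup-criterion verification (unit, closure under $\ast$ respectively $\ast_2$, closure under inversion, with the graph-system case reduced to the path-groupoid case via Definition \ref{defi bisecongraphgroupioid}) is exactly the natural filling-in of that omitted argument. The orientation bookkeeping you flag is handled just as you describe --- the one-directional above/below clauses combine with the above/below dichotomy and the relations $R_{\sigma\ast\sigma^\prime}=R_\sigma\circ R_{\sigma^\prime}$, $t\circ(\sigma\ast\sigma^\prime)=(t\circ\sigma)\circ(t\circ\sigma^\prime)$ to give closure, and finiteness of $\breve S$ makes the induced surface map a bijection so that inversion stays in the set.
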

For example, consider a graph $\Gamma=\{\gamma,\gp,\gpp\}$ and the surfaces $\breve S:=\{S,S^\prime,S^{\prime\prime}\}$, which are presented in the picure below. Then let $(\varphi_\Gamma,\Phi_\Gamma)$ be a surface-orientation-preserving path-diffeomorphism for $\breve S$ defined by a bisection $\sigma$ such that $S$ is mapped to $S_\sigma$ and so on. Let the path $\gamma$ intersects the surface $S$ in $t(\gamma)$ such that $\gamma$ lies below $S$ and the path $\gamma_\sigma$ intersects $S_\sigma$ in $t(\gamma_\sigma)$ such that $\gamma_\sigma$ lies below $S_\sigma$. Moreover let
$\gp$ intersect $S^\prime$ in $t(\gp)$ such that $\gp$ lies below $S^\prime$ and the path $\gp_\sigma:=\gp\circ\sigma(t(\gp))$ intersects $S_\sigma^\prime$ in $t(\gp_\sigma)$ and $\gp_\sigma$ lies below $S_\sigma^\prime$. Finally for a path $\gpp$ intersecting $S^{\prime\prime}$ in $s(\gpp)$ and the path $\gpp$ lies above, then the transformed path $\gamma_\sigma^{\prime\prime}$ is equivalent to $\gpp$ and $\gpp$ is outgoing and lies above the surface $S_\sigma^{\prime\prime}$. Summarising, there is a map such thats $\Phi_\Gamma(\Gamma)=\{\gamma_\sigma,\gp_\sigma,\gamma_\sigma^{\prime\prime}\}$.

\begin{center}
\includegraphics[width=0.75\textwidth]{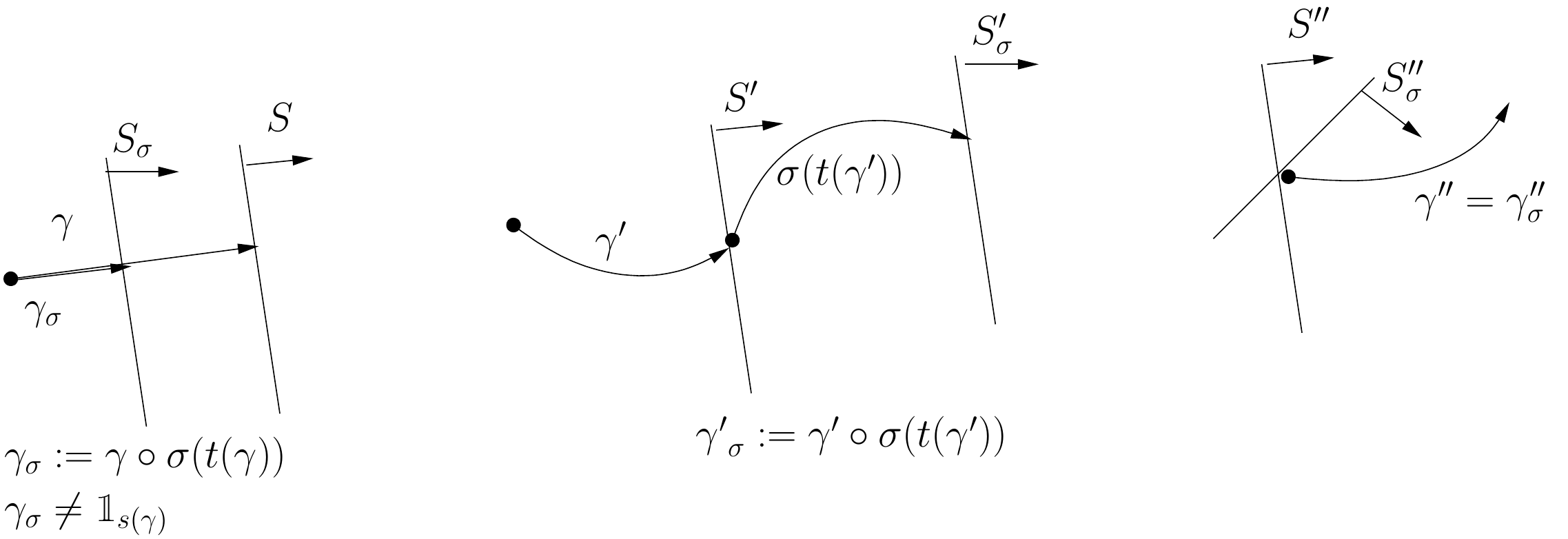}
\end{center}

There is a problem if the bisection $\sigma$ maps $s(\tilde\gamma)$ to a path $\tilde\gamma_\sigma$, which is not equivalent to $\tilde\gamma$. Since the resulting path $\tilde\gamma_\sigma$ is ingoing and lies above the surface $S_\sigma$ whereas $\tilde\gamma$ is outgoing and above $S$.

\begin{center}
\includegraphics[width=0.12\textwidth]{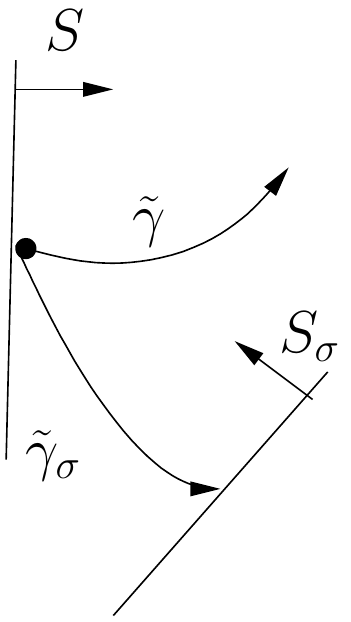}
\end{center}

The situation is restricted to the case that the surface set $\breve S$ is chosen such that each path $\gamma_i$ in $\PD_\Gamma\Sigma$, which intersects only the surfaces $S_j$, which is contained in $\breve S:=\{S_j\}_{1\leq j\leq K}$, and lies above and ingoing, (or below and ingoing, or above and outgoing, or below and outgoing) w.r.t. $S_j$ and there are no other intersection vertices of this path with any other surface $S_k$ for $k\neq j$. Then it is required that, the path $\gamma_i\circ\sigma(t(\gamma_i))$ (or $\sigma(s(\gamma_i))$) lies above and ingoing (or below and ingoing, or above and outgoing, or below and outgoing) w.r.t. each $\varphi(S_j)$. Hence all actions of these bisections preserve the quantum flux operators associated to different surface sets and graphs presented in subsection \ref{subsec fluxdef} can be treated.

\subsubsection*{Actions of the group of surface-preserving bisections of the $C^*$-algebra $\mathsf{W}(\bar G_{\breve S,\Gamma})$ of Weyl elements}

The next question is related to different actions of group of bisections of the algebra of Weyl elements. First consider the action of a surface-preserving group $\mathfrak{B}_{\breve S,\diff}(\PD_\Gamma)$ of bisections of a finite graph system on the $C^*$-algebra $\mathsf{W}(\bar G_{\breve S,\Gamma})$. 
\begin{lem}
The action is trivial, i.e. for $\Gp\in \PD_\Gamma$
\beqs (\zeta_{\sigma} U)(\rho_{S,\Gamma}(\Gp)) = U(\rho_{S,\Gamma}(\Gamma^\prime_\sigma))= U(\rho_{S,\Gamma}(\Gp))
\eqs yields for $\rho_{S,\Gamma}\in G_{\breve S,\Gamma}$. 
\end{lem}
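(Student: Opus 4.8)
The plan is to unravel the definition of the action $\zeta_\sigma$ on Weyl elements and reduce the claim to an invariance property of the intersection functions $\iota_L,\iota_R$. In analogy with Proposition \ref{prop groupbisecdynsys}, where $\zeta_\sigma$ transforms the graph-argument of a function through the right-translation $R_\sigma$ in $\PD_\Gamma$, the action on a Weyl element rewrites the argument $\Gp$ as $R_\sigma(\Gp)=\Gamma^\prime_\sigma$, so that
\[
(\zeta_\sigma U)(\rho_{S,\Gamma}(\Gp)) = U\big(\rho_{S,\Gamma}(R_\sigma(\Gp))\big) = U(\rho_{S,\Gamma}(\Gamma^\prime_\sigma)).
\]
It therefore suffices to prove the equality $\rho_{S,\Gamma}(\Gamma^\prime_\sigma) = \rho_{S,\Gamma}(\Gp)$ inside $\bar G_{\breve S,\Gamma}$, since then the two unitaries have literally the same argument.

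Next I would reduce to a single path. Because $\rho_{S,\Gamma} = \rho_S\times\cdots\times\rho_S$ and the evaluation on a subgraph is componentwise, it is enough to show $\rho_S(\gamma_\sigma) = \rho_S(\gamma)$ for each path $\gamma$ of $\Gp$ and its image $\gamma_\sigma = R_\sigma(\gamma)$. Recall that $\rho_S(\gamma) = (o_L(S)^{\iota_L(S,\gamma)}, o_R(S)^{\iota_R(S,\gamma)})$ depends on $\gamma$ only through the intersection functions, so the whole statement collapses to the claim that $\iota_L(S,\gamma_\sigma) = \iota_L(S,\gamma)$ and $\iota_R(S,\gamma_\sigma) = \iota_R(S,\gamma)$ for every $S\in\breve S$, i.e. a surface-preserving bisection leaves the in/outgoing and above/below data of every path with respect to every surface untouched.

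I would then run the case analysis dictated by the defining bullets of a surface-preserving bisection. If $\gamma$ meets $S$ in its target, then $\sigma(t(\gamma)) = \idf_{t(\gamma)}$ and $\gamma_\sigma = \gamma\circ\idf_{t(\gamma)} = \gamma$, so there is nothing to prove. If $\gamma$ meets $S$ only in its source, then $\gamma_\sigma = \gamma\circ\sigma(t(\gamma))$: the source vertex and the germ of the path at $s(\gamma)$ — which alone fix the signs $\iota_L(S,\gamma),\iota_R(S,\gamma)$ — are untouched, while the surface-preserving hypothesis guarantees that appending $\sigma(t(\gamma))$ produces no new intersection of $\gamma_\sigma$ with any surface of $\breve S$. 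Finally, if $\gamma$ is disjoint from $S$, then since $t\circ\sigma$ maps $V = V_\Gamma\setminus V_{\breve S}$ into itself the appended segment stays away from the surfaces, so $\gamma_\sigma$ remains disjoint and $\iota_L = \iota_R = 0$ is preserved. In every case $\rho_S(\gamma_\sigma) = \rho_S(\gamma)$, hence $\rho_{S,\Gamma}(\Gamma^\prime_\sigma) = \rho_{S,\Gamma}(\Gp)$ and the action is trivial.

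The main obstacle is the middle case: ensuring that the newly appended segment $\sigma(t(\gamma))$ genuinely creates no additional intersection of $\gamma_\sigma$ with any surface in $\breve S$ simultaneously. This is exactly the content of the surface-preserving hypothesis — the fixed-vertex condition $(t\circ\sigma)(v) = v$ for $v\in S\cap V_\Gamma$ together with the requirement that the number of intersections between $\breve S$ and the transformed paths does not change — so the verification reduces to a bookkeeping of intersection vertices across all surfaces rather than to any genuinely new argument.
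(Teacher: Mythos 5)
Your proposal is correct and follows essentially the same route as the paper's own proof: unwind the action of $\zeta_\sigma$ through the right-translation $R_\sigma$ and the diffeomorphism $\varphi$, use that a surface-preserving bisection satisfies $\varphi(S)=S$ and acts trivially at surface vertices, and observe that paths avoiding every surface in $\breve S$ contribute only $e_G$. If anything, your reduction to invariance of the intersection functions $\iota_L,\iota_R$ is more careful than the paper's argument, which simply deduces $\Gamma^\prime_\sigma=\Gp$ as graphs --- an equality that can fail for a path meeting a surface only in its source vertex (where $\sigma(t(\gamma))$ may be non-trivial); your middle case (unchanged germ at the source, no new intersections created by the appended segment) supplies exactly the justification needed there, while still yielding the required equality $\rho_{S,\Gamma}(\Gamma^\prime_\sigma)=\rho_{S,\Gamma}(\Gp)$.
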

\begin{proofs}
This is true, since $\rho_{S}(\gamma)=e_G$ holds if the path $\gamma\in\Gp$ does not intersect with a surface $S$ in $\breve S$ in a source or target vertex. For a subgraph $\Gp$ of $\Gamma=\{\gamma_1,...,\gamma_N\}$, where each path $\gamma_i$ in $\Gamma$ intersect a surface in $\breve S$ in a vertex, the action is given by $(\zeta_\sigma U)(\rho_{S,\Gamma}(\Gp))= U(\tilde R_\sigma(\rho_{S,\Gamma}(\Gp)))$ where $\tilde R_\sigma(\rho_{S,\Gamma}(\Gp))= \rho_{\varphi(S),\Gamma}(R_\sigma(\Gp))= \rho_{\varphi(S),\Gamma}(\Gamma^\prime_\sigma)$ for all $S\in\breve S$ and $\varphi(S)=S$, $(t\circ\sigma)(v)=v$ and hence $\sigma(v)=\idf_v$ for all $v\in V_\Gamma\cap \breve S$, then finally deduce $\Gamma^\prime_\sigma=\Gp$.
\end{proofs}

\subsubsection*{Action of the group of surface-orientation-preserving bisections of the $C^*$-algebra $\mathsf{W}(\bar G_{\breve S,\Gamma})$ of Weyl elements}

\begin{prop}Let $\breve S$ be a finite set of surfaces.

The action of a surface-orientation-preserving group $\mathfrak{B}_{\breve S,\ori}(\PD_\Gamma)$ of bisections of a finite graph system on the $C^*$-algebra $\mathsf{W}(\bar G_{\breve S,\Gamma})$ is presented by
\beqs (\zeta_{\sigma} U)(\rho_{S,\Gamma}(\Gp)) = U(\rho_{\varphi(S),\Gamma}(R_\sigma(\Gp)))\text{ for all }U\in\Rep(\bar G_{\breve S,\Gamma},\KD(\HS_\Gamma))
\eqs satisfies the following properties:
\begin{enumerate}
\item\label{prop bisec cond1} The action $\zeta$ of $\mathfrak{B}_{\breve S,\ori}(\PD_\Gamma)$ in $\mathsf{W}(\bar G_{\breve S,\Gamma})$, which is a $C^*$-subalgebra of $\LD(\HS_\Gamma)$ is automorphic,
\item\label{prop bisec cond2} The action $\zeta$ of $\mathfrak{B}_{\breve S,\ori}(\PD_\Gamma)$ in $\mathsf{W}(\bar G_{\breve S,\Gamma})$ is point-norm continuous. 

\item\label{prop bisec cond3} The automorphic action $\alpha$ the group of bisection $\mathfrak{B}_{\breve S,\ori}(\PD_\Gamma)$ on $\mathsf{W}(\bar G_{\breve S,\Gamma})$ is inner such that there exists an unitary representation $V$ of $\mathfrak{B}_{\breve S,\ori}(\PD_\Gamma)$ on the $C^*$-algebra $\mathsf{W}(\bar G_{\breve S,\Gamma})$, i.e. $V\in\Rep(\mathfrak{B}_{\breve S,\ori}(\PD_\Gamma),\mathsf{W}(\bar G_{\breve S,\Gamma}))$, which satisfy
\beq\label{cancomrel II} V_\sigma U(\rho_{S,\Gamma}(\Gp)) V^*_\sigma = \big(\zeta_{\sigma}(U)\big)(\rho_{S,\Gamma}(\Gp))\quad\forall U\in\mathsf{W}(\bar G_{\breve S,\Gamma}),\sigma\in\mathfrak{B}_{\breve S,\ori}(\PD_\Gamma)
\eq
\end{enumerate}

For each $\sigma\in\mathfrak{B}_{\breve S,\ori}(\PD_\Gamma)$ the unitary $V_\sigma$ are called the \textbf{unitary bisections of a finite graph system and surfaces in $\breve S$}.
\end{prop}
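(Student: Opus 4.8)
\begin{proofi}
The plan is to verify the three asserted properties in turn, exploiting that the right-translation $R_\sigma$ in the finite graph system already obeys the homomorphism law $R_{\sigma\ast_2\sigma^\prime}=R_\sigma\circ R_{\sigma^\prime}$ of the preceding lemma, together with the fact that every $U\in\Rep(\bar G_{\breve S,\Gamma},\KD(\HS_\Gamma))$ is a unitary representation of the flux group. For the automorphic property \ref{prop bisec cond1} I would first check that $\sigma\mapsto\zeta_\sigma$ is a homomorphism, mirroring the computation in the proof of Proposition \ref{prop groupbisecdynsys}: writing $\tilde R_\sigma(\rho_{S,\Gamma}(\Gp)):=\rho_{\varphi(S),\Gamma}(R_\sigma(\Gp))$ so that $\zeta_\sigma(U)=U\circ\tilde R_\sigma$, the identity $R_{\sigma\ast_2\sigma^\prime}=R_\sigma\circ R_{\sigma^\prime}$ yields $\zeta_\sigma\circ\zeta_{\sigma^\prime}=\zeta_{\sigma\ast_2\sigma^\prime}$ after accounting for the order reversal inherent in the pullback. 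Here the surface-orientation-preserving hypothesis is used to guarantee $\varphi(S)\in\breve S$ and that the image $\rho_{\varphi(S),\Gamma}(R_\sigma(\Gp))$ lands in $\bar G_{\breve S,\Gamma}$ without interchanging its left and right components. Multiplicativity \ref{def automorphic2} and $^*$-preservation \ref{def automorphic3} then follow from the Weyl relations, since $U(\rho_{S,\Gamma}(\Gp))\,U(\tilde\rho_{S,\Gamma}(\Gp))=U((\rho_{S,\Gamma}\cdot\tilde\rho_{S,\Gamma})(\Gp))$ and $U(\rho_{S,\Gamma}(\Gp))^*=U(\rho_{S^{-1},\Gamma}(\Gp))$, while $\tilde R_\sigma$ respects both the product on $\bar G_{\breve S,\Gamma}$ and the flip $S\mapsto S^{-1}$ because $\varphi(S^{-1})=\varphi(S)^{-1}$.

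For point-norm continuity \ref{prop bisec cond2} I would argue as in the proof of Proposition \ref{prop groupbisecdynsys}: since $\vert\Gamma\vert$ is finite the group $\mathfrak{B}_{\breve S,\ori}(\PD_\Gamma)$ is finite, so its open-set topology is discrete and it suffices to observe $\lim_{\sigma\to\id}\Vert\zeta_\sigma(U)-U\Vert=0$, which is immediate from $R_{\id}=\id$ and hence $\tilde R_{\id}=\id$ on $\bar G_{\breve S,\Gamma}$.

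The main obstacle is the inner character \ref{prop bisec cond3}, for which I would produce the implementing unitaries $V_\sigma$ explicitly on $\HS_\Gamma=L^2(\Ab_\Gamma,\mu_\Gamma)$. The construction reuses the covariant representation of the $C^*$-dynamical system $(\mathfrak{B}(\PD_\Gamma),C(\Ab_\Gamma),\zeta)$ of Proposition \ref{prop groupbisecdynsys}: set $V_\sigma\psi_\Gamma:=\psi_\Gamma\circ(\ho_\Gamma\circ R_\sigma)$ for $\psi_\Gamma\in\HS_\Gamma$. Because $R_\sigma$ transfers to right multiplication in the groupoid $G$ over $\{e_G\}$, and the product Haar measure $\mu_\Gamma$ on $G^{\vert\Gamma\vert}$ is right-invariant (Corollary \ref{cor uniqueness}), each $V_\sigma$ is unitary and $\sigma\mapsto V_\sigma$ is a representation with $V_{\sigma\ast_2\sigma^\prime}=V_\sigma V_{\sigma^\prime}$. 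It then remains to verify \eqref{cancomrel II}, namely $V_\sigma U(\rho_{S,\Gamma}(\Gp))V_\sigma^*=U(\rho_{\varphi(S),\Gamma}(R_\sigma(\Gp)))$; evaluating both sides on a vector and using that $U$ acts by the geometric transformation of the flux data reduces this to the conjugation identity $R_\sigma\circ(\text{flux translation by }\rho_{S,\Gamma})\circ R_{\sigma^{-1}}=(\text{flux translation by }\rho_{\varphi(S),\Gamma}\circ R_\sigma)$ on holonomies. The delicate point is checking that this conjugation does not convert a left flux translation into a right one: this is exactly where the surface-orientation-preserving hypothesis is indispensable, since it forces ingoing (respectively outgoing) paths for $S$ to remain ingoing (respectively outgoing) for $\varphi(S)$, so that $U$ stays inside $\Rep(\bar G_{\breve S,\Gamma},\KD(\HS_\Gamma))$ rather than the interchanged representation excluded in Problem \ref{problem I}. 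Finally, because $R_\sigma$ acts on holonomies by right multiplication with $\ho_\Gamma(\sigma(\cdot))\in G^{\vert\Gamma\vert}$ — the very operation performed by the flux unitaries — the operator $V_\sigma$ is itself identifiable with a Weyl element, giving $V\in\Rep(\mathfrak{B}_{\breve S,\ori}(\PD_\Gamma),\mathsf{W}(\bar G_{\breve S,\Gamma}))$ and hence the inner implementation claimed.
\end{proofi}
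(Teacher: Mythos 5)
Your arguments for properties (i) and (ii) are essentially the paper's own: automorphy is obtained from the composition law of the right-translations (the paper verifies it on a one-path graph via $\gamma\circ(\sigma^\prime(v)\circ\sigma(w))$ and gets $^*$-preservation from $\varphi(S^{-1})=\varphi(S)^{-1}$), and point-norm continuity is the same limit-at-the-identity-bisection observation. No objection to these parts.

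The genuine gap is in (iii), exactly at the step you yourself call the main obstacle. Innerness means the implementing unitaries must lie in (the multiplier algebra of) $\mathsf{W}(\bar G_{\breve S,\Gamma})$ itself; a unitary in $\LD(\HS_\Gamma)$ whose conjugation reproduces $\zeta_\sigma$ gives only a \emph{spatial} implementation. Your $V_\sigma\psi_\Gamma:=\psi_\Gamma\circ(\ho_\Gamma\circ R_\sigma)$ is the covariant-pair unitary of Proposition \ref{prop graph diffeo}, and it is indeed unitary by invariance of the product Haar measure; but your bridge into the Weyl algebra --- that $R_\sigma$ acts by right multiplication with $\ho_\Gamma(\sigma(\cdot))$, ``the very operation performed by the flux unitaries,'' so that $V_\sigma$ is ``identifiable with a Weyl element'' --- is false. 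A Weyl element $U(\rho_{S,\Gamma}(\Gamma))$ translates the holonomies by a \emph{fixed} element of $\bar G_{\breve S,\Gamma}$, the same at every point of $\Ab_\Gamma$, whereas your $V_\sigma$ translates $\ho_\Gamma(\gamma)$ by $\ho_\Gamma(\sigma(t(\gamma)))$, an amount that varies with the configuration. Such configuration-dependent translations lie, in general, outside even the von Neumann algebra generated by the Weyl elements: for $G=U(1)$, $\Gamma=\{\gamma_1,\gamma_2\}$ and $\sigma(t(\gamma_1))=\gamma_2$, the Weyl elements act diagonally in the Fourier basis of $\ell^2(\Z^2)$, while $V_\sigma$ is the index shift $\hat\psi(n,m)\mapsto\hat\psi(n,m-n)$. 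Moreover, the conjugation identity you assert does not come out of this $V_\sigma$: with a flux acting at $s(\gamma_1)$ by left multiplication by $o_L(S)$ and $\sigma(t(\gamma_1))=\gamma_2$, one computes $V_\sigma U(\rho_{S,\Gamma}(\Gp))V_\sigma^*=U(\rho_{S,\Gamma}(\Gp))$, whereas \eqref{cancomrel II} demands $U(\rho_{\varphi(S),\Gamma}(R_\sigma(\Gp)))$, a translation by $o_L(\varphi(S))$; these differ whenever $o_L(\varphi(S))\neq o_L(S)$, i.e.\ for genuinely surface-orientation-preserving (not surface-preserving) bisections. The paper sidesteps both problems by never realizing $V_\sigma$ on vectors: it \emph{defines} $V_\sigma$ through its products with Weyl generators, $(V_\sigma U)(\rho_{S,\Gamma}(\Gp)):=U(\rho_{S,\Gamma}(\Gp_\sigma))$, so that $V_\sigma$ carries Weyl elements to Weyl elements by construction and is located in the multiplier algebra $M(\mathsf{W}(\bar G_{\breve S,\Gamma}))$, with unitarity and the group law checked afterwards. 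To repair your route you would need to prove that your spatial $V_\sigma$ (or a flux-corrected modification of it) lies in $M(\mathsf{W}(\bar G_{\breve S,\Gamma}))$ and satisfies the corrected conjugation formula, and the first of these is precisely what fails.
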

For example, for an appropriate surface set $\breve S$, a graph $\Gamma=\{\gamma,\gamma_1,...,\gamma_M\}$ and a subgraph $\Gp=\{\gamma\}$ it is true that
\beq (\zeta_{\sigma} U_{\overrightarrow{L}})(\rho_{S}(\gamma)) = U_{\overrightarrow{L}}(\rho_{\varphi(S)}(\gamma\circ\sigma(v)))
\eq whenever $v=t(\gamma)$ and $\Gamma^\prime_\sigma:=\{\gamma\circ\sigma(v)\}$. 

Certainly, there is also an action of the group of bisection $\mathfrak{B}_{\breve S,\ori}(\PD_\Gamma)$ on $\mathsf{W}(\bar G_{\breve S,\Gamma})$ given by a left translation or a inner automorphism of the path groupoid $\PD_\Gamma\Sigma$ over $V_\Gamma$. Each action is inner and hence for each action there is a unitary representation of $\mathfrak{B}_{\breve S,\ori}(\PD_\Gamma)$ on $\mathsf{W}(\bar G_{\breve S,\Gamma})$.

\begin{proofs}For the proof of the action being automorphic (property \ref{prop bisec cond1}) derive the following. Set $\varphi\vert_\Gamma=\varphi_\Gamma=t\circ\sigma$ and $\varphi^\prime\vert_\Gamma=\varphi^\prime_\Gamma=t\circ\sigma^\prime$ for two bisections $\sigma,\sigma^\prime\in\mathfrak{B}_{\breve S,\ori}(\PD_\Gamma)$.
For simplicity assume that, $\Gp=\{\gamma\}$ such that $\rho_{S,\Gamma}(\Gp)=\rho_S(\gamma)$ and $v=t(\gamma)$, $w=t(\sigma^\prime(v))$, then derive
\beqs  &((\zeta_{\sigma}\circ\zeta_{\sigma^\prime})U)(\rho_{S}(\gamma))\\
&=U\Big(\rho_{(\varphi\circ\varphi^\prime)(S)}(\big(\gamma\circ\sigma^\prime(v)\big)\circ\sigma(w)))\Big)
=U\Big(\rho_{(\varphi\circ\varphi^\prime)(S)}(\gamma\circ(\sigma^\prime(v)\circ\sigma(w)))\Big)\\
&=(\zeta_{\sigma\ast\sigma^\prime}U)(\rho_{S}(\gamma))
\eqs which proves condition \ref{def automorphic1} of definition \ref{def automorphic}.

Condition \ref{def automorphic3} of definition \ref{def automorphic} is shown by the observation that 
\beqs &(\zeta_{\sigma}U^*)(\rho_{S}(\gamma))= U(\rho^{-1}_{\varphi(S)}(\gamma\circ\sigma(v))=(\zeta_{\sigma}U)^*(\rho_{S}(\gamma))
\eqs where $v=t(\gamma)$ and $\varphi(S)^{-1}=\varphi(S^{-1})$ holds.

The action satisfies property \ref{prop bisec cond2} of the proposition, since it is true that,
\beqs &\lim_{\sigma\rightarrow \id}\|\zeta_{\sigma}(U(\rho_S(\Gp)))- U(\rho_S(\Gp))\|_\Gamma 
=\lim_{\sigma\rightarrow \id}\|U(\rho_{\varphi(S)}(\Gp_\sigma))- U(\rho_S(\Gp))\|_\Gamma\\
&=0
\eqs yields for a graph $\Gp=\{\gamma\}$ and $\Gamma^\prime_{\sigma}=\{\gamma \circ \sigma(v)\}$ being subgraphs of $\Gamma$, $v=t(\gamma)$, $\sigma\in\mathfrak{B}_{\breve S,\ori}(\PD_\Gamma)$ and
where $\id(v)=v$ for $v=t(\gamma)$ is the identity bisection.

The action is indeed inner (property \ref{prop bisec cond3} of the proposition), since there is a unitary representation $V:\mathfrak{B}_{\breve S,\ori}(\PD_\Gamma) \longrightarrow M(\mathsf{W}(\bar G_{\breve S,\Gamma}))$, where $M(\mathsf{W}(\bar G_{\breve S,\Gamma}))$ is the multiplier algebra of $\mathsf{W}(\bar G_{\breve S,\Gamma})$, such that
\beqs (\zeta_{\sigma}U)(\rho_{S}(\gamma)) = V_\sigma U(\rho_{S}(\gamma))V^*_\sigma\eqs is satisfied.

Observe that, $\rho_S(\idf_v)=e_G$ for any $v\in V_\Gamma$.
Set $(V_{\sigma} U)(\rho_{ S,\Gamma}(\Gp))= U(\rho_{S,\Gamma}(\Gp_\sigma))$ and $(V^*_\sigma U)(\rho_{S,\Gamma}(\Gp))= U(\rho_{S,\Gamma}(\Gp_{\sigma^{-1}}))$, in example
\beqs &(V_\sigma U)(\rho_{S}(\gamma)) = U(\rho_{S}(\gamma\circ\sigma(v))\\
&(V^*_\sigma U)(\rho_{S}(\idf_{w}))=(V_{\sigma^{-1}} U)(\rho_{S}(\idf_{w})) = U(\rho_{S,\Gamma}(\idf_{w}\circ(\sigma((t\circ\sigma)^{-1}(v)))^{-1}))\\
&V_{\sigma}V_{\sigma^{-1}} = \id =
V_{\sigma^{-1}}V_{\sigma} \\
&(V_{\sigma}V_{\sigma^{-1}} U)(\rho_{S}(\gamma))=U(\rho_{S}(\gamma\circ(\sigma\ast\sigma^{-1})(v))=U(\rho_{S}(\gamma))\\
&(V_{\sigma^{-1}}V_{\sigma} U)(\rho_{S}(\idf_{w}))=U(\rho_{S}(\idf_{w}\circ(\sigma^{-1}\ast\sigma)(v))=U(\rho_{S}(\idf_{w}))
\eqs yields whenever $w=(t\circ\sigma)^{-1}(v)$, $v=t(\gamma)$, $\Gp=\{\gamma,\idf_w,\gamma_1,...,\gamma_M\}$, $\Gamma^\prime_\sigma:=\{\gamma\circ\sigma(v),\idf_w,\gamma_1,...,\gamma_M\}$ and $\Gp_{\sigma^{-1}}:=\{\gamma,\idf_w\circ\sigma^{-1}(v),\gamma_1,...,\gamma_M\}$.

To show that, the $^*$-representation $V$ of $\mathfrak{B}_{\breve S,\ori}(\PD_\Gamma)$ on the Hilbert space $\HS_\Gamma$ is really unitary consider the following example.
Let $\gamma_1$ and $\gamma_2$ be two disjoint paths defining two disjoint graphs, then derive
\beq\label{eq partunit} (V_\sigma V^*_\sigma \psi_{\gamma_1})(\ho_{\gamma_1}(\gamma_1))&=\psi_{\gamma_1}(\ho_{\gamma_1}(\gamma_1))\\
(V^*_\sigma V_\sigma \psi_{\gamma_2})(\ho_{\gamma_2}(\gamma_2))&=\psi_{\gamma_2}(\ho_{\gamma_2}(\gamma_2))
\eq
and, consequently,
\beqs
&(V_\sigma V^*_\sigma)\Big(\psi_{\gamma_1}(\ho_{\gamma_1}(\gamma_1)),\psi_{\gamma_2}(\ho_{\gamma_2}(\gamma_2))\Big)
=\Big(\psi_{\gamma_1}(\ho_{\gamma_1}(\gamma_1)),\psi_{\gamma_2}(\ho_{\gamma_2}(\gamma_2))\Big)\\
&=(V^*_\sigma V_\sigma )\Big(\psi_{\gamma_1}(\ho_{\gamma_1}(\gamma_1)),\psi_{\gamma_2}(\ho_{\gamma_2}(\gamma_2))\Big)
\eqs
for  $v=t(\gamma_2)$, $t(\gamma_1)=(t\circ\sigma)^{-1}(v)$, $\psi_{\gamma_1}(\ho_{\gamma_1}(\gamma_1))\in\HS_{\gamma_1}$, $\psi_{\gamma_2}(\ho_{\gamma_2}(\gamma_2))\in\HS_{\gamma_2}$ where $\Gamma=\{\gamma_1,\gamma_2\}$ and $\HS_\Gamma=\HS_{\gamma_1}\otimes\HS_{\gamma_2 }$.

Then collect the following facts to conclude that $V$ is a unitary representation of $\mathfrak{B}_{\breve S,\ori}(\PD_\Gamma)$ in $\mathsf{W}(\bar G_{\breve S,\Gamma})$:
\begin{enumerate}
 \item\label{Vunitreppop1} $V_\sigma$ is unitary for any $\sigma\in \mathfrak{B}_{\breve S,\ori}(\PD_\Gamma)$
 \item $V_\sigma V_{\sigma^\prime}=V(\sigma\ast\sigma^\prime)$ for all $\sigma,\sigma^\prime\in \mathfrak{B}_{\breve S,\ori}(\PD_\Gamma)$
\item $V_\sigma$ is point-norm continuous, since the associated action $\zeta_\sigma$ is.
\end{enumerate} 
\end{proofs}

\subsubsection*{Dynamical systems of an action of the group of surface-preserving bisections of the $C^*$-algebra $\mathsf{W}(\bar G_{\breve S,\Gamma})$ of Weyl elements and states on $C(\Ab_\Gamma)$}

Then the last proposition implies the following.
\begin{prop}\label{lem cdynsysdiffeoweyl}Let $ \breve S$ be a set of surfaces.

The triple $(\mathfrak{B}_{\breve S,\ori}(\PD_\Gamma),\mathsf{W}(\bar G_{\breve S,\Gamma}),\zeta)$ of a surface-orientation-preserving group $\mathfrak{B}_{\breve S,\ori}(\PD_\Gamma)$ of bisections, a $C^*$-algebra $\mathsf{W}(\bar G_{\breve S,\Gamma})$ w.r.t. a set $\breve S$ of surfaces and a graph $\Gamma$ and the action $\zeta$ is a $C^*$-dynamical system.
\end{prop}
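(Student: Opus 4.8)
The plan is to read the statement off directly from the definition of a $C^*$-dynamical system together with the properties already assembled in the last proposition. Recall that a $C^*$-dynamical system is a triple consisting of a group, a $C^*$-algebra, and an automorphic point-norm continuous action of the group on the algebra, in the sense of definition \ref{def automorphic} supplemented by continuity of the map $\sigma\mapsto\|\zeta_\sigma(U)\|$. Hence the proof reduces to verifying that each of the three entries of $(\mathfrak{B}_{\breve S,\ori}(\PD_\Gamma),\mathsf{W}(\bar G_{\breve S,\Gamma}),\zeta)$ plays its required role, and all three facts are available earlier in the text.

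First I would invoke the corollary stating that $\mathfrak{B}_{\breve S,\ori}(\PD_\Gamma)$, equipped with the multiplication $\ast_2$ and inversion $^{-1}$, forms a group; this supplies the first entry. Next I would invoke the lemma asserting that the completion $\mathsf{W}(\bar G_{\breve S,\Gamma})$ of the $^*$-algebra $\WD(\bar G_{\breve S,\Gamma})$ of Weyl elements with respect to the strong operator norm is a $C^*$-algebra, realised as a $C^*$-subalgebra of $\LD(\HS_\Gamma)$; this supplies the second entry. Finally I would invoke the last proposition, whose conclusions \ref{prop bisec cond1} and \ref{prop bisec cond2} state precisely that the action $\zeta$ is automorphic and point-norm continuous. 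Chaining these three citations together yields the claim.

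The one point that genuinely requires checking, rather than being a formal consequence, is well-definedness: that $\zeta_\sigma$ actually maps the algebra $\mathsf{W}(\bar G_{\breve S,\Gamma})$ into itself. Here the surface-orientation-preserving hypothesis on $\sigma$ is essential and is the step I expect to demand the most care. Since $\varphi=t\circ\sigma$ sends each surface $S\in\breve S$ to a surface $\varphi(S)\in\breve S$ and preserves the ingoing/outgoing and above/below orientation data, the transformed flux $\rho_{\varphi(S),\Gamma}(R_\sigma(\Gp))$ again lies in $\bar G_{\breve S,\Gamma}$ and carries the same left/right multiplicative structure; consequently $\zeta_\sigma(U)$ is again a Weyl element, or a finite linear combination of such, and the Weyl relations \eqref{cancomrel I} are preserved. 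This is precisely the feature that a merely surface-preserving bisection can destroy by interchanging the left and right unitary representations of $\bar G_{\breve S,\Gamma}$ on $\HS_\Gamma$. Once it is secured, the remaining $\ast$-homomorphism properties \ref{def automorphic2} and \ref{def automorphic3} follow from the definitions of the product and involution of Weyl elements, and, combined with the group and continuity facts above, the triple is a $C^*$-dynamical system.
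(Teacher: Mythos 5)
Your proposal is correct and follows essentially the same route as the paper: the paper disposes of this proposition with the single remark that ``the last proposition implies the following,'' i.e.\ it combines the group structure of $\mathfrak{B}_{\breve S,\ori}(\PD_\Gamma)$, the $C^*$-algebra structure of $\mathsf{W}(\bar G_{\breve S,\Gamma})$, and properties \ref{prop bisec cond1} and \ref{prop bisec cond2} (automorphy and point-norm continuity of $\zeta$) exactly as you do. Your additional care about well-definedness of $\zeta_\sigma$ on $\mathsf{W}(\bar G_{\breve S,\Gamma})$ via orientation preservation is a sound observation, though in the paper it is already absorbed into the preceding proposition's assertion that $\zeta$ is an automorphic action on that algebra.
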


\begin{lem}Let $\Phi:\mathsf{W}(\bar G_{\breve S,\Gamma})\longrightarrow \LD(\HS_\Gamma)$ be the natural $^*$-homomorphism.

Then the set $\{\Phi(W)B:W\in \mathsf{W}(\bar G_{\breve S,\Gamma}), B\in \LD(\HS_\Gamma)\}$ is dense in $\LD(\HS_\Gamma)$. Consequently $\Phi$ is a morphism of $C^*$-algebras $\mathsf{W}(\bar G_{\breve S,\Gamma})$ and $\LD(\HS_\Gamma)$.
\end{lem}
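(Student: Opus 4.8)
The plan is to reduce the statement to a single elementary fact: that $\mathsf{W}(\bar G_{\breve S,\Gamma})$ is a \emph{unital} $C^*$-subalgebra of $\LD(\HS_\Gamma)$ whose unit is represented by the identity operator $\idf$ on $\HS_\Gamma$. First I would recall that $\mathsf{W}(\bar G_{\breve S,\Gamma})$ was defined as the completion, in the strong operator norm, of the $^*$-algebra $\WD(\bar G_{\breve S,\Gamma})$ of Weyl elements, which is itself a unital $^*$-algebra and, as noted in the preceding remark, a $C^*$-subalgebra of $\LD(\HS_\Gamma)$; hence the natural $^*$-homomorphism $\Phi$ is simply the inclusion. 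The key point I would establish is that the unit of $\mathsf{W}(\bar G_{\breve S,\Gamma})$ is exactly $\idf$: evaluating any $U\in\Rep(\bar G_{\breve S,\Gamma},\KD(\HS_\Gamma))$ at the group identity $e$ of $\bar G_{\breve S,\Gamma}$ yields $U(e)=\idf$, which is a Weyl element, so $\idf\in\mathsf{W}(\bar G_{\breve S,\Gamma})$ and $\Phi(\idf)=\idf$.

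With this in hand the density claim becomes immediate. For an arbitrary $B\in\LD(\HS_\Gamma)$ I would simply write $B=\Phi(\idf)B$, which exhibits $B$ as a member of the set $\{\Phi(W)B\}$ for the choice $W=\idf$. Thus the set is not merely dense but equal to all of $\LD(\HS_\Gamma)$; no approximation argument is needed.

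To deduce that $\Phi$ is a morphism of $C^*$-algebras I would invoke the notion of $\Mor$ used earlier in the paper. Since $\LD(\HS_\Gamma)$ is unital, its multiplier algebra coincides with $\LD(\HS_\Gamma)$ itself, so a $^*$-homomorphism into $\LD(\HS_\Gamma)$ is a morphism precisely when it is non-degenerate, that is, when $\Phi(\mathsf{W}(\bar G_{\breve S,\Gamma}))\LD(\HS_\Gamma)$ is dense in $\LD(\HS_\Gamma)$. This is exactly what has just been verified, whence $\Phi\in\Mor(\mathsf{W}(\bar G_{\breve S,\Gamma}),\LD(\HS_\Gamma))$ follows at once.

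The only point requiring genuine care, and hence the ``main obstacle'' (mild though it is), is the identification of the abstract unit of the completed algebra $\mathsf{W}(\bar G_{\breve S,\Gamma})$ with the identity operator of $\HS_\Gamma$, rather than with some proper projection. This rests on $U$ being a genuine unitary representation, so that $U(e)=\idf$, together with $\mathsf{W}(\bar G_{\breve S,\Gamma})$ being realised concretely inside $\LD(\HS_\Gamma)$. Once this identification is secured the remainder of the argument is purely formal.
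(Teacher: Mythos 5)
Your proof is correct. The paper states this lemma without any proof, treating it as immediate, and your argument supplies exactly the natural justification it leaves implicit: $\mathsf{W}(\bar G_{\breve S,\Gamma})$ is realised as a unital $C^*$-subalgebra of $\LD(\HS_\Gamma)$ whose unit is the identity operator $\idf$ (since any unitary representation sends the group identity of $\bar G_{\breve S,\Gamma}$ to $\idf$), so every $B\in\LD(\HS_\Gamma)$ is already of the form $\Phi(\idf)B$, the set in question equals $\LD(\HS_\Gamma)$, and non-degeneracy of the inclusion then gives $\Phi\in\Mor\bigl(\mathsf{W}(\bar G_{\breve S,\Gamma}),\LD(\HS_\Gamma)\bigr)$ by the definition of morphism used throughout the paper.
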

Hence it is obvious to consider the following covariant pair.
\begin{prop}
The pair $(\Phi,V)$ consisting of a morphism $\Phi\in \Mor(\mathsf{W}(\bar G_{\breve S,\Gamma}),\LD(\HS_\Gamma))$ and a unitary representation $V$ of $\mathfrak{B}_{\breve S,\ori}(\PD_\Gamma)$ on the Hilbert space $\HS_\Gamma$, i.e. $V_\sigma\in\Rep(\mathfrak{B}_{\breve S,\ori}(\PD_\Gamma),\KD(\HS_\Gamma))$ such that
\beqs \Psi(\zeta_{\sigma}W)
=V_\sigma \Psi(W) V^*_\sigma
\eqs whenever $W\in\mathsf{W}(\bar G_{\breve S,\Gamma})$ and $\sigma\in\mathfrak{B}_{\breve S,\ori}(\PD_\Gamma)$, is a covariant representation of the $C^*$-dynamical system\\ $(\mathfrak{B}_{\breve S,\ori}(\PD_\Gamma),\mathsf{W}(\bar G_{\breve S,\Gamma}),\zeta_\sigma)$ in $\LD(\HS_\Gamma)$.
\end{prop}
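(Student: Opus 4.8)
The plan is to assemble this statement from the three results that immediately precede it, since essentially all of the analytic content is already in hand. First I would invoke the preceding Lemma: because $\{\Phi(W)B : W \in \mathsf{W}(\bar G_{\breve S,\Gamma}),\ B \in \LD(\HS_\Gamma)\}$ is dense in $\LD(\HS_\Gamma)$, the $^*$-homomorphism $\Phi$ is non-degenerate, hence a genuine representation of the $C^*$-algebra $\mathsf{W}(\bar G_{\breve S,\Gamma})$ on $\HS_\Gamma$. That supplies the first half of the covariant pair.

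For the second half I would manufacture the unitary group representation on $\HS_\Gamma$. Since $\Phi$ is a non-degenerate morphism it extends uniquely to a unital $^*$-homomorphism $\bar\Phi$ on the multiplier algebra $M(\mathsf{W}(\bar G_{\breve S,\Gamma}))$, which is exactly where the $V_\sigma$ produced in the earlier Proposition (the inner-action property) live. Setting the statement's unitaries to be $V_\sigma := \bar\Phi(V_\sigma)$, the facts collected at the close of that proof — that each $V_\sigma$ is unitary, that $V_\sigma V_{\sigma'} = V_{\sigma \ast_2 \sigma'}$, and that $\sigma \mapsto V_\sigma$ is point-norm continuous — transfer verbatim through the $^*$-homomorphism $\bar\Phi$. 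Hence $\sigma \mapsto \bar\Phi(V_\sigma)$ is a strongly continuous unitary representation of $\mathfrak{B}_{\breve S,\ori}(\PD_\Gamma)$ on $\HS_\Gamma$, i.e.\ $V \in \Rep(\mathfrak{B}_{\breve S,\ori}(\PD_\Gamma), \KD(\HS_\Gamma))$, as required.

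Then I would establish covariance. The relation \eqref{cancomrel II}, namely $V_\sigma U(\rho_{S,\Gamma}(\Gp)) V_\sigma^* = (\zeta_\sigma U)(\rho_{S,\Gamma}(\Gp))$, is stated only on Weyl elements; since both $W \mapsto V_\sigma W V_\sigma^*$ and $\zeta_\sigma$ are automorphisms of $\mathsf{W}(\bar G_{\breve S,\Gamma})$ that agree on the generating Weyl elements, they coincide on all of $\mathsf{W}(\bar G_{\breve S,\Gamma})$ by linearity and norm-continuity. Applying the $^*$-homomorphism $\Phi$ and using multiplicativity then yields
\[
\Phi(\zeta_\sigma W) = \Phi(V_\sigma W V_\sigma^*) = \bar\Phi(V_\sigma)\,\Phi(W)\,\bar\Phi(V_\sigma)^*,
\]
which is precisely the asserted covariance identity (the $\Psi$ in the statement being $\Phi$). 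Since Proposition \ref{lem cdynsysdiffeoweyl} already guarantees that $(\mathfrak{B}_{\breve S,\ori}(\PD_\Gamma), \mathsf{W}(\bar G_{\breve S,\Gamma}), \zeta)$ is a $C^*$-dynamical system, the pair $(\Phi, V)$ satisfies every clause in the definition of a covariant representation in $\LD(\HS_\Gamma)$.

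I expect the only delicate point to be the passage to the multiplier algebra: one must check that the $V_\sigma$ are multipliers of $\mathsf{W}(\bar G_{\breve S,\Gamma})$ itself (not merely of $\LD(\HS_\Gamma)$), so that $\bar\Phi(V_\sigma)$ is well defined, and that point-norm continuity of $\sigma \mapsto V_\sigma$ survives the extension; the related extension of \eqref{cancomrel II} from the generating Weyl elements to the whole algebra is the same kind of density argument. Both become routine once non-degeneracy of $\Phi$ is secured, so no substantial new obstacle arises.
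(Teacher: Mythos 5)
Your proposal is correct, but it proves the proposition by a different route than the paper does. The paper's own proof is a direct pointwise computation on the Hilbert space: for a one-path subgraph $\Gp=\{\gamma\}$, a suitable surface $S$, and $\psi_\Gamma\in\HS_\Gamma$, it evaluates $(V_\sigma U(\rho_{S,\Gamma}(\Gp))V^*_\sigma\psi_\Gamma)(\ho_\Gamma(\Gp))$ step by step, showing how $V_\sigma$ shifts the argument to $\ho_\Gamma(\Gamma^\prime_\sigma)$ with $\Gamma^\prime_\sigma=\{\gamma\circ\sigma(v)\}$, and matches the result with $(\zeta_\sigma(U(\rho_{S,\Gamma}(\Gp)))\psi_\Gamma)(\ho_\Gamma(\Gp))$; the general case is then asserted. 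You instead assemble the statement structurally from the three preceding results: non-degeneracy of $\Phi$ from the density lemma, the multiplier-algebra unitaries $V_\sigma$ and relation \eqref{cancomrel II} from the inner-action proposition, and the dynamical system from Proposition \ref{lem cdynsysdiffeoweyl}, with the passage from generating Weyl elements to all of $\mathsf{W}(\bar G_{\breve S,\Gamma})$ handled by the observation that two automorphisms agreeing on a generating set coincide. Your route buys two things the paper leaves implicit: the extension from generators to the completed $C^*$-algebra (the paper's ``Hence the proposition is true'') is made into an actual argument, and the covariance identity is derived from \eqref{cancomrel II} rather than recomputed, so the computational content lives in exactly one place. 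What the paper's computation buys is concreteness: it exhibits the mechanism by which the graph-diffeomorphism unitaries intertwine with the flux Weyl elements on wavefunctions, which is the physically instructive part. One minor caution on your side: the ``verbatim'' transfer of point-norm continuity through $\bar\Phi$ deserves a word, though since $\mathfrak{B}_{\breve S,\ori}(\PD_\Gamma)$ is a finite group (as the paper notes for $\mathfrak{B}(\PD_\Gamma)$) the continuity issue is vacuous here; and your identification of the statement's $\Psi$ with $\Phi$ correctly resolves a notational inconsistency in the paper itself.
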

\begin{proofs}Conclude for a subgraph $\Gp=\{\gamma\}$ and $\Gamma^\prime_\sigma=\{\gamma\circ\sigma(v)\}$ of $\Gamma$, $v=t(\gamma)$, a surface $S$ such that $\gamma$ is outgoing and lies below and $\psi_{\Gamma}\in\HS_{\Gamma}$ that
\beqs& (V_\sigma U(\rho_{S,\Gamma}(\Gp))V^*_\sigma\psi_{\Gamma})(\ho_{\Gamma}(\Gp)) 
=(V_\sigma U(\rho_{S,\Gamma}(\Gp))\psi_{\Gamma})(\ho_{\Gamma}(\Gp))=(V_\sigma \psi_{\Gamma})(\rho_{S}(\gamma)\ho_{\Gamma}(\gamma))\\
&= U(\rho_{S}(\gamma\circ\sigma(v)))\psi_{\Gamma})(\ho_{\Gamma}(\Gamma^\prime_\sigma))
=(\zeta_{\sigma}(U(\rho_{S,\Gamma}(\Gp)))\psi_{\Gamma})(\ho_{\Gamma}(\Gp))
\eqs holds for $U\in \mathsf{W}(\bar G_{\breve S,\Gamma})$. Hence the proposition is true.
\end{proofs}

Covariant pairs are constructed from the multiplication representation $\Phi_M$ of the holonomy algebra $C(\Ab_\Gamma)$ for a finite graph system $\PD_\Gamma$. The next proposition is valid for both identifications of the configuration space $\Ab_\Gamma$.

\begin{prop}\label{prop graph diffeo}Let $\Phi_M$ be the multiplication represention of $C(\Ab_\Gamma)$ on $\HS_\Gamma$.

Then there is an unitary representation $V_\sigma$ of $\mathfrak{B}(\PD_\Gamma)$ on $\HS_\Gamma$ such that
\beq\label{cancomrel III} 
&V_\sigma \Phi_M(f_\Gamma)V^*_\sigma=\Phi_M(\zeta_{\sigma}f_\Gamma)
\eq whenever $f_\Gamma\in C(\Ab_\Gamma)$, $\sigma\in\mathfrak{B}(\PD_\Gamma)$ and $(V,\Phi_M)$ is a covariant pair of the $C^*$-dynamical system $(\mathfrak{B}(\PD_\Gamma),C(\Ab_\Gamma),\zeta)$.

Then there is a $\mathfrak{B}(\PD_\Gamma)$-invariant state $\omega^\Gamma_{\mathfrak{B}}$ on $C(\Ab_\Gamma)$ such that
\beqs \omega^\Gamma_{\mathfrak{B}}(\zeta_{\sigma}(f_\Gamma))
&=\omega^\Gamma_{\mathfrak{B}}(f_{\Gamma})
:=\la \Omega^\Gamma_{\mathfrak{B}},\Phi_M(f_\Gamma) \Omega^\Gamma_{\mathfrak{B}}\ra
\eqs where $\Omega^\Gamma_{\mathfrak{B}}$ is a cyclic vector in $\HS_\Gamma$ for the GNS-triple $(\HS_\Gamma,\Phi_M,\Omega^\Gamma_{\mathfrak{B}})$.
\end{prop}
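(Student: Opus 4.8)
The plan is to implement the automorphic action $\zeta$ of Proposition~\ref{prop groupbisecdynsys} by conjugation with unitaries that are the $L^2$-lifts of the right-translations $R_\sigma$. On $\HS_\Gamma=L^2(\Ab_\Gamma,\mu_\Gamma)$ I would define, for each $\sigma\in\mathfrak{B}(\PD_\Gamma)$,
\beqs (V_\sigma \psi_\Gamma)(\ho_\Gamma(\Gp)):=\psi_\Gamma((\ho_\Gamma\circ R_\sigma)(\Gp))=\psi_\Gamma(\ho_\Gamma(\Gamma_\sigma))
\eqs
for $\psi_\Gamma\in\HS_\Gamma$, so that $V_\sigma$ is the operator-theoretic analogue of $\zeta_\sigma$. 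Writing $T_\sigma:\Ab_\Gamma\rightarrow\Ab_\Gamma$ for the induced transformation $T_\sigma(\ho_\Gamma)=\ho_\Gamma\circ R_\sigma$, the composition rule $R_{\sigma\ast_2\sigma^\prime}=R_\sigma\circ R_{\sigma^\prime}$ together with the group structure of $\mathfrak{B}(\PD_\Gamma)$ (Proposition~\ref{lemma bisecform}) immediately gives $T_{\sigma\ast_2\sigma^\prime}=T_{\sigma^\prime}\circ T_\sigma$, hence $V_{\sigma\ast_2\sigma^\prime}=V_\sigma V_{\sigma^\prime}$, and $T_{\sigma^{-1}}=T_\sigma^{-1}$, so $V_\sigma$ is invertible with inverse $V_{\sigma^{-1}}$.

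The main obstacle is the unitarity of $V_\sigma$, which reduces to the claim that $T_\sigma$ preserves the Haar measure $\mu_\Gamma$. This is not automatic, because $R_\sigma$ is graph-changing and hence $T_\sigma$ is not a translation by a fixed group element. Under the identification of $\Ab_\Gamma$ with $G^{\vert\Gamma\vert}$ the transfer equation
\beqs (\ho_\Gamma\circ R_\sigma)(\gamma)=\ho_\Gamma(\gamma)\cdot\ho_\Gamma(\sigma(t(\gamma)))
\eqs
shows that $T_\sigma$ multiplies each coordinate $\ho_\Gamma(\gamma_i)$ on the right by the holonomy of the attached path $\sigma(t(\gamma_i))$, which is a product of the remaining coordinates; thus $T_\sigma$ is a \emph{shear} of the product group. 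I would establish $\mu_\Gamma$-invariance by Fubini's theorem combined with the right-invariance of the product Haar measure recorded in Corollary~\ref{cor uniqueness}: because valid bisections cannot produce cyclic dependencies among the coordinates (the degenerate cases being excluded exactly as in Problem~\ref{prob withoutcond}), one may integrate out the coordinate being translated first, absorb the shear factor by right-invariance, and proceed coordinate by coordinate. This yields $\|V_\sigma\psi_\Gamma\|_2=\|\psi_\Gamma\|_2$, so each $V_\sigma$ is a surjective isometry, i.e.\ unitary with $V_\sigma^*=V_{\sigma^{-1}}$; strong continuity is free since $\mathfrak{B}(\PD_\Gamma)$ is finite. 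The argument uses only measure-invariance and the density of $C(\Ab_\Gamma)$ in $L^2$, so it is indifferent to the natural or non-standard identification.

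The covariance relation then follows by a direct computation: for $\psi_\Gamma\in\HS_\Gamma$ and $f_\Gamma\in C(\Ab_\Gamma)$,
\beqs (V_\sigma\Phi_M(f_\Gamma)V_\sigma^*\psi_\Gamma)(\ho_\Gamma(\Gp))&=(f_\Gamma\circ T_\sigma)(\ho_\Gamma(\Gp))\,\psi_\Gamma\big((T_{\sigma^{-1}}\circ T_\sigma)(\ho_\Gamma(\Gp))\big)\\
&=(\zeta_\sigma f_\Gamma)(\ho_\Gamma(\Gp))\,\psi_\Gamma(\ho_\Gamma(\Gp)),
\eqs
since $T_{\sigma^{-1}}\circ T_\sigma=\id$, so that $V_\sigma\Phi_M(f_\Gamma)V_\sigma^*=\Phi_M(\zeta_\sigma f_\Gamma)$ and $(V,\Phi_M)$ is a covariant pair. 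For the invariant state I would take $\Omega^\Gamma_{\mathfrak{B}}$ to be the normalised constant function $\idf_\Gamma\in\HS_\Gamma$; it is cyclic for $\Phi_M$ because $\Phi_M(C(\Ab_\Gamma))\idf_\Gamma=C(\Ab_\Gamma)$ is $\|.\|_2$-dense in $L^2(\Ab_\Gamma,\mu_\Gamma)$, giving the GNS-triple $(\HS_\Gamma,\Phi_M,\Omega^\Gamma_{\mathfrak{B}})$. As $T_\sigma$ fixes constants we have $V_\sigma\Omega^\Gamma_{\mathfrak{B}}=\Omega^\Gamma_{\mathfrak{B}}=V_\sigma^*\Omega^\Gamma_{\mathfrak{B}}$, whence
\beqs \omega^\Gamma_{\mathfrak{B}}(\zeta_\sigma f_\Gamma)&=\la\Omega^\Gamma_{\mathfrak{B}},\Phi_M(\zeta_\sigma f_\Gamma)\Omega^\Gamma_{\mathfrak{B}}\ra=\la V_\sigma^*\Omega^\Gamma_{\mathfrak{B}},\Phi_M(f_\Gamma)V_\sigma^*\Omega^\Gamma_{\mathfrak{B}}\ra\\
&=\omega^\Gamma_{\mathfrak{B}}(f_\Gamma),
\eqs
which is the asserted $\mathfrak{B}(\PD_\Gamma)$-invariance; equivalently it is the $T_\sigma$-invariance of $\int_{\Ab_\Gamma}f_\Gamma\,d\mu_\Gamma$ already used for unitarity.
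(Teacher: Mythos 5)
Your construction of $V_\sigma$ as the $L^2$-lift of $T_\sigma$, the group law $V_{\sigma\ast_2\sigma^\prime}=V_\sigma V_{\sigma^\prime}$, and the covariance computation are all fine, and your shear-plus-Fubini argument for measure invariance is essentially the computation the paper itself performs for the \emph{non-standard} identification: there the attached holonomy $\ho_\Gamma(\sigma(t(\gamma)))$ multiplies the coordinate $\ho_\Gamma(\gamma)$ on the right, and right-invariance of the product Haar measure (corollary \ref{cor uniqueness}) gives invariance of the Haar state. The genuine gap is your closing claim that the argument ``is indifferent to the natural or non-standard identification.'' It is not, and the paper says so explicitly: in problem \ref{probl statewithfdiff} it is computed that, under the \emph{natural} identification, $\omega_M^\Gamma(\zeta_{(\varphi_\Gamma,\Phi_\Gamma)}(f_\Gamma))\neq\omega_M^\Gamma(f_\Gamma)$ for graph-diffeomorphisms of the form $\Phi_\Gamma=R_{\sigma_\Gamma(V)}$, i.e.\ the Haar state --- which is exactly the vector state of your constant vector $\idf_\Gamma$ --- is \emph{not} bisection-invariant. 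The mechanism of failure is that under the natural identification the evaluation $\ho_\Gamma(R_\sigma\Gp)$ is not obtained from $\ho_\Gamma(\Gp)$ by multiplying existing coordinates by words in the remaining ones; instead the edges composing $\sigma(t(\gamma))$ enter as newly activated, independent coordinates (they sit at $e_G$ in $\ho_\Gamma(\Gp)$). So $T_\sigma$ is not a measure-preserving point transformation of $G^{\vert\Gamma\vert}$ there, your Fubini step has nothing to absorb, and $V_\sigma\Omega^\Gamma_{\mathfrak{B}}=\Omega^\Gamma_{\mathfrak{B}}$ fails. Since the paper asserts the proposition ``for both identifications'' (the sentence immediately preceding it), your proof establishes only half of the statement.

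For the natural identification the paper takes a genuinely different route, which your proof would need to incorporate: the invariant state is not the Haar state but the finite average $\hat\omega^\Gamma_{\mathfrak{B}}(f_\Gamma)=\frac{1}{k}\sum_{l=1}^{k}\omega^\Gamma_M(\zeta_{\sigma_l}(f_\Gamma))$ over a generating system of bisections (proposition \ref{prop invstate for holalg}); invariance is then a re-indexing argument over the finite group $\mathfrak{B}(\PD_\Gamma)$ rather than a measure-invariance statement, and the corresponding cyclic vector $\Omega^\Gamma_{\mathfrak{B}}$ is no longer the constant function (one must exhibit it, e.g.\ via the density of the averaged push-forward measure, and verify cyclicity). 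A secondary, smaller point: even in the non-standard case your measure-invariance argument silently excludes ``cyclic'' shears such as $(g_1,g_2)\mapsto(g_1g_2,\,g_2g_1)$, which arise from the digon bisection $\sigma(t(\gamma_1))=\gamma_2$, $\sigma(t(\gamma_2))=\gamma_1$ and are neither injective nor measure-preserving; you delegate this to problem \ref{prob withoutcond}, which is acceptable in spirit, but it rests on the restrictions built into definition \ref{defi bisecongraphgroupioid} (the image of $R_\sigma$ must again be a subgraph) and should be stated as an explicit hypothesis rather than treated as automatic.
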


Notice that, the state $\omega^\Gamma_{\mathfrak{B}}$ is also a $\mathfrak{B}_{\breve S,\diff}(\PD_\Gamma)$-and $\mathfrak{B}_{\breve S,\ori}(\PD_\Gamma)$-invariant state on $C(\Ab_\Gamma)$. Hence 
\beqs V_{\sigma}\Omega^\Gamma_{\mathfrak{B}}
=\Omega^\Gamma_{\mathfrak{B}}\text{ for all }\sigma\in \mathfrak{B}_{\breve S,\ori}(\PD_\Gamma)
\eqs is true.

Recall the $\bar G_{\breve S,\Gamma}$-invariant state $\omega_M^\Gamma$ of $C(\Ab_\Gamma)$ defined in proposition \ref{prop Ginvstate}.

\begin{problem}\label{probl statewithfdiff}
For a path-diffeomorphism $(\varphi_\Gamma,\Phi_\Gamma)\in\Diff(\PD_\Gamma)$ such that for a graph\\ $\Gamma:=\{\gpe,\gppe,...,\gpm,\gppm,\gpppe,...,\gpppm\}$ where $\vert\Gamma\vert:=N=3M$ and $\Phi_\Gamma(\Gp)=(\gpe\circ\gppe,...,\gpm\circ\gppm)$ the natural identification is used, i.e. 
\beq\label{eq identificationfordiff}\Phi_\Gamma(\Gp)=(\gpe,\gppe,...,\gpm,\gppm)
\eq where 
$\Gp:=(\gpe,...,\gpm)$ and $\Gpp:=(\gppe,...,\gppm)$ are subgraphs of $\Gamma$, then 
\beqs &\omega_M^\Gamma(\zeta_{(\varphi_\Gamma,\Phi_\Gamma)}(f_\Gamma))=\int_{G^{2M}}\dif\mu_\Gamma(\ho_\Gamma(\Phi_\Gamma(\Gp)))f_\Gamma(\ho_\Gamma(\Phi_\Gamma(\Gp)))\\
&=\int_{G^{2M}}\dif\mu_\Gamma(\ho_\Gamma(\Phi_\Gamma(\Gp)))f_\Gamma(\ho_\Gamma(\gpe),\ho_\Gamma(\gppe),...,\ho_\Gamma(\gpm),\ho_\Gamma(\gppm))\\
&=\int_{G^{M}}\dif\mu_\Gamma(\ho_\Gamma(\Gp)) \int_{G^{M}}\dif\mu_\Gamma(\ho_\Gamma(\Gpp))f_\Gamma(\ho_\Gamma(\gpe),\ho_\Gamma(\gppe),...,\ho_\Gamma(\gpm),\ho_\Gamma(\gppm))\\
&= W \int_{G^M}\dif\mu_\Gamma(\ho_\Gamma(\Gp))
f_\Gamma(\ho_\Gamma(\gpe),...,\ho_\Gamma(\gpm))\\
&\neq \omega_M^\Gamma(f_\Gamma)
\eqs where $W$ is a suitable constant. Clearly the state $\omega_M^\Gamma$ is not graph-diffeomorphism invariant.

On the other hand, if instead of the natural identification \eqref{eq identificationfordiff} the non-standard identification is taken into account, then this state is graph-diffeomorphism invariant, since for a graph-diffeomorphism $(\varphi_\Gamma,\Phi_\Gamma)\in\Diff(\PD_\Gamma)$ such that for $\Gamma:=\{\gpe,\gppe,...,\gpm,\gppm,\gpppe,...,\gpppm\}$ and $N=3M$, 
\[\Phi_\Gamma(\Gp)=(\gpe\circ\gppe,...,\gpm\circ\gppm)\] Then derive
\beqs &\omega_M^\Gamma(\zeta_{(\varphi_\Gamma,\Phi_\Gamma)}(f_\Gamma))=\int_{G^{2M}}\dif\mu_\Gamma(\ho_\Gamma(\Phi_\Gamma(\Gp)))f_\Gamma(\ho_\Gamma(\Phi_\Gamma(\Gp)))\\
&=\int_{G^{M}}\dif\mu_\Gamma(\ho_\Gamma(\Phi_\Gamma(\Gp)))f_\Gamma(\ho_\Gamma(\gpe)\ho_\Gamma(\gppe),...,\ho_\Gamma(\gpm)\ho_\Gamma(\gppm))\\
&= \int_{G^M}\dif\mu_\Gamma(\ho_\Gamma(\Gp))
f_\Gamma(\ho_\Gamma(\gpe),...,\ho_\Gamma(\gpm))\\
&= \omega_M^\Gamma(f_\Gamma)
\eqs 
Furthermore recall that, the state $\omega_{M,f}^\Gamma$ defined in remark \ref{rem statewithL1}. Then this state is not path-diffeomorphism invariant, since for a path-diffeomorphism $(\varphi_\Gamma,\Phi_\Gamma)\in\Diff(\PD_\Gamma)$ such that for\\ $\Gamma:=\{\gpe,\gppe,...,\gpm,\gppm,\gpppe,...,\gpppm\}$ and $N=3M$, 
\[\Phi_\Gamma(\Gp)=(\gpe\circ\gppe,...,\gpm\circ\gppm)\] it is true that for $f\in\CD(\Ab_\Gamma)$ such that  $f(\ho_\Gamma(\Gp)\textbf{k}^{-1})=f(\ho_\Gamma(\Gp))$ for all $\textbf{k}\in G^N$ the state satisfies
\beqs &\omega_{M,f}^\Gamma(\zeta_{(\varphi_\Gamma,\Phi_\Gamma)}(f_\Gamma))\\
&=\int_{G^N}\dif\mu_\Gamma(\ho_\Gamma(\Gp))f_\Gamma(\ho_\Gamma(\Phi_\Gamma(\Gp)))f(\ho_\Gamma(\Gp))\\
&=\int_{G^{M}}\dif\mu_\Gamma(\ho_\Gamma(\Gp))f_\Gamma(\ho_\Gamma(\gpe)\ho_\Gamma(\gppe),...,\ho_\Gamma(\gpm)\ho_\Gamma(\gppm))
f(\ho_\Gamma(\gpe),...,\ho_\Gamma(\gpm))\\
&\neq \omega_{M,f}^\Gamma(f_\Gamma)
\eqs Consequently if the function $f$ satisfies addtionally
\beq\label{eq central for algraphs}  f(\ho_\Gamma(\Gp))=f(\ho_\Gamma(\gpe),...,\ho_\Gamma(\gpm))= f(\ho_\Gamma(\Phi_\Gamma(\Gp)))=f(\ho_\Gamma(\gpe)\ho_\Gamma(\gppe),...,\ho_\Gamma(\gpm)\ho_\Gamma(\gppm))
\eq for all $(\varphi_\Gamma,\Phi_\Gamma)\in\Diff(\PD_\Gamma)$ then the state $\omega_{M,f}^\Gamma$ is $\Diff(\PD_\Gamma)$-invariant. But the only function, which satisfies \eqref{eq central for algraphs}  for all graph-diffeomorphism for a finite graph groupoid, is the constant function.

Restrict the state $\omega_{H,f}^\Gamma$ presented in remark \ref{rem statewithL2}, which is $H^N$-invariant, to functions in $f\in \CD(\Ab_\Gamma)$ such that
\beqs &f(R(\textbf{k})(\ho_\Gamma(\Gp)))=f(\ho_\Gamma(\Gp))=f(L(\textbf{k})(\ho_\Gamma(\Gp)))\text{ for all }\textbf{k}\in H^N\text{ and }\\
&f(\ho_\Gamma(\Gp))=f(\ho_\Gamma(\Phi_\Gamma(\Gp)))\text{ for all }\Gp\in\PD_\Gamma\text{ and }(\varphi_\Gamma,\Phi_\Gamma)\in\Diff(\PD_\Gamma)
\eqs Then the state $\omega_{H,f}^\Gamma$ is $H^N$- and $\Diff(\PD_\Gamma)$-invariant. Observe that, this would give a new state for the holonomy-flux $C^*$-algebra, but the flux operators would be implemented by maps $H_{\breve S,\Gamma}$ instead of $G_{\breve S,\Gamma}$.
\end{problem}

Recognize that, the elements of $\Ab_\Gamma$ are of the form $\ho_\Gamma(\Gp)$ where $\Gp$ is a subgraph of $\Gamma$ and the natural or the non-standard identification is applied to identify $\Ab_\Gamma$ with $G^N$. Hence there exists, additionally to the actions in $\Act(\bar G_{S,\Gamma},C(\Ab_\Gamma))$, the actions of $\bar G^A_{S,\Gamma}$ or $\bar \ZD_{\breve S,\Gamma}$ on $C(\Ab_\Gamma)$, which are defined in remark \ref{rem fluxlikeoperators}.

Furthermore due to the fact that the number of subgraphs of $\Gamma$ generated by the edges of $\Gamma$ is finite, there exists a finite set $\mathfrak{B}^\Gamma_{\breve S,\ori}(\PD_\Gamma)$ of bisections such that each of bisection is a map from the set $V_\Gamma$ to a distinct subgraph of $\Gamma$ such that all elements of $\PD_\Gamma$ are construced from the finite set $\mathfrak{B}^\Gamma_{\breve S,\ori}(\PD_\Gamma)$. Call such a set of bisections a \textbf{generating system of bisections for a graph} $\Gamma$. 

\begin{prop}\label{prop invstate for holalg} Let $\mathfrak{B}^\Gamma_{\breve S,\ori}(\PD_\Gamma):=\{\sigma_l\in \mathfrak{B}_{\breve S,\ori}(\PD_\Gamma)\}_{1\leq l\leq k}$ be a subset of $\mathfrak{B}_{\breve S,\ori}(\PD_\Gamma)$ that forms a generating system of bisections for the graph $\Gamma$.

Then there is a state $\hat\omega^\Gamma_{\mathfrak{B}}$ on $C(\Ab_\Gamma)$ given by
\beqs \hat\omega^\Gamma_{\mathfrak{B}}(f_\Gamma)
&:=\frac{1}{k}\sum_{l=1}^k \omega^\Gamma_{M}(\zeta_{\sigma_l}(f_\Gamma))\text{ for }\sigma_l\in\mathfrak{B}^\Gamma_{\breve S,\ori}(\PD_\Gamma)
\eqs which is $\mathfrak{B}_{\breve S,\ori}(\PD_\Gamma)$-invariant and
where $k$ is the maximal number of subgraphs, which are generated by all edges and their compositions of the graph $\Gamma$. 

Consequently the state  $\hat\omega^\Gamma_{\mathfrak{B}}$ on $C(\Ab_\Gamma)$ is $\bar \ZD_{S,\Gamma}$-, $\mathfrak{B}_{\breve S,\ori}(\PD_\Gamma)$- and $\mathfrak{B}_{\breve S,\diff}(\PD_\Gamma)$-invariant, i.o.w. $\hat\omega^\Gamma_{\mathfrak{B}}$ is contained in the set $\Ss^{\ZD,\diff,\ori}(C(\Ab_\Gamma))$ of all $\bar \ZD_{S,\Gamma}$-, $\mathfrak{B}_{\breve S,\ori}(\PD_\Gamma)$- and $\mathfrak{B}_{\breve S,\diff}(\PD_\Gamma)$-invariant states on $C(\Ab_\Gamma)$. 

The actions $\alpha\in\Act(\bar \ZD_{\breve S,\Gamma},C(\Ab_\Gamma))$ and $\zeta\in\Act(\mathfrak{B}_{\breve S,\diff}(\PD_\Gamma),C(\Ab_\Gamma))$ commute, i.e.
\beq\label{eq commactions1}(\alpha(\rho_{S,\Gamma}(\Gp))\circ\zeta_\sigma)(f_\Gamma)
=(\zeta_\sigma\circ\alpha(\rho_{S,\Gamma}(\Gp)))(f_\Gamma)\quad\forall f_\Gamma\in\Alg_\Gamma
\eq 

The state $\hat\omega^\Gamma_{\mathfrak{B}}$ and the actions $\alpha\in\Act(\bar \ZD_{\breve S,\Gamma},C(\Ab_\Gamma))$ and $\zeta\in\Act(\mathfrak{B}_{\breve S,\ori}(\PD_\Gamma),C(\Ab_\Gamma))$ satisfy
\beq\label{eq commactions2}\hat\omega^\Gamma_{\mathfrak{B}}\circ\alpha(\rho_{S,\Gamma}(\Gp))\circ\zeta_\sigma
=\hat\omega^\Gamma_{\mathfrak{B}}\circ\zeta_\sigma\circ\alpha(\rho_{S,\Gamma}(\Gp))
\eq 
\end{prop}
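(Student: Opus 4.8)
The plan is to proceed in four stages: first to check that $\hat\omega^\Gamma_{\mathfrak{B}}$ is a state, then to prove the three invariances, and finally to deduce the commutation relations \eqref{eq commactions1} and \eqref{eq commactions2}. To begin with, $\hat\omega^\Gamma_{\mathfrak{B}}$ is indeed a state: by Corollary \ref{cor uniqueness} the Haar state $\omega^\Gamma_M$ is a state on $C(\Ab_\Gamma)$, and by Proposition \ref{prop groupbisecdynsys} each $\zeta_{\sigma_l}$ is a point-norm continuous automorphism of $C(\Ab_\Gamma)$; hence every $\omega^\Gamma_M\circ\zeta_{\sigma_l}$ is again a state, and the convex combination $\frac1k\sum_{l=1}^k\omega^\Gamma_M\circ\zeta_{\sigma_l}$ is positive, linear and normalised, so it is a state as well.

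The central step is the $\mathfrak{B}_{\breve S,\ori}(\PD_\Gamma)$-invariance. Since $\vert\Gamma\vert$ is finite, $\mathfrak{B}_{\breve S,\ori}(\PD_\Gamma)$ is a finite group and the generating system $\{\sigma_l\}_{1\leq l\leq k}$ indexes the finitely many subgraphs of $\Gamma$. For $\tau\in\mathfrak{B}_{\breve S,\ori}(\PD_\Gamma)$ I would use the automorphy of $\zeta$ from Proposition \ref{prop groupbisecdynsys}, namely $\zeta_{\sigma_l}\circ\zeta_\tau=\zeta_{\sigma_l\ast_2\tau}$, together with $R_{\sigma_l\ast_2\tau}=R_{\sigma_l}\circ R_\tau$, to see that right multiplication by $\tau$ permutes the family of image subgraphs $\{\sigma_l(V_\Gamma)\}_{l}$ (right multiplication in the group of bisections is a bijection, and it descends to a permutation of subgraphs). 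Because $\omega^\Gamma_M$ is the two-sided invariant Haar state, the value $\omega^\Gamma_M(\zeta_\sigma(f_\Gamma))$ depends only on the image subgraph $\sigma(V_\Gamma)$ and not on the particular representing bisection; consequently
\[\hat\omega^\Gamma_{\mathfrak{B}}(\zeta_\tau(f_\Gamma))=\tfrac1k\sum_{l=1}^k\omega^\Gamma_M(\zeta_{\sigma_l\ast_2\tau}(f_\Gamma))=\tfrac1k\sum_{l=1}^k\omega^\Gamma_M(\zeta_{\sigma_l}(f_\Gamma))=\hat\omega^\Gamma_{\mathfrak{B}}(f_\Gamma),\]
the middle equality being a mere reordering of the sum. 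An analogous argument yields $\mathfrak{B}_{\breve S,\diff}(\PD_\Gamma)$-invariance.

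For the $\bar\ZD_{S,\Gamma}$-invariance and the commutation relation I would argue jointly. By Proposition \ref{prop Ginvstate} the Haar state $\omega^\Gamma_M$ is invariant under every action in $\Act(\bar G_{\breve S,\Gamma},C(\Ab_\Gamma))$, and in particular under the center actions in $\Act(\bar\ZD_{\breve S,\Gamma},C(\Ab_\Gamma))$. The decisive point is \eqref{eq commactions1}: the flux action $\alpha(\rho_{S,\Gamma}(\Gp))$ acts fibrewise on the $G^{\vert\Gamma\vert}$-coordinates by group multiplications $\ho_\Gamma(\gamma_i)\mapsto\rho_S(\gamma_i)^{\pm1}\ho_\Gamma(\gamma_i)$, whereas $\zeta_\sigma$ acts through the right-translation $R_\sigma$, i.e. $\ho_\Gamma(\gamma)\mapsto\ho_\Gamma(\gamma)\ho_\Gamma(\sigma(t(\gamma)))$ by \eqref{eq righttransl}. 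Since $\sigma$ is surface-orientation-preserving, the image surface $\varphi(S)$ retains the same ingoing/outgoing and above/below relation to the transformed path, so the flux of $\varphi(S)$ still multiplies on the same side as the flux of $S$ did; left multiplication by $\rho_S(\gamma_i)$ and right multiplication by $\ho_\Gamma(\sigma(t(\gamma_i)))$ commute in $G$, which gives \eqref{eq commactions1}. Invariance of $\hat\omega^\Gamma_{\mathfrak{B}}$ under $\bar\ZD_{S,\Gamma}$ then follows by pushing the $\alpha$-invariance of $\omega^\Gamma_M$ through each summand, using \eqref{eq commactions1} to move $\alpha$ past $\zeta_{\sigma_l}$, and \eqref{eq commactions2} is obtained by applying $\hat\omega^\Gamma_{\mathfrak{B}}$ to the operator identity \eqref{eq commactions1}. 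Together these place $\hat\omega^\Gamma_{\mathfrak{B}}$ in $\Ss^{\ZD,\diff,\ori}(C(\Ab_\Gamma))$.

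The main obstacle I expect is the verification of \eqref{eq commactions1}: one must check carefully that surface-orientation-preservation genuinely excludes the left-right interchange responsible for the non-automorphic behaviour flagged in Problem \ref{problem I}, and that the flux label transforms covariantly as $\rho_{S,\Gamma}\mapsto\rho_{\varphi(S),\Gamma}$ so that the fibrewise multiplications on $G$ really commute. A secondary subtlety, already visible in Problem \ref{probl statewithfdiff}, is that the Haar weights of the individual summands differ between the natural and the non-standard identification of $\Ab_\Gamma$; the averaging over a full generating system of bisections is precisely what restores invariance in the natural identification, and this balancing must be tracked when a bisection creates composite paths.
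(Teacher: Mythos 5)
Your first two steps are fine and essentially coincide with the paper's own argument: the paper also treats $\hat\omega^\Gamma_{\mathfrak{B}}$ as a convex combination of states $\omega^\Gamma_M\circ\zeta_{\sigma_l}$, and its proof of invariance is exactly your index-shift, written there as $\zeta_{\sigma_l}\circ\zeta_\sigma=\zeta_{\sigma_{l+1}}$ without the justification you supply (that right multiplication by a bisection permutes the generating system), so on that point you are if anything more careful than the paper.

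The genuine gap is in your verification of \eqref{eq commactions1}. You argue that the flux acts by left multiplication, $\ho_\Gamma(\gamma_i)\mapsto\rho_S(\gamma_i)^{\pm1}\ho_\Gamma(\gamma_i)$, while $R_\sigma$ appends $\ho_\Gamma(\sigma(t(\gamma_i)))$ on the right, and that left and right multiplication in $G$ commute. That mechanism never invokes the centrality of $\ZD(G)$, and if it were correct it would prove \eqref{eq commactions1} for all of $\bar G_{\breve S,\Gamma}$ --- which is false, and is precisely what the (unlabelled) problem placed immediately after this proposition in the paper demonstrates. The set $\Act(\bar\ZD_{\breve S,\Gamma},C(\Ab_\Gamma))$ of definition \ref{def actGA} contains the right actions attached to paths that meet $S$ in their \emph{target} vertices, i.e. actions where the flux multiplies the holonomy on the same side on which the bisection appends its segment. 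For such an action the two orders of composition evaluate to $f_\Gamma(\ho_\Gamma(\gamma_i)\ho_\Gamma(\sigma(t(\gamma_i)))\rho_S(\gamma_i),\dots)$ versus $f_\Gamma(\ho_\Gamma(\gamma_i)\rho_S(\gamma_i)\ho_\Gamma(\sigma(t(\gamma_i))),\dots)$, and these agree for all holonomy configurations precisely when $\rho_S(\gamma_i)$ commutes with every element of $G$, i.e. when $\rho_S(\gamma_i)\in\ZD(G)$. Centrality, not surface-orientation-preservation, is the decisive ingredient; orientation-preservation only guarantees that the transformed flux label $\rho_{S_\sigma,\Gamma}$ multiplies on the same side as before, it does not make two multiplications on the same side commute. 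Once \eqref{eq commactions1} is repaired by this centrality argument, your remaining deductions (pushing $\alpha$ past each $\zeta_{\sigma_l}$, using the $\bar\ZD_{\breve S,\Gamma}$-invariance of $\omega^\Gamma_M$ from Proposition \ref{prop Ginvstate}, and applying the state to obtain \eqref{eq commactions2}) do match the paper's computation.
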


\begin{problem}
 Let $\Gp:=\{ \gamma_1,...,\gamma_M\}$ be a subgraph of $\Gamma$.
Then the following computation for $\rho_{S,\Gamma}(\Gp)\in\bar G_{\breve S,\Gamma}$  
\beqs &(\zeta_{\sigma}\circ\alpha(\rho_{S,\Gamma}(\Gp))(f_\Gamma))(\ho_\Gamma(\Gp))
= (\zeta_{\sigma} f_\Gamma)(\ho_\Gamma(\gamma_1)\rho_S(\gamma_1),...,\ho_\Gamma(\gamma_M)\rho_S(\gamma_M))\\
&= f_\Gamma(\ho_\Gamma(\gamma_1)\ho_\Gamma(\sigma(t(\gamma_1)))\rho_{S_\sigma,\Gamma}(\gamma_1\circ\sigma(t(\gamma_1))),...,\ho_\Gamma(\gamma_M)\ho_\Gamma(\sigma(t(\gamma_M)))\rho_{S_\sigma,\Gamma}(\gamma_M\circ\sigma(t(\gamma_M))))\\
&\neq f_\Gamma(\ho_\Gamma(\gamma_1)\rho_{S_\sigma,\Gamma}(\gamma_1)\ho_\Gamma(\sigma(t(\gamma_1))),...,\ho_\Gamma(\gamma_M)\rho_{S_\sigma,\Gamma}(\gamma_M)\ho_\Gamma(\sigma(t(\gamma_M))))\\
&=((\alpha(\rho_{S_\sigma,\Gamma}(\Gp))\circ\zeta_{\sigma})(f_\Gamma))(\ho_\Gamma(\Gp))
\eqs where $S_\sigma=S$, $\rho_{S_\sigma,\Gamma}(\gamma_i\circ\sigma(t(\gamma_i)))=\rho_{S_\sigma,\Gamma}(\gamma_i)$ for $i=1,...,M$ yields for $\sigma\in\mathfrak{B}_{\breve S,\diff}(\PD_\Gamma)$, $\gamma_i\cap S=\{t(\gamma_i)\}$. Clearly equality holds for every $\rho_{S,\Gamma}(\Gp)\in\bar \ZD_{\breve S,\Gamma}$. Hence the action of $\bar G_{\breve S,\Gamma}$ and the action of $\mathfrak{B}_{\breve S,\diff}(\PD_\Gamma)$ on the analytic holonomy $C^*$-algebra do not commute.

In particular observe that, if the natural identification of $\Ab_\Gamma$ is assumed, then there exists no map $D$ such that
\beqs &(D(\zeta_{\sigma}\circ\alpha(\rho_{S,\Gamma}(\Gp))(f_\Gamma))(\ho_\Gamma(\Gp))\\
&= (D f_\Gamma)(\ho_\Gamma(\gamma_1)\ho_\Gamma(\sigma(t(\gamma_1)))\rho_{S_\sigma,\Gamma}(\gamma_1\circ\sigma(t(\gamma_1))),...,\ho_\Gamma(\gamma_M)\ho_\Gamma(\sigma(t(\gamma_M)))\rho_{S_\sigma,\Gamma}(\gamma_M\circ\sigma(t(\gamma_M))))\\
&= f_\Gamma(\ho_\Gamma(\gamma_1)\rho_{S_\sigma,\Gamma}(\gamma_1),\ho_\Gamma(\sigma(t(\gamma_1)))\rho_{S_\sigma,\Gamma}(\sigma(t(\gamma_1))),...,\ho_\Gamma(\gamma_M)\rho_{S_\sigma,\Gamma}(\gamma_M),\ho_\Gamma(\sigma(t(\gamma_M)))\rho_{S_\sigma,\Gamma}(\sigma(t(\gamma_M))))\\
&= f_\Gamma(\ho_\Gamma(\gamma_1)\rho_{S_\sigma,\Gamma}(\gamma_1),\ho_\Gamma(\sigma(t(\gamma_1))),...,\ho_\Gamma(\gamma_M)\rho_{S_\sigma,\Gamma}(\gamma_M),\ho_\Gamma(\sigma(t(\gamma_M))))
\eqs holds for every element $\rho_{S,\Gamma}\in G_{\breve S,\Gamma}$, $\rho_{S_\sigma,\Gamma}\in G_{\breve S,\Gamma}$ and whenever $\sigma(t(\gamma_i))$ and $S_\sigma$ do not intersect each other for every $i=1,...,M$. 

But if $\rho_{S,\Gamma}\in \ZD_{S,\Gamma}$, then there is a map $D$ such that
\beqs &(D(\zeta_{\sigma}\circ\alpha(\rho_{S,\Gamma}(\Gp)))(f_\Gamma))(\ho_\Gamma(\Gp))\\
&= (D f_\Gamma)(\ho_\Gamma(\gamma_1)\rho_{S_\sigma,\Gamma}(\gamma_1)\ho_\Gamma(\sigma(t(\gamma_1))),...,\ho_\Gamma(\gamma_M)\rho_{S_\sigma,\Gamma}(\gamma_M)\ho_\Gamma(\sigma(t(\gamma_M))))\\
&= f_\Gamma(\ho_\Gamma(\gamma_1)\rho_{S_\sigma,\Gamma}(\gamma_1),\ho_\Gamma(\sigma(t(\gamma_1))),...,\ho_\Gamma(\gamma_M)\rho_{S_\sigma,\Gamma}(\gamma_M),\ho_\Gamma(\sigma(t(\gamma_M))))
\eqs is fulfilled.
\end{problem}
Consequently the actions of the flux group and the group of bisections are treated simulaneously only in the case of the commutative flux group $\bar \ZD_{\breve S,\Gamma}$.

\begin{proofs}
First observe that, the state $\hat\omega^\Gamma_{\mathfrak{B}}$ defined in the proposition is well-defined in both cases of a natural or non-standard identication of $\Ab_\Gamma$ and $G^N$. To conclude that, \eqref{eq commactions1} yields for an action of $\bar \ZD_{\breve S,\Gamma}$ on $C(\Ab_\Gamma)$, investigate the computation
\beqs \hat\omega^\Gamma_{\mathfrak{B}}((\zeta_\sigma\circ\alpha(\rho_{S,\Gamma}(\Gp)))(f_\Gamma))
&=\frac{1}{k}\sum_{l=1}^k \omega^\Gamma_{M}(\zeta_{\sigma_{l+1}}(\alpha(\rho_{S,\Gamma})(f_\Gamma)))\\
&=\frac{1}{k}\sum_{l=1}^k \omega^\Gamma_{M}(\alpha(\rho_{S_\sigma,\Gamma}(\Gp))(\zeta_{\sigma_{l+1}}(f_\Gamma)))\\
&=\frac{1}{k}\sum_{l=1}^k \omega^\Gamma_{M}(\zeta_{\sigma_{l+1}}(f_\Gamma))\\
&=\hat\omega^\Gamma_{\mathfrak{B}}(f_\Gamma)=\hat\omega^\Gamma_{\mathfrak{B}}((\alpha(\rho_{S_\sigma,\Gamma}(\Gp))\circ\zeta_\sigma)(f_\Gamma))
\eqs
for $\rho_{S_\sigma,\Gamma}(\Gp),\rho_{S,\Gamma}(\Gp)\in\bar \ZD_{\breve S,\Gamma}$, $\sigma\in\mathfrak{B}_{\breve S,\diff}(\PD_\Gamma)$ or $\sigma\in\mathfrak{B}_{\breve S,\ori}(\PD_\Gamma)$.
\end{proofs}

It is possible to construct a state $\breve\omega^\Gamma_{\mathfrak{B}}$ on $C(\Ab_\Gamma)$, which is $\bar G^A_{S,\Gamma}$-invariant. But this need more technical details, which are not concerned in this article.

\subsubsection*{An action of the local flux group on the holonomy algebra for finite graph systems}

Let $\Gp:=\{\gamma_1,...,\gamma_M\}$ be a subgraph of $\Gamma$. Recall the maps $G^{\loc}_{\Gamma}$ presented in definition \ref{def Gloc}. For an element $\textbf{g}_\Gamma\in G^{\loc}_{\Gamma}$ it is true that, $\textbf{g}_\Gamma(\Gp)$ is identified with the element $(g_\Gamma(s(\gamma_1)),...,g_\Gamma(s(\gamma_M)))$ in $G^{\vert\Gamma\vert}$. Notice that, it is not necessary to focus on natural identified graphs in a finite graph system. Then there is an action $\alpha_{\loc}$ of $\bar G^{\loc}_{\Gamma}$ on $C(\Ab_\Gamma)$ given by
\beqs (\alpha_{\loc}(\textbf{g}_\Gamma(\Gp))f_\Gamma)(\ho_\Gamma(\Gp))&:=f_\Gamma( g_\Gamma(s(\gamma_1))\ho_\Gamma(\gamma_1)g_\Gamma(t(\gamma_1))^{-1},...,g_\Gamma(s(\gamma_M))\ho_\Gamma(\gamma_M)g_\Gamma(t(\gamma_M))^{-1})
\eqs 

Consider the example, which is given by a graph $\Gamma:=\{\gamma_1,\gamma_2\}$ and asubgraph $\Gp:=\{\gamma_1\circ\gamma_2\}$. Then calculate
\beqs (\alpha_{\loc}(\textbf{g}_\Gamma(\Gp))f_\Gamma)(\ho_\Gamma(\Gp))
&=(D_S\alpha_{\loc}(\textbf{g}_\Gamma(\Gamma))D_S^{-1}f_\Gamma)(\ho_\Gamma(\gamma_1\circ\gamma_2))\\
&=(D_S\alpha_{\loc}(\textbf{g}_\Gamma(\Gamma))f_\Gamma)(\ho_\Gamma(\gamma_1),\ho_\Gamma(\gamma_2))\\
&=(D_Sf_\Gamma)( g_\Gamma(s(\gamma_1))\ho_\Gamma(\gamma_1)g_\Gamma(t(\gamma_1))^{-1},g_\Gamma(s(\gamma_2))\ho_\Gamma(\gamma_2)g_\Gamma(t(\gamma_2))^{-1})\\
&=f_\Gamma( g_\Gamma(s(\gamma_1))\ho_\Gamma(\gamma_1)g_\Gamma(t(\gamma_1))^{-1}g_\Gamma(s(\gamma_2))\ho_\Gamma(\gamma_2)g_\Gamma(t(\gamma_2))^{-1})\\
&=f_\Gamma( g_\Gamma(s(\gamma_1))\ho_\Gamma(\gamma_1\circ\gamma_2)g_\Gamma(t(\gamma_2))^{-1})
\eqs
 
\begin{defi}
The $\bar G^{\loc}_{\Gamma}$-fixed point subalgebra of $C(\Ab_\Gamma)=:\Alg_\Gamma$ is given by
\beqs \Alg^{\loc}_\Gamma:=\{f_\Gamma\in\Alg_\Gamma:\alpha_{\loc}(\textbf{g}_\Gamma(\Gp))(f_\Gamma)=f_\Gamma\quad\forall \textbf{g}_\Gamma(\Gp)\in\bar G^{\loc}_{\Gamma}\}
\eqs and this algebra is called the \textbf{$C^*$-algebra of gauge invariant holonomies restricted to finite graph systems}.
\end{defi}

Notice that, there is a isomorphism between the $C^*$-algebras $\Alg^{\loc}_\Gamma$ and $C(\Ab_\Gamma/\bar\SimGroup_\Gamma)$.

\subsection{Weyl $C^*$-algebras associated to surfaces and inductive limits of finite graph systems} \label{subsec weylalg}
\subsubsection*{Weyl $C^*$-algebras associated to surfaces and finite graph systems}

\begin{defi}
Let $\Gamma$ be a graph and $\PD_\Gamma$ be a finite graph system associated to $\Gamma$ and $\breve S$ a surface set. Let $\surf$ be the set of all suitable surface sets for $\Gamma$.

The algebra generated by all elements of $C(\Ab_{\Gamma})$ and $\mathbf{W}(\bar G_{\breve S,\Gamma})$, which satisfy the canonical commutator relations \eqref{cancomrel I}, form an \textbf{abstract Weyl $^*$-algebra $\mathbb{W}(\breve S,\Gamma)$ for a surface set and a finite graph system} associated to a graph $\Gamma$. The algebra generated by all elements of $C(\Ab_{\Gamma})$ and $\mathbf{W}(\bar G_{\surf,\Gamma})$ forms an \textbf{abstract Weyl $^*$-algebra $\mathbb{W}(\surf,\Gamma)$ for surfaces and a finite graph system} associated to a graph $\Gamma$.
\end{defi}

Due to the fact that all unitaries $U$ (or $V$) define a homomorphism of $\bar G_{\breve S,\Gamma}$ (or $\mathfrak{B}_{\breve S,\ori}(\PD_{\Gamma})$) into a unitary group of $\LD(\HS_\Gamma)$, the abstract Weyl $^*$-algebra $\mathbb{W}(\breve S, \Gamma)$ is completed to a $C^*$-algebra.

Summarising the Weyl $C^*$-algebra of Loop Quantum Gravity is generated by continuous functions depending on holonomies along paths and the (strongly) continuous unitary flux operators.   

\begin{prop}
Let $\HS_\Gamma$ be the Hilbert space $L^2(\Ab_\Gamma,\mu_\Gamma)$ with norm $\|.\|^\Gamma_2$.

The $^*$-algebra\footnote{modulo the two-sided self-adjoint ideal of the $^*$-algebra defined by $I=\{W:\|W\|_2=0\}$} generated by all elements of $C(\Ab_{\Gamma})$ and $\mathbf{W}(\bar G_{\breve S,\Gamma})$ for every surface set $\breve S$ in $\surf$, which satisfy the canonical commutator relations \eqref{cancomrel I}, completed w.r.t. the $\|.\|^\Gamma_2$-norm is a $C^*$-algebra. This $C^*$-algebra is called the \textbf{Weyl $C^*$-algebra for surfaces and a finite graph system}. 
Denote this $C^*$-algebra by $\mathsf{Weyl}(\surf, \Gamma)$.

The $^*$-algebra\footnote{modulo the two-sided self-adjoint ideal of the $^*$-algebra defined by $I=\{W:\|W\|_2=0\}$} generated by all elements of $C(\Ab_{\Gamma})$ and $\mathbf{W}(\bar\ZD_{\breve S,\Gamma})$ for every surface set $\breve S$ in $\surf_\ZD$, which satisfy the canonical commutator relations \eqref{cancomrel I}, completed w.r.t. the $\|.\|^\Gamma_2$-norm is a $C^*$-algebra. This $C^*$-algebra is called the \textbf{commutative Weyl $C^*$-algebra for surfaces and a finite graph system}. 
Denote this $C^*$-algebra by $\mathsf{Weyl}_\ZD(\surf_\ZD, \Gamma)$.
\end{prop}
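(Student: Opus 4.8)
The plan is to realise the abstract Weyl $^*$-algebra as a concrete $^*$-subalgebra of the $C^*$-algebra $\LD(\HS_\Gamma)$ of bounded operators on $\HS_\Gamma=L^2(\Ab_\Gamma,\mu_\Gamma)$, and then to invoke the standard fact that the norm-closure of a $^*$-subalgebra of a $C^*$-algebra is itself a $C^*$-algebra. First I would fix the common representation space $\HS_\Gamma$ and recall the two building blocks already available on it: the multiplication representation $\Phi_M\in\Mor(C(\Ab_\Gamma),\LD(\HS_\Gamma))$, which is a non-degenerate $^*$-homomorphism, and, for each suitable surface set $\breve S\in\surf$, the unitary representation $U\in\Rep(\bar G_{\breve S,\Gamma},\KD(\HS_\Gamma))$ whose values are the Weyl elements of $\mathbf{W}(\bar G_{\breve S,\Gamma})$. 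Because all these families act on one and the same Hilbert space $\HS_\Gamma$, they jointly generate a single $^*$-subalgebra $\Psi(\mathbb{W}(\surf,\Gamma))\subseteq\LD(\HS_\Gamma)$.

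Next I would check that this image is genuinely a $^*$-subalgebra. Closure under involution is immediate: $\Phi_M(f_\Gamma)^*=\Phi_M(\overline{f_\Gamma})$ lies in $C(\Ab_\Gamma)$, while $U(\rho_{S,\Gamma}(\Gamma))^*=U(\rho_{S^{-1},\Gamma}(\Gamma))$ is again a Weyl element, as already noted after the covariant-representation construction. Closure under multiplication is the one genuinely algebraic point: using the covariant relation \eqref{cancomrel I}, namely $U(\rho)\Phi_M(f_\Gamma)U(\rho)^{-1}=\Phi_M(\alpha(\rho)f_\Gamma)$, every product of generators can be pushed into the normal form ``function times Weyl element'' exhibited in the sample element $\sum_l\idf_\Gamma U(\rho^l)+\sum_{k,i}f^k_\Gamma U(\rho^i)+\cdots$, so that the linear span of such terms is closed under the product. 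I would emphasise that one restricts throughout to those finite linear combinations of Weyl elements that are themselves unitary and satisfy \eqref{cancomrel I}, exactly as in the definition of $\mathbf{W}(\bar G_{\breve S,\Gamma})$; this guarantees the representation $\Psi$ of the abstract relations is well defined and consistent across the different surface sets in $\surf$.

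With this in hand I would introduce the seminorm $\|W\|^\Gamma_2:=\|\Psi(W)\|_{\LD(\HS_\Gamma)}$ pulled back along $\Psi$, identify the two-sided self-adjoint ideal $I=\{W:\|W\|^\Gamma_2=0\}$ of the footnote as $\ker\Psi$, and pass to the quotient $\mathbb{W}(\surf,\Gamma)/I$ on which $\|\cdot\|^\Gamma_2$ is an honest operator norm. Its completion is isometrically isomorphic to the operator-norm closure $\overline{\Psi(\mathbb{W}(\surf,\Gamma))}$ inside $\LD(\HS_\Gamma)$; being a norm-closed $^*$-subalgebra of a $C^*$-algebra it inherits completeness, $^*$-invariance and the $C^*$-identity, hence is a $C^*$-algebra, which is exactly $\mathsf{Weyl}(\surf,\Gamma)$. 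The commutative statement for $\mathsf{Weyl}_\ZD(\surf_\ZD,\Gamma)$ is proved verbatim, replacing $\bar G_{\breve S,\Gamma}$ by the centre-valued flux group $\bar\ZD_{\breve S,\Gamma}$ and $\mathbf{W}(\bar G_{\breve S,\Gamma})$ by the commutative Weyl elements $\mathbf{W}(\bar\ZD_{\breve S,\Gamma})$; here the centrality conditions of the remark on $\Hom_S(\PD_\Gamma\Sigma,\ZD(G))$ ensure the Weyl elements commute with the multiplication operators, so the generated algebra is commutative, but the $C^*$-argument is unchanged.

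The step I expect to be the main obstacle is the simultaneous, consistent handling of \emph{all} surface sets $\breve S\in\surf$ at once: one must verify that products of Weyl elements coming from different surface sets still reduce to the admissible normal form and do not generate operators outside the span satisfying \eqref{cancomrel I}. This is precisely the point where the exclusion of the non-automorphic mixed left/right actions and the restriction to unitary, relation-preserving linear combinations becomes essential, and it is where I would spend most of the care, the remaining $C^*$-completion being routine.
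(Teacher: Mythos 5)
Your proposal is correct and follows essentially the same route as the paper, which offers no detailed proof beyond the remark preceding the proposition: since the Weyl elements are unitaries and $C(\Ab_\Gamma)$ acts by multiplication operators, everything sits inside $\LD(\HS_\Gamma)$, so the norm-closure of the generated $^*$-algebra is a concrete $C^*$-subalgebra of $\LD(\HS_\Gamma)$. Your additional steps (the covariance normal form, reading the footnote's ideal $I$ as $\ker\Psi$, and interpreting $\|\cdot\|^\Gamma_2$ as the operator norm induced by the Hilbert space $\HS_\Gamma$ rather than the $L^2$-vector norm, which is the only reading under which the completion is a $C^*$-algebra) merely make the paper's implicit argument precise.
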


The set $\surf$ of surface sets, which is used to define the $C^*$-algebra $\mathsf{Weyl}(\surf, \Gamma)$, and the set $\surf_\ZD$, which defines $\mathsf{Weyl}_\ZD(\surf_\ZD, \Gamma)$, are distinguished from each other. The set $ \surf_\ZD$ contains the set $\surf$. 

\begin{prop}
The $^*$-algebra\footnote{modulo the two-sided self-adjoint ideal of $\mathbf{Weyl}(\breve S, \Gamma)$ defined by $I=\{W:\|W\|=0\}$ for all $\breve S\in \surf$} generated by all elements of $C(\Ab_{\Gamma})$ and $\mathbf{W}(G_{\breve S,\Gamma})$ for every surface set $\breve S$ in $\surf$, which satisfy the canonical commutator relations \eqref{cancomrel I}, completed w.r.t. the norm
\beqs \| W\|:=\sup\{\|\pi_r(W)\|_r: \pi_r\text{ a unital }^*\text{-representation of } \mathbb{W}(\breve S, \Gamma)\text{ on }\HS_r \text{ }\forall \breve S\in \surf\}
\eqs is a $C^*$-algebra. This $C^*$-algebra is called the \textbf{universal Weyl $C^*$-algebra for surfaces and a finite graph system} and will be denoted by $\mathcal{W}\text{\textit{eyl}}(\surf, \Gamma)$.
\end{prop}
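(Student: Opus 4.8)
The plan is to recognise this as the standard construction of a universal (enveloping) $C^*$-algebra of a $^*$-algebra, so that the whole task reduces to three verifications: that the displayed supremum is finite and hence defines a genuine $C^*$-seminorm, that passing to the quotient by its null ideal turns this into a $C^*$-norm, and that the completion of a normed $^*$-algebra carrying a $C^*$-norm is a $C^*$-algebra.

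First I would show that the generators are uniformly bounded across all the representations $\pi_r$ entering the supremum. Every $\pi_r$ is by hypothesis a unital $^*$-representation of $\mathbb{W}(\breve S,\Gamma)$ on a Hilbert space $\HS_r$. Its restriction to the commutative $C^*$-algebra $C(\Ab_\Gamma)$ is a $^*$-homomorphism of $C^*$-algebras and is therefore contractive, so $\|\pi_r(f_\Gamma)\|_r\leq\|f_\Gamma\|_\infty$ for all $f_\Gamma\in C(\Ab_\Gamma)$ and all $r$. Its restriction to the Weyl elements sends each unitary $U(\rho_{S,\Gamma}(\Gamma))\in\mathbf{W}(\bar G_{\breve S,\Gamma})$ to a unitary of $\LD(\HS_r)$ (here unitality of $\pi_r$ is used), so $\|\pi_r(U(\rho_{S,\Gamma}(\Gamma)))\|_r=1$ for every $r$. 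Since a generic element $W$ of the generating $^*$-algebra is by definition a finite $\mathbb{C}$-linear combination of finite products of such generators, submultiplicativity of the operator norm and the triangle inequality give a bound $\|\pi_r(W)\|_r\leq C_W$ with $C_W$ depending only on the finitely many coefficients and generators occurring in $W$ and not on $r$. Hence the supremum over all $\pi_r$ and all $\breve S\in\surf$ is finite, and $W\mapsto\|W\|$ is a well-defined seminorm.

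Next I would observe that $\|\cdot\|$ is in fact a $C^*$-seminorm. Each $W\mapsto\|\pi_r(W)\|_r$ is the pullback of the operator norm along the $^*$-homomorphism $\pi_r$, hence a $C^*$-seminorm: it is submultiplicative and satisfies $\|\pi_r(W^*W)\|_r=\|\pi_r(W)\|_r^2$. Taking the supremum preserves submultiplicativity and the $C^*$-identity, since for every $r$ one has $\|\pi_r(W)\|_r^2=\|\pi_r(W^*W)\|_r\leq\|W^*W\|$, whence $\|W\|^2\leq\|W^*W\|\leq\|W\|\,\|W^*\|=\|W\|^2$ and therefore $\|W^*W\|=\|W\|^2$. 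The null set $I=\{W:\|W\|=0\}$ is then a two-sided self-adjoint ideal, the standard consequence of the $C^*$-identity for a seminorm, so $\|\cdot\|$ descends to a genuine $C^*$-norm on the quotient $^*$-algebra. Completing this normed $^*$-algebra with respect to the $C^*$-norm yields a complete normed $^*$-algebra whose norm satisfies the $C^*$-identity, which is by definition a $C^*$-algebra; this is $\mathcal{W}\text{\textit{eyl}}(\surf,\Gamma)$.

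The main obstacle is the very first step: establishing that the supremum is finite, i.e. that a universal $C^*$-seminorm exists at all. For a general $^*$-algebra the supremum over all $^*$-representations may diverge, and the existence of the universal norm hinges precisely on each generator being uniformly bounded in every admissible representation. Here this is secured because the two families of generators are of a special bounded type, namely contractive images of the pre-existing $C^*$-algebra $C(\Ab_\Gamma)$ and unitary images of the group $\bar G_{\breve S,\Gamma}$, so the only genuine content is to confirm that the admissible representations $\pi_r$ are indeed required to be unital $^*$-representations respecting these structures and that every element of the generating $^*$-algebra really is a finite algebraic expression in the generators. One should also check that the class of such representations is non-empty, so that $\|\cdot\|$ is not vacuously $+\infty$: the multiplication representation $\Phi_M$ together with the unitaries $U\in\Rep(\bar G_{\breve S,\Gamma},\KD(\HS_\Gamma))$ constructed earlier in the excerpt provides such a covariant representation, which closes the argument.
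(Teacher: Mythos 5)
The paper states this proposition bare, with no accompanying proof (the pattern throughout Section \ref{subsec weylalg} is to assert such statements and defer to \cite{KaminskiPHD}), so there is no argument of the author's to compare yours against. On its own merits your proposal is correct and complete: it is the standard universal (enveloping) $C^*$-algebra construction, and you correctly identify and discharge the one point where such constructions can fail, namely finiteness of the supremum. The two bounds you use are exactly the right ones — contractivity of the restriction of each $\pi_r$ to the $C^*$-algebra $C(\Ab_\Gamma)$ (automatic for a $^*$-homomorphism out of a complete $C^*$-algebra), and $\|\pi_r(U)\|_r=1$ for the Weyl generators, which genuinely requires unitality of $\pi_r$ as you note — and together with finiteness of linear combinations and products they give a representation-independent bound $C_W$. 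The passage from $C^*$-seminorm to $C^*$-norm via the null ideal matches the footnote in the statement, and your appeal to the covariant pair $(\Phi_M,U)$ on $\HS_\Gamma$ to guarantee the representation class is non-empty is the right way to rule out degeneracy. One cosmetic quibble: if the class of representations were empty the supremum of the resulting empty set of non-negative reals would conventionally be $0$, not $+\infty$, so emptiness would collapse the algebra to $\{0\}$ rather than make the norm undefined; the substance of your non-emptiness check is unaffected.
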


\subsubsection*{The Weyl algebra for surfaces}

\begin{prop}
Define the action of $\mathfrak{B}_{\breve S,\ori}(\PD_\Gamma)$ (or $\mathfrak{B}_{\breve S,\diff}(\PD_\Gamma)$) on $\mathsf{Weyl}_\ZD(\surf_\ZD,\Gamma)$ by
\beqs \zeta_\sigma(U(\rho_{S,\Gamma}(\Gp)) f_\Gamma):= \big(\zeta_\sigma(U)\big)(\rho_{S,\Gamma}(\Gp))\zeta_\sigma(f_\Gamma)
\eqs whenever $\sigma\in\mathfrak{B}_{\breve S,\ori}(\PD_\Gamma)$ and $U(\rho_{S,\Gamma}(\Gp)),f_\Gamma\in \mathsf{Weyl}_\ZD(\surf_\ZD,\Gamma)$ for every surface $S$ in the set $\breve S$, which is contained in $\surf_\ZD$. This action is automorphic and point-norm continuous. 
\end{prop}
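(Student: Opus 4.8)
The plan is to view $\zeta_\sigma$ as the common extension of two actions that have already been constructed on the generators of $\mathsf{Weyl}_\ZD(\surf_\ZD,\Gamma)$. On the holonomy factor $C(\Ab_\Gamma)$ the action $\zeta_\sigma$ is the point-norm continuous automorphic action of Proposition \ref{prop groupbisecdynsys}, and on the commutative Weyl elements $\mathbf{W}(\bar\ZD_{\breve S,\Gamma})$ it is the restriction to the center $\bar\ZD_{\breve S,\Gamma}$ of the inner action $U(\rho_{S,\Gamma}(\Gp))\mapsto U(\rho_{\varphi(S),\Gamma}(R_\sigma(\Gp)))$ whose automorphic and point-norm continuous character was established together with the covariance relation \eqref{cancomrel II}. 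Since $\mathsf{Weyl}_\ZD(\surf_\ZD,\Gamma)$ is by definition the $\|.\|^\Gamma_2$-completion of the $^*$-algebra generated by $C(\Ab_\Gamma)$ and these Weyl elements, the displayed formula prescribes $\zeta_\sigma$ on every product of a Weyl element with a function; extending by linearity gives a map on a dense $^*$-subalgebra, which I would then extend by continuity to the completion.

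The first and decisive step is well-definedness, and this is where the restriction to the commutative algebra is essential. Every element of the generating $^*$-algebra can be rewritten, using the canonical commutator relations \eqref{cancomrel I}, into the normal form in which all Weyl elements stand to the left of all holonomy functions; concretely, $U(\rho_{S,\Gamma}(\Gp))\,\Phi_M(f_\Gamma)=\Phi_M(\alpha(\rho_{S,\Gamma}(\Gp))f_\Gamma)\,U(\rho_{S,\Gamma}(\Gp))$. For the prescription of $\zeta_\sigma$ to be independent of the chosen representation I must check that applying $\zeta_\sigma$ to both sides yields the same element, i.e. that $\zeta_\sigma(U)$ and $\Phi_M(\zeta_\sigma f_\Gamma)$ again obey \eqref{cancomrel I} with the $\sigma$-transported flux parameter. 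This amounts exactly to the interchangeability of the flux action $\alpha$ and the bisection action $\zeta$, which is the content of \eqref{eq commactions1} in Proposition \ref{prop invstate for holalg}. That identity holds for $\bar\ZD_{\breve S,\Gamma}$ but fails for the full group $\bar G_{\breve S,\Gamma}$, as the problem following Proposition \ref{prop invstate for holalg} shows; this is precisely why the statement concerns $\mathsf{Weyl}_\ZD(\surf_\ZD,\Gamma)$ and the center of $G$. I expect this verification to be the main obstacle.

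Granting well-definedness, the automorphic property is checked against the three conditions of Definition \ref{def automorphic}. Condition \ref{def automorphic1}, $\zeta_{\sigma\ast_2\sigma^\prime}=\zeta_\sigma\circ\zeta_{\sigma^\prime}$, follows by verifying it on each family of generators: on $C(\Ab_\Gamma)$ it is the composition law proved in Proposition \ref{prop groupbisecdynsys}, and on the Weyl elements it is the computation $(\zeta_\sigma\circ\zeta_{\sigma^\prime})(U)=\zeta_{\sigma\ast\sigma^\prime}(U)$ carried out for the surface-orientation-preserving action; since both sides are algebra homomorphisms agreeing on generators, they agree on products. Condition \ref{def automorphic2}, multiplicativity, holds on a product of a Weyl element and a function by the very definition and is propagated to arbitrary products using the normal form together with the commutation step just discussed. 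Condition \ref{def automorphic3}, $\zeta_\sigma(W^*)=\zeta_\sigma(W)^*$, is inherited from the $^*$-compatibility of $\zeta_\sigma$ on $C(\Ab_\Gamma)$ and from the identity $(\zeta_\sigma U)^*(\rho_{S}(\gamma))=U(\rho^{-1}_{\varphi(S)}(\gamma\circ\sigma(v)))=\zeta_\sigma(U^*)(\rho_{S}(\gamma))$ used for the Weyl elements, extended to products via the anti-multiplicativity of $^*$.

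Finally, point-norm continuity is immediate. The group $\mathfrak{B}_{\breve S,\ori}(\PD_\Gamma)$ (respectively $\mathfrak{B}_{\breve S,\diff}(\PD_\Gamma)$) is finite, since $\Gamma$ has only finitely many subgraphs generated by its edges, so its intrinsic topology is discrete and the limit $\sigma\to\id$ reduces to $\sigma=\id$. As $\zeta_{\id}$ is the identity on each generator, it is the identity on all of $\mathsf{Weyl}_\ZD(\surf_\ZD,\Gamma)$, whence $\lim_{\sigma\to\id}\|\zeta_\sigma(W)-W\|^\Gamma_2=0$ trivially for every $W$; alternatively the $\lim_{\sigma\to\id}$ estimates used for $C(\Ab_\Gamma)$ in Proposition \ref{prop groupbisecdynsys} and for the Weyl elements in the surface-orientation-preserving proposition transfer directly to products.
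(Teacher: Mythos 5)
The paper itself states this proposition \emph{without proof} (no proof environment follows it), so there is no argument of the author's to compare yours against line by line; what can be assessed is whether your reconstruction is sound and consistent with the machinery the paper builds around the statement. In that respect your architecture is the correct one: you split $\zeta_\sigma$ into the action of Proposition \ref{prop groupbisecdynsys} on $C(\Ab_\Gamma)$ and the surface-orientation-preserving action on Weyl elements, you identify compatibility with the relations \eqref{cancomrel I} as the crux of well-definedness, and you correctly locate the role of the center: the required intertwining of the flux action $\alpha$ and the bisection action $\zeta$ is exactly \eqref{eq commactions1} of Proposition \ref{prop invstate for holalg}, which holds for $\bar\ZD_{\breve S,\Gamma}$ and fails for $\bar G_{\breve S,\Gamma}$ (the problem following that proposition), which is why the statement is confined to $\mathsf{Weyl}_\ZD(\surf_\ZD,\Gamma)$. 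Your finiteness argument for point-norm continuity also agrees with the paper's own observation, in the proof of Proposition \ref{prop groupbisecdynsys}, that $\mathfrak{B}(\PD_\Gamma)$ is finite.

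There is, however, a genuine gap in the well-definedness step, and a smaller one in the extension step. Verifying that $\zeta_\sigma$ maps the relations \eqref{cancomrel I} to their $\sigma$-transported counterparts shows that $\zeta_\sigma$ descends to the \emph{universal} $^*$-algebra presented by the generators and those relations. But $\mathsf{Weyl}_\ZD(\surf_\ZD,\Gamma)$ is a concrete algebra of operators on $\HS_\Gamma$, completed in the $\|.\|^\Gamma_2$-norm: a finite linear combination $\sum_i U(\rho_{S_i,\Gamma}(\Gp_i))\,\Phi_M(f_i)$ can vanish as an operator without that identity being a formal consequence of \eqref{cancomrel I}, and nothing in your normal-form argument shows that the image combination $\sum_i U(\rho_{\varphi(S_i),\Gamma}(R_\sigma(\Gp_i)))\,\Phi_M(\zeta_\sigma f_i)$ then vanishes too. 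Likewise, "extend by continuity to the completion" presupposes that $\zeta_\sigma$ is bounded (in fact isometric) on the dense subalgebra, which you never verify. Both gaps close simultaneously via the unitary implementation the paper already supplies and which you cite only in passing: the translation unitary $V_\sigma$ of \eqref{cancomrel III} satisfies $V_\sigma\Phi_M(f_\Gamma)V_\sigma^*=\Phi_M(\zeta_\sigma f_\Gamma)$, and for \emph{central} fluxes it also satisfies \eqref{cancomrel II}, $V_\sigma U(\rho_{S,\Gamma}(\Gp))V_\sigma^*=U(\rho_{\varphi(S),\Gamma}(R_\sigma(\Gp)))$ --- centrality is precisely what keeps $\Ad(V_\sigma)$ of a flux unitary a flux unitary, since otherwise the flux value gets conjugated by the (non-constant) holonomies of the added paths. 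Hence $\zeta_\sigma=\Ad(V_\sigma)$ on all generators; since a surface-orientation-preserving bisection permutes the generating set ($\varphi$ maps $\breve S$ into $\breve S$), $\Ad(V_\sigma)$ restricts to an isometric $^*$-automorphism of the dense generated $^*$-subalgebra, it annihilates every vanishing linear combination because it is conjugation in $\LD(\HS_\Gamma)$, and it extends to the $\|.\|^\Gamma_2$-completion. The group law of Definition \ref{def automorphic} then follows from $V$ being a unitary representation, and your finiteness observation settles point-norm continuity.
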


Let $\breve S$ and $\breve S^\prime$ be two disjoint surface sets in $\surf_\ZD$.
Then the action of $\mathfrak{B}_{\breve S,\ori}(\PD_\Gamma)$ on $\mathsf{Weyl}_\ZD(\surf_\ZD,\Gamma)$ satisfies
\beqs \big(\zeta_\sigma(U)\big(\rho_{S,\Gamma}(\Gp)):= \idf_\Gamma
\eqs for $\sigma\in\mathfrak{B}_{\breve S,\ori}(\PD_\Gamma)$ and $U(\rho_{S,\Gamma}(\Gp))\in \mathsf{Weyl}_\ZD(\surf_\ZD,\Gamma)$ and $S\in\breve S^\prime$.

\begin{prop}\label{prop invariant state of weyl}
\begin{enumerate}
 \item\label{prop item 1} The state $\omega_{M}^\Gamma$ on $C(\Ab_\Gamma)$, which is defined in \ref{prop Ginvstate} and which is $\bar G_{\breve S,\Gamma}$-invariant for every surface set $\breve S$ in $\surf$, extends to a state $\breve\omega_{M}^{\Gamma}$ on $\mathsf{Weyl}(\surf, \Gamma)$.  The state $\breve\omega_{M}^{\Gamma}$ is pure and unique.
\item\label{prop item 2} The state $\hat\omega_{\mathfrak{B}}^\Gamma$ on $C(\Ab_\Gamma)$, which is defined in \ref{prop invstate for holalg} and which is $\bar\ZD_{\breve S,\Gamma}$-, $\mathfrak{B}_{\breve S,\ori}(\PD_{\Gamma})$- and $\mathfrak{B}_{\breve S,\diff}(\PD_{\Gamma})$-invariant for every surface set $\breve S$ in $\surf_\ZD$, extends to a state on $\mathsf{Weyl}_\ZD(\surf_\ZD, \Gamma)$.  
The state $\omega_{M,\mathfrak{B}}^{\Gamma}$ on $\mathsf{Weyl}_\ZD(\surf_\ZD, \Gamma)$ is $\bar\ZD_{\breve S,\Gamma}$-, $\mathfrak{B}_{\breve S,\ori}(\PD_{\Gamma})$- and $\mathfrak{B}_{\breve S,\diff}(\PD_{\Gamma})$-invariant. 
\end{enumerate}
\end{prop}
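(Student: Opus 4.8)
The plan is to realise both states as vector states of the cyclic vector $\Omega_\Gamma$ in the covariant representations already constructed, and to reduce each assertion to properties of that fixed vector. Throughout I identify $\HS_\Gamma$ with $L^2(\Ab_\Gamma,\mu_\Gamma)$ and $\Omega_\Gamma$ with the constant function $\idf_\Gamma$, so that $\omega_M^\Gamma(f_\Gamma)=\langle\Omega_\Gamma,\Phi_M(f_\Gamma)\Omega_\Gamma\rangle$ by Proposition \ref{prop Ginvstate} and Corollary \ref{cor uniqueness}.

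For part \ref{prop item 1} I would first record the decisive elementary fact that every Weyl element fixes $\Omega_\Gamma$: since $U(\rho_{S,\Gamma}(\Gamma))\in\Rep(\bar G_{\breve S,\Gamma},\KD(\HS_\Gamma))$ acts by left/right translation on $L^2(\Ab_\Gamma,\mu_\Gamma)$ and the constant function is translation invariant, $U(\rho_{S,\Gamma}(\Gamma))\Omega_\Gamma=\Omega_\Gamma$; equivalently, this is the $\bar G_{\breve S,\Gamma}$-invariance of $\omega_M^\Gamma$ read off the cyclic vector via the covariance relation \eqref{eq weyl relationU} and cyclicity. Because $\mathsf{Weyl}(\surf,\Gamma)$ is by definition a $C^*$-subalgebra of $\LD(\HS_\Gamma)$, the vector functional $\breve\omega_M^\Gamma(W):=\langle\Omega_\Gamma,W\Omega_\Gamma\rangle$ is automatically a well-defined state, and it restricts to $\omega_M^\Gamma$ on $C(\Ab_\Gamma)$ since there $\Phi$ agrees with $\Phi_M$. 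Evaluating on the generating elements of $\mathsf{Weyl}(\surf,\Gamma)$ (the analytic functions $f_\Gamma$, the Weyl elements, and their inner conjugates) and using $U\Omega_\Gamma=\Omega_\Gamma$ with \eqref{eq weyl relationU} gives $\breve\omega_M^\Gamma(f_\Gamma U(\rho))=\omega_M^\Gamma(f_\Gamma)$ and $\breve\omega_M^\Gamma(U(\rho)f_\Gamma U(\rho)^*)=\omega_M^\Gamma(f_\Gamma)$, so the extension is consistent with the canonical commutation relations \eqref{cancomrel I}.

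Purity I would obtain directly from the irreducibility already proved in Proposition \ref{prop Ginvstate}: the commutant of $\Phi_M(C(\Ab_\Gamma))\cup\{U(\bar G_{\breve S,\Gamma})\}$ equals $\{\lambda\cdot\idf\}$, so the representation of $\mathsf{Weyl}(\surf,\Gamma)$ on $\HS_\Gamma$ is irreducible; since continuous functions times $\idf_\Gamma$ are $\|\cdot\|_2^\Gamma$-dense in $L^2(\Ab_\Gamma,\mu_\Gamma)$, the vector $\Omega_\Gamma$ is cyclic, and an irreducible representation with a cyclic vector yields a pure state. For uniqueness I would argue that the flux translations by all of $G^{\vert\Gamma\vert}$ from both sides leave invariant only the constants in $L^2(\Ab_\Gamma,\mu_\Gamma)$, by uniqueness of the Haar measure as in Corollary \ref{cor uniqueness}; hence $\Omega_\Gamma$ is, up to a phase, the unique $\bar G_{\breve S,\Gamma}$-invariant unit vector. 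Any $\bar G_{\breve S,\Gamma}$-invariant state extending $\omega_M^\Gamma$ then has, in its GNS representation, a cyclic vector fixed by the implementing unitaries, and this forces its values on $f_\Gamma U(\rho)$ and $U(\rho)f_\Gamma U(\rho)^*$ to agree with those of $\breve\omega_M^\Gamma$ on the dense generating set. This uniqueness step is the main obstacle: it requires pinning down exactly which invariance characterises $\breve\omega_M^\Gamma$ among all extensions and transporting it through the GNS construction, i.e. showing that invariance forces the implementing unitaries to fix the cyclic vector and that the fixed vector is unique.

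For part \ref{prop item 2} I would repeat the vector-state construction on the commutative Weyl algebra $\mathsf{Weyl}_\ZD(\surf_\ZD,\Gamma)$, using additionally the unitary bisections $V_\sigma\in\Rep(\mathfrak{B}_{\breve S,\ori}(\PD_\Gamma),\KD(\HS_\Gamma))$ and the covariant pair $(\Phi_M,V)$ of Proposition \ref{prop graph diffeo}. Extending each $\omega_M^\Gamma\circ\zeta_{\sigma_l}$ as in part \ref{prop item 1} and averaging over the generating system of bisections yields $\omega_{M,\mathfrak{B}}^\Gamma$ extending $\hat\omega_{\mathfrak{B}}^\Gamma$ of Proposition \ref{prop invstate for holalg}. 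The three invariances then follow by combining the invariances already established on $C(\Ab_\Gamma)$ with the covariance relations \eqref{cancomrel II} and \eqref{cancomrel III}: $\bar\ZD_{\breve S,\Gamma}$-invariance uses that on the commutative flux group the actions of $\bar\ZD_{\breve S,\Gamma}$ and of the bisection groups commute, equations \eqref{eq commactions1} and \eqref{eq commactions2}, while invariance under $\mathfrak{B}_{\breve S,\ori}(\PD_\Gamma)$ and $\mathfrak{B}_{\breve S,\diff}(\PD_\Gamma)$ uses that averaging over the generating system stabilises the functional under the $\ast_2$-multiplication of bisections, exactly as in Proposition \ref{prop invstate for holalg}. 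Here the restriction to the center $\ZD(G)$ is essential, since the problem displayed after Proposition \ref{prop invstate for holalg} shows the flux and bisection actions fail to commute for the full group $\bar G_{\breve S,\Gamma}$.
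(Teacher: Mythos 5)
Your proposal is correct, and for part \ref{prop item 1} it is essentially the paper's own argument spelled out: the paper's proof there is the single sentence that the claim ``follows from corollary \ref{cor uniqueness} and proposition \ref{prop Ginvstate}'', and your vector-state realisation $\breve\omega_M^\Gamma=\langle\Omega_\Gamma,\cdot\,\Omega_\Gamma\rangle$, purity from the irreducibility of $\MD$, and uniqueness from the uniqueness of the Haar state are exactly the ingredients being invoked. For part \ref{prop item 2}, however, you take a genuinely different route. The paper proceeds abstractly: it assembles covariant representations $(\Psi,V)$ of the bisection dynamical systems on $\mathsf{W}(\bar G_{\breve S,\Gamma})$ and then on $\mathsf{Weyl}_\ZD(\surf_\ZD,\Gamma)$ (its Steps 2 and 3), and obtains the invariant extension $\omega_{M,\mathfrak{B}}^\Gamma$ of $\hat\omega_{\mathfrak{B}}^\Gamma$ by appeal to the Hahn--Banach theorem, afterwards reading off its values on $\mathbf{W}(\bar\ZD_{\breve S,\Gamma})$ and on $C(\Ab_\Gamma)$. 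You instead construct the extension explicitly as the average $\frac{1}{k}\sum_{l}\langle V_{\sigma_l}^*\Omega_\Gamma,\,\cdot\;V_{\sigma_l}^*\Omega_\Gamma\rangle$ of vector states obtained by transporting the Haar cyclic vector with the unitary bisections of the generating system. Your version buys concreteness and checkability: the restriction to $C(\Ab_\Gamma)$, the $\bar\ZD_{\breve S,\Gamma}$-invariance (via $U(\rho)\Omega_\Gamma=\Omega_\Gamma$ together with the commutation \eqref{eq commactions1}, which lets the flux unitary slide past $V_{\sigma_l}^*$) and the bisection invariance (via the permutation of the generating system under $\ast_2$, as in proposition \ref{prop invstate for holalg}) all become direct computations, and it avoids the delicate point that a Hahn--Banach extension of a state is not canonically invariant, which the paper's covariant-representation language leaves implicit. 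What the paper's route buys is brevity and the structural formulation in terms of covariant states of the $C^*$-dynamical systems. One shared caveat: your uniqueness step in part \ref{prop item 1} is no more complete than the paper's, since invariance of a state under the inner flux action only forces the GNS image of a Weyl element to fix the cyclic vector up to a phase; to pin the values on products $f_\Gamma U(\rho_{S,\Gamma}(\Gamma))$ one must either formulate invariance as $\omega(aU)=\omega(Ua)=\omega(a)$ or eliminate that phase, a gap present in the original as well and therefore not a defect specific to your approach.
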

\begin{defi}
 The set of all not necessarily pure states on $\mathsf{Weyl}_\ZD(\surf_\ZD, \Gamma)$ that are $\bar\ZD_{\breve S,\Gamma}$-, $\mathfrak{B}_{\breve S,\ori}(\PD_{\Gamma})$- and $\mathfrak{B}_{\breve S,\diff}(\PD_{\Gamma})$-invariant are denoted by $\Ss^{\diff,\ori}(\mathsf{Weyl}_\ZD(\surf_\ZD, \Gamma))$. 
\end{defi}

\begin{proofo}of proposition \ref{prop invariant state of weyl}\\
The part \ref{prop item 1} of the proposition follows from the corollary \ref{cor uniqueness} and the proposition \ref{prop Ginvstate}.
The part \ref{prop item 2} of the proposition follows from the following derivations. First fix a surface set $\breve S$ in $\surf_\ZD$.\\
\textbf{Step 1:}\\
For the covariant pair $(\Phi_M,V)$ of $(\mathfrak{B}_{\breve S,\ori}(\PD_{\Gamma}),C(\Ab_\Gamma),\zeta)$ or $(\mathfrak{B}_{\breve S,\diff}(\PD_{\Gamma}),C(\Ab_\Gamma),\zeta)$ on $\HS_\Gamma$ there exists a invariant state. In proposition \ref{prop invstate for holalg} this state is defined and satisfies
\beqs \hat\omega_{\mathfrak{B}}^\Gamma(\zeta_{\sigma}(f_\Gamma))
=\hat\omega_{\mathfrak{B}}^\Gamma(f_\Gamma)
\eqs for all $f_\Gamma\in C(\Ab_\Gamma)$ and for arbitrary $\sigma\in \mathfrak{B}_{\breve S,\diff}(\PD_{\Gamma})$ or $\sigma\in \mathfrak{B}_{\breve S,\ori}(\PD_{\Gamma})$.
Recall that, the state $\omega_{\mathfrak{B}}^\Gamma$ on $C(\Ab_\Gamma)$ is required to be $\bar\ZD_{\breve S,\Gamma}$-invariant, too. Hence the state satisfies
\beqs \hat\omega_{\mathfrak{B}}^\Gamma(\alpha(\rho_{S,\Gamma}(\Gamma))(f_\Gamma))=\hat\omega_{\mathfrak{B}}^\Gamma(f_\Gamma)\eqs for all $f_\Gamma\in C(\Ab_\Gamma)$ and $\rho_{S,\Gamma}(\Gamma)\in\bar G_{\breve S,\Gamma}$. While $\alpha(\rho_{S,\Gamma}(\Gamma))(f_\Gamma)\in C(\Ab_\Gamma)$ and the actions $\zeta$ and $\alpha$ commute, the state $\hat\omega_{\mathfrak{B}}^\Gamma$ fulfill
\beqs \hat\omega_{\mathfrak{B}}^\Gamma(\alpha(\rho_{S,\Gamma}(\Gamma))(f_\Gamma))
&=\hat\omega_{\mathfrak{B}}^\Gamma(\zeta_{\sigma}(\alpha(\rho_{S,\Gamma}(\Gamma))(f_\Gamma)))
=\hat\omega_{\mathfrak{B}}^\Gamma(\alpha(\rho_{S_\sigma,\Gamma}(\Gamma))(\zeta_{\sigma}(f_\Gamma)))\\
&=\hat\omega_{\mathfrak{B}}^\Gamma(f_\Gamma)
\eqs 
Clearly there is a morphism $\Phi\in\Mor(C(\Ab_\Gamma),\mathsf{Weyl}_{\ZD}(\breve S, \Gamma))$.

\textbf{Step 2:}\\
On the other hand, there are covariant representations $(\Psi,V)$ of the $C^*$-dynamical systems\\ $(\mathfrak{B}_{\breve S,\ori}(\PD_\Gamma),\mathsf{W}(\bar G_{\breve S,\Gamma}),\zeta)$ and $(\mathfrak{B}_{\breve S,\diff}(\PD_\Gamma),\mathsf{W}(\bar G_{\breve S,\Gamma}),\zeta)$ in $\LD(\HS_\Gamma)$. There is a $\bar G_{\breve S,\Gamma}$-invariant, $\mathfrak{B}_{\breve S,\diff}(\PD_\Gamma)$-invariant and $\mathfrak{B}_{\breve S,\ori}(\PD_\Gamma)$-invariant state $\tilde\omega_{M,\mathfrak{B}}^\Gamma$ on $\mathsf{W}(\bar G_{\breve S,\Gamma})$.

\textbf{Step 3:}\\
There are covariant representations $(\Phi_{\Gamma},V)$ of the $C^*$-dynamical systems\\ $(\mathfrak{B}_{\breve S,\ori}(\PD_\Gamma),\mathsf{Weyl}_{\ZD}(\breve S, \Gamma),\zeta)$ and $(\mathfrak{B}_{\breve S,\diff}(\PD_\Gamma),\mathsf{Weyl}_{\ZD}(\breve S, \Gamma),\zeta)$ in $\LD(\HS_\Gamma)$, where\\ $\Phi_{\Gamma}(W)=\Psi(W)$ for $W=U\in \mathbf{W}(\bar\ZD_{\breve S,\Gamma})$ or $\Phi_{\Gamma}(W)=\Phi_M(W)$ for $W=f_\Gamma\in C(\Ab_\Gamma)$.
Consequently there exists a $\bar \ZD_{\breve S,\Gamma}$-invariant, $\mathfrak{B}_{\breve S,\diff}(\PD_\Gamma)$-invariant and $\mathfrak{B}_{\breve S,\ori}(\PD_\Gamma)$-invariant state $\omega_{M,\mathfrak{B}}^\Gamma$ on $\mathsf{Weyl}_{\ZD}(\breve S, \Gamma)$. This state is an extension of the state $\omega_{\mathfrak{B}}^\Gamma$ on $C(\Ab_\Gamma)$ by Hahn-Banach theorem.

Then the state restricted to $C(\Ab_\Gamma)$ is given by
\beqs \omega_{M,\mathfrak{B}}^\Gamma(f_\Gamma)=\hat\omega_{\mathfrak{B}}^\Gamma(f_\Gamma)\quad\forall f_\Gamma\in C(\Ab_\Gamma)\eqs
and restricted to $\mathbf{W}(\bar\ZD_{\breve S,\Gamma})$ it is given by
\beqs \omega_{M,\mathfrak{B}}^\Gamma(W)
=\tilde\omega_{M,\mathfrak{B}}^\Gamma(\idf_\Gamma)\quad\forall W\in \mathbf{W}(\bar\ZD_{\breve S,\Gamma})
\eqs such that $\tilde\omega_{M,\mathfrak{B}}^\Gamma(W^*W)=1$ holds.
Then
\beqs \omega_{M,\mathfrak{B}}^\Gamma(Wf_\Gamma)=\omega_{M,\mathfrak{B}}^\Gamma(f_\Gamma)=\omega_{M,\mathfrak{B}}^\Gamma(f_\Gamma W)
\eqs yields for all $W\in \mathbf{W}(\bar\ZD_{\breve S,\Gamma})$ and $f_\Gamma\in C(\Ab_\Gamma)$.

Observe that, 
\beqs \omega_{M,\mathfrak{B}}^\Gamma(V^*_\sigma V_\sigma)=1 \quad\forall V\in\Rep(\mathfrak{B}_{\breve S,\ori}(\PD_{\Gamma}),\KD(\HS_{\Gamma}))\text{ or }V\in\Rep(\mathfrak{B}_{\breve S,\diff}(\PD_{\Gamma}),\KD(\HS_{\Gamma}))
\eqs holds.

Finally all steps are true for every surface set $\breve S$ and hence for all surface sets in $\surf_\ZD$.
\end{proofo}

Now, in the next investigations the focus lies on actions of fluxes and diffeomorphisms on the inductive limit algebra $C(\Ab)$.

\begin{defi}Let $\Gamma_\infty$ be the inductive limit of a inductive family $\{\Gamma_i\}$ of graphs. Then $\PD_{\Gamma_\infty}$ denotes the inductive limit of a inductive family of finite graph systems $\PD_{\Gamma_i}$. Moreover let $G^\infty$ be the projective limit of the family of groups $\{G^{N_i}\}$ if $G^{N_i}=G\times ...\times G$ and $G$ is a compact group.

Let $(\varphi,\Phi)\in\Diff(\PD)$ be a path-diffeomorphism of a path groupoid $\PG$ such that $\varphi:\Sigma\longrightarrow \Sigma$ and $\Phi:\PD\longrightarrow\PD$. Then a \textbf{graph-diffeomorphism for a limit graph system} $\PD_{\Gamma_\infty}$ is given by the pair $(\varphi_\Sigma,\Phi_\infty)$ of maps such that $\varphi_\Sigma:\Sigma\longrightarrow \Sigma$, $\Phi_{\infty}:\PD_{\Gamma_\infty}\longrightarrow\PD_{\Gamma_\infty}$ and 
\beqs \Phi_{\infty}(\Gamma)= (\Phi(\gamma_1),...,\Phi(\gamma_N))=\Gamma_{\Phi}
\eqs for $\Gamma:=\{\gamma_1,...,\gamma_N\}$ and $\Gamma_{\Phi}$ being two subgraphs of $\PD_{\Gamma_\infty}$. The set of such graph-diffeomorphism for a limit graph system $\PD_{\Gamma_\infty}$ is denoted by $\Diff(\PD_{\Gamma_\infty})$.

Then there is an \textbf{action of graph-diffeomorphisms for a limit graph system} $\PD_{\Gamma_\infty}$ on the analytic holonomy $C^*$-algebra $C(\Ab)$ defined by
\beq
(\theta_{(\varphi_\Sigma,\Phi_\infty)}f )(\ho (\Gamma))
&:= (f )(\ho (\Phi_\infty(\Gamma)))=(f )(\ho (\Gamma_{\Phi}))
\eq whenever $\Gamma,\Gamma_{\Phi}\in\PD_{\Gamma_\infty}$, $(\varphi_\Sigma,\Phi_\infty)\in\Diff(\PD_{\Gamma_\infty})$ and
for 
\beq
f (\ho (\Gamma))= (f_\Gamma\circ\pi_{\Gamma})(\ho (\Gamma))=(\beta_{\Gamma}f_\Gamma)(\ho_{\Gamma}(\Gamma)) 
\eq where $\pi_\Gamma:\Ab \longrightarrow\Ab_\Gamma$ is a surjective projection and $ \beta_\Gamma:C(\Ab_\Gamma)\longrightarrow C(\Ab)$ are injective unit-preserving $^*$-homomorphism satisfying consistency conditions.
\end{defi} 

The group of bisections $\mathfrak{B}(\PD)$ is defined to be the set of all smooth maps $\sigma$ from $\Sigma$ to the path groupoid $\PGm$ such that $s\circ \sigma=\id_\Sigma$ and $t\circ\sigma:\Sigma\longrightarrow\Sigma$ is a diffeomorphism. 
Therefore due to the group morphism $\mathfrak{B}(\PD)\ni\sigma\mapsto t\circ\sigma\in\Diff(\Sigma)$ there exists also an action of the bisections of the $C^*$-algebra $C(\Ab)$. Recognize that, it is possible to rewrite
\beqs \Phi_{\infty}(\Gamma)=: \Gamma_{\sigma_\Sigma}\text{ for a bisection }\sigma_\Sigma\in\mathfrak{B}(\PD_{\Gamma_\infty})\text{ on the limit graph system }\PD_{\Gamma_\infty}
\eqs

Recall the action $\zeta$ of the group $\mathfrak{B}(\PD_{\tilde\Gamma})$ of bisections for a finite graph system $\PD_{\tilde\Gamma}$ on $C(\Ab_{\tilde\Gamma})$, which is given in proposition \ref{prop groupbisecdynsys} by
\beqs (\zeta_\sigma f_{\tilde\Gamma})(\ho_{\tilde\Gamma}(\Gamma))=( f_{\tilde\Gamma}\circ R_\sigma)(\ho_{\tilde\Gamma}(\Gamma))= f_{\tilde\Gamma}(\ho_{\tilde\Gamma}(\Gamma_\sigma))
\eqs whenever $\sigma\in \mathfrak{B}(\PD_{\tilde\Gamma})$, $f_{\tilde\Gamma}\in C(\Ab_{\tilde\Gamma})$ and $\PD_{\Gamma_\sigma}\leq\PD_{\tilde\Gamma}$.

\begin{defi}There is an \textbf{action of the group of global bisections} $\mathfrak{B}(\PD_{\Gamma_\infty})$ on the algebra $C(\Ab)$ given by
\beqs (\zeta_{{\sigma_\Sigma}} f )(\ho (\Gamma))
&:= f (R_{\sigma_\Sigma}\ho (\Gamma))
= f (\ho (\Gamma_{\sigma_\Sigma}))\\
&=\left((\beta_{\tilde\Gamma}\circ\beta_{\Gamma_{\sigma_\Sigma},\tilde\Gamma})f_{\Gamma_{\sigma_\Sigma}}\right)(\ho_{\Gamma_{\sigma_\Sigma}}(\Gamma_{\sigma_\Sigma}))\\
&=\left(\beta_{\tilde\Gamma}f_{\tilde\Gamma}\right)(\ho_{\tilde\Gamma}(\Gamma_{\sigma}))=\left(\beta_{\tilde\Gamma}(\zeta_{\sigma} f_{\tilde\Gamma})\right)(\ho_{\tilde\Gamma}(\Gamma))\\
&= \left(\beta_{\tilde\Gamma}f_{\tilde\Gamma}\right)(R_{\sigma}(\ho_{\tilde\Gamma}(\Gamma)))\\ 
\eqs whenever $\PD_{\Gamma}\leq \PD_{\Gamma_{\sigma_\Sigma}}\leq \PD_{\tilde\Gamma}$, for a function $f \in C(\Ab)$,
where $\beta_{\tilde\Gamma}:C(\Ab_{\tilde\Gamma})\longrightarrow C(\Ab)$, $\beta_{\Gamma_{\sigma_\Sigma},\tilde\Gamma}:C(\Ab_{\Gamma_{\sigma_\Sigma}})\longrightarrow C(\Ab_{\tilde\Gamma})$ are unit-preserving injective $^*$-homomorphisms satisfying consistency conditions and
for a global bisections $\sigma_\Sigma\in\mathfrak{B}(\PD_{\Gamma_\infty})$ such that for a bisection $\sigma\in\mathfrak{B}(\PD_{\tilde\Gamma})$ on a finite graph system $\PD_{\tilde\Gamma}$ it is true that $\sigma_\Sigma(V_{\tilde\Gamma})=\sigma(V_{\tilde\Gamma})$.
\end{defi}
The limit Hilbert space $\HS_\infty$ with norm $\|.\|_\infty$ is constructed from the inductive family of Hilbert spaces $\HS_\Gamma$.
 
But these actions related to the limit graph system $\PD_{\Gamma_\infty}$ are not norm-point continuous. This is proved by the following argument. Since from $\ho_\Gamma$ is not a continuous groupoid morphism between $\PG$ to $G$ over $\{e_G\}$ it follows that,
\beqs&\lim_{\sigma_\Sigma(\Sigma)\rightarrow \id(\Sigma)}
\|\zeta_{\sigma_\Sigma}(f )- f \|_{\infty}
=\lim_{\sigma_\Sigma(\Sigma)\rightarrow \id(\Sigma)}
\|\beta_{\tilde\Gamma}(\zeta_{\sigma}f_{\tilde\Gamma})- f \|_{\infty}
\\&= \lim_{\sigma_\Sigma(\Sigma)\rightarrow \id(\Sigma)}
\|(\beta_{\tilde\Gamma}f_{\tilde\Gamma})(\ho_{\tilde\Gamma}(\Gamma)\ho_{\tilde\Gamma}(\sigma(V^t)),\ho_{\tilde\Gamma}(\sigma(V))) - (\beta_{\tilde\Gamma}f_{\tilde\Gamma})(\ho_{\tilde\Gamma}(\Gamma))\|_{\infty}
\neq 0
\eqs yields for a function $f \in C(\Ab)$, a subgraph $\Gamma:=\{\gamma_1,...,\gamma_N\}$ of $\tilde\Gamma$, a subset $V_\Gamma:=V^t\cup V$ of $V_{\tilde\Gamma}$ where $V^t:=\{t(\gamma_1),...,t(\gamma_N)\}$ and $N=\vert\Gamma\vert$, a global bisection $\sigma_\Sigma\in \mathfrak{B}(\PD_{\Gamma_\infty})$ such that $\sigma_\Sigma(V_\Gamma)=(\tilde\sigma_\Sigma(v_1),...,\tilde\sigma_\Sigma(v_{2N}))$ where $\tilde\sigma_\Sigma\in\mathfrak{B}(\PD)$ and there is a bisection $\sigma\in\mathfrak{B}(\PD_{\tilde\Gamma})$ such that $\sigma(V_{\tilde\Gamma})=\sigma_\Sigma(V_{\tilde\Gamma})$ yields.
Since there is an group morphism between $\mathfrak{B}(\PD)$ and the group of diffeomorphisms $\Diff(\Sigma)$ on the spatial manifold $\Sigma$, the diffeomorphism cannot be implemented as strongly or weakly continuous representations on the limit Hilbert space $\HS_{\infty}$. Nevertheless the action $\zeta$ of $\mathfrak{B}(\PD_{\Gamma_\infty})$ on $C(\Ab)$ is automorphic. Denote the set of automorphic actions of a group $\mathfrak{B}(\PD_{\Gamma_\infty})$ on the commutative $C^*$-algebra $C(\Ab)$ by $\Act_0(\mathfrak{B}(\PD_{\Gamma_\infty}),C(\Ab))$.

Despite the discontinuity of the action on the inductive limit of $C^*$-algebras $C(\Ab)$, there are injective $^*$-homomorphisms $\beta_{\Gamma,\Gamma_\sigma}$ such that
\beqs f (\ho (\Gp))=\left((\beta_{\tilde\Gamma}\circ\beta_{\Gamma_\sigma,\tilde\Gamma})f_{\Gamma_\sigma}\right)(\ho_{\Gamma_\sigma}(\Gp))
\eqs yields for any graphs such that $\PD_{\Gp}\leq\PD_{\Gamma_\sigma}\leq\PD_{\tilde\Gamma}$.

In LQG literature the set of surfaces is not restricted, the set $\breve S$ is an infinite set of surfaces and the inductive limit of a family of graph systems is constructed from a limit of a family of graphs. In this article the infinite set of surfaces is decomposed into several finite sets. To implement an action of $\bar G_{\breve S,\Gamma}$ the inductive limit structure of graphs has to preserve the particular sort of the action for a fixed suitable surface set $\breve S$. 

\begin{prop}Let $\Gamma_\infty$ be the inductive limit of a family of graphs $\{\Gamma_i\}$ such that the set $\breve S$ of surfaces has the same surface intersection property for each graph $\Gamma_i$ of the family. Let $\check S$ be a suitable surface set with same right surface intersection property for each graph $\Gamma_i$ of the family.
Then $\PD_{\Gamma_\infty}^{\op}$ is the inductive limit of a inductive family $\{\PD_{\Gamma_i}^{\op}\}$ of finite orientation preserved graph systems.

Then there is an action of $\bar G_{\breve S,\Gamma_\infty}$ on $C(\Ab)$ given by
\beqs (\alpha(\rho_{S,\Gamma_\infty}(\Gamma))f )(\ho (\Gamma))
&:= f (L(\rho_{S^\prime,\Gamma_\infty}(\Gamma))(\ho (\Gamma)))\\
&=(\beta_{\tilde\Gamma}\circ\alpha(\rho_{S,\tilde\Gamma}(\Gamma))f_{\tilde\Gamma})(\ho_{\tilde\Gamma}(\Gamma))=(\beta_{\tilde\Gamma}f_{\tilde\Gamma})(L(\rho_{S,\tilde\Gamma}(\Gamma))(\ho_{\tilde\Gamma}(\Gamma)))
\eqs for $\PD_{\Gamma}^{\op}\leq\PD_{\tilde\Gamma}^{\op}\leq\PD_{\Gamma_\infty}^{\op}$, injective unit-preserving $^*$-homomorphism $\beta_{\tilde\Gamma}:C(\Ab_{\tilde\Gamma})\longrightarrow C(\Ab)$ satisfying consistency conditions, elements $\rho_{S,\Gamma_\infty}(\Gamma)\in \bar G_{\breve S,\Gamma_\infty}$ and there are element $\rho_{S,\tilde\Gamma}(\Gamma)\in \bar G_{\breve S,\tilde\Gamma}$ such that $\rho_{S,\tilde\Gamma}(\Gamma)=\rho_{S,\Gamma_\infty}(\Gamma)$ for all $\Gamma \in\PD_{\tilde\Gamma}$ and every $S\in\breve S$.

There is another action of $\bar G_{\check S,\Gamma_\infty}$ on $C(\Ab)$ defined by
\beqs (\alpha(\rho_{S^\prime,\Gamma_\infty}(\Gamma))f )(\ho (\Gamma))
&:= f (R(\rho_{S^\prime,\Gamma_\infty}(\Gamma))(\ho (\Gamma)))\\
&=(\beta_{\tilde\Gamma}\circ\alpha(\rho_{S^\prime,\tilde\Gamma}(\Gamma))f_{\tilde\Gamma})(\ho_{\tilde\Gamma}(\Gamma))=(\beta_{\tilde\Gamma}f_{\tilde\Gamma})(R(\rho_{S^\prime,\tilde\Gamma}(\Gamma))(\ho_{\tilde\Gamma}(\Gamma)))
\eqs for $\PD_{\Gamma}^{\op}\leq\PD_{\tilde\Gamma}^{\op}\leq\PD_{\Gamma_\infty}^{\op}$, injective unit-preserving $^*$-homomorphism $\beta_{\tilde\Gamma}:C(\Ab_{\tilde\Gamma})\longrightarrow C(\Ab)$ satisfying consistency conditions, elements $\rho_{S^\prime,\Gamma_\infty}(\Gamma)\in \bar G_{\check S,\Gamma_\infty}$ and there are elements $\rho_{S^\prime,\tilde\Gamma}(\Gamma)\in \bar G_{\check S,\tilde\Gamma}$ such that $\rho_{S^\prime,\tilde\Gamma}(\Gamma)=\rho_{S^\prime,\Gamma_\infty}(\Gamma)$ for all $\Gamma \in\PD_{\tilde\Gamma}$ and every surface $S^\prime\in\check S$.

These actions of the flux group $\bar G_{\breve S,\Gamma_\infty}$ for the surface set $\breve S$ and $\bar G_{\check S,\Gamma_\infty}$ for surfaces in $\check S$ on $C(\Ab)$ are  automorphic and point-norm continuous.
\end{prop}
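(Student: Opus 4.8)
The plan is to reduce the statement about the inductive limit $C(\Ab)$ to the already-established facts about each finite graph system $C(\Ab_{\tilde\Gamma})$. First I would verify that the two formulas indeed define maps on $C(\Ab)$, by checking compatibility with the $^*$-homomorphisms $\beta_{\tilde\Gamma}$. Concretely, for $f\in C(\Ab)$ written as $f=\beta_{\tilde\Gamma}f_{\tilde\Gamma}$ on a finite graph system $\PD_{\tilde\Gamma}^{\op}$, the prescription $\alpha(\rho_{S,\Gamma_\infty}(\Gamma))f=\beta_{\tilde\Gamma}\bigl(\alpha(\rho_{S,\tilde\Gamma}(\Gamma))f_{\tilde\Gamma}\bigr)$ must be shown independent of the chosen level $\tilde\Gamma$. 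This is where the hypothesis that $\breve S$ has the \emph{same surface intersection property for each} $\Gamma_i$ (and $\check S$ the same right surface intersection property) enters: it guarantees that the flux element $\rho_{S,\tilde\Gamma}(\Gamma)$ acts purely as a left multiplication $L$ (respectively $R$) at every level, so that the consistency conditions $\beta_{\Gamma,\Gpp}=\beta_{\Gamma,\Gp}\circ\beta_{\Gp,\Gpp}$ intertwine the finite-level actions. Without a uniform intersection behaviour the left/right character of the action could change between levels, and the action would fail to pass to the limit; this compatibility is the main obstacle.

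Granting well-definedness, the automorphic property is then inherited level by level. For each fixed $\tilde\Gamma$ the triple $(\bar G_{S,\tilde\Gamma},C(\Ab_{\tilde\Gamma}),\alpha)$ is a $C^*$-dynamical system with $\alpha$ automorphic, as established earlier in the excerpt. Since $\beta_{\tilde\Gamma}$ is an injective unit-preserving $^*$-homomorphism, conditions \ref{def automorphic1}, \ref{def automorphic2} and \ref{def automorphic3} of Definition \ref{def automorphic} transfer immediately: for the group law one uses that $\rho_{S,\tilde\Gamma}(\Gamma)=\rho_{S,\Gamma_\infty}(\Gamma)$ for all $\Gamma\in\PD_{\tilde\Gamma}$ together with the finite-level identity $\alpha(\rho\tilde\rho)=\alpha(\rho)\alpha(\tilde\rho)$, while multiplicativity and $^*$-compatibility follow because $\beta_{\tilde\Gamma}$ is a $^*$-homomorphism and the finite-level $\alpha$ already satisfies them. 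The same argument applies verbatim to the right action generated by $\check S$.

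For point-norm continuity the key observation is that each $\beta_{\tilde\Gamma}$ is an isometry, so that $\|\beta_{\tilde\Gamma}f_{\tilde\Gamma}\|=\sup|f_{\tilde\Gamma}|$. Hence for $f=\beta_{\tilde\Gamma}f_{\tilde\Gamma}$ one has
\beqs
\lim_{\rho_{S,\Gamma_\infty}(\Gamma)\to\id}\|\alpha(\rho_{S,\Gamma_\infty}(\Gamma))f-f\|_\infty
=\lim_{\rho_{S,\tilde\Gamma}(\Gamma)\to\id}\|\alpha(\rho_{S,\tilde\Gamma}(\Gamma))f_{\tilde\Gamma}-f_{\tilde\Gamma}\|_{\tilde\Gamma}=0,
\eqs
where the last equality is precisely the point-norm continuity of the finite-level action, proved in the Corollary following the first $C^*$-dynamical-system lemma. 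A distinguishing feature here, in contrast with the discontinuous bisection action $\zeta$ on $C(\Ab)$ discussed just before, is that the flux group acts through genuine left or right group multiplications on $G^{\vert\Gamma\vert}$, and these are jointly continuous on the compact group; it is exactly the discontinuity of the holonomy morphism $\ho_\Gamma$ under graph-changing that spoils continuity for bisections but never arises for the flux action.

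\textbf{Expected main difficulty.} The genuine work is entirely in the consistency check of the first paragraph: one must show that the uniform (same, respectively same right) surface intersection property across the whole family $\{\Gamma_i\}$ forces $\rho_{S,\tilde\Gamma}(\Gamma)$ to embed coherently as $(\rho_S(\gamma_1),\dots,\rho_S(\gamma_M),e_G,\dots,e_G)$ at every refinement, so that $\alpha$ commutes with the connecting maps $\beta_{\Gamma_\sigma,\tilde\Gamma}$. Once this is in place, automorphy and continuity are routine transfers from the finite-level statements already proved.
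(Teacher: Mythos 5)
Your proposal is correct and follows essentially the same route as the paper: the action on $C(\Ab)$ is reduced through the isometric, unit-preserving $^*$-homomorphisms $\beta_{\tilde\Gamma}$ to the finite-level actions $\alpha(\rho_{S,\tilde\Gamma}(\Gamma))$, whose automorphy and point-norm continuity were already established, and the limit computation $\lim\|\alpha(\rho_{S,\Gamma_\infty}(\Gamma))f-f\|_{\sup}=\lim\|\beta_{\tilde\Gamma}(\alpha(\rho_{S,\tilde\Gamma}(\Gamma))f_{\tilde\Gamma})-\beta_{\tilde\Gamma}f_{\tilde\Gamma}\|_{\sup}=0$ is exactly the paper's argument. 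The paper's proof in fact only writes out the continuity step, so your additional explicit checks (well-definedness across levels via the compatibility $\rho_{S,\tilde\Gamma}(\Gamma)=\rho_{S,\Gamma_\infty}(\Gamma)$, and the transfer of the automorphic property) fill in precisely what the paper leaves implicit.
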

\begin{proofs}
The point-norm continuity follows from the observation that 
\beqs&\lim_{\rho_{S,\Gamma_\infty}(\Gamma)\rightarrow \id_{S,\Gamma_\infty}(\Gamma)}
\big\|\alpha(\rho_{S,\Gamma_\infty}(\Gamma) (f )- f \big\|_{\sup}\\
&=\lim_{\rho_{S,\Gamma_\infty}(\Gamma)\rightarrow \id_{S,\Gamma_\infty}(\Gamma)}
\Big\|\left(\beta_{\tilde\Gamma}(\alpha(\rho_{S,\tilde\Gamma}(\Gamma))f_{\tilde\Gamma})\right)- \beta_{\tilde\Gamma}(f_{\tilde\Gamma})\Big\|_{\sup}\\
&=0
\eqs holds whenever $\PD_{\Gamma}\leq\PD_{\tilde\Gamma}$, $\rho_{S,\Gamma_\infty}(\Gamma)\in\bar G_{\breve S,\Gamma_{\infty}}$ and $\id_{S,\Gamma_\infty}(\Gamma)\in\bar G_{\breve S,\Gamma_{\infty}}$, which is defined by $\id_{S,\Gamma_\infty}(\Gamma)=(\id_S(\gamma_1),...,\id_S(\gamma_N))=(e_G,..,e_G)$ for a graph $\Gamma:=\{\gamma_1,...,\gamma_N\}$. 
\end{proofs}

\begin{defi}
Let $\Gamma_\infty$ be the inductive limit of a family of graphs $\{\Gamma_i\}$ such that the set $\breve S$ of surfaces has the surface intersection property for each graph $\Gamma_i$ of the family. 
Then $\PD_{\Gamma_\infty}$ is the inductive limit of a inductive family $\{\PD_{\Gamma_i}\}$ of finite graph systems.

Let $(\varphi,\Phi)\in\Diff(\PD)$ be a path-diffeomorphism of a path groupoid $\PG$ such that
\begin{itemize}
 \item $\varphi:\Sigma\longrightarrow \Sigma$, which leave each surface in $\breve S$ and a suitable neighborhood of each surface in $\breve S$ invariant and $\Phi:\PD\longrightarrow\PD$;
\item if a path $\gamma$ in $\PD$ does not intersect all surfaces, then $\Phi(\gamma)$ does not intersect all surfaces and
\item the number of all generators $\{\gamma_j\}$ of $\Gamma_i$ and the number of all transformed paths $\{\Phi(\gamma_j)\}$ that intersect each surface in $\breve S$ in their target vertices are constant and equal
\end{itemize}
is called a \textbf{surface-preserving path-diffeomorphism for a path groupoid} $\PG$ and a surface set $\breve S$.

Then a \textbf{surface-preserving graph-diffeomorphism for a limit graph system} $\PD_{\Gamma_\infty}$ is given by the pair $(\varphi_\Sigma,\Phi_\infty)$ of maps such that 
 \begin{itemize}
  \item $\varphi_\Sigma:\Sigma\longrightarrow \Sigma$, $\Phi_{\infty}:\PD_{\Gamma_\infty}\longrightarrow\PD_{\Gamma_\infty}$ and 
\beqs \Phi_{\infty}(\Gamma)= (\Phi(\gamma_1),...,\Phi(\gamma_N))=\Gamma_{\Phi}
\eqs for $\Gamma:=\{\gamma_1,...,\gamma_N\}$ and $\Gamma_{\Phi}$ being two subgraphs of $\PD_{\Gamma_\infty}$ and
 \item $(\varphi_\Sigma,\Phi)$ is a surface-preserving path-diffeomorphism for a path groupoid $\PG$ and a surface set $\breve S$. 
 \end{itemize}

The set of surface-preserving graph-diffeomorphism for a limit graph system $\PD_{\Gamma_\infty}$ is denoted by $\Diff_{\diff}(\PD_{\Gamma_\infty})$.
\end{defi}

With no doubt the group $\mathfrak{B}_{\diff}(\PD_{\Gamma_\infty})$ of surface-preserving bisections of a limit graph system $\PD_{\Gamma_\infty}$ can be defined, too.

\begin{defi}
Let $\Gamma_\infty$ be the inductive limit of a family of graphs $\{\Gamma_i\}$ such that the set $\breve S$ of surfaces has the simple surface intersection property for each graph $\Gamma_i$ of the family. 
Then $\PD_{\Gamma_\infty}^{\op}$ is the inductive limit of a inductive family $\{\PD_{\Gamma_i}^{\op}\}$ of finite orientation preserved graph systems.

Let $(\varphi,\Phi)\in\Diff(\PD)$ be a path-diffeomorphism of a path groupoid $\PG$ such that
\begin{itemize}
 \item $\varphi:\Sigma\longrightarrow \Sigma$ such that each surface $S$ in $\breve S$ is mapped to another surface $S_\sigma$ in $\breve S$ and $\Phi:\PD\longrightarrow\PD$;
\item if a path $\gamma$ in $\PD$ does not intersect a surface in $\breve S$, then $\Phi(\gamma)$ does not intersect a surface in $\breve S$ and
\item if a path intersects a surface $S$, lies below and is outgoing (or above and outgoing, below and ingoing, above and ingoing)  and the transformed path $\Phi(\gamma)$ is non-trivial, then $\Phi(\gamma)$ intersects the transformed surface $S_\sigma$, lies below and is outgoing (or above and outgoing, below and ingoing, above and ingoing), too, 
\end{itemize}
is called a \textbf{surface-orientation-preserving path-diffeomorphism for a path groupoid} $\PG$ and a surface set $\breve S$.

Then a \textbf{surface-orientation-preserving graph-diffeomorphism for a limit orientation preserved graph system} $\PD_{\Gamma_\infty}^{\op}$ is given by the pair $(\varphi_\Sigma,\Phi_\infty)$ of maps such that 
 \begin{itemize}
  \item $\varphi_\Sigma:\Sigma\longrightarrow \Sigma$, $\Phi_{\infty}:\PD_{\Gamma_\infty}^{\op}\longrightarrow\PD_{\Gamma_\infty}^{\op}$ and 
\beqs \Phi_{\infty}(\Gamma)= (\Phi(\gamma_1),...,\Phi(\gamma_N))=\Gamma_{\Phi}
\eqs for $\Gamma:=\{\gamma_1,...,\gamma_N\}$ and $\Gamma_{\Phi}$ being two subgraphs of $\PD_{\Gamma_\infty}^{\op}$ and
 \item $(\varphi_\Sigma,\Phi)$ is a surface-orientation-preserving path-diffeomorphism for a path groupoid $\PG$ and a surface set $\breve S$. 
 \end{itemize}

The set of surface-preserving graph-diffeomorphism for a orientation preserved limit graph system $\PD_{\Gamma_\infty}^{\op}$ is denoted by $\Diff_{\ori}(\PD_{\Gamma_\infty}^{\op})$.
\end{defi}
With no doubt the group $\mathfrak{B}_{\ori}(\PD_{\Gamma_\infty}^{\op})$ of surface-orientation-preserving bisections of a limit orientation preserved graph system $\PD_{\Gamma_\infty}^{\op}$ can be defined.

Apart from the problems of defining a point-norm continuous action of the group of global bisections of the inductive limit holonomy $C^*$-algebra, the Weyl algebra can be realized as inductive limit $C^*$-algebra, too.

\begin{defi}
Let $\breve S$ be a finite set of surfaces in $\Sigma$.
Moreover let $\Gamma_\infty$ be the inductive limit of a family of graphs $\{\Gamma_i\}$ such that each graph $\Gamma_i$ of the family has the surface intersection property for the set $\breve S$ of surfaces. 
Then $\PD_{\Gamma_\infty}$ is the inductive limit of a inductive family $\{\PD_{\Gamma_i}\}$ of finite graph systems. 

The \textbf{Weyl $C^*$-algebra $\WF\text{eyl}(\breve S)$ for a surface set} is generated by the inductive limit $C(\Ab)$ of the family of $C^*$-algebras $\{(C(\Ab_{\Gamma_i}),\beta_{\Gamma_i,\Gamma_j}):\PD_{\Gamma_i}\leq \PD_{\Gamma_j}, i,j\in\N\}$ and the Weyl elements $\mathsf{W}(\bar G_{\breve S,\Gamma_\infty})$, which satisfy the canonical commutator relations \eqref{cancomrel I}, completed w.r.t. the $\|.\|_{\infty}$-norm defined by the Hilbert space $\HS_{\infty}$, which is given as the limit of the Hilbert spaces $\HS_\Gamma$. 
\end{defi}

\begin{defi}
Let $\Gamma_\infty$ be the inductive limit of a family of graphs $\{\Gamma_i\}$ and let $\PD_{\Gamma_\infty}$ be the inductive limit of a inductive family $\{\PD_{\Gamma_i}\}$ of finite graph systems associated to the family of graphs $\{\Gamma_i\}$. Let $\surf$ and $\surf_\ZD$ be two sets of suitable surfaces for each graph $\Gamma_i$ of the family of graphs $\{\Gamma_i\}$. 

Moreover $\PD_{\Gamma_\infty}^{\op}$ is the inductive limit of a inductive family $\{\PD_{\Gamma_i}^{\op}\}$ of finite orientation preserved graph systems.

The \textbf{Weyl $C^*$-algebra $\WF\text{eyl}(\surf)$ for surfaces} is generated by the inductive limit $C(\Ab)$ of the family of $C^*$-algebras $\{(C(\Ab_{\Gamma_i}),\beta_{\Gamma_i,\Gamma_j}):\PD_{\Gamma_i}\leq \PD_{\Gamma_j}, i,j\in\N\}$ and the Weyl elements $\mathsf{W}(\bar G_{\breve S,\Gamma_\infty})$ for each surface set $\breve S$ in $\surf$, which satisfies the canonical commutator relations \eqref{cancomrel I}, completed w.r.t. the $\|.\|_{\infty}$-norm defined by the Hilbert space $\HS_{\infty}$, which is given as the limit of the Hilbert spaces $\HS_\Gamma$. 

The \textbf{commutative Weyl $C^*$-algebra $\WF\text{eyl}_\ZD(\surf_\ZD)$ for surfaces} is generated by the inductive limit of the family of $C^*$-algebras $\{(C(\Ab_{\Gamma_i}),\beta_{\Gamma_i,\Gamma_j}):\PD_{\Gamma_i}\leq \PD_{\Gamma_j}, i,j\in\N\}$ and the Weyl elements $\mathsf{W}(\bar \ZD_{\breve S})$ for each surface set $\breve S$ in $\surf_\ZD$, which satisfies the canonical commutator relations \eqref{cancomrel I}, completed w.r.t. the $\|.\|_{\infty}$-norm defined by the Hilbert space $\HS_{\infty}$.
\end{defi}

\subsection{Flux and graph-diffeomorphism group-invariant states of the Weyl $C^*$-algebra for surfaces}\label{subsec invstate}

Consider the inductive limit algebra $C(\Ab)$ of the family of $C^*$-algebras\\ $\{(C(\Ab_{\Gamma_i}),\beta_{\Gamma_i,\Gamma_j}):\PD_{\Gamma_i}\leq \PD_{\Gamma_j}, i,j\in\N\}$, where $C(\Ab_\Gamma)$ is isomorphic to $C(G^N)$ by the natural or non-standard identification. 

\begin{prop}\label{prop limit invstate for holalg}Let $\breve S$ be a finite set of surfaces in $\Sigma$. Moreover let $\Gamma_\infty$ be the inductive limit of a family of graphs $\{\Gamma_i\}$ such that each graph $\Gamma_i$ of the family has the surface intersection property for the set $\breve S$ of surfaces.  
Then $\PD_{\Gamma_\infty}$ is the inductive limit of a inductive family $\{\PD_{\Gamma_i}\}$ of finite graph systems. Let $\Ab_{\Gamma_i}$ be identified naturally with $G^{\vert \Gamma_i\vert}$. 

The limit $\hat\omega_{\mathfrak{B}_\Sigma}$ on $C(\Ab)$ is defined by
\beqs \hat\omega_{\mathfrak{B}_\Sigma}(f)
&:=\lim_{\Gamma_i\rightarrow\Gamma_\infty}\frac{1}{k_{\Gamma_i}}\sum_{l=1}^{k_{\Gamma_i}} \omega^{\Gamma_i}_{M}(\zeta_{\sigma_l}(f_{\Gamma_i}))\text{ for }\sigma_l\in\mathfrak{B}^{\Gamma_i}_{\breve S,\diff}(\PD_{\Gamma_i})
\eqs for $f\in C(\Ab)$ and where $k_{\Gamma_i}$ is the maximal number of subgraphs in $\PD_{\Gamma_i}$, which are generated by all edges and their compositions of the graph $\Gamma_\infty$. The limit $\hat\omega_{\mathfrak{B}_\Sigma}$ is $\mathfrak{B}_{\breve S,\diff}(\PD_{\Gamma_\infty})$-invariant and does not converge in weak $^*$-topology.
\end{prop}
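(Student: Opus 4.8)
The plan is to realise $\hat\omega_{\mathfrak{B}_\Sigma}$ as the inductive-limit object assembled from the finite-level invariant states $\hat\omega^{\Gamma_i}_{\mathfrak{B}}$ of Proposition \ref{prop invstate for holalg}, to transport the $\mathfrak{B}_{\breve S,\diff}(\PD_{\Gamma_i})$-invariance through the limit, and then to separate the pointwise limit that \emph{defines} $\hat\omega_{\mathfrak{B}_\Sigma}$ from weak $^*$-convergence of the defining net of functionals.

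First I would settle well-definedness and the state property. On the dense $^*$-subalgebra $\bigcup_i\beta_{\Gamma_i}(C(\Ab_{\Gamma_i}))$ of $C(\Ab)$ every element is of the form $f=\beta_{\Gamma_j}f_{\Gamma_j}$, and for $i\ge j$ the consistency conditions give $f_{\Gamma_i}=\beta_{\Gamma_j,\Gamma_i}f_{\Gamma_j}$. Substituting this into $\hat\omega^{\Gamma_i}_{\mathfrak{B}}(f_{\Gamma_i})=\tfrac{1}{k_{\Gamma_i}}\sum_{l=1}^{k_{\Gamma_i}}\omega^{\Gamma_i}_M(\zeta_{\sigma_l}(f_{\Gamma_i}))$ and using the left/right invariance of the Haar state $\omega^{\Gamma_i}_M$ from Corollary \ref{cor uniqueness}, each summand reduces to an integral over only the finitely many holonomy coordinates on which $f_{\Gamma_j}$ depends; since only finitely many generators $\sigma_l$ of the generating system move that fixed support while $k_{\Gamma_i}\to\infty$, the normalised average converges. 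This shows $\hat\omega_{\mathfrak{B}_\Sigma}(f)=\limG\hat\omega^{\Gamma_i}_{\mathfrak{B}}(f_{\Gamma_i})$ exists. Positivity, involution compatibility and normalisation $\hat\omega_{\mathfrak{B}_\Sigma}(\idf)=1$ are inherited from the finite-level states (each $\hat\omega^{\Gamma_i}_{\mathfrak{B}}$ being a state), so $\hat\omega_{\mathfrak{B}_\Sigma}$ extends to a state on $C(\Ab)$.

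Second, for the invariance I would restrict a global surface-preserving bisection $\sigma_\Sigma\in\mathfrak{B}_{\breve S,\diff}(\PD_{\Gamma_\infty})$ to each finite graph system, producing $\sigma\in\mathfrak{B}_{\breve S,\diff}(\PD_{\tilde\Gamma})$ with $\sigma(V_{\tilde\Gamma})=\sigma_\Sigma(V_{\tilde\Gamma})$, and use that the action of the global bisections on $C(\Ab)$ reduces at the finite level to $\zeta_\sigma$ of Proposition \ref{prop groupbisecdynsys}, i.e. $\zeta_{\sigma_\Sigma}f=\beta_{\tilde\Gamma}(\zeta_\sigma f_{\tilde\Gamma})$ on the range of $\beta_{\tilde\Gamma}$. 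Proposition \ref{prop invstate for holalg} then gives $\hat\omega^{\tilde\Gamma}_{\mathfrak{B}}(\zeta_\sigma(f_{\tilde\Gamma}))=\hat\omega^{\tilde\Gamma}_{\mathfrak{B}}(f_{\tilde\Gamma})$: the multiplication $\sigma\ast_2\sigma_l$ only permutes the generating system $\mathfrak{B}^{\tilde\Gamma}_{\breve S,\diff}(\PD_{\tilde\Gamma})$, so the normalised sum is unchanged. Passing to the limit yields $\hat\omega_{\mathfrak{B}_\Sigma}(\zeta_{\sigma_\Sigma}f)=\hat\omega_{\mathfrak{B}_\Sigma}(f)$ for every $f$, which is the asserted $\mathfrak{B}_{\breve S,\diff}(\PD_{\Gamma_\infty})$-invariance.

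Finally, for the non-convergence I would sharply distinguish the pointwise (algebraic) limit over the dense subalgebra, which is what defines $\hat\omega_{\mathfrak{B}_\Sigma}$, from weak $^*$-convergence of the net $\{\hat\omega^{\Gamma_i}_{\mathfrak{B}}\}$ of states living on the mutually distinct algebras $C(\Ab_{\Gamma_i})$. These finite-level states are \emph{not} compatible with the connecting maps $\beta_{\Gamma_i,\Gamma_j}$, so by the projective-limit correspondence between state spaces they do not already constitute a single state, and the defining $\limG$ is genuinely an inductive-limit construction rather than a weak $^*$-limit of a net on a fixed algebra. To make ``does not converge weak $^*$'' precise I would show the net is not weak $^*$-Cauchy by invoking the non-point-norm-continuity of the global bisection action established just after its definition (the explicit computation with $\lim_{\sigma_\Sigma(\Sigma)\rightarrow\id(\Sigma)}\|\zeta_{\sigma_\Sigma}(f)-f\|_\infty\neq 0$, a consequence of $\ho$ failing to be a continuous groupoid morphism): a hypothetical weak $^*$-limit would be invariant under the whole group $\mathfrak{B}(\PD_{\Gamma_\infty})$ and would force a continuity of the averaging incompatible with this discontinuity. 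I expect this last step to be the main obstacle, since the state space is weak $^*$-compact and hence only the failure of the \emph{full} net can be exhibited; concretely one must produce a function probing ever finer edges together with two cofinal subfamilies of graphs whose averages settle at different values.
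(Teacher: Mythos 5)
Your proposal founders on Part 1, and the failure is not peripheral: Part 1 proves the opposite of what the proposition asserts. The entire content of the paper's proof is the \emph{non}-convergence claim; the conclusion drawn from it immediately afterwards is that, in the natural identification, no $\bar\ZD_{S,\Gamma_\infty}$-, $\mathfrak{B}_{\breve S,\ori}(\PD_{\Gamma_\infty})$- and $\mathfrak{B}_{\breve S,\diff}(\PD_{\Gamma_\infty})$-invariant state exists on $C(\Ab)$, so $\hat\omega_{\mathfrak{B}_\Sigma}$ must \emph{not} turn out to be a state. Your key step --- ``only finitely many generators $\sigma_l$ of the generating system move that fixed support while $k_{\Gamma_i}\to\infty$'' --- is false. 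The generating system $\mathfrak{B}^{\Gamma_i}_{\breve S,\diff}(\PD_{\Gamma_i})$ contains one bisection for each subgraph of $\Gamma_i$, so the number of $\sigma_l$ whose right-translation composes paths onto the edges carrying $f$ grows with $\Gamma_i$ and remains a non-vanishing fraction of $k_{\Gamma_i}$; and by Problem \ref{probl statewithfdiff} each such term $\omega^{\Gamma_i}_{M}(\zeta_{\sigma_l}(f_{\Gamma_i}))$ differs from $\omega^{\Gamma_i}_{M}(f_{\Gamma_i})$ in the natural identification, because the composed path $\gpe\circ\gppe$ is re-expanded as the pair $(\gpe,\gppe)$ and the Haar integral acquires the extra factor $W$. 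The average is therefore a convex combination whose weights are dictated by the combinatorics of $\Gamma_i$, and these weights need not stabilise along the inductive family. Note also that Parts 1 and 3 of your own proposal are mutually inconsistent: every $\hat\omega^{\Gamma_i}_{\mathfrak{B}}$ has norm one, so convergence on the dense subalgebra $\bigcup_i\beta_{\Gamma_i}(C(\Ab_{\Gamma_i}))$, as claimed in Part 1, would already force weak $^*$-convergence of the whole net of (extended) functionals, contradicting Part 3 and the proposition.

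Part 3 also rests on an unsound mechanism. The failure of point-norm continuity of the $\mathfrak{B}(\PD_{\Gamma_\infty})$-action cannot obstruct the existence of an invariant weak $^*$-limit: invariance of a state under an automorphism group requires no continuity of the group action whatsoever (the Haar state $\omega_M$ itself is invariant under the discontinuous automorphic actions in $\Act_0(\mathfrak{B}(\PD_{\Gamma_\infty}),C(\Ab))$, cf.\ Proposition \ref{cor_mean}). What you relegate to a closing aside is in fact the whole of the paper's proof: one exhibits two disjoint families $\{\Gamma_i\}$ and $\{\Gamma^\prime_i\}$ whose union converges to $\Gamma_\infty$ and along which the normalised averages separate,
\[
\lim_{\Gamma_i\rightarrow\Gamma_\infty}\Big\vert\hat\omega_{\mathfrak{B}_\Sigma}(f)-\tfrac{1}{k_{\Gamma_i}}\sum_{l=1}^{k_{\Gamma_i}}\omega^{\Gamma_i}_{M}(\zeta_{\sigma_l}(f_{\Gamma_i}))\Big\vert
\;>\;\lim_{\Gamma^\prime_i\rightarrow\Gamma_\infty}\Big\vert\tfrac{1}{k_{\Gamma^\prime_i}}\sum_{l=1}^{k_{\Gamma^\prime_i}}\omega^{\Gamma^\prime_i}_{M}(\zeta_{\sigma_l}(f_{\Gamma^\prime_i}))\Big\vert\;>\;0,
\]
so that no weak $^*$-limit of the defining net exists. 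Your proposal never constructs these two cofinal subfamilies or the probe function separating them --- and if your Part 1 were correct, such subfamilies could not exist --- so the argument as written does not establish the statement. (Your Part 2, the formal invariance via permutation of the generating system under $\ast_2$, is fine and matches the intended argument.)
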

\begin{proofs}
There are two disjoint families of graph $\{\Gamma^\prime_i\}$ and $\{\Gamma_i\}$ such that the union converges to $\Gamma_\infty$ and such that for a suitable constants $k_{\Gamma^\prime_i}$ it is true that
\beqs  &\lim_{\Gamma_i\longrightarrow\Gamma_\infty}\Big\vert\hat\omega_{\mathfrak{B}_\Sigma}(f)-\frac{1}{k_{\Gamma_i}}\sum_{l=1}^{k_{\Gamma_i}} \omega^{\Gamma_i}_{M}(\zeta_{\sigma_l}(f_\Gamma)) \Big\vert
> \lim_{\Gamma^\prime_i\longrightarrow\Gamma_\infty}\Big\vert \frac{1}{k_{\Gamma^\prime_i}}\sum_{l=1}^{k_{\Gamma^\prime_i}} \omega^{\Gamma^\prime_i}_{M}(\zeta_{\sigma_l}(f_{\Gamma^\prime_i})) \Big\vert >0 
\eqs
\end{proofs}
If there would be a asymptotic condition for the state such that for the limit to the infinite graph the state does not depend on the action of $\zeta$ anymore, then the state defined above would be weakly converging.

Consequently if the natural identication of $\Ab_\Gamma$ with $G^N$ is used, then there is no state, which is $\bar\ZD_{S,\Gamma_\infty}$-, $\mathfrak{B}_{\breve S,\ori}(\PD_{\Gamma_\infty})$- and $\mathfrak{B}_{\breve S,\diff}(\PD_{\Gamma_\infty})$-invariant on $C(\Ab)$. 

Let each finite graph system $\PD_{\Gamma_i}$ be identified naturally or in the non-standard way, then the following observations are made. 
\begin{prop}\label{cor_mean}Let $\breve S$ be a finite set of surfaces in $\Sigma$. Moreover let $\Gamma_\infty$ be the inductive limit of a family of graphs $\{\Gamma_i\}$ such that each graph $\Gamma_i$ of the family has the surface intersection property for the set $\breve S$ of surfaces.  
Then $\PD_{\Gamma_\infty}$ is the inductive limit of a inductive family $\{\PD_{\Gamma_i}\}$ of finite graph systems. 

There is a $\bar G_{\breve S,\Gamma_\infty}$-invariant state $\omega_M$ on $C(\Ab)$ presented by
\beqs \omega_M(f)
&=\int_{G^{N_i}}f_{\Gamma_i}(\ho_{\Gamma_i}(\Gamma^\prime_i))
\dif\mu_{\Gamma_i}(\ho_{\Gamma_i}(\Gamma^\prime_i))\\
\eqs for all $f \in C(\Ab)$, $\PD_{\Gamma^\prime_i}\leq\PD_{\Gamma_i}$, which satisfies
\beqs \omega_M\circ\beta_{\Gamma_i}=\omega^{\Gamma_i}_M
\eqs where $\beta_{\Gamma_i}: C(\Ab_{\Gamma_i})\rightarrow C(\Ab)$ is an injective $^*$-homomorphism.

Moreover there is a state on $C(\Ab)$ given by
\beqs\omega_{\mathfrak{B}}(f)=(\omega_{\mathfrak{B}}\circ \beta_{\Gamma_i})(f)
&:=\frac{1}{k_{\Gamma_i}}\sum_{l=1}^{k_{\Gamma_i}} \omega_{M}^{\Gamma_i}(\zeta_{\sigma_l}(f_{\Gamma_i}))
\eqs  for $\sigma_l\in\mathfrak{B}^{\Gamma_i}_{\breve S,\ori}(\PD_{\Gamma_i})$ and $f_{\Gamma_i}\in C(\Ab_{\Gamma_i})$,
which is invariant under the automorphic actions of the groups $\Diff_{\breve S,\ori}(\PD_{\Gamma_i})$, $\mathfrak{B}_{\breve S,\ori}(\PD_{\Gamma_i})$ for a fixed graph $\Gamma_i$ and $\bar \ZD_{\breve S}$ for a suitable set $\breve S$ of surfaces in $\surf_\ZD$. 

In other words,
\beqs \omega_{\mathfrak{B}}(\theta_{(\varphi_{\Gamma_i},\Phi_{\Gamma_i})} f)
&=\omega_{\mathfrak{B}}(f),\quad
\omega_{\mathfrak{B}}(\zeta_{\sigma}f)
=\omega_{\mathfrak{B}}(f),\\
\omega_{\mathfrak{B}}(\alpha(\rho_{S,\Gamma_\infty}(\Gamma^\prime_i))(f))
&=\omega_{\mathfrak{B}}(f)
\eqs yields for all $f\in C(\Ab)$, $(\varphi_{\Gamma_i},\Phi_{\Gamma_i})\in \Diff_{\breve S,\ori}(\PD_{\Gamma_i})$, $\theta\in\Act_0(\Diff_{\breve S,\ori}(\PD_{\Gamma_i}),C(\Ab))$, $\sigma\in \mathfrak{B}_{\breve S,\ori}(\PD_{\Gamma_i})$,\\ $\zeta\in\Act_0(\mathfrak{B}_{\breve S,\ori}(\PD_{\Gamma_i}),C(\Ab))$, $\rho_{S,\Gamma_\infty}(\Gamma^\prime_i)\in\bar\ZD_{\breve S}$, $\alpha\in\Act(\bar\ZD_{\breve S},C(\Ab))$ for any surface set $\breve S$ in $\surf_\ZD$.

Furthermore the state $\omega_{\mathfrak{B}}$ and the actions $\alpha\in\Act(\bar \ZD_{\breve S},C(\Ab))$ and $\zeta\in\Act(\mathfrak{B}_{\breve S,\ori}(\PD_{\Gamma_i}),C(\Ab))$ satisfy
\beq\label{eq commactions2}\omega_{\mathfrak{B}}^{\Gamma_i}\circ\alpha(\rho_{S,\Gamma_i}(\Gp))\circ\zeta_\sigma
=\omega_{\mathfrak{B}}^{\Gamma_i}\circ\zeta_\sigma\circ\alpha(\rho_{S,\Gamma_i}(\Gp))
\eq 
\end{prop}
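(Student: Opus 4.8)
The plan is to build both states from the finite-graph data already available and transport them to the inductive limit $C(\Ab)$ via the isometric embeddings $\beta_{\Gamma_i}\colon C(\Ab_{\Gamma_i})\to C(\Ab)$, using the correspondence between the inductive limit of $\{(C(\Ab_{\Gamma_i}),\beta_{\Gamma_i,\Gamma_j})\}$ and the projective limit of their state spaces. First I would construct $\omega_M$. The key point is that the family of Haar states $\{\omega_M^{\Gamma_i}\}$ from Corollary \ref{cor uniqueness} is projectively consistent: whenever $\PD_{\Gamma_i}\leq\PD_{\Gamma_j}$ the embedding $\beta_{\Gamma_i,\Gamma_j}$ is pull-back along the coordinate projection $G^{N_j}\to G^{N_i}$, and the pushforward of the normalised product Haar measure $\mu_{\Gamma_j}$ under this projection is exactly $\mu_{\Gamma_i}$ (the marginal of a product Haar measure is the Haar measure on the factors), so that
\begin{equation*}\begin{aligned}
(\omega_M^{\Gamma_j}\circ\beta_{\Gamma_i,\Gamma_j})(f_{\Gamma_i})
=\int_{G^{N_j}}(\beta_{\Gamma_i,\Gamma_j}f_{\Gamma_i})\,\dif\mu_{\Gamma_j}
=\int_{G^{N_i}}f_{\Gamma_i}\,\dif\mu_{\Gamma_i}
=\omega_M^{\Gamma_i}(f_{\Gamma_i}).
\end{aligned}\end{equation*}
Since each $f\in C(\Ab)$ is of the form $\beta_{\Gamma_i}f_{\Gamma_i}$, setting $\omega_M(f):=\omega_M^{\Gamma_i}(f_{\Gamma_i})$ is then independent of the representative, and $\omega_M$ is positive and unital because each $\omega_M^{\Gamma_i}$ is a state and the $\beta_{\Gamma_i}$ are unit-preserving $^*$-homomorphisms; this yields $\omega_M\circ\beta_{\Gamma_i}=\omega_M^{\Gamma_i}$.

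For $\bar G_{\breve S,\Gamma_\infty}$-invariance of $\omega_M$ I would use that the action of $\bar G_{\breve S,\Gamma_\infty}$ on $C(\Ab)$ restricts, at a representing graph $\tilde\Gamma$, to the finite action $\alpha(\rho_{S,\tilde\Gamma}(\Gamma))$, i.e. $\alpha(\rho_{S,\Gamma_\infty}(\Gamma))(\beta_{\tilde\Gamma}f_{\tilde\Gamma})=\beta_{\tilde\Gamma}(\alpha(\rho_{S,\tilde\Gamma}(\Gamma))f_{\tilde\Gamma})$, so that $\omega_M(\alpha f)=\omega_M^{\tilde\Gamma}(\alpha(\rho_{S,\tilde\Gamma}(\Gamma))f_{\tilde\Gamma})=\omega_M^{\tilde\Gamma}(f_{\tilde\Gamma})=\omega_M(f)$, the middle equality being the left/right Haar invariance recorded in Corollary \ref{cor uniqueness}. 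Next I would construct $\omega_{\mathfrak{B}}$ by fixing a representing graph $\Gamma_i$ for $f=\beta_{\Gamma_i}f_{\Gamma_i}$ and setting $\omega_{\mathfrak{B}}(f):=\hat\omega^{\Gamma_i}_{\mathfrak{B}}(f_{\Gamma_i})$, the averaged state of Proposition \ref{prop invstate for holalg} over a generating system $\mathfrak{B}^{\Gamma_i}_{\breve S,\ori}(\PD_{\Gamma_i})$ of surface-orientation-preserving bisections. To control well-definedness I would rewrite this entirely through the already-built $\omega_M$ and the global bisection action: since each $\sigma_l$ extends to a global bisection and $\zeta_{\sigma_l}\circ\beta_{\Gamma_i}=\beta_{\tilde\Gamma}\circ\zeta_{\sigma_l}$ by the definition of the action of the group of global bisections, consistency of $\omega_M$ reduces each summand to its finite-graph value, giving $\omega_{\mathfrak{B}}(f)=\tfrac{1}{k_{\Gamma_i}}\sum_{l}\omega_M^{\Gamma_i}(\zeta_{\sigma_l}f_{\Gamma_i})$; independence of the representative then amounts to $\hat\omega^{\Gamma_j}_{\mathfrak{B}}\circ\beta_{\Gamma_i,\Gamma_j}=\hat\omega^{\Gamma_i}_{\mathfrak{B}}$, which I would verify directly using the $\mathfrak{B}_{\breve S,\ori}(\PD_{\Gamma_j})$-invariance of $\hat\omega^{\Gamma_j}_{\mathfrak{B}}$, so that the extra edges of $\Gamma_j$ only reindex the generating system without changing the averaged value.

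The invariances of $\omega_{\mathfrak{B}}$ then follow by transporting Proposition \ref{prop invstate for holalg} to the fixed representing graph: for $(\varphi_{\Gamma_i},\Phi_{\Gamma_i})\in\Diff_{\breve S,\ori}(\PD_{\Gamma_i})$, $\sigma\in\mathfrak{B}_{\breve S,\ori}(\PD_{\Gamma_i})$ and $\rho_{S,\Gamma_\infty}(\Gamma_i^\prime)\in\bar\ZD_{\breve S}$ one restricts $\theta$, $\zeta$ and $\alpha$ to the level of $C(\Ab_{\Gamma_i})$ exactly as for $\omega_M$, whence $\omega_{\mathfrak{B}}(\theta f)=\hat\omega^{\Gamma_i}_{\mathfrak{B}}(\zeta_{(\varphi_{\Gamma_i},\Phi_{\Gamma_i})}f_{\Gamma_i})=\hat\omega^{\Gamma_i}_{\mathfrak{B}}(f_{\Gamma_i})=\omega_{\mathfrak{B}}(f)$ by the invariance already proved in \ref{prop invstate for holalg}, and likewise for $\zeta_\sigma$ and for the commutative flux group $\bar\ZD_{\breve S}$. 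The closing identity \eqref{eq commactions2} is then simply the finite-graph commuting relation of Proposition \ref{prop invstate for holalg} read off at $\Gamma_i$, using that $\alpha(\rho_{S,\Gamma_i}(\Gp))$ and $\zeta_\sigma$ commute for the commutative flux group by \eqref{eq commactions1}.

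I expect the main obstacle to be precisely the well-definedness of $\omega_{\mathfrak{B}}$: one must ensure the finite averages are genuinely consistent across graphs, which is delicate because the companion Proposition \ref{prop limit invstate for holalg} shows that the naive $\Gamma_i\to\Gamma_\infty$ limit of the same averages fails to converge in the weak $^*$-topology. The resolution I would stress is that $\omega_{\mathfrak{B}}$ is anchored to the fixed finite graph $\Gamma_i$ and its invariance is asserted only for the finite-graph groups $\Diff_{\breve S,\ori}(\PD_{\Gamma_i})$, $\mathfrak{B}_{\breve S,\ori}(\PD_{\Gamma_i})$ and $\bar\ZD_{\breve S}$, and not for the full limit symmetry groups; this is exactly the point that separates the convergent finite-level construction here from the divergent limit of Proposition \ref{prop limit invstate for holalg}.
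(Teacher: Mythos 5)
Your proposal follows essentially the same route as the paper's own proof: construct the finite-graph states $\omega_M^{\Gamma_i}$ and $\hat\omega^{\Gamma_i}_{\mathfrak{B}}$ (Propositions \ref{prop invstate} and \ref{prop invstate for holalg}), pass to the limit using the correspondence between inductive limits of $C^*$-algebras and projective limits of their state spaces, and obtain the invariances by restricting each action to a representing graph and invoking the left/right invariance of the Haar measure, respectively the finite-graph invariance of the averaged state. If anything, you supply details the paper leaves implicit, namely the Haar-marginal consistency of the family $\{\omega_M^{\Gamma_i}\}$ under the embeddings $\beta_{\Gamma_i,\Gamma_j}$ and the well-definedness of the bisection-averaged state $\omega_{\mathfrak{B}}$ across representing graphs, which the paper only asserts through the conjugate maps $\beta^*_{\Gamma,\Gamma'}$.
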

Note that, the state $\omega_{\mathfrak{B}}$ is even invariant under a graph-diffeomorphism $(\varphi_{\Gamma_i},\Phi_{\Gamma_i})$ such that there exists a diffeomorphism $\varphi:\Sigma\rightarrow\Sigma$ that maps surfaces into surfaces in $\breve S$ and $\varphi(v)=\varphi_{\Gamma_i}(v)$ for all $v\in V_{\Gamma_i}$. In the following only graph-diffeomophisms in $\Diff(\PD_\Gamma)$ and consequently also the induced bisections of $\mathfrak{B}(\PD_\Gamma)$, which satisfy this requirement are considered. Therefore the restricted sets are denoted by $\Diff_{\breve S}(\PD_\Gamma)$ and $\mathfrak{B}_{\breve S}(\PD_\Gamma)$.

\begin{proofs}
Recall the inductive limit $C(\Ab)$ of the $C^*$-algebras $\{(C(\Ab_{\Gamma_i}),\beta_{\Gamma_i,\Gamma_j}):\PD_{\Gamma_i}\leq \PD_{\Gamma_j}, i,j\in\N\}$ where $\beta_{\Gamma_i,\Gamma_j}$ is an injective $^*$-homomorphism satisfying $\beta_{\Gamma_i,\Gamma_j}=\beta_{\Gamma_i,\Gamma_k}\circ\beta_{\Gamma_k,\Gamma_j}$ whenever $\PD_{\Gamma_i}\leq\PD_{\Gamma_k}\leq\PD_{\Gamma_j}$ for all $i,k,j\in\N$.

There is a state $\omega_{M}^\Gamma$ on $C(\Ab_\Gamma)$, which is $\bar G_{\breve S,\Gamma}$-invariant, and a state $\omega^\Gamma_{\mathfrak{B}}$, which is $\bar \ZD_{\breve S,\Gamma}$- and $\mathfrak{B}_{\breve S,\ori}(\PD_{\Gamma})$-invariant due to proposition \ref{prop invstate} and \ref{prop invstate for holalg}. Every inductive limit of $C^*$-algebras corresponds to a projective limit of the state space of the $C^*$-algebras. Hence there are conjugate maps $\beta^*_{\Gamma,\Gp}:C(\Ab_\Gamma)\longrightarrow C(\Ab_{\Gp})$ such that $\omega_{M}^{\Gp}=\beta^*_{\Gamma,\Gp}\omega_{M}^{\Gamma}$ and $\omega_{M}^{\Gp}=\beta^*_{\Gamma,\Gp}\omega_{M}^{\Gamma}$. Denote the projective limit state of $\{(\omega_{M}^{\Gp},\beta^*_{\Gamma,\Gp}):\PD_{\Gamma}\leq \PD_{\Gp}\}$ by $\omega_{M}$ on $C(\Ab)$ respectively $\{(\omega_{\mathfrak{B}}^{\Gp},\beta^*_{\Gamma,\Gp}):\PD_{\Gamma}\leq \PD_{\Gp}\}$ by $\omega_{\mathfrak{B}}$ on $C(\Ab)$.  
 
Then the state on $C(\Ab)$ satisfies
\beqs 
\omega_{M}(f )
  & = \beta^*_{\Gamma}(\omega^\Gamma_{M}(f_\Gamma))
  =(\beta^*_{\Gp}\circ\beta^*_{\Gamma,\Gp})(\omega^\Gamma_{M}(f_\Gamma))
  =\beta^*_{\Gp}(\omega^{\Gp}_{M}(f_{\Gp}))\\
  &=\int_{G^{N}}f_{\Gamma}(\ho_{\Gamma}(\Gamma)) \dif\mu_{\Gamma}(\ho_{\Gamma}(\Gamma))\\
  &=\int_{G^{N^\prime}}f_{\Gp}(L(\rho_{S,\Gp}(\Gp))(\ho_{\Gp}(\Gp)))
  \dif\mu_{\Gp}(\ho_{\Gp}(\Gp))\\
  &=\int_{G^{N^{\prime\prime}}}f_{\Gpp}(R(\rho_{S^\prime,\Gpp}(\Gpp))(\ho_{\Gpp}(\Gpp)))
  \dif\mu_{\Gpp}(\ho_{\Gpp}(\Gpp))
\eqs for suitable surface $S$ and $S^\prime$, graphs $\Gamma,\Gp$ and $\Gpp$, maps $\rho_{S,\Gp}\in G_{\breve S,\Gp}$ and $\rho_{S^\prime,\Gpp}\in G_{\breve S^\prime,\Gpp}$.  
\end{proofs}

Notice that, the state $\omega_{\mathfrak{B}}$ is only invariant under the group $\bar\ZD_{\breve S}$. This follows from the fact that, the action $\zeta$ for $\mathfrak{B}_{\breve S,\ori}(\PD_\Gamma)$ and the action $\alpha$ for $\bar G_{\breve S,\Gamma_\infty}$ do not commute.

\begin{cor}\label{cor uniqueholalg}Let $\breve S$ be a finite set of surfaces in $\Sigma$. Moreover let $\Gamma_\infty$ be the inductive limit of a family of graphs $\{\Gamma_i\}$.  
Then $\PD_{\Gamma_\infty}$ is the inductive limit of a inductive family $\{\PD_{\Gamma_i}\}$ of finite graph systems. 
Let $\Ab_\Gamma$ be identified in the non-standard identification with $G^{\vert\Gamma\vert}$.

Then the state $\omega_M$ defined by
\beqs \omega_M(f)
&=\int_{G^{N_i}}f_{\Gamma_i}(\ho_{\Gamma_i}(\Gamma^\prime_i))
\dif\mu_{\Gamma_i}(\ho_{\Gamma_i}(\Gamma^\prime_i))\\
\eqs for all $f\in C(\Ab)$, $\PD_{\Gamma^\prime_i}\leq\PD_{\Gamma_i}$, which satisfies
\beqs \omega_M\circ\beta_{\Gamma_i}=\omega^{\Gamma_i}_M
\eqs
is the unique state on $C(\Ab)$, which is invariant under the automorphic actions of the groups $\Diff_{\breve S}(\PD_{\Gamma_i})$, $\mathfrak{B}_{\breve S}(\PD_{\Gamma_i})$ for each graph $\Gamma_i$ and $\bar \ZD_{\breve S}$.
\end{cor}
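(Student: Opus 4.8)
The plan is to reduce the uniqueness statement on the inductive limit $C(\Ab)$ to a uniqueness statement at each finite level and then to invoke Corollary \ref{cor uniqueness}. Since $C(\Ab)$ is the inductive limit of the family $\{(C(\Ab_{\Gamma_i}),\beta_{\Gamma_i,\Gamma_j}):\PD_{\Gamma_i}\leq\PD_{\Gamma_j}\}$, the union $\bigcup_i\beta_{\Gamma_i}(C(\Ab_{\Gamma_i}))$ is norm-dense in $C(\Ab)$, so a state $\omega$ on $C(\Ab)$ is completely determined by the compatible family of its restrictions $\omega^{\Gamma_i}:=\omega\circ\beta_{\Gamma_i}$ on the $C(\Ab_{\Gamma_i})$; this is the duality between an inductive limit of $C^*$-algebras and the projective limit of their state spaces already used in the proof of Proposition \ref{cor_mean}. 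Hence it suffices to prove that, for each $\Gamma_i$, any state $\omega^{\Gamma_i}$ on $C(\Ab_{\Gamma_i})\cong C(G^{N_i})$ which is invariant under the three listed families of automorphisms must coincide with the Haar-measure state $\omega^{\Gamma_i}_M$. The family $\{\omega^{\Gamma_i}_M\}$ is automatically compatible because the product Haar measure is both left- and right-invariant, so the associated projective limit state is exactly $\omega_M$, which then proves $\omega=\omega_M$.

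For existence I would reuse the non-standard reformulation of Proposition \ref{cor_mean}: by the non-standard computation in Problem \ref{probl statewithfdiff} the Haar state $\omega^{\Gamma_i}_M$ is $\mathfrak{B}_{\breve S}(\PD_{\Gamma_i})$- and $\Diff_{\breve S}(\PD_{\Gamma_i})$-invariant, since a graph-diffeomorphism that composes paths collapses, in the non-standard identification, to a single coordinate carrying the product $\ho_{\Gamma_i}(\gp)\ho_{\Gamma_i}(\gpp)$, whose distribution under Haar measure is again Haar; and $\omega^{\Gamma_i}_M$ is $\bar\ZD_{\breve S}$-invariant because it is even $\bar G_{\breve S,\Gamma_\infty}$-invariant by Corollary \ref{cor uniqueness}. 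The real work is the converse: to show that invariance under these groups forces full $G^{N_i}$-invariance, after which Corollary \ref{cor uniqueness} delivers uniqueness.

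The key step is to generate translation invariance of $\omega^{\Gamma_i}$ in each coordinate out of bisection invariance. Fix an edge $\gamma_j$ of $\Gamma_i$; I would pass to a larger graph $\tilde\Gamma$ in the inductive family carrying a fresh independent edge $\delta$ with $s(\delta)=t(\gamma_j)$, and use the bisection $\sigma\in\mathfrak{B}_{\breve S}(\PD_{\tilde\Gamma})$ that appends $\delta$, that is $\gamma_j\mapsto\gamma_j\circ\delta$. In the non-standard identification the transformed coordinate is the product $\ho_{\tilde\Gamma}(\gamma_j)\ho_{\tilde\Gamma}(\delta)$, and $\ho_{\tilde\Gamma}(\delta)$ is an independent $G$-valued coordinate of $\Ab_{\tilde\Gamma}$. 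Invariance of $\omega^{\tilde\Gamma}$ under $\zeta_\sigma$, applied to functions $f$ that do not depend on $g_\delta$, yields $\omega^{\Gamma_i}(f(\dots,g_jg_\delta,\dots))=\omega^{\Gamma_i}(f)$; once the $\delta$-marginal of $\omega^{\tilde\Gamma}$ is identified with the Haar measure and shown independent of the remaining coordinates, integrating out $g_\delta$ produces invariance of $\omega^{\Gamma_i}$ under right-translation of the $j$-th coordinate by an arbitrary $g\in G$. Running this over all edges $\gamma_1,\dots,\gamma_{N_i}$ and combining with the left- and right-invariance of the Haar measure gives the full $G^{N_i}$-invariance of $\omega^{\Gamma_i}$.

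The main obstacle is exactly the identification of the appended-edge marginal with Haar together with its independence, so that the cocycle-type identity $\omega^{\Gamma_i}(f(\dots,g_jg_\delta,\dots))=\omega^{\Gamma_i}(f)$ becomes a genuine translation invariance rather than a mere convolution relation; here the inductive structure of $\Gamma_\infty$ must be used, since the freshly appended edge can itself be translated by appending further fresh edges, and an averaging or fixed-point argument over $\mathfrak{B}_{\breve S}$ together with the $\bar\ZD_{\breve S}$-invariance is needed to close the regress. This is precisely the point at which the non-standard identification is indispensable: under the natural identification the same construction fails, as exhibited by Proposition \ref{prop limit invstate for holalg} and the natural-identification computation in Problem \ref{probl statewithfdiff}, where the diffeomorphism-averaged state does not converge. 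Once $G^{N_i}$-invariance is established at every finite level, Corollary \ref{cor uniqueness} forces $\omega^{\Gamma_i}=\omega^{\Gamma_i}_M$ for all $i$, compatibility forces $\omega=\omega_M$, and uniqueness follows.
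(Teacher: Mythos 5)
Your reduction to the finite levels via the inductive-limit/state-space duality and your existence argument (invariance of the Haar state in the non-standard identification, via Problem \ref{probl statewithfdiff} and Corollary \ref{cor uniqueness}) agree with what the paper does. The gap is in your uniqueness step, and it is genuine. You reduce uniqueness to the claim that invariance under $\Diff_{\breve S}(\PD_{\Gamma_i})$, $\mathfrak{B}_{\breve S}(\PD_{\Gamma_i})$ and $\bar \ZD_{\breve S}$ already forces full $G^{N_i}$-invariance of every restriction $\omega^{\Gamma_i}$, and you leave precisely the decisive sub-step (that the appended-edge marginal is Haar and independent) open. That sub-step cannot be closed: the evaluation state at the trivial connection, $\omega(f)=f_{\Gamma_i}(e_G,\dots,e_G)$, is invariant under every graph-diffeomorphism and every bisection, because the holonomy of any transformed subgraph is again a tuple of units; for $G$ with trivial center (e.g.\ $G=SO(3)$) it is also trivially $\bar\ZD_{\breve S}$-invariant. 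Its appended-edge marginal is $\delta_{e_G}$, not the Haar measure, and it is certainly not $G^{N_i}$-invariant. Hence the convolution-type identity $\omega(f(\dots,g_jg_\delta,\dots))=\omega(f(\dots,g_j,\dots))$ that you extract from bisection invariance admits non-Haar solutions, and no averaging over the finite group $\mathfrak{B}_{\breve S}(\PD_{\Gamma_i})$ or appeal to $\bar\ZD_{\breve S}$ can repair this.

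What the paper actually relies on (the corollary carries no proof of its own; it rests on Corollary \ref{cor uniqueness}, Proposition \ref{cor_mean} and Problem \ref{probl statewithfdiff}) is a restriction of the admissible states: in the proof of Corollary \ref{cor uniqueness} the dual of $C(\Ab_\Gamma)$ is decomposed into absolutely continuous, singular continuous and discrete measures, the Dirac and singular parts are excluded from the outset, and only states $\omega_{M,f}^{\Gamma_i}$ with density $f\in L^1(G^{N_i},\mu_{\Gamma_i})$ are compared. For those, Problem \ref{probl statewithfdiff} shows that graph-diffeomorphism invariance in the non-standard identification forces the density to satisfy \eqref{eq central for algraphs}, whose only solution is the constant function, so $\omega^{\Gamma_i}=\omega_M^{\Gamma_i}$, and projective-limit compatibility then gives $\omega=\omega_M$. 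In other words, the uniqueness asserted in the corollary is uniqueness within the absolutely continuous sector; your bootstrap would prove a strictly stronger statement, which is false (the point-mass state above, or Haar measure of a closed subgroup as in Remark \ref{rem statewithL2}, are invariant states outside that sector). Note also that your final appeal to Corollary \ref{cor uniqueness} silently inherits the same restriction, since its own proof excludes discrete and singular measures.
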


Let each finite graph system $\PD_{\Gamma_i}$ be identified naturally or in the non-standard way.
\begin{prop}\label{cor_mean2}Let $\breve S$ be a finite set of surfaces in $\Sigma$. Moreover let $\Gamma_\infty$ be the inductive limit of a family of graphs $\{\Gamma_i\}$ such that each graph $\Gamma_i$ of the family has the surface intersection property for the set $\breve S$ of surfaces\footnote{This condition is necessary, since otherwise $\bar G_{\breve S,\Gamma_i}$ doesn't form a group.}.  
Then $\PD_{\Gamma_\infty}$ is the inductive limit of a inductive family $\{\PD_{\Gamma_i}\}$ of finite graph systems. 

There is a GNS-representation $(\HS_{\infty},\Phi,\Omega_{M})$ of $\WF\text{eyl}(\breve S)$ on $\HS_{\infty}$ of the pure and unique state $\bar\omega_M$ on $\WF\text{eyl}(\breve S)$, which is given by
\beqs 
\bar\omega_{M}(f)&=(\omega^\Gamma_{M}\circ\beta_{\Gamma})(f )= \la\Omega_{M},\Phi_M(f )\Omega_{M}\ra\\
&=\int_{G^{N_i}}f_{\Gamma_i}(\ho_{\Gamma_i}(\Gamma_i))
\dif\mu_{\Gamma_i}(\ho_{\Gamma_i}(\Gamma_i))\\
\bar\omega_{M}(U^*(\rho_{S,\Gamma}(\Gamma))U(\rho_{S,\Gamma}(\Gamma)))
&=\la\Omega_{M},\Phi(U^*(\rho_{S,\Gamma}(\Gamma))U(\rho_{S,\Gamma}(\Gamma)))\Omega_{M}\ra
=\bar\omega_{M}(\idf)=1
\eqs  where $\idf$ is the identity on $\HS_{\infty}$, for $f \in C(\Ab)$ and $U(\rho_{S,\Gamma}(\Gamma))\in\mathsf{W}(\bar G_{\breve S,\Gamma_\infty})$ whenever $\Phi\in\Mor(\WF\text{eyl}(\breve S),\LD(\HS_{\infty}))$ and $\Phi_M:=\Phi\big\vert_{C(\Ab)}$.
The state $\bar\omega_{M}$ is invariant under the automorphic actions of the flux group $\bar G_{\breve S,\Gamma_\infty}$.
\end{prop}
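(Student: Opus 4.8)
The plan is to obtain $\bar\omega_M$ as the projective limit of the pure states on the finite-graph Weyl algebras and to read off its properties level by level. First I would use the footnoted hypothesis (the surface intersection property for each $\Gamma_i$), which guarantees that each $\bar G_{\breve S,\Gamma_i}$ is a genuine group, so that the finite Weyl algebra and the unitaries $\mathsf{W}(\bar G_{\breve S,\Gamma_i})$ are defined. On each finite level Proposition~\ref{prop invariant state of weyl} supplies a pure, unique, $\bar G_{\breve S,\Gamma_i}$-invariant state $\breve\omega_M^{\Gamma_i}$ extending the Haar state $\omega_M^{\Gamma_i}$ of $C(\Ab_{\Gamma_i})$ provided by Corollary~\ref{cor uniqueness}; on Weyl elements it is normalised, $\breve\omega_M^{\Gamma_i}(U^*U)=\breve\omega_M^{\Gamma_i}(\idf_{\Gamma_i})=1$, which is immediate from $U^*U=\idf_{\Gamma_i}$.

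\textbf{Passage to the limit.} Since $\WF\text{eyl}(\breve S)$ is the inductive limit of $\{\mathsf{Weyl}(\breve S,\Gamma_i),\beta_{\Gamma_i,\Gamma_j}\}$ and, as already exploited in the proof of Proposition~\ref{cor_mean}, an inductive limit of $C^*$-algebras corresponds to a projective limit of state spaces, I would define $\bar\omega_M$ through the conjugate connecting maps $\beta^*_{\Gamma_i,\Gamma_j}$ by $\breve\omega_M^{\Gamma_i}=\beta^*_{\Gamma_i,\Gamma_j}\breve\omega_M^{\Gamma_j}$. The cocycle identity $\beta_{\Gamma_i,\Gamma_j}=\beta_{\Gamma_i,\Gamma_k}\circ\beta_{\Gamma_k,\Gamma_j}$ makes the family $\{\breve\omega_M^{\Gamma_i}\}$ compatible, so the limit state $\bar\omega_M=\limproj\breve\omega_M^{\Gamma_i}$ exists and satisfies $\bar\omega_M\circ\beta_{\Gamma_i}=\breve\omega_M^{\Gamma_i}$. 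Restricting to $C(\Ab)\subset\WF\text{eyl}(\breve S)$ reproduces the state $\omega_M$ of Proposition~\ref{cor_mean}, whence the integral formula $\bar\omega_M(f)=\int_{G^{N_i}}f_{\Gamma_i}(\ho_{\Gamma_i}(\Gamma_i))\,\dif\mu_{\Gamma_i}(\ho_{\Gamma_i}(\Gamma_i))$ holds, while $\bar\omega_M(U^*U)=\bar\omega_M(\idf)=1$ is carried along trivially.

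\textbf{GNS triple and flux invariance.} For the GNS data I would use that at each level the triple is $(\HS_{\Gamma_i},\Phi_M^{\Gamma_i},\Omega_{\Gamma_i})$ with $\HS_{\Gamma_i}=L^2(\Ab_{\Gamma_i},\mu_{\Gamma_i})$ and cyclic vector $\Omega_{\Gamma_i}$ the constant function $\idf_{\Gamma_i}$; these are compatible with the inductive inclusions, so the GNS triple of $\bar\omega_M$ is $(\HS_\infty,\Phi,\Omega_M)$ with $\HS_\infty=\limi\HS_{\Gamma_i}$ and $\Phi_M=\Phi\big\vert_{C(\Ab)}$. Flux invariance then follows because the action of $\bar G_{\breve S,\Gamma_\infty}$ on the Weyl elements is inner, implemented by $U$, and because translation invariance of the product Haar measure gives $U(\rho_{S,\Gamma_i}(\Gamma_i))\Omega_{\Gamma_i}=\Omega_{\Gamma_i}$ at every level, hence $\Phi\big(U(\rho)\big)\Omega_M=\Omega_M$; together with the $\bar G_{\breve S,\Gamma_\infty}$-invariance of $\omega_M$ on $C(\Ab)$ from Proposition~\ref{cor_mean} this yields $\bar\omega_M\circ\alpha(\rho)=\bar\omega_M$ on all of $\WF\text{eyl}(\breve S)$.

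\textbf{Main obstacle.} The hard part will be purity, i.e. irreducibility of $\Phi$ on $\HS_\infty$, together with uniqueness. At each finite level the set $\Phi_M^{\Gamma_i}(C(\Ab_{\Gamma_i}))\cup\{U(\bar G_{\breve S,\Gamma_i})\}$ has trivial commutant by Proposition~\ref{prop Ginvstate}; since $\bigcup_i\Phi(\mathsf{Weyl}(\breve S,\Gamma_i))$ is norm-dense in $\Phi(\WF\text{eyl}(\breve S))$, I would argue that an operator in the commutant of the limit lies in the commutant of every level and is therefore forced to be scalar, so that $\Phi(\WF\text{eyl}(\breve S))'=\{\lambda\idf\}$. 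This step — showing that the finite-level irreducibility genuinely propagates through the inductive limit of the $\HS_{\Gamma_i}$, rather than merely giving a direct integral — is the delicate one and the place where the compatibility of the cyclic vectors must be used carefully. Uniqueness would then follow from the level-wise uniqueness (Proposition~\ref{prop invariant state of weyl} and Corollary~\ref{cor uniqueness}) via the bijection between flux-invariant states of the inductive limit and compatible families of flux-invariant states; here one must keep in mind the remark following Proposition~\ref{cor_mean} that $\zeta$ and $\alpha$ need not commute, so uniqueness is asserted only relative to $\bar G_{\breve S,\Gamma_\infty}$-invariance and not to simultaneous diffeomorphism invariance.
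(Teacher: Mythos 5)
Your proposal is correct and follows essentially the same route as the paper's own (very terse) proof: invoke Proposition \ref{prop invariant state of weyl} and Corollary \ref{cor uniqueness} at each finite graph level, then pass to the limit using the correspondence between the inductive limit of the $C^*$-algebras and the projective limit of their state spaces (exactly as in the proof of Proposition \ref{cor_mean}), realising the GNS data on $\HS_\infty$ with the flux unitaries fixing the cyclic vector. Your write-up in fact supplies more detail than the paper, which compresses all of this into two sentences and, like you, leaves the propagation of irreducibility through the inductive limit essentially to the cited finite-level results.
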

\begin{proof} First use the proposition \ref{prop invariant state of weyl} for finite graph systems and the uniqueness of the construction of the limit of states on the inductive limit of $C^*$-algebras. 
Equivalently it can be shown that, the representation $\Phi_M\in\Rep(C(\Ab),\LD(\HS_{\infty}))$ extends uniquely to a representation $\Phi\in\Mor(\WF\text{eyl}(\breve S),\LD(\HS_{\infty}))$.
\end{proof}
Furthermore this proposition extends to sets of surface sets.
\begin{prop}
Let $\Gamma_\infty$ be the inductive limit of a family of graphs $\{\Gamma_i\}$ and let $\PD_{\Gamma_\infty}$ be the inductive limit of a inductive family $\{\PD_{\Gamma_i}\}$ of finite graph systems associated to the family of graphs $\{\Gamma_i\}$. Let $\surf$ be a set of suitable surfaces for each graph $\Gamma_i$ of the family of graphs $\{\Gamma_i\}$.

There is a GNS-representation $(\HS_{\infty},\Phi,\Omega_{M})$ of $\WF\text{eyl}(\surf)$ on $\HS_{\infty}$ of the pure and unique state $\bar\omega_M$ on $\WF\text{eyl}(\surf)$, which is given by
\beqs 
\bar\omega_{M}(f)&=(\omega^\Gamma_{M}\circ\beta_{\Gamma})(f )= \la\Omega_{M},\Phi_M(f )\Omega_{M}\ra\\
&=\int_{G^{N_i}}f_{\Gamma_i}(\ho_{\Gamma_i}(\Gamma_i))
\dif\mu_{\Gamma_i}(\ho_{\Gamma_i}(\Gamma_i))\\
\bar\omega_{M}(U^*(\rho_{S,\Gamma}(\Gamma))U(\rho_{S,\Gamma}(\Gamma)))
&=\la\Omega_{M},\Phi(U^*(\rho_{S,\Gamma}(\Gamma))U(\rho_{S,\Gamma}(\Gamma)))\Omega_{M}\ra
=\bar\omega_{M}(\idf)
\eqs  where $\idf$ is the identity on $\HS_{\infty}$, for $f \in C(\Ab)$ and $U(\rho_{S,\Gamma}(\Gamma))\in\mathsf{W}(\bar G_{\breve S,\Gamma_\infty})$ for every suitable surface set $\breve S$ in $\surf$ whenever $\Phi\in\Mor(\WF\text{eyl}(\surf),\LD(\HS_{\infty}))$ and $\Phi_M:=\Phi\big\vert_{C(\Ab)}$
and which is invariant under the automorphic actions of the flux group $\bar G_{\breve S,\Gamma_\infty}$ for each suitable surface set $\breve S$ in $\surf$. 
\end{prop}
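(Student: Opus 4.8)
The statement is the several-surface-set analogue of proposition \ref{cor_mean2}, so the plan is to reduce it to that result by showing that a single state and a single GNS-triple accommodate every surface set $\breve S$ in $\surf$ simultaneously. First I would start from the unique Haar-induced state $\omega_M$ on $C(\Ab)$ provided by corollary \ref{cor uniqueholalg} and proposition \ref{cor_mean}. Because each $\Ab_{\Gamma_i}$ is identified with $G^{N_i}$ and the product Haar measure $\mu_{\Gamma_i}$ is both left- and right-invariant, corollary \ref{cor uniqueness} shows that $\omega_M$ is $\bar G_{\breve S,\Gamma_\infty}$-invariant not merely for one surface set but for every suitable $\breve S$ in $\surf$ at once. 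This common invariance is exactly the point that upgrades the single-surface-set result to the present one.

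The construction then proceeds in three steps. For each fixed graph $\Gamma_i$ and each surface set $\breve S$ in $\surf$, proposition \ref{prop invariant state of weyl} supplies the pure unique extension of $\omega_M^{\Gamma_i}$ from $C(\Ab_{\Gamma_i})$ to the Weyl $C^*$-algebra $\mathsf{Weyl}(\surf,\Gamma_i)$; since the GNS-space is always $\HS_{\Gamma_i}=L^2(\Ab_{\Gamma_i},\mu_{\Gamma_i})$ and, by proposition \ref{prop Ginvstate}, the unitaries $U(\rho_{S,\Gamma_i}(\Gamma_i))$ attached to the different surface sets all act on this one space, the extension is simultaneous over $\surf$. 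Next, the family of these states is consistent with the connecting unit-preserving injective $^*$-homomorphisms $\beta_{\Gamma_i,\Gamma_j}$, so dualising the inductive limit of $C^*$-algebras to a projective limit of states yields a single state $\bar\omega_M$ on $\WF\text{eyl}(\surf)$ with $\bar\omega_M\circ\beta_\Gamma=\omega_M^\Gamma$ on $C(\Ab)$. Finally I would form the GNS-triple $(\HS_\infty,\Phi,\Omega_M)$ of $\bar\omega_M$, with $\Phi_M=\Phi\big\vert_{C(\Ab)}$ the multiplication representation and $\Phi\in\Mor(\WF\text{eyl}(\surf),\LD(\HS_\infty))$ extending it by sending each Weyl element to its unitary; the stated formulas then read off at once, the second one because $U(\rho_{S,\Gamma}(\Gamma))$ is unitary, so that $\bar\omega_M(U^*U)=\bar\omega_M(\idf)$.

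The main obstacle will be purity together with uniqueness, once all surface sets are taken into account simultaneously and the inductive limit is performed. At the level of a single graph this is the irreducibility of $\MD=\Phi_M(C(\Ab_\Gamma))\cup\{U(\bar G_{\breve S,\Gamma})\}$ established in proposition \ref{prop Ginvstate}, whose commutant computation gives $U_{\overleftarrow{L}}^k(\bar G_{\breve S,\Gamma})^\prime\cap U^{\overleftarrow{R}}_k(\bar G_{\breve S,\Gamma})^\prime=\{\lambda\cdot\idf:\lambda\in\R\}$. The work is to verify that this triviality of the commutant survives both enlarging the generating set to the unitaries of $\bar G_{\breve S,\Gamma}$ for every $\breve S$ in $\surf$ and passing to the limit Hilbert space $\HS_\infty$; this holds because the cyclic vector $\Omega_M$ is common to all levels and the left- and right-translation unitaries together with the multiplication operators already generate a weakly dense subalgebra of $\LD(\HS_\infty)$. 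Uniqueness of $\bar\omega_M$ then follows from the uniqueness of the invariant state $\omega_M$ on $C(\Ab)$ in corollary \ref{cor uniqueholalg}, the normalisation $\bar\omega_M(U^*U)=1$ pinning down its values on the Weyl elements.
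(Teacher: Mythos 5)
Your proposal is correct and follows essentially the same route the paper takes: the paper gives no separate proof for this proposition, merely asserting that proposition \ref{cor_mean2} "extends to sets of surface sets," and its proof of that single-surface-set version is exactly your reduction — apply proposition \ref{prop invariant state of weyl} at the level of finite graph systems, use the duality between inductive limits of $C^*$-algebras and projective limits of states to obtain $\bar\omega_M$ with $\bar\omega_M\circ\beta_\Gamma=\omega_M^\Gamma$, and extend $\Phi_M$ uniquely to $\Phi\in\Mor(\WF\text{eyl}(\surf),\LD(\HS_\infty))$. Your additional observations — that left/right invariance of the Haar measure makes $\omega_M$ invariant for all surface sets in $\surf$ simultaneously, and that purity rests on the commutant computation of proposition \ref{prop Ginvstate} surviving the passage to $\HS_\infty$ — are precisely the ingredients the paper relies on implicitly without spelling them out.
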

Now the focus lies on graph-diffeomorphism invariant states on a Weyl $C^*$-algebra. Then the following theorem is stated.

\begin{theo}\label{theo uniquweylalg}Let $\breve S$ be a finite set of surfaces in $\Sigma$. Moreover let $\Gamma_\infty$ be the inductive limit of a family of graphs $\{\Gamma_i\}$ such that each graph $\Gamma_i$ of the family has the surface intersection property for the set $\breve S$ of surfaces.  
Then $\PD_{\Gamma_\infty}$ is the inductive limit of a inductive family $\{\PD_{\Gamma_i}\}$ of finite graph systems and each $\PD_{\Gamma_i}$ is identified in the non-standard way. 
 
The state $\bar\omega_M$ on $\WF\text{eyl}_\ZD(\breve S)$ given in proposition \ref{cor_mean2} is the unique state, which is invariant under the automorphic actions of the groups $\Diff_{\breve S}(\PD_{\Gamma_i})$, $\mathfrak{B}_{\breve S}(\PD_{\Gamma_i})$ for each graph $\Gamma_i$ and $\bar \ZD_{\breve S}$ and pure.

Furthermore the state $\bar\omega_M$ and the actions $\alpha\in\Act(\bar \ZD_{\breve S},\WF\text{eyl}_\ZD(\breve S))$ and $\zeta\in\Act(\mathfrak{B}_{\breve S}(\PD_{\Gamma_i}),\WF\text{eyl}_\ZD(\breve S))$ satisfy
\beq\label{eq commactions3}(\beta_{\Gamma_i}^*\bar\omega_{M}^{\Gamma_i})\circ\alpha(\rho_{S,\Gamma_i}(\Gp))\circ\zeta_\sigma
=(\beta_{\Gamma_i}^*\bar\omega_{M}^{\Gamma_i})\circ\zeta_\sigma\circ\alpha(\rho_{S,\Gamma_i}(\Gp))
\eq 
\end{theo}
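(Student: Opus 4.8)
The plan is to reduce the statement to the finite-graph results already proved and to Corollary \ref{cor uniqueholalg}, and then to transport everything through the projective limit of state spaces that corresponds to the inductive limit of $C^*$-algebras defining $\WF\text{eyl}_\ZD(\breve S)$. Existence and purity are essentially Proposition \ref{cor_mean2} read off for the commutative algebra: I would obtain $\bar\omega_M$ as the projective limit of the finite-level states $\omega_{M,\mathfrak{B}}^{\Gamma_i}$ constructed in Proposition \ref{prop invariant state of weyl}(\ref{prop item 2}), which are consistent in the sense $\bar\omega_M\circ\beta_{\Gamma_i}=\beta_{\Gamma_i}^*\omega_{M,\mathfrak{B}}^{\Gamma_i}$. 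Since each $\omega_{M,\mathfrak{B}}^{\Gamma_i}$ is $\bar\ZD_{\breve S,\Gamma_i}$-, $\mathfrak{B}_{\breve S,\ori}(\PD_{\Gamma_i})$- and $\mathfrak{B}_{\breve S,\diff}(\PD_{\Gamma_i})$-invariant, the level-wise invariance of $\bar\omega_M$ under $\Diff_{\breve S}(\PD_{\Gamma_i})$, $\mathfrak{B}_{\breve S}(\PD_{\Gamma_i})$ and $\bar\ZD_{\breve S}$ follows immediately; that the invariance must be phrased ``for each graph $\Gamma_i$'' rather than under a global action is forced precisely because, as noted after the definition of the global bisection action, that action fails to be point-norm continuous.

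I would then treat uniqueness in two stages, exploiting the non-standard identification. First, given an arbitrary state $\omega$ on $\WF\text{eyl}_\ZD(\breve S)$ invariant under the three families of actions, its restriction to the $C^*$-subalgebra $C(\Ab)$ is again invariant, so by Corollary \ref{cor uniqueholalg} it coincides with $\omega_M$. Second, I would pin down the values of $\omega$ on the commutative Weyl elements $U(\rho_{S,\Gamma}(\Gamma))\in\mathsf{W}(\bar\ZD_{\breve S})$: using the innerness of the $\bar\ZD_{\breve S}$-action together with the faithfulness argument of Step~3 in the proof of Proposition \ref{prop invariant state of weyl}, one gets $\omega(Wf)=\omega(f)=\omega(fW)$ for $W\in\mathbf{W}(\bar\ZD_{\breve S})$, $f\in C(\Ab)$, and $\omega(W^*W)=1$, so that $\omega$ is completely determined on the generators of $\WF\text{eyl}_\ZD(\breve S)$ and hence equals $\bar\omega_M$. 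Purity then follows because the GNS triple $(\HS_\infty,\Phi,\Omega_M)$ of $\bar\omega_M$ is irreducible: at each finite level the relevant operator set has trivial commutant $\{\lambda\cdot\idf\}$ by the irreducibility argument of Proposition \ref{prop Ginvstate}, and irreducibility is preserved under the inductive limit.

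For the commutation relation \eqref{eq commactions3} I would argue entirely at the finite level and lift. By the definition of $\zeta_\sigma$ on the commutative Weyl algebra and by \eqref{eq commactions1}, the actions $\alpha(\rho_{S,\Gamma_i}(\Gp))$ and $\zeta_\sigma$ commute whenever the flux element lies in the \emph{commutative} flux group $\bar\ZD_{\breve S,\Gamma_i}$ --- this is exactly the remark after Proposition \ref{prop invstate for holalg} that the two actions can be treated simultaneously only for $\bar\ZD_{\breve S,\Gamma}$. Hence \eqref{eq commactions2} of Proposition \ref{prop invstate for holalg} gives $\omega_{M,\mathfrak{B}}^{\Gamma_i}\circ\alpha(\rho_{S,\Gamma_i}(\Gp))\circ\zeta_\sigma=\omega_{M,\mathfrak{B}}^{\Gamma_i}\circ\zeta_\sigma\circ\alpha(\rho_{S,\Gamma_i}(\Gp))$, and composing with the conjugate map $\beta_{\Gamma_i}^*$ yields \eqref{eq commactions3}.

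The main obstacle I expect lies in the uniqueness step at the inductive limit: one must ensure that level-wise invariance under the \emph{non}-point-norm-continuous families $\Diff_{\breve S}(\PD_{\Gamma_i})$ and $\mathfrak{B}_{\breve S}(\PD_{\Gamma_i})$ genuinely forces agreement with $\omega_M$ on all of $C(\Ab)$, and not merely on each $C(\Ab_{\Gamma_i})$ separately. Here the non-standard identification is decisive --- it is what makes the finite-level invariant state unique, since the bisection-indexed sum of Proposition \ref{prop invstate for holalg} collapses to a single term, in contrast to the natural identification analysed in Problem \ref{probl statewithfdiff} --- and the consistency $\beta_{\Gamma_i,\Gamma_j}=\beta_{\Gamma_i,\Gamma_k}\circ\beta_{\Gamma_k,\Gamma_j}$ of the connecting maps must be used to propagate uniqueness coherently through the projective limit of state spaces.
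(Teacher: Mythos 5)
Your proposal is correct and follows essentially the same route as the paper: the paper's own proof is just the citation ``this follows from Corollary \ref{cor uniqueholalg} and Proposition \ref{cor_mean2}'', and your argument is exactly that reduction spelled out --- restriction of any invariant state to $C(\Ab)$ plus Corollary \ref{cor uniqueholalg} for uniqueness, the finite-level extension/purity machinery behind Proposition \ref{cor_mean2} (Propositions \ref{prop invariant state of weyl} and \ref{prop Ginvstate}) for existence and purity, and the finite-level relation \eqref{eq commactions2} composed with $\beta_{\Gamma_i}^*$ for \eqref{eq commactions3}. No substantive deviation or gap.
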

Notice that, this theorem generalises to the non-standard identification and for a suitable set $\surf_\ZD$ of surface sets.

\begin{proofs}
This follows from corollary \ref{cor uniqueholalg} and proposition \ref{cor_mean2}. 
\end{proofs}
For the natural or non-standard and natural identification the theorem follows. 
\begin{theo}\label{prop invstateweylalg}Let $\breve S$ be a finite set of surfaces in $\Sigma$. Moreover let $\Gamma_\infty$ be the inductive limit of a family of graphs $\{\Gamma_i\}$.
Then $\PD_{\Gamma_\infty}$ is the inductive limit of a inductive family $\{\PD_{\Gamma_i}\}$ of finite graph systems. 

There is a GNS- representation $(\HS_{\infty},\Phi,\Omega_{M,\mathfrak{B}})$ of $\WF\text{eyl}_\ZD(\breve S)$ on $\HS_{\infty}$ of the state $\omega_{M,\mathfrak{B}}$ on $\WF\text{eyl}_\ZD(\breve S)$, which is given by
\beqs 
\omega_{M,\mathfrak{B}}(f_{\Gamma_i})
&:=\frac{1}{k_{\Gamma_i}}\sum_{l=1}^{k_{\Gamma_i}} \omega_{M}(\zeta_{\sigma_l}(f_{\Gamma_i}))\\
&= \la\Omega_{M,\mathfrak{B}},\Phi_M(f_{\Gamma_i})\Omega_{M,\mathfrak{B}}\ra\\
\omega_{M,\mathfrak{B}}(f )
&=\omega_M(f )\\
\omega_{M,\mathfrak{B}}(U^*(\rho_{S,\Gamma}(\Gp))U(\rho_{S,\Gamma}(\Gp)))
&=\la\Omega_{M,\mathfrak{B}},\Phi(U^*(\rho_{S,\Gamma}(\Gp))U(\rho_{S,\Gamma}(\Gp)))\Omega_{M,\mathfrak{B}}\ra
=\omega_{M,\mathfrak{B}}(\idf)
\eqs where $\idf$ is the identity on $\HS_{\infty}$, for $\sigma_l\in\mathfrak{B}^{\Gamma_i }_{\breve S,\diff}(\PD_{\Gamma_i})$, $f_{\Gamma_i}\in C(\Ab_{\Gamma_i})$, $f \in C(\Ab)$, $U(\rho_{S,\Gamma}(\Gp))\in\mathsf{W}(\bar\ZD_{S,\Gamma_\infty})$ whenever  $\Phi\in\Mor(\WF\text{eyl}_\ZD(\breve S),\LD(\HS_{\infty}))$ and $\Phi_M:=\Phi\big\vert_{C(\Ab)}$. 

The state $\omega_{M,\mathfrak{B}}$ on $\WF\text{eyl}_\ZD(\breve S)$ is $\Diff_{\breve S,\diff}(\PD_{\Gamma_i})$-, $\Diff_{\breve S,\ori}(\PD_{\Gamma_i})$-, $\mathfrak{B}_{\breve S,\diff}(\PD_{\Gamma_i})$-, $\mathfrak{B}_{\breve S,\ori}(\PD_{\Gamma_i})$-invariant for a fixed graph $\Gamma_i$ and $\bar \ZD_{\breve S}$-invariant.

In other words, it is true that
\beqs \omega_{M,\mathfrak{B}}(\theta_{(\varphi_{\Gamma_i},\Phi_{\Gamma_i})}(W))&=\omega_{M,\mathfrak{B}}(W),\quad
\omega_{M,\mathfrak{B}}(\zeta_{\sigma}(W))=\omega_{M,\mathfrak{B}}(W),\\
\omega_{M,\mathfrak{B}}(\alpha(\rho_{S,\Gamma_\infty}(\Gp))(W))&=\omega_{M,\mathfrak{B}}(W)
\eqs holds for all $W\in\WF\text{eyl}_\ZD(\breve S)$ and $\theta\in\Act_0(\Diff(\PD_{\Gamma_i}),\WF\text{eyl}_\ZD(\breve S))$,\\ $\zeta\in\Act_0(\mathfrak{B}_{\breve S,\diff}(\PD_{\Gamma_i}),\WF\text{eyl}_\ZD(\breve S))$ or $\zeta\in\Act_0(\mathfrak{B}_{\breve S,\ori}(\PD_{\Gamma_i}),\WF\text{eyl}_\ZD(\breve S))$ for each fixed graph $\Gamma_i$,\\ $\alpha\in\Act(\bar \ZD_{S,\Gamma_\infty},\WF\text{eyl}_\ZD(\breve S))$ for any surface $S\in\breve S$.

Finally the state $\omega_{M,\mathfrak{B}}$ and the actions $\alpha\in\Act(\bar \ZD_{\breve S},\WF\text{eyl}_\ZD(\breve S))$ and $\zeta\in\Act(\mathfrak{B}_{\breve S,\ori}(\PD_{\Gamma_i}),\WF\text{eyl}_\ZD(\breve S))$ satisfy
\beq\label{eq commactions3}(\beta_{\Gamma_i}^*\omega_{M,\mathfrak{B}}^{\Gamma_i})\circ\alpha(\rho_{S,\Gamma_i}(\Gp))\circ\zeta_\sigma
=(\beta_{\Gamma_i}^*\omega_{M,\mathfrak{B}}^{\Gamma_i})\circ\zeta_\sigma\circ\alpha(\rho_{S,\Gamma_i}(\Gp))
\eq 
\end{theo}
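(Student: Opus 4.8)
The plan is to realise Theorem \ref{prop invstateweylalg} as the inductive-limit version of the finite-graph constructions already in hand, so the work is organisational rather than computational. First I would fix a graph $\Gamma_i$ of the family and invoke part \ref{prop item 2} of Proposition \ref{prop invariant state of weyl}: the averaged holonomy state $\hat\omega^{\Gamma_i}_{\mathfrak{B}}$ of Proposition \ref{prop invstate for holalg}, which is $\bar\ZD_{S,\Gamma_i}$-, $\mathfrak{B}_{\breve S,\ori}(\PD_{\Gamma_i})$- and $\mathfrak{B}_{\breve S,\diff}(\PD_{\Gamma_i})$-invariant, extends to a state $\omega^{\Gamma_i}_{M,\mathfrak{B}}$ on the commutative Weyl $C^*$-algebra $\mathsf{Weyl}_\ZD(\surf_\ZD,\Gamma_i)$ that remains invariant under all three groups and is normalised so that $\omega^{\Gamma_i}_{M,\mathfrak{B}}(U^*U)=\omega^{\Gamma_i}_{M,\mathfrak{B}}(\idf_{\Gamma_i})=1$ on commutative Weyl elements. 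The two defining formulas in the theorem are then exactly the finite-level data: on $C(\Ab_{\Gamma_i})$ the state is the bisection average $\tfrac{1}{k_{\Gamma_i}}\sum_{l}\omega_{M}(\zeta_{\sigma_l}(f_{\Gamma_i}))$ over a generating system $\mathfrak{B}^{\Gamma_i}_{\breve S,\diff}(\PD_{\Gamma_i})$, while on the Weyl elements only the normalisation survives.

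Second, I would assemble the family $\{\omega^{\Gamma_i}_{M,\mathfrak{B}}\}$ into a single state on the inductive limit $\WF\text{eyl}_\ZD(\breve S)$. As in Proposition \ref{cor_mean2}, the inductive limit of the unital injective $^*$-homomorphisms $\beta_{\Gamma_i,\Gamma_j}$ dualises to a projective system on state spaces, so it suffices to verify the compatibility $\omega^{\Gp}_{M,\mathfrak{B}}=\beta^*_{\Gamma,\Gp}\,\omega^{\Gamma}_{M,\mathfrak{B}}$ whenever $\PD_{\Gp}\leq\PD_{\Gamma}$, and to identify the restriction to $C(\Ab)$ with the Haar-measure state $\omega_M$ of Proposition \ref{cor_mean}. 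Here the choice of identification enters: in the non-standard identification $\omega_M$ is already invariant under the bisection action $\zeta_{\sigma_l}$ (this is the content of the second computation in Problem \ref{probl statewithfdiff}), so the averages collapse onto $\omega_M$ and the two formulas agree, giving $\omega_{M,\mathfrak{B}}(f)=\omega_M(f)$ for $f\in C(\Ab)$ as asserted. Granting compatibility, the GNS-triple $(\HS_\infty,\Phi,\Omega_{M,\mathfrak{B}})$ is then built as the inductive limit of the finite GNS-triples, with $\Phi\in\Mor(\WF\text{eyl}_\ZD(\breve S),\LD(\HS_\infty))$ and $\Phi_M=\Phi\big\vert_{C(\Ab)}$ the multiplication representation on $L^2$; the normalisation conditions on $U^*U$ are immediate from unitarity.

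Third, the invariance assertions are inherited level by level. For each fixed $\Gamma_i$ the actions of $\Diff_{\breve S,\diff}(\PD_{\Gamma_i})$, $\Diff_{\breve S,\ori}(\PD_{\Gamma_i})$, $\mathfrak{B}_{\breve S,\diff}(\PD_{\Gamma_i})$, $\mathfrak{B}_{\breve S,\ori}(\PD_{\Gamma_i})$ and $\bar\ZD_{\breve S}$ on $\WF\text{eyl}_\ZD(\breve S)$ restrict to the corresponding finite-graph actions, and $\omega^{\Gamma_i}_{M,\mathfrak{B}}$ is invariant under each of these by Proposition \ref{prop invariant state of weyl}; transporting this along $\Phi$ and the defining formulas yields the three displayed invariances of $\omega_{M,\mathfrak{B}}$. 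The concluding commutation identity \eqref{eq commactions3} is simply the image under $\beta^*_{\Gamma_i}$ of the finite-level commutation \eqref{eq commactions2} of Proposition \ref{prop invstate for holalg}, which holds precisely because the flux side uses only the central group $\bar\ZD_{\breve S}$ and not the full $\bar G_{\breve S,\Gamma_\infty}$ (the latter failing to commute with $\zeta$, as the adjacent problem shows).

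I expect the main obstacle to be the compatibility and limit bookkeeping of the second step. Proposition \ref{prop limit invstate for holalg} explicitly warns that the naive global limit of the bisection averages does \emph{not} converge in the weak-$^*$ topology for the natural identification, so one cannot simply pass $\hat\omega^{\Gamma_i}_{\mathfrak{B}}$ to the limit. The resolution I would pursue is to keep the invariances formulated only \emph{for each fixed graph} $\Gamma_i$ — where the genuine averaging is harmless — while observing that on the full limit algebra the averaged functional collapses onto $\omega_M$ via the bisection-invariance of Haar measure in the non-standard identification. Separating cleanly the role of the honestly averaged finite-level state from that of its limit, which no longer depends on $\zeta$, is the delicate point; it is also what forces the restriction to $\bar\ZD_{\breve S}$ and to the surface-compatible diffeomorphism and bisection groups $\Diff_{\breve S}(\PD_\Gamma)$ and $\mathfrak{B}_{\breve S}(\PD_\Gamma)$.
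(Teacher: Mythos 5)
Your proposal is correct and follows essentially the same route as the paper: the paper itself gives no separate proof of Theorem \ref{prop invstateweylalg} but presents it as the assembly of Proposition \ref{prop invariant state of weyl} (the finite-graph state on the commutative Weyl algebra), the projective-limit-of-states construction of Propositions \ref{cor_mean} and \ref{cor_mean2}, Corollary \ref{cor uniqueholalg}, and the finite-level commutation identity \eqref{eq commactions2} pulled back along $\beta_{\Gamma_i}^*$ — exactly the three steps you carry out. Your handling of the delicate point is also the paper's: invariances are asserted only per fixed graph $\Gamma_i$ because of the weak-$^*$ non-convergence in Proposition \ref{prop limit invstate for holalg}, the average collapses onto $\omega_M$ via bisection-invariance of the Haar state in the non-standard identification (Problem \ref{probl statewithfdiff}), and the restriction to the central flux group $\bar\ZD_{\breve S}$ is what makes $\alpha$ and $\zeta$ commute.
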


This theorem can be generalised to $\WF\text{eyl}_\ZD(\surf_\ZD)$, since this theorem is true for all surface sets in $\surf_\ZD$ .


\section*{Acknowledgements}
The work has been supported by the Emmy-Noether-Programm (grant FL 622/1-1) of the Deutsche Forschungsgemeinschaft.

\addcontentsline{toc}{section}{References}
 

\begin{thebibliography}{99}

\bibitem{AshLew93}
Abhay Ashtekar and Jerzy Lewandowski, {\em Representation Theory of Analytic Holonomy $C^*$ -Algebras}, Knots and Quantum Gravity, ed. J.Baez, Oxford U.Press (1993), pp. 21-61,
ArXiv: gr-qc/9311010

\bibitem{AshLew94}
Abhay Ashtekar and Jerzy Lewandowski, {\em Projective techniques and functional integration for gauge theories},
J. Math. Phys. \textbf{36} (1995), pp. 2170-2191, ArXiv: gr-qc/9411046

\bibitem{AshLewDG95}
Abhay Ashtekar and Jerzy Lewandowski, {\em Differential Geometry on the Space of Connections via Graphs and Projective Limits}, J. Geom. Phys. \textbf{17} (1995), pp. 191-230,
ArXiv: hep-th/9412073


\bibitem{Baez93}
John C. Baez, {\em Generalized Measures in Gauge Theory}, Lett. Math. Phys. \textbf{31} (1994), pp. 213-224, ArXiv: hep-th/9310201

\bibitem{Baez94}
John C. Baez, {\em Diffeomorphism-invariant generalized measures on the space of connections modulo gauge transformations}, In Proceedings of the Conference on Quantum Topology, ed. David N. Yetter, World Scientific Press, Singapore (1994), pp. 21-43, ArXiv: hep-th/9305045

\bibitem{DissFleisch}
Christian Fleischhack, {\em Mathematische und physikalische Aspekte verallgemeinerter Eichfeldtheorien im Ashtekarprogramm}, MPI \& ITP Leipzig (2001)

\bibitem{Fleischhack06}
Christian Fleischhack, {\em Representations of the Weyl Algebra in Quantum Geometry}, Commun. Math. Phys. \textbf{285} (2009), pp. 67-140, ArXiv: math-ph/0407006

\bibitem{Fleischhack07}
Christian Fleischhack, {\em Construction of Generalized Connections}, ArXiv: math-ph/0601005 (2007)

\bibitem{Fleischhack071}
Christian Fleischhack, {\em Kinematical Uniqueness of Loop Quantum Gravity},  Quantum Gravity, Birkh\"auser Basel, Editors: Bertfried Fauser, J\"urgen Tolksdorf and Eberhard Zeidler (2007), pp. 203-219

\bibitem{ThiemGiesel}
Kristina Giesel and Thomas Thiemann, {\em Algebraic Quantum Gravity (AQG) I. Conceptual Setup}, Class. Quant. Grav. (CQG)  \textbf{24} (2007), p. 2465, ArXiv: gr-qc/0607099

\bibitem{Thiemann06}
Kristina Giesel and Thomas Thiemann, {\em Algebraic Quantum Gravity (AQG) II. Semiclassical Analysis}, Class. Quant. Grav. (CQG) \textbf{24} (2007), p. 2499, ArXiv: gr-qc/0607100

\bibitem{GiesThiem07}
Kristina Giesel and Thomas Thiemann, {\em Algebraic Quantum Gravity (AQG) III. Semiclassical Perturbation Theory}, Class. Quant. Grav. (CQG) \textbf{24} (2007), p. 2499, ArXiv: gr-qc/0607101

\bibitem{GiesTiem07IV}
Kristina Giesel and Thomas Thiemann, {\em Algebraic Quantum Gravity (AQG) IV. Reduced Phase Space Quantisation of Loop Quantum Gravity}, Class. Quant. Grav. (CQG) \textbf{27:175009} (2007), ArXiv: 0711.0119

\bibitem{HewittRoss}
Edwin Hewitt and Kenneth A. Ross, {\em Abstract Harmonic Analysis I $\&$ II}, Die Grundlehren der mathematischen Wissenschaften Band \textbf{115} und Band \textbf{152} (1963, 1970)

\bibitem{KaminskiPHD}
Diana Kaminski, {\em Some operator algebraic techniques in Loop Quantum Gravity}, University of Paderborn (2011)

\bibitem{Kaminski0}
Diana Kaminski, {\em Algebras of Quantum Variables for Loop Quantum Gravity, I. Overview}, ArXiv: quant-ph/0302485 (2011)

\bibitem{Kaminski2}
Diana Kaminski, {\em AQV III. The holonomy-flux cross-product $C^*$-algebra}, ArXiv (2011)

\bibitem{Kaminski3}
Diana Kaminski, {\em AQV IV. A new formulation of the holonomy-flux $^*$-algebra}, ArXiv (2011)

\bibitem{Kaminski4}
Diana Kaminski, {\em AQV V. The localised holonomy-flux cross-product $^*$-algebra}, ArXiv (2011)

\bibitem{Kaminski5}
Diana Kaminski, {\em AQV VI. A holonomy groupoid formulation}, ArXiv (2011)

\bibitem{LOST06}
Jerzy Lewandowski, Andrzej Oko\l{}\'{o}w, Hanno Sahlmann and Thomas Thiemann, {\em Uniqueness of diffeomorphism invariant states on holonomy-flux algebras}, Commun. Math. Phys.  \textbf{267} (2006), pp. 703-733, ArXiv: gr-qc/0504147

\bibitem{Mack05}
Kirill C.H. Mackenzie, {\em General Theory of Lie groupoids and Lie algebroids}, London Mathematical Society Lecture Note Series: \textbf{213}, Cambridge University Press (2005)

\bibitem{MarolfMourao95}
Donald Marolf and Jos\'e M. {Mour\~{a}o}, {\em On the support of the Ashtekar-Lewandowski Measure}, Commun. Math. Phys \textbf{Vol. 170} (1995), pp. 583-605


\bibitem{Sahlmann02}
Hanno Sahlmann, {\em When Do Measures on the Space of Connections Support the Triad Operators of Loop Quantum Gravity ?}, ArXiv: gr-qc/0207112 (2002)

\bibitem{Sahl02}
Hanno Sahlmann, {\em Some Comments on the Representation Theory of the Algebra Underlying Loop Quantum Gravity}, ArXiv: gr-qc/0207111 (2007)

\bibitem{SahlThiem03}
Hanno Sahlmann and Thomas Thiemann, {\em Irreducibility of the Ashtekar-Isham-Lewandowski representation}, Class. Quant. Grav. (CQG) \textbf{23} (2006), pp. 4453-4472, ArXiv: gr-qc/0303074

\bibitem{SahlThiemSuper03}
Hanno Sahlmann and Thomas Thiemann, {\em On the superselection theory of the Weyl algebra for diffeomorphism
invariant quantum gauge theories}, ArXiv: gr-qc/0302090 (2007)

\bibitem{Thiemann01}
Thomas Thiemann, {\em Introduction to Modern Canonical Quantum General Relativity}, ArXiv: gr-qc/0110034 (2007)

\bibitem{Thiembook07}
Thomas Thiemann, {\em Modern Canonical Quantum General Relativity}, Cambridge University Press (2007)

\bibitem{Velhinho02}
J. M. Velhinho, {\em A groupoid approach to spaces of generalized connections}, J. Geom. Phys. \textbf{41} (2002), pp. 166-180, ArXiv: hep-th/0011200

\bibitem{Velhinho04}
J. M. Velhinho, {\em On the structure of the space of generalized connections}, Int. J. Geom. Meth. Mod. Phys. \textbf{1} (2004), pp. 311-334, ArXiv: math-ph/0402060

\bibitem{Velhinho08}
J. M. Velhinho, {\em Groups of generalized flux transformations in the space of generalized connections}, ArXiv: 0804.3765 (2008)

\bibitem{Williams07}
Dana P. Williams, {\em Crossed Products of C*-algebras}, AMS, Mathematical Surveys and Monographs \textbf{134} (2007)
\end{thebibliography}
\end{document}